\newtheoremstyle{mythm}{}{0pt}{\rm}{}{\bf}{}{8pt}{}
\theoremstyle{mythm}
\newtheorem{Theorem}{Theorem}[section]
\newtheorem{lemma}[Theorem]{Lemma}
\newtheorem{Proposition}[Theorem]{Proposition}
\newtheorem{Corollary}[Theorem]{Corollary}
\newtheorem{remark}[Theorem]{Remark}
\newtheorem{Definition}[Theorem]{Definition}
\newtheorem{assm}{Assumption}[section]
\numberwithin{equation}{section}
\newcommand{\la}{\langle}
\newcommand{\ra}{\rangle}
\newcommand{\beq}{\begin{equation}}
\newcommand{\eeq}{\end{equation}}
\newcommand{\bes}{\begin{equation*}}
\newcommand{\ees}{\end{equation*}}
\newcommand{\rmi}{{\rm i}}
\renewcommand{\Im}{{\rm Im}}
\newcommand{\id}{{\rm I}}
\newcommand{\Id}{{\rm Id}}
\newcommand{\ot}{\omega t}
\newcommand{\M}{\cal M}
\newcommand{\B}{\cal B}
\renewcommand{\L}{\cal L}
\renewcommand{\l}{\ell}
\newcommand{\besi}{\ensuremath{{\beta,\sigma}}}
\newcommand{\phom}{\phi,\omega}
\newcommand{\phomp}{\phi,\omega'}
\newcommand{\omp}{\omega,\omega'}
\newcommand{\dom}{\omega-\omega'}
\newcommand{\nom}{\omega\ne\omega'}
\newcommand{\bom}{\big(B(\omega)\big)}
\newcommand{\bomp}{\big(B(\omega')\big)}
\newcommand{\besn}[1]{\ensuremath{{\beta,s-#1\sigma}}}
\newcommand{\gapl}{\gamma^+}
\newcommand{\gami}{\gamma^-}
\newcommand{\copl}{C_\omega^+}
\newcommand{\comi}{C_\omega^-}
\newcommand{\cmpl}{C_\mu^+}
\newcommand{\cmmi}{C_\mu^-}
\newcommand{\clpl}{C_\lambda^+}
\newcommand{\clmi}{C_\lambda^-}
\newcommand{\wt}{\widetilde}
\newcommand{\lra}[2]{\la#1,#2\ra}
\newcommand{\expc}{\exp\Big(\frac{C}{\sigma^{a_3}}\Big)}
\newcommand{\exptc}{\exp\Big(\frac{2C}{\sigma^{a_3}}\Big)}
\newcommand{\expcl}{\exp\Big(\frac{C}{\sigma_{l+1}^{a_3}}\Big)}
\newcommand{\exptcl}{\exp\Big(\frac{2C}{\sigma_{l+1}^{a_3}}\Big)}
\newcommand{\pmdp}{P^--\text{diag}(P^-)}
\newcommand{\s}{s}
\newcommand{\mbhs}{\M_\beta(\Pi_*,\frac s2)}
\newcommand{\fij}{{\txs \frac ij}}
\newcommand{\fji}{{\txs \frac ji}}
\newcommand{\lst}{\B(\l_s^2)}
\newcommand{\lms}{\B(\l_{-s}^2)}
\newcommand{\lzt}{\B(\l_0^2)}
\newcommand{\ldt}{\B(\l_{-2\delta}^2)}
\newcommand{\lzd}{\B(\l_0^2,\l_{-2\delta}^2)}
\newcommand{\lzi}{\B(\l_0^2,\l_{-2\iota}^2)}
\newcommand{\px}{\partial_x}
\def\e{\epsilon}
\def\ve{\epsilon}
\def\Z{{\Bbb Z}}
\def\R{{\Bbb R}}
\def\T{{\Bbb T}}
\def\C{{\Bbb C}}
\def\O{{\cal O}}
\let\cal=\mathcal
\def\H{{\cal H}}
\newcommand{\belh}{H_{\frac{1}{3}}^{(1)}}
\newcommand{\sbelhz}[1]{\sqrt{\frac{\pi z}{2}}H_{\frac{1}{3}}^{(1)}(#1)}
\newcommand{\Bgs}[1]{\ensuremath{\Big|#1\Big|}}
\newcommand{\dss}{\displaystyle}
\newcommand{\txs}{\textstyle}
\newcommand{\scs}{\scriptstyle}
\begin{document}
\newtheorem{Sup}{\textbf{Assumption}}

\title[Reducibility of 1-d Schr\"odinger equation]{Reducibility of 1-d Schr\"odinger equation with unbounded oscillation perturbations}
\author{Z. Liang and Z. Wang}

\address {School of Mathematical Sciences and
Key Lab of Mathematics for Nonlinear Science, Fudan University,
Shanghai 200433, China} \email{zgliang@fudan.edu.cn, 19110180010@fudan.edu.cn}

\date{}
%\tableofcontents
\begin{abstract} We build a new estimate for the normalized eigenfunctions of the operator $-\partial_{xx}+\cal V(x)$ based on the oscillatory integrals and  Langer's turning point method, where $\cal V(x) \sim |x|^{2\l}$ at infinity  with $\l>1$. From it and an improved reducibility theorem
we show that the equation
\[\txs
{\rm i}\partial_t \psi =-\partial_x^2 \psi+\cal V(x) \psi+\e \la x\ra^{\mu} W(\nu x,\omega t)\psi,~\psi=\psi(t,x),~x\in\R,~ \mu<\min\big\{\l-\frac23,{\scs\frac{\sqrt{4\l^2-2\l+1}-1}2}\big\},
\]
can be reduced in $L^2(\R)$  to an autonomous system for most values of the frequency vector $\omega$ and $\nu$,
where  $W(\varphi, \phi)$  is a smooth map from $ \T^d\times \T^n$ to $\R$ and odd in $\varphi$.
\end{abstract}
\maketitle
\section{Introduction of the Main Results}\label{introduction}
\subsection{Main results}
\noindent In this paper we study the problem of reducibility of the time dependent Schr\"odinger equation
\begin{equation}\label{introduction01}
\begin{aligned}
&\H(t)\psi(x,t)=\rmi\partial_t\psi(x,t),\quad x\in\R;\\
&\H(t):=-\frac{d^2}{dx}+\cal V(x)+\e\la x\ra^\mu W(\nu x,\ot),\quad\e\in\R,
\end{aligned}
\end{equation}
 where $\cal V(x)\sim |x|^{2\l}$ at infinity with $\l>1$ and $W$ is a smooth function on $\T^d\times \T^{n}$.
 In order to state the results we need to introduce  some notations and spaces.
 We define the weight $\lambda(x,\xi)=(1+\xi^2+|x|^{2\l})^{\frac1{2\l}}$.
For $x,y\in\R$, define $\la x\ra:=\sqrt{1+x^2}$ and $x\vee y:=\max\{x,y\},~x\wedge y:=\min\{x,y\}$ and $x\sim y$ means that there exist some positive constants $C,~\widetilde C$ such that $\widetilde Cy\le x\le C y$. $d, n\in \Z_{+}$.
As \cite{BamII} we define the following. \\
{\bf Symbol.} The space $ S^{m_1,m_2}$ is the space of the symbols $g\in\cal C^\infty(\R^2)$ such that $\forall~k_1,k_2\ge0$, there exists $C_{k_1,k_2}$ with the property that
\begin{equation}\label{symnorm}
|\partial_\xi^{k_1}\partial_x^{k_2}g(x,\xi)|\le C_{k_1,k_2}\big(\lambda(x,\xi)\big)^{m_1-k_1\l}\la x\ra^{m_2-k_2}.
\end{equation}
The best constants $C_{k_1,k_2}$ such that \eqref{symnorm} holds form a family of semi-norms for that space $ S^{m_1,m_2}$.
{\bf Quantization.} To a symbol $g\in S^{ m_1,m_2}$, we associate its Weyl quantization, namely the operator $g^{w}(x,-\rmi\partial_x)$, defined by
\[
g^{w}(x,-\rmi\partial_x)\psi(x):=\frac1{2\pi}\int_{\R^2}e^{(x-y)\cdot\xi}g\big(\frac{x+y}2,\xi\big)\psi(y)dyd\xi.
\]
\indent We use the symbol $\lambda(x,\xi)$ to define, for $s\geq 0$ the spaces
$\cal{H}^s=D([\lambda^{w}(x,-{\rm i}\partial_x)]^{s(\l+1)})$(domain of the $(s(\l+1))$th- power of the operator operator $\lambda^{w}(x,-{\rm i}\partial_x)$ endowed by the graph norm.
For negative $s$, the space $\cal{H}^s$ is the dual of $\cal{H}^{-s}$. We will denote by $\cal{B}(\cal H_1,\cal H_2)$ the space of bounded linear operators from $\cal H_1$ to $\cal H_2$, where $\cal H_1,\cal H_2$ are Banach spaces. In particular, $\cal B(\cal H_1,\cal H_1)$ is usually abbreviated as $\cal B(\cal H_1)$. As \cite{BamI} in what follows we will identify $L^2$ with $\ell_0^2$ by introducing the basis denoted by $\{h_j(x)\}_{j\geq 1}$ of the eigenvector of $H_0 : = -\partial_{xx}+\cal V(x)$.
Similarly we will identify $\cal{H}^s$ with the space $\ell_{2s}^2$ of the sequences $\psi_j$ such that
$\sum\limits_{j\geq 1}j^{2s}|\psi_j|^2<\infty$. \\
\indent  Now we can state our main results below. Consider the time dependent Schr\"odinger equation (\ref{introduction01})
under the following conditions:\\
A1: We assume that the potential $\cal V$ belongs to $S^{0,2\l}$ to be symmetric and non - negative, namely,
$\cal V(x)= \cal V(-x)\geq 0$ and furthermore admits an asymptotic expansion of the form
$\cal V(x) : = |x|^{2\l}(c_0+ w(x)) $, where $w(x) =\sum\limits_{j\geq 1}\frac{c_j}{|x|^{2j}}$ is convergent on $(\sqrt{A_*}, +\infty)$
and $\overline{\lim\limits_{n\to \infty}} {|c_j|}^{\frac{1}{n}}=A_*\geq 0$. \\
A2: $\nu\in [A, B]^d$ satisfies Diophantine conditions, namely there exist $\bar{\gamma}>0$ and $\tau_1>d-1$ s.t.
$$|\la k,\nu\ra|\ge\frac{\bar\gamma}{|k|^{\tau_1}}, \qquad k\neq 0,\  \bar{\gamma}>0. $$
A3: $W(\varphi,\phi)$ is defined on $\T^{ d}\times\T^n$ and  for $\forall~(\varphi,\phi)\in\T^{ d}\times\T^n$, $W(-\varphi,\phi)=-W(\varphi,\phi)$.
For any $\varphi\in\T^ d$ and all $\alpha=(\alpha_1,\cdots,\alpha_ d),~\partial_\varphi^\alpha W(\varphi,\phi)$ is analytic on $\T_\rho^n$ and continuous on $\T^ d\times\overline{\T_\rho^n}$, where $0\le|\alpha|=\alpha_1+\cdots+\alpha_ d\le d([1\vee\tau_1]+ d+2)$.\\
\begin{remark}\label{pudeguji}
We denote by $\lambda_j$  the sequence of the eigenvalues of $H_0$ labeled in increasing order.
As \cite{BamII, HRa} one can show that  $\lambda_j\sim cj^{\frac{2\l}{\l+1}}$ as $j\to \infty$ and $\lambda_1>0$.
\end{remark}

Our purpose is to prove the following.
\begin{Theorem}\label{mainthm1}
Assume A1-A3 and $\mu<\txs(\l-\frac23)\bigwedge\scs\frac{\sqrt{4\l^2-2\l+1}-1}2$. Fix a $\gamma>0$ small, there exists $\e_*>0$ such that for all $0\leq \e<\e_*$ there exists a closed set $\Pi_*\subset\Pi:=[0,1]^n$ and $\forall~\omega\in\Pi_*$,
the linear Schr\"odinger equation \eqref{introduction01} reduces to a linear equation with constant coefficients in $  L^2$.\\
\indent More precisely, for a $\gamma>0$, there exists $\e_*>0$ such that for all $0\leq \e<\e_*$ there exists a closed set $\Pi_*\subset\Pi$ satisfying $\text{meas}(\Pi\backslash\Pi_*)\le C\gamma$, and for $\omega\in\Pi_*$, there exists a unitary (in $L^2$) time quasiperiodic operator $\Psi_{\omega,\e}(\phi)$ such that $t\mapsto\psi(t,\cdot)\in L^2$ satisfies \eqref{introduction01} if and only if $t\mapsto u(t,\cdot)=\Psi_{\omega,\e}^{-1} \psi(t,\cdot)$ satisfies the equation
$$\rmi\dot u=\H_\infty u$$ with $\H_\infty=\text{diag}\{\lambda_j^\infty\}$ and $|\lambda_j^\infty-\lambda_j|\le C\e j^{\big(\frac{\mu}{\l+1}-\frac1{\l+1}(\frac13\wedge\frac{\mu+1}{2\mu+2\l+1})\big)\bigvee0}$.\\
Furthermore one has:\\
1. $\lim\limits_{\gamma\to0}\text{meas}(\Pi\backslash\Pi_*)=0$;\\
2. $\Psi_{\omega,\e}(\phi)$ is analytic in the norm $\|\cdot\|_{\cal{B}( L^2)}$ on $|\Im\phi|<\frac s2$;\\
3. $\|\Psi_{\omega,\e}(\ot)-\Id\|_{\cal{B}(  L^2)}\le C\e^\frac23$.
\end{Theorem}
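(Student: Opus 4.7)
The plan is to combine three ingredients: (i) Langer-type uniform asymptotics for the normalised eigenfunctions $h_j$ of $H_0=-\px^2+\cal V$; (ii) an oscillatory-integral estimate for the matrix elements of the perturbation in the basis $\{h_j\}$; (iii) an abstract KAM reducibility theorem adapted to the resulting off-diagonal decay. After identifying $\psi(t,\cdot)\in L^2$ with its coordinates $\psi_j(t):=\la h_j,\psi(t,\cdot)\ra$, equation \eqref{introduction01} becomes the infinite system
\[
\rmi\dot\psi_j=\lambda_j\psi_j+\e\sum_{m\ge 1}P_{jm}(\ot)\psi_m,\qquad P_{jm}(\phi)=\sum_{k\in\Z^d}\widehat W_k(\phi)\,I_{jm}(k),
\]
with $I_{jm}(k):=\int_\R h_j(x)\la x\ra^\mu e^{\rmi\la k,\nu\ra x}h_m(x)\,dx$. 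Assumption A3 forces $\widehat W_k$ to be odd in $k$ and analytic in $\phi\in\T_\rho^n$ with sufficient control on $\varphi$-derivatives, so the task reduces to proving a $k$-uniform off-diagonal bound for $I_{jm}(k)$ and feeding it into the KAM machinery.

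The central technical step is the matrix-element estimate. By Remark \ref{pudeguji} the classical turning points of $\lambda_j-\cal V$ lie at $x_j^\pm\sim\pm j^{1/(\l+1)}$, and Langer's method yields a uniform Airy representation of $h_j$ built from the WKB phase $S_j(x):=\int^x\sqrt{\lambda_j-\cal V(y)}\,dy$. Inserting this into $I_{jm}(k)$ turns it into an oscillatory integral with phase $\pm S_j(x)\pm S_m(x)+\la k,\nu\ra x$ and amplitude carrying the growing weight $\la x\ra^\mu$. I would treat three regions separately: the deeply classical zone, where (non-)stationary phase together with the Diophantine property A2 on $\la k,\nu\ra$ produces rapid decay in $|k|$ and in the remaining phase derivative $\la k,\nu\ra\pm S_j'\pm S_m'$; the turning-point layers, where uniform Airy asymptotics contribute a factor of size roughly $\lambda_j^{\mu/(2\l)}\sim j^{\mu/(\l+1)}$; and the exponentially small tails. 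Optimising the balance between these contributions is exactly what produces the two thresholds on $\mu$: $\mu<\l-\tfrac23$ arises from bounding the turning-point layer against the off-diagonal-decay requirement of KAM, while $\mu<\tfrac12(\sqrt{4\l^2-2\l+1}-1)$ comes from the worst cross-phase $S_j'+S_m'$ in the classically-allowed region when $j$ and $m$ are comparable.

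With these bounds in hand, $P(\phi)$ lies in a class of $\phi$-analytic matrices with sufficient off-diagonal decay to run a KAM iteration. At step $l$ one solves a homological equation
\[
\rmi\,\omega\cdot\partial_\phi S+[\mathrm{diag}(\lambda_j^{(l)}),S]=P^{(l)}-\mathrm{diag}\,P^{(l)},
\]
subject to second Melnikov conditions $|\la q,\omega\ra+\lambda_j^{(l)}-\lambda_m^{(l)}|\ge\gamma|q|^{-\tau}(j+m)^{-c}$. The eigenvalue asymptotics $\lambda_j\sim j^{2\l/(\l+1)}$ and the decay of $P_{jm}$ make the total excluded set of $\omega$ have measure $\le C\gamma$, which also gives $\lim_{\gamma\to 0}\mathrm{meas}(\Pi\setminus\Pi_*)=0$. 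The iterates $\Psi^{(l)}$ are unitary on $L^2$ and analytic on $|\Im\phi|<s/2$; their composition converges to the unitary $\Psi_{\omega,\e}$ of the theorem, a telescoping argument yields $\|\Psi_{\omega,\e}(\ot)-\Id\|_{\cal B(L^2)}\le C\e^{2/3}$ (the exponent $\tfrac23$ reflecting the loss incurred by the unbounded weight $\la x\ra^\mu$), and the final diagonal operator $\cal H_\infty=\mathrm{diag}\{\lambda_j^\infty\}$ inherits the eigenvalue-perturbation bound of the statement from the per-step corrections.

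The hardest point will be the matrix-element analysis in the intermediate regime where the total phase derivative $\la k,\nu\ra\pm S_j'(x)\pm S_m'(x)$ becomes small near a turning point: stationary phase fails there because the amplitude ceases to be smooth at the turning point, while non-stationary integration by parts alone loses too much to the growing factor $\la x\ra^\mu$. Obtaining a sharp, $k$-uniform, analyticity-preserving bound in this regime -- which ultimately dictates the restriction on $\mu$ in the statement -- requires combining Langer's uniform Airy approximation with a refined amplitude expansion, and is the main technical difficulty underlying the paper's ``improved reducibility theorem''.
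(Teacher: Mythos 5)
Your overall route---expanding on the eigenbasis of $H_0$, estimating the matrix elements $\int_\R\la x\ra^\mu e^{\rmi\la k,\nu\ra x}h_jh_m\,dx$ by Langer/Airy asymptotics and oscillatory integrals, summing in $k$ via A2--A3, then invoking an abstract reducibility theorem---is the same as the paper's, but the KAM half of your plan has a genuine gap. The eigenfunction estimate (Lemma \ref{mainlem}) yields only a size bound $|P_j^m|\le C(jm)^\beta$ with $\beta=\frac\mu{2(\l+1)}-\frac1{2(\l+1)}\big(\frac13\wedge\frac{\mu+1}{2\mu+2\l+1}\big)$; it gives \emph{no} decay in $|j-m|$, so there is no ``off-diagonal decay'' of $P$ to feed into the iteration: the perturbation is a genuinely unbounded matrix. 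The second Melnikov condition you propose, $|\la q,\omega\ra+\lambda_j-\lambda_m|\ge\gamma|q|^{-\tau}(j+m)^{-c}$, is too weak to compensate that growth; what is needed (condition H3 of the paper) is a divisor bounded below by $\gamma|j^\iota-m^\iota|/(1+|q|^\tau)$ with $\iota=\frac{2\l}{\l+1}$, because $|j^\iota-m^\iota|\ge\frac12|j-m|(j^{\iota-1}+m^{\iota-1})$ is exactly what converts the $\M_\beta$ datum into an $\M_\beta^+$ solution $B$ of the homological equation and makes $e^{\pm B}$ bounded on $\l_0^2$. The scheme closes only under the single inequality $2\beta<\iota-1$, and it is this one inequality (not two separate balances) that produces both thresholds $\mu<\l-\frac23$ and $\mu<\frac{\sqrt{4\l^2-2\l+1}-1}2$, according to which of $\frac13$ or $\frac{\mu+1}{2\mu+2\l+1}$ is the smaller gain in the integral estimate. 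Moreover, after one step the diagonal acquires $\phi$-dependent corrections $\mu_j(\phi,\omega)$ of size $j^{2\beta}\to\infty$, so the homological equation has large variable coefficients and cannot be solved by plain Fourier division under your Melnikov condition; the paper needs Kuksin's lemma (Lemma \ref{Kuksin}) with the order condition $E_1^\theta\ge C_0E_2$, $\theta=\frac{2\beta}{\iota-1}$, and it is the $\exp(C\sigma^{-a_3})$ loss there, absorbed by $\ve_{l+1}=\ve_l^{4/3}$, that produces the exponent $\frac23$ in $\|\Psi_{\omega,\e}-\Id\|\le C\e^{2/3}$---not the weight $\la x\ra^\mu$ as such. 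Finally, your plan omits the second norm: besides B2 one must verify B3, i.e.\ that $\phi\mapsto P(\phi)\in\B(\l_0^2,\l_{-2\delta}^2)$ analytically (done in the paper by pseudodifferential calculus), which is controlled at every step of the iteration and is what gives meaning to the reducibility identity in $\l_{-2\iota}^2$, hence to the stated equivalence of the two equations in $L^2$ (Theorem \ref{Theorem2.10} and Lemmas \ref{UA1}--\ref{UlAlconverge}).

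On the analytic half, one correction of substance: for $j,m$ comparable the dangerous phase is not the sum $S_j'+S_m'$, which is of order $X^\l$ throughout the classically allowed region, but the difference $\sqrt{\lambda_n-V(x)}-\sqrt{\lambda_m-V(x)}-k$, whose zero can migrate toward the turning point $X_m$, where the amplitude $(\lambda_m-V)^{-1/4}$ degenerates. It is the regime $\lambda_n-\lambda_m\sim kX_n^\l$ (the paper's Lemma \ref{xmcomplex4}, with the intermediate cut at $X_m^{\frac{2\mu+2\l}{2\mu+2\l+1}}$) that produces the exponent $\frac{\mu+1}{2\mu+2\l+1}$, while the pure turning-point layers produce the $\frac13$. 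Your identification of the hardest point (near-stationary phase interacting with the turning point) is correct in spirit, but the phase you name is the wrong one, and without the case-splitting in $\lambda_n-\lambda_m$ versus $kX_m^\l$ the claimed exponents, and hence the stated admissible range of $\mu$, do not follow.
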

\begin{remark}
\[\txs
(\l-\frac23)\bigwedge\frac{\sqrt{4\l^2-2\l+1}-1}2=
\begin{cases}
\l-\frac23,&1<\l<\frac43;\\
\frac{\sqrt{4\l^2-2\l+1}-1}2=\l-\frac34+\frac1{16}\l^{-1}+\O(\l^{-2}),&\l\ge\frac43.
\end{cases}
\]
From a straightforward computation we have $\frac{\sqrt{4\l^2-2\l+1}-1}2\geq  \l-\frac34$ when $\l\geq \frac43$.
\end{remark}
\begin{remark} $s$ satisfies $0<s<\rho$.
\end{remark}

\begin{remark}
In \cite{BG} Bambusi and Graffi first proved the reducibility of 1d Schr\"odinger equation with an unbounded time quasiperiodic perturbation. They assumed a similar potential as (\ref{introduction01}) and the perturbation operator is $\epsilon W(x,\omega t )$ with $|W(x,\phi)|\sim |x|^{\beta}$ as $|x|\rightarrow \infty$, where $\beta<\l-1$.  The reducibility in the limiting case $\beta=\l-1$ was obtained by Liu and Yuan in \cite{LY10}. Comparing with \cite{BG} and \cite{LY10}, we improve  the boundedness for $\beta$ from $\l-1$ to $\txs(\l-\frac23)\bigwedge\frac{\sqrt{4\l^2-2\l+1}-1}2$ for  $\l>1$ when  the perturbation terms have the oscillatory forms as (\ref{introduction01}).
\end{remark}

\begin{remark}
In \cite{BamII} Bambusi studied the reducibility of 1d Schr\"odinger  equations
\begin{equation*}%\label{scheqn1}
\begin{aligned}
&\H_0(t)\psi(x,t)=\rmi\partial_t\psi(x,t),\quad x\in\R;\\
&\H_0(t):=-\frac{d^2}{dx}+\cal V(x)+\e W(\omega t),\quad\e\in\R,
\end{aligned}
\end{equation*}
where  $\cal V(x)\sim |x|^{2\l}$  at infinity and  satisfies a similar condition as ${\rm A1}$. The perturbation operator $W(\omega t)$ belongs to a class of unbounded symbols,
in which the oscillatory perturbation terms $\e \la x\ra^{\mu} W(\nu x,\omega t)$ in (\ref{introduction01}) are clearly excluded by  \cite{BamII}, \cite{BamI} and \cite{BamIII}.  See
Remark 2.7 in \cite{BamII} for further explanations.
\end{remark}

\begin{remark}
See  \cite{LiangLuo2019} for 1d quantum harmonic oscillators with similar perturbation terms as (\ref{introduction01}).  We remark that the reducibility results in \cite{GP16}, \cite{LiangLuo2019}  and \cite{LiangW19} were proved only in $\mathcal{H}^1$.   The reason partly lies in  (iii) of Lemma 2.1 in \cite{GP16}.   In our notations, when the perturbation operator $P$ belongs to $\mathcal M_{\beta}$, one can only deduce that $P\in\cal{B}(\l_t^2;\l_{-t}^2)$ for  $t>2\beta+1$  from
Lemma  \ref{algebra}, where $0\leq 2\beta<\iota-1$.
\end{remark}

Similarly, we consider the Schr\"odinger equation
\begin{equation}\label{scheqnsin}
\begin{aligned}
&\H_1(t)\psi(x,t)=\rmi\partial_t\psi(x,t),\quad x\in\R;\\
&\H_1(t):=-\frac{d^2}{dx^2}+\cal V(x)+\e\la x\ra^\mu X(x,\ot),
\end{aligned}
\end{equation}
where $X(x,\phi)=\sum\limits_{k\in\Lambda}a_k(\phi)\sin kx+b_k(\phi)\cos kx$ with $k\in\Lambda\subset\R\backslash\{0\}$ with $|\Lambda|<\infty$, $a_k(\phi)$ and $b_k(\phi)$ are analytic on $\T_\rho^n$ and continuous on $\overline{\T_\rho^n}$.
\begin{Theorem}\label{mainthm2}
Assume A1 and $\mu<\txs(\l-\frac23)\bigwedge\frac{\sqrt{4\l^2-2\l+1}-1}2$. For a $\gamma>0$ small, there exists $\e_*>0$ such that for all $0\leq \e<\e_*$ there exists a closed set $\Pi_*\subset\Pi:=[0,1]^n$ and $\forall~\omega\in\Pi_*$, the linear Schr\"odinger equation  \eqref{scheqnsin} reduces to a linear equation with constant coefficients in $ L^2$.
\end{Theorem}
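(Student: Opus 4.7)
The plan is to deduce Theorem \ref{mainthm2} from Theorem \ref{mainthm1} by recasting \eqref{scheqnsin} as an instance of \eqref{introduction01} for which the Diophantine assumption A2 on the spatial frequencies is automatic and the oddness requirement in A3 can be discarded. Enumerate $\Lambda=\{k_1,\ldots ,k_m\}$, set $\nu:=(k_1,\ldots,k_m)\in\R^m$, and define
\[
W(\varphi,\phi):=\sum_{j=1}^m \bigl[a_{k_j}(\phi)\sin\varphi_j+b_{k_j}(\phi)\cos\varphi_j\bigr],\qquad(\varphi,\phi)\in\T^m\times\T^n,
\]
so that $X(x,\ot)=W(\nu x,\ot)$, the analyticity of $a_k,b_k$ on $\T_\rho^n$ is inherited by $W$, and \eqref{scheqnsin} acquires exactly the form of \eqref{introduction01} with $d=m$.

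I would next verify that A2 is free in this setting. The Fourier support of $W$ in $\varphi$ is the finite set $\{\pm e_j:1\le j\le m\}\subset\Z^m$, and $|\la\pm e_j,\nu\ra|=|k_j|\ge\min_{1\le j\le m}|k_j|>0$. Every invocation of A2 in the proof of Theorem \ref{mainthm1} occurs at indices $k$ lying in the Fourier support of $W$, so this uniform lower bound (with $\bar\gamma:=\min_j|k_j|$ and any admissible $\tau_1$) substitutes for the Diophantine estimate and no set in $\nu$ needs to be removed.

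The main obstacle is that $W$ is not odd in $\varphi$ because of the $\cos k x$ terms. To handle this I would split $X=X_o+X_e$ with $X_o(x,\phi):=\sum_{k\in\Lambda}a_k(\phi)\sin k x$ and $X_e(x,\phi):=\sum_{k\in\Lambda}b_k(\phi)\cos k x$; apply Theorem \ref{mainthm1} verbatim to the odd part $X_o$, for which the associated $W_o$ is odd in $\varphi$; and for the even part re-establish the decay estimate for the matrix elements $\la h_j,\cos(k x)h_l\ra$ along the lines developed in the paper for $\la h_j,\sin(k x)h_l\ra$. The oscillatory-integral/Langer turning-point analysis is insensitive to replacing $\sin$ by $\cos$, so these estimates are produced in parallel. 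Then I would rerun the KAM iteration of Theorem \ref{mainthm1} with the combined perturbation, absorbing the now non-vanishing diagonal entries $b_k(\phi)\la h_j,\cos(k x)h_j\ra$ into small corrections of the limiting frequencies $\lambda_j^\infty$ that preserve the asymptotic estimate $|\lambda_j^\infty-\lambda_j|\le C\e j^{(\cdots)\vee 0}$ stated in Theorem \ref{mainthm1}.

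The technical heart is to check that the KAM scheme survives the loss of parity structure: namely, that the second Melnikov conditions $|\la m,\omega\ra+\lambda_j^\infty-\lambda_l^\infty|\ge\gamma/(\cdots)$ continue to yield the measure estimate $\text{meas}(\Pi\setminus\Pi_*)\le C\gamma$ after the extra diagonal corrections coming from $X_e$ are included, and that the inductive symbol-class bounds of Theorem \ref{mainthm1} do not rely on the $\Z/2$ grading afforded by oddness. Both checks should follow from the same decay in $|j-l|$ for the oscillatory matrix elements, but they must be written out carefully to conclude.
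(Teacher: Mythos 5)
Your overall route is sound and is, in substance, the paper's own: Theorem \ref{mainthm2} is obtained by verifying the hypotheses B1--B3 of the abstract Theorem \ref{Theorem2.10} for the matrix of the perturbation, exactly as in the proof of Theorem \ref{mainthm1}. What differs is that you locate the main obstacle in the wrong place and therefore build machinery that is not needed. The oddness assumption A3 is used in the proof of Theorem \ref{mainthm1} only once, in Lemma \ref{yanzhengB2}: it removes the zero Fourier mode of $W$ in $\varphi$, so that every term contributing to $P_i^j(\phi)$ is an oscillatory integral $\int_\R\la x\ra^\mu e^{\rmi k\cdot\nu x}h_i h_j\,dx$ with $k\ne0$, to which Lemma \ref{mainlem} applies; similarly A2 only controls the factor $(|k\cdot\nu|\vee|k\cdot\nu|^{-1})$ over infinitely many modes. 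For \eqref{scheqnsin} the perturbation is already a finite combination of $e^{\pm\rmi kx}$ with $k\in\Lambda\subset\R\setminus\{0\}$ (write $\sin kx$ and $\cos kx$ as combinations of exponentials), and Lemma \ref{mainlem} is stated for $e^{\rmi kx}$, not for $\sin kx$; so it covers the cosine part verbatim, no turning-point analysis needs to be ``re-established'', and the factor $(|k|\vee|k|^{-1})$ is uniformly bounded because $\Lambda$ is finite and avoids $0$. Hence B2 holds with the same $\beta$ as in Lemma \ref{yanzhengB2}, B3 follows from Lemma \ref{chengfasuanzi} since $X$ is bounded, and Theorem \ref{Theorem2.10} together with Corollary \ref{tuilun2.12} concludes. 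Consequently the splitting $X=X_o+X_e$ and the plan to ``apply Theorem \ref{mainthm1} verbatim to the odd part'' contribute nothing (reducibility is not obtained summand by summand anyway; what matters is checking B1--B3 for the full perturbation), and the ``technical heart'' you defer --- that the Melnikov conditions and the iterative bounds survive the loss of parity --- is vacuous: Theorem \ref{Theorem2.10} makes no parity or Diophantine-in-$\nu$ assumption at all, and the diagonal entries of $P$ are absorbed into the frequencies at every KAM step whether or not they vanish, with the stated bound $|\lambda_j^\infty-\lambda_j|\le Cj^{2\beta}\e$ unaffected. So your proposal would succeed, but the checks you flag as remaining are already contained in the abstract theorem, and the cleaner argument is the direct verification just described.
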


We consider the Schr\"odinger equation
\begin{equation}\label{scheqnbam}
\begin{aligned}
&\H_2(t)\psi(x,t)=\rmi\partial_t\psi(x,t),\quad x\in\R;\\
&\H_2(t):=-\frac{d^2}{dx^2}+\cal V(x)+\e\la x\ra^\mu g(x,\omega t),
\end{aligned}
\end{equation}
where $g(x,\phi)$ is continuous on $x\in\R$ and analytic on $\T_\rho^n$ and there exists a  positive constant $C$ such that  for any $(x,\phi)\in\R\times\T_\rho^n$,
$
|g(x,\phi)|\le C.
$
\begin{Corollary}\label{maincorBam}
Assume A1 and $\mu<\l-1$. For a $\gamma>0$ small, there exists $\e_*>0$ such that for all $0\leq \e<\e_*$ there exists a closed set $\Pi_*\subset\Pi:=[0,1]^n$ and $\forall~\omega \in\Pi_*$, the Schr\"odinger equation \eqref{scheqnbam} reduces to a linear equation with constant coefficients in $ L^2$.
\end{Corollary}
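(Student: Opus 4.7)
The approach is to deduce the corollary from the same reducibility machinery that drives Theorem \ref{mainthm1}, with the multiplication operator $P(\phi) := \langle x\rangle^\mu g(x,\phi)$ replacing the oscillatory perturbation $\langle x\rangle^\mu W(\nu x,\phi)$. The idea is simply that a bounded but otherwise generic $g$ forces the strictly smaller exponent range $\mu<\ell-1$, and that this loss is compensated at the matrix-element level by the trivial bound $|g(x,\phi)|\leq C$.

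First, I verify compatibility of the hypothesis with the surrounding theorems: for $\ell>1$ one has $\ell-1<\ell-\tfrac23$ trivially, and from the remark following Theorem \ref{mainthm1} one has $\tfrac{\sqrt{4\ell^2-2\ell+1}-1}{2}\geq \ell-\tfrac34>\ell-1$ whenever $\ell\geq \tfrac43$, so $\mu<\ell-1$ is strictly stronger than the threshold required in Theorem \ref{mainthm1}. Next, I compute the matrix elements of $P$ in the eigenbasis $\{h_j\}$ of $H_0$ and bound them via
\[
P_{jk}(\phi)=\int_\R \langle x\rangle^\mu g(x,\phi)\,h_j(x)h_k(x)\,dx,\qquad |P_{jk}(\phi)|\leq C\int_\R \langle x\rangle^\mu |h_j(x)|\,|h_k(x)|\,dx.
\]
The right-hand side is then controlled by the pointwise Langer/WKB envelopes for $h_j$ that are already developed in the paper: split $\R$ into the oscillatory zone $|x|\lesssim \lambda_j^{1/(2\ell)}$, the turning-point layer, and the exponentially decaying tail, and estimate each piece separately. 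A direct scaling argument, using $\lambda_j \sim j^{2\ell/(\ell+1)}$ and the standard WKB amplitude $(\lambda_j-x^{2\ell})^{-1/4}$, shows that the oscillatory zone is the dominant contribution and scales like $j^{(\mu+1-\ell)/(\ell+1)}$, which is integrable in $j$ precisely when $\mu<\ell-1$; the turning-point layer is handled by the Airy profile and gives a strictly smaller contribution, and the tail is negligible. This places $P$ in the class $\mathcal{M}_\beta$ for some admissible $\beta=\beta(\mu,\ell)$ compatible with the constraint quoted in the remark after the fifth remark.

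Finally, the analyticity of $g$ on $\T_\rho^n$ transfers to operator analyticity of $P(\phi)$ in the $\mathcal{B}(L^2)$-norm on the same strip uniformly in $x$, so the improved reducibility theorem underlying Theorem \ref{mainthm1} applies essentially without modification, producing the unitary conjugation $\Psi_{\omega,\e}(\phi)$, the diagonal reduced operator $\mathcal{H}_\infty$, and the measure estimate on $\Pi_*$. I expect the main difficulty to be precisely the matrix-element bound above in the absence of any oscillatory cancellation: without the rapidly oscillating factor $e^{i\nu k x}$ exploited in Theorems \ref{mainthm1}--\ref{mainthm2}, one cannot integrate by parts to gain additional $j,k$-decay, so one must show that the coarse pointwise majorant already suffices. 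This is exactly where the stronger hypothesis $\mu<\ell-1$ becomes essential, and why the improved exponent of Theorem \ref{mainthm1} cannot be recovered in this more general setting.
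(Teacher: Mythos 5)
Your overall route --- estimate the matrix elements $P_i^j(\phi)=\int_\R\la x\ra^\mu g(x,\phi)h_i h_j\,dx$ by the coarse Langer/WKB envelopes with no oscillatory cancellation, and then feed the result into the abstract reducibility theorem --- is the paper's route (the paper simply invokes Lemma \ref{mainlem2}, whose proof is exactly the region-by-region envelope estimate you sketch). But the quantitative heart of your argument is wrong. Keeping the normalization constant $C_j\sim X_j^{\frac{\l-1}{2}}$ (with $X_j\sim\lambda_j^{1/(2\l)}\sim j^{1/(\l+1)}$), the classically allowed region gives $|P_i^j(\phi)|\le C(\lambda_i\lambda_j)^{\frac{\mu}{4\l}}\sim C(ij)^{\frac{\mu}{2(\l+1)}}$, which \emph{grows} in $i,j$ for $\mu>0$; your exponent $(\mu+1-\l)/(\l+1)$ drops the factor $C_iC_j$, and ``integrable in $j$'' is not the relevant criterion in any case. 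The scheme does not need decay or summability of matrix elements: it needs $P(\phi)\in\M_\beta$ with $2\beta<\iota-1=\frac{\l-1}{\l+1}$ (assumption B2), because the homological equation is solved via Kuksin's lemma under $E_1^\theta\ge C_0E_2$ with $\theta=\frac{2\beta}{\iota-1}<1$. With $2\beta=\frac{\mu}{\l+1}$ this is precisely $\mu<\l-1$, so you land on the right inequality but for the wrong reason.

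The second, and essential, gap is your claim that analyticity of $g$ gives analyticity of $P(\phi)$ in the $\B(L^2)$-norm: for $\mu>0$ the multiplication operator $\la x\ra^\mu g(x,\phi)$ is unbounded on $L^2$, so this statement is false and cannot serve as the second hypothesis of the reducibility theorem. What is actually required is B3, namely that $\phi\mapsto P(\phi)\in\B(\l_0^2,\l_{-2\delta}^2)$ is analytic for some $\delta$ with $2\beta\le\delta<\iota-\beta-\frac12$ and $\mu\le\delta(\l+1)$; the paper verifies this by the pseudodifferential calculus (Lemmas \ref{CVT} to \ref{chengfasuanzi}), choosing $\delta=\frac{\l}{\l+1}$, and the constraint $\delta<\iota-\beta-\frac12$ with $\beta=\frac{\mu}{2(\l+1)}$ is again exactly $\mu<\l-1$. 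This hypothesis is not cosmetic: without B3 the convergence of the conjugated operators fails (see Lemma \ref{UA1} and the remark following it). Your proposal omits this verification entirely, so as written it does not yield the corollary; the case $\mu<0$ (take $\beta=\delta=0$) should also be recorded separately.
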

\begin{remark}
The set $\Pi_{*}$ in Theorem \ref{mainthm2} and Corollary \ref{maincorBam} is similar as that in Theorem \ref{mainthm1}.
\end{remark}
\begin{remark}
Corollary \ref{maincorBam} was firstly proved in \cite{BG}.
\end{remark}
A consequence of the above theorems and corollary is that in the considered range of parameters all the Sobolev norms, i.e. the $\H^s$ norms of the solutions are bounded forever and the spectrum of the Floquet operator is pure point.\\
 \indent In the end we recall some relevant results. See \cite{GT11},  \cite{Wang08} and \cite{WLiang17} for the reducibility results for 1d harmonic oscillators with bounded perturbations. We remark that
the pseudodifferential calculus is used for checking the assumption B3 in this paper.   More applications of pseudodifferential calculus can be found in the following  papers (e.g. \cite{BBM14, BBHM17, BM16, FP15, Giu17, IPT05,PT01}).  We mention that some higher dimensional results  have been recently obtained \cite{BGMR18,FGMP18, GP16, LiangW19,  Mon17b}. \\
\indent As we mentioned before, the reducibility implies the boundedness of the solutions in some Sobolev norms for all the time.  There are many literatures relative with the upper boundedness of the solution in some Sobolev space( e.g. \cite{BGMR2019}, \cite{Bou99}, \cite{MR2017}, \cite{Mon18}). There are not too much papers to study the lower boundedness of the PDEs. See the interesting examples given by Bourgain for a Klein-Gordon and Schr\"odinger equation on $\T$(\cite{Bou99}), by Delort for the harmonic oscillator on $\R$ (\cite{Del14}). Combining  the ideas in \cite{BGMR18} and \cite{Eli1992},  Z.  Zhao, Q.  Zhou and the first author \cite{LZZ20} build some lower boundedness estimates  for 1d harmonic oscillators with quadratic time-dependent perturbations. We remark that the result in \cite{Del14} was  reproved in \cite{Mas2018} by exploiting the idea in \cite{GY00}.

\subsection{A new oscillatory integral estimation}
The following oscillatory integral estimations are critical for us to establish Theorem \ref{mainthm1} and \ref{mainthm2}.
\begin{assm}\label{poteass}
The potential $V(x)$ is real-valued and of $C^3$-class. There exists a positive constant $R_0$ such that the following conditions are satisfied for $V(x)$ when $|x|\ge R_0$:
\begin{flalign}
(\rm i).\phantom{\rm ii}&~V''(x)\ge0.&\label{convex}\\
(\rm ii).\phantom{\rm i}&~\text{For}~j=1,2,3,~|xV^{(j)}(x)|\le C_1|V^{(j-1)}(x)|,~\text{where}~C_1\ge1.&\label{decay}\\
(\rm iii).&~\text{For}~\l>1,~D_1|x|^{2\l}\le V(x)\le D_2|x|^{2\l},~\text{where}~0<D_1\le1\le D_2<\infty.&\notag%\label{growth}
\end{flalign}
\end{assm}
\begin{assm}\label{pertass}
The function $f(x)$ is real-valued and of $C^1$-class. There exist positive constant $R_0$ and $C_2>0$ such that for $|x|\ge R_0$,  $|f(x)|\le C_2|x|^\mu$ and
$|f'(x)|\le C_2|x|^{\mu-1}$
where $\mu\ge0$.
\end{assm}
\begin{lemma}\label{mainlem} Assume $V(x)$ satisfies Assumption \ref{poteass}.
Let $h_n(x)$ be the normalized eigenfunction of $H=-\partial_{xx}+V(x)$ with the eigenvalue $\lambda_n$
and $f(x)$ satisfies Assumption \ref{pertass}, then for any $k\neq 0$, one has
\begin{equation}\label{mainesto}
\Big|\int_{\R}f(x)e^{{\rm i} kx}h_m(x) \overline{{h}_n(x)}dx\Big| \le C (|k|\vee  |k|^{-1})(\lambda_m\lambda_n)^{\frac\mu{4\l}-\frac1{4\l}(\frac13\bigwedge\frac{\mu+1}{2\mu+2\l+1})},
\end{equation}
where $C$ depends on $(\mu,\l)$ and
$\mu\ge0$.
\end{lemma}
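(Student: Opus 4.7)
The plan is to combine Langer's uniform turning-point asymptotics for $h_n(x)$ with oscillatory-integral arguments on the resulting matrix element, and to optimise a cutoff around the turning points to match the two exponents in the minimum.

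First, I would apply Langer's turning-point method to obtain uniform asymptotics for $h_n(x)$. Denote by $\pm a_n$ the turning points, defined by $V(\pm a_n)=\lambda_n$; from Assumption \ref{poteass} one has $a_n\sim\lambda_n^{1/(2\l)}$. Away from the turning points and inside the classically allowed region, the WKB form
\begin{equation*}
h_n(x)=\frac{C_n}{(\lambda_n-V(x))^{1/4}}\cos\bigl(\Phi_n(x)+\theta_n\bigr)+(\text{lower order}),\qquad \Phi_n(x):=\int_0^x\sqrt{\lambda_n-V(t)}\,dt,
\end{equation*}
holds, and the normalisation $\|h_n\|_{L^2}=1$ together with $T_n:=d\Phi_n(a_n)/d\lambda_n\sim a_n^{1-\l}$ forces $C_n\sim\lambda_n^{(\l-1)/(4\l)}$. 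Near each turning point, I would replace this by an Airy uniform approximation, from which $\|h_n\|_{L^\infty}\lesssim \lambda_n^{-1/12}a_n^{-1/2}$ in a window of any width around $\pm a_n$. In the classically forbidden region $h_n$ is exponentially small.

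Second, fix a parameter $\delta>0$ and split the real line into three zones: the classically forbidden region (for at least one of $h_m,h_n$); two windows of width $\delta$ around each of $\pm a_m,\pm a_n$; and the remaining bulk. The forbidden-region contribution is exponentially small by eigenfunction decay and $|f(x)|\lesssim|x|^\mu$. On the turning-point windows, combining the Airy $L^\infty$ bound with $|f|\lesssim (a_m\vee a_n)^\mu$ yields a bound of order $\delta\,(\lambda_m\lambda_n)^{-1/12}(a_ma_n)^{-1/2}(a_m\vee a_n)^\mu$. On the bulk, I would substitute the WKB form, expand the product $h_m\overline{h_n}$ into four complex exponentials, and study each oscillatory integral
\begin{equation*}
\int g_{\sigma,\sigma'}(x)\,e^{\,\rmi\Psi_{\sigma,\sigma'}(x)}\,dx,\qquad \Psi_{\sigma,\sigma'}(x)=\sigma\Phi_m(x)+\sigma'\Phi_n(x)+kx,\qquad \sigma,\sigma'\in\{\pm1\},
\end{equation*}
with amplitude $g_{\sigma,\sigma'}(x)=C_mC_n f(x)[(\lambda_m-V(x))(\lambda_n-V(x))]^{-1/4}$. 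The derivative is $\Psi_{\sigma,\sigma'}'(x)=\sigma\sqrt{\lambda_m-V(x)}+\sigma'\sqrt{\lambda_n-V(x)}+k$: away from any stationary point I integrate by parts once, gaining $|\Psi_{\sigma,\sigma'}'|^{-1}$, which at worst contributes the factor $|k|\vee|k|^{-1}$; near a stationary point I use van der Corput of order two with $\Psi_{\sigma,\sigma'}''\sim V'(x)/\sqrt{\lambda_n-V(x)}$, which loses precisely the degenerating factor that forced the turning-point cutoff in the first place.

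Third, I would optimise $\delta$. The bulk stationary-phase contribution blows up like a negative power of $\delta$ (through the prefactor $[(\lambda_m-V)(\lambda_n-V)]^{-1/4}$ evaluated at the boundary of the window), while the turning-point contribution grows linearly in $\delta$. Setting the two equal, the universal Airy exponent $1/3$ (coming from the $\lambda_n^{-1/12}$ Airy $L^\infty$ bound, independent of $\mu$ and $\l$) is what controls the gain when $\mu$ is small; the exponent $\frac{\mu+1}{2\mu+2\l+1}$ comes out of the matching equation whenever the balance between $|f(x)|\sim|x|^\mu$ near the turning point (at distance $\delta$) and $V(x)\sim|x|^{2\l}$ becomes the binding constraint, i.e.\ for larger $\mu$. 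Taking the better of the two yields the $\bigwedge$ in the exponent $\frac{\mu}{4\l}-\frac1{4\l}\bigl(\frac13\wedge\frac{\mu+1}{2\mu+2\l+1}\bigr)$.

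I expect the main obstacle to be the bookkeeping in the third step, in particular verifying that the van der Corput estimate on the bulk truly degenerates at the predicted rate as one approaches the turning points (so that the equilibration against the Airy window is sharp), and handling the most delicate phase $\Psi_{+,-}$ when $\lambda_m\approx \lambda_n$ and $|k|$ is small, where the phase is almost linear in $x$ and no stationary point exists in the bulk—there one must rely entirely on the integration-by-parts gain producing the $|k|^{-1}$ factor. A secondary technical point is controlling the Langer remainder and gluing the WKB and Airy asymptotics consistently across the cutoff boundary of width $\delta$.
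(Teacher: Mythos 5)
Your overall strategy is the same as the paper's: Langer turning-point asymptotics for $h_n$ (the paper uses the exact Hankel-function representation $h_n=\psi_1^{(n)}(1+O(X_n^{-\l-1}))$ with $\psi_1^{(n)}=C_n(\lambda_n-V)^{-1/4}e^{\rmi\zeta_n}f_n$, $C_n\sim X_n^{(\l-1)/2}$), a splitting of the line relative to the turning points $X_m,X_n$, direct size estimates on the turning-point windows, and van der Corput/integration-by-parts estimates in the bulk. So the skeleton is right. The genuine gap is in your third step, precisely in the regime you flag as delicate. For the difference phase (the only one that survives the Hankel representation), the phase derivative is $g(x)=\sqrt{\lambda_n-V}-\sqrt{\lambda_m-V}-k$ and its second derivative is
\begin{equation*}
g'(x)=\tfrac12 V'(x)\big((\lambda_m-V)^{-1/2}-(\lambda_n-V)^{-1/2}\big)\;\sim\;\frac{V'(x)\,(\lambda_n-\lambda_m)}{(\lambda-V)^{3/2}\,},
\end{equation*}
not $V'/\sqrt{\lambda_n-V}$ as in your sketch (that is the curvature of a single phase $\Phi_n$, or of the sum phase). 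It degenerates both when $\lambda_n-\lambda_m$ is small and when $x$ is small, because $V'(x)\sim x^{2\l-1}$ vanishes at the bottom of the well. A second-order van der Corput step based on $\Psi''\sim V'/\sqrt{\lambda_n-V}$ is therefore unjustified exactly where the final exponent is decided.

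Relatedly, your account of where $\frac{\mu+1}{2\mu+2\l+1}$ comes from is not correct: it is not produced by balancing $|x|^\mu$ against an Airy window of width $\delta$ at the turning point. In the paper it appears in Lemma \ref{xmcomplex4}, i.e. in the sub-case $\sqrt{a_1}kX_m^{\l}\le\lambda_n-\lambda_m<3\sqrt{D_2}kX_n^{\l}$ where the stationary point of $g$ sits near the bottom of the well: one splits at $a=X_m^{\frac{2\mu+2\l}{2\mu+2\l+1}}$, uses the trivial bound carrying the weight $x^\mu$ on $[X_m^{2/3},a]$, and the first/second-derivative tests of Lemma \ref{oscint} with $g'\gtrsim kX_m^{-\frac{2\mu+4\l}{2\mu+2\l+1}}$ on $[a,b]$; the full argument requires a case analysis on the size of $\lambda_n-\lambda_m$ relative to $kX_m^{\l-\frac13},\,kX_m^{\l-\frac16},\,kX_m^{\l},\,kX_n^{\l}$ (Lemmas \ref{xmcomplex1}--\ref{xmcomplex5}), exploiting the monotonicity of $g$ to locate its level sets. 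A single optimization in a turning-point window width $\delta$ cannot generate this exponent; a telltale sign is that your sketch predicts $\frac13$ to bind for small $\mu$ and $\frac{\mu+1}{2\mu+2\l+1}$ for large $\mu$, whereas in fact $\frac{\mu+1}{2\mu+2\l+1}<\frac13$ exactly when $\mu<2\l-2$, so the roles are reversed. (A smaller slip: near the turning point the correct sup bound is $|h_n|\lesssim X_n^{(\l-2)/6}$, i.e. $\lambda_n^{\frac{\l-2}{12\l}}$, not $\lambda_n^{-1/12}a_n^{-1/2}$; with your claimed bound the window contribution would be underestimated for $\l>1$.) To repair the proposal you would need to replace the heuristic $\delta$-optimization by the explicit decomposition of $[0,X_n]$ and the $\lambda_n-\lambda_m$ versus $k$ case analysis, which is the actual content of the paper's proof.
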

\begin{remark}
See \cite{YajimaZhang} for $L^{p}$ estimate of $h_n(x)$ based on Langer's turning point method when $n$ is large enough.  For a complete introduction of  Langer's turning point method refer to  the contents in Chapter 22.27 of \cite{T2}. Refer to  \cite{WLiang17} for a weighted $L^2$ estimate of the eigenfunctions of $-\partial_{x}^2+x^2$ on $\R$, which is another application of this method.
\end{remark}
\begin{remark}
In \cite{LiangLuo2019},  Luo and the first author proved the following:
 for any $k\neq 0$ and  for any $m,n\geq 1$,
\begin{eqnarray}\label{liangluo}
\Big|\int_{\R}\la x\ra^{\mu} e^{{\rm i} kx}f_m(x) \overline{{f}_n(x)}dx\Big| \leq C (|k|\vee  |k|^{-1})m^{-\frac{1}{12}+\frac{\mu}{4}} n^{-\frac{1}{12}+\frac{\mu}{4}},
\end{eqnarray}
where $C$ is an absolute constant and $\mu\ge0$ and $(-\frac{d^2}{dx^2}+x^2)f_j=(2j-1)f_j$ with $\|f_j\|_{L^2(\R)}=1$ with $ j\geq 1$.
\end{remark}

\begin{lemma}\label{mainlem2} Assume $V(x)$ satisfies Assumption \ref{poteass} and $h_n(x)$ is the same as in Lemma \ref{mainlem}. If $f(x)$ is continuous and  satisfies $|f(x)|\le C_2|x|^\mu$ for $|x|\ge  R_0$ with $0\le\mu<\l-1$, then
\begin{equation}\label{mainest}
\Big|\int_{\R}f(x)h_m(x) \overline{{h}_n(x)}dx\Big| \le C(\lambda_m\lambda_n)^{\frac\mu{4\l}}.
\end{equation}
\end{lemma}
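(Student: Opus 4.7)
The plan is to avoid the delicate Langer turning-point asymptotics used in Lemma \ref{mainlem} and replace them by a short spectral argument, because the absence of an oscillatory factor $e^{\rmi kx}$ removes any mechanism for stationary phase to improve the exponent. The target is the diagonal estimate $\int|f||h_n|^{2}dx\le C\lambda_n^{\mu/(2\l)}$, after which Cauchy--Schwarz immediately gives the desired bound.

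First, factor the integrand as $|f|^{1/2}|h_m|\cdot|f|^{1/2}|h_n|$ and apply Cauchy--Schwarz to reduce the claim to the single-eigenfunction bound
\[
\int_{\R}|f(x)||h_n(x)|^{2}\,dx\le C\,\lambda_n^{\mu/(2\l)}.
\]
Second, continuity of $f$ together with $|f(x)|\le C_2|x|^{\mu}$ for $|x|\ge R_0$ gives $|f(x)|\le C(1+|x|^{\mu})$; combining with the lower bound $V(x)\ge D_1|x|^{2\l}$ from Assumption \ref{poteass}(iii) for $|x|\ge R_0$ and absorbing the compact part into a constant yields
\[
|f(x)|\le C+C\,V_+(x)^{\mu/(2\l)},\qquad V_+:=\max(V,0).
\]
Third, for a suitable constant $M\ge 0$ one has $V+M\ge 0$ and $V+M\le H+M$ as positive self-adjoint operators on a common form domain. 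Since $\mu/(2\l)\in[0,1]$ (comfortably, as $\mu<\l-1<2\l$), the L\"owner--Heinz operator monotonicity of $t\mapsto t^{\mu/(2\l)}$ gives $(V+M)^{\mu/(2\l)}\le (H+M)^{\mu/(2\l)}$, and pointwise $V_+^{\mu/(2\l)}\le (V+M)^{\mu/(2\l)}$, hence
\[
\int_{\R}V_+^{\mu/(2\l)}|h_n|^{2}\,dx\le \langle (H+M)^{\mu/(2\l)}h_n,h_n\rangle=(\lambda_n+M)^{\mu/(2\l)}\le C\,\lambda_n^{\mu/(2\l)},
\]
using $\lambda_n\ge\lambda_1>0$ to absorb the shift. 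Combining the three steps with Cauchy--Schwarz gives the claimed $C(\lambda_m\lambda_n)^{\mu/(4\l)}$.

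The only point requiring care is the application of L\"owner--Heinz to the unbounded operators $V$ and $H$: one should either truncate $V$ to $\min(V,N)$, apply the bounded version of the inequality, and pass to the monotone limit in $N$, or cite a form-domain version directly. This is standard but must be stated. The weaker exponent $(\lambda_m\lambda_n)^{\mu/(4\l)}$ here (as compared with Lemma \ref{mainlem}) is precisely the gap that would have been closed by stationary phase in the presence of an oscillatory factor $e^{\rmi kx}$.
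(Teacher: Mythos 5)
Your proof is correct, but it follows a genuinely different route from the paper's. The paper establishes Lemma \ref{mainlem2} with the same Langer turning-point apparatus used for Lemma \ref{mainlem}: the representation $h_n=\psi_1^{(n)}+\psi_2^{(n)}$ of Lemma \ref{eigenfunction}, the bounds of Lemma \ref{qQesti} on $\lambda_n-V(x)$ and $\zeta_n(x)$, and a region-by-region estimate of the integral on $[0,\infty)$ (the low-index cases, then $[0,2X_m]$, $[2X_m,X_n-X_n^{-1/3}]$, $[X_n-X_n^{-1/3},X_n]$, $[X_n,\infty)$), yielding $C(X_mX_n)^{\mu/2}\sim C(\lambda_m\lambda_n)^{\frac{\mu}{4\l}}$, with the half-line $(-\infty,0)$ handled by the reflection argument; you instead reduce by Cauchy--Schwarz to the diagonal estimate $\int_\R|f|\,h_n^2\,dx\le C\lambda_n^{\mu/(2\l)}$ and obtain that by a soft spectral argument, which avoids the eigenfunction asymptotics altogether and uses only the growth bound $V\ge D_1|x|^{2\l}$ at infinity plus local boundedness --- neither the convexity condition (i) nor the decay condition (ii) of Assumption \ref{poteass} is needed --- so your proof is shorter and more general, while the paper's route has the advantage of reusing machinery that is in any case indispensable for Lemma \ref{mainlem}, where the oscillatory factor genuinely requires pointwise asymptotics. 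Two small points. First, the L\"owner--Heinz step, though correct once phrased in the form/truncation version you indicate, can be bypassed entirely: since $h_n^2\,dx$ is a probability measure, H\"older (or Jensen) with $\theta=\mu/(2\l)\in[0,1)$ gives $\int_{|x|\ge R_0}|x|^{\mu}h_n^2\,dx\le\big(\int_{|x|\ge R_0}|x|^{2\l}h_n^2\,dx\big)^{\theta}\le\big(D_1^{-1}\langle Vh_n,h_n\rangle+C\big)^{\theta}\le C(1+\lambda_n)^{\theta}$, using $\langle Vh_n,h_n\rangle\le\langle Hh_n,h_n\rangle=\lambda_n$; this makes the argument completely elementary. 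Second, absorbing the additive constant into $C\lambda_n^{\mu/(2\l)}$ uses $\lambda_n\ge\lambda_1>0$, which is not literally contained in Assumption \ref{poteass} but is implicit in the statement of the lemma (otherwise the right-hand side of \eqref{mainest} is not even meaningful) and is equally relied upon by the paper's own proof, so it is not a defect specific to your argument.
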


\noindent {Acknowledgements.}
The first author is very grateful to D. Bambusi for several invaluable discussions on the concerned perturbations in this paper.  Both authors were partially supported by National Natural Science Foundation of China(Grants No.  11371097; 11571249) and
Natural Science Foundation of Shanghai(Grants No. 19ZR1402400).

\section{A Reducibility Theorem}\label{section2}
Before give the proof of main theorem we present a reducibility theorem in a more abstract setting, which is a  part of KAM theory.
 We remark that KAM theory is almost well-developed for nonlinear Hamiltonian PDEs in 1-d context. See \cite{BBP2, GY1, KLiang, KaPo,  Kuk93, Ku1, Ku2, KP, LZ, LY1, LiuYuan, Pos, WLiang17, W90, ZGY} for 1-d KAM results.  Comparing with 1-d case, the KAM results for multidimensional PDEs are relatively few. Refer to  \cite{EGK, EK, GXY, GY2, GP16,  LiangW19, PX} for n-d results. We refer to \cite{Berti16} and \cite{Berti2019} for an almost complete picture of recent KAM theory.
\subsection{Setting} Following \cite{GP16}, we will introduce some spaces and norms and discuss some algebraic properties.
\\ \noindent
{\bf Linear Space.} Let $s\in \R$, we define the complex weighted-$\l^2$-space
\[
\l_s^2=\big\{\xi=(\xi_j\in\C,~j\in\Z_+)~\big|~\|\xi\|_s<\infty\big\},~\text{where }~\dss\|\xi\|_s^2=\sum_{j\in\Z_+}j^s|\xi_j|^2.
\]
{\bf Infinite Matrices.} We denote by $\M_\beta$ the set of infinite matrix $A:\Z_+\times\Z_+\mapsto\C$ that satisfy
$
|A|_\beta:=\sup_{i,j\ge1}|A_i^j|(ij)^{-\beta}<\infty.
$
We will also need the space $\M_\beta^+$ the following subspace of $\M_\beta$: an infinite matrix $A$ is in $\M_\beta^+$ if
$
|A|_\beta^+=\sup_{i,j\ge1}|A_i^j|(ij)^{-\beta}(1+|i-j|)(i^{\iota-1}+j^{\iota-1})<\infty,
$
where $0\le2\beta<\iota-1$ and $\iota>1$.\\
\begin{lemma}\label{expd}
For $\iota>1,~|k^\iota-j^\iota|\ge\frac12|k-j|(k^{\iota-1}+j^{\iota-1})$ (see \cite{KLiang}).
\end{lemma}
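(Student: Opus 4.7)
The statement is a purely elementary real-analysis inequality, and my plan is to reduce it to a one-variable inequality by homogeneity and then verify that inequality by direct algebraic manipulation.

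First I would note that the inequality is symmetric in $k,j$, so without loss of generality I may assume $k\ge j>0$. Dividing both sides by $k^\iota$ and setting $r=j/k\in(0,1]$, the claim
\[
k^\iota-j^\iota\ge \tfrac12(k-j)(k^{\iota-1}+j^{\iota-1})
\]
becomes equivalent to
\[
1-r^\iota\ \ge\ \tfrac12(1-r)(1+r^{\iota-1}).
\]
So the whole problem collapses to verifying this single inequality on $[0,1]$ for each fixed $\iota>1$.

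Next I would expand the right-hand side and rearrange. One has $\tfrac12(1-r)(1+r^{\iota-1})=\tfrac12(1+r^{\iota-1}-r-r^\iota)$, so after multiplying through by $2$ the inequality reduces to
\[
2-2r^\iota\ \ge\ 1+r^{\iota-1}-r-r^\iota,
\]
equivalently
\[
1+r-r^{\iota-1}-r^\iota\ \ge\ 0,
\]
which factors as $(1+r)(1-r^{\iota-1})\ge 0$. Since $r\in[0,1]$ and $\iota-1>0$, both factors are nonnegative, so the inequality holds, and the case of equality occurs only at $r=1$, i.e. $k=j$.

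I do not expect a real obstacle here: the argument is just normalization plus a two-line algebraic identity. As a sanity check I would also observe that the result is consistent with the mean value theorem estimate $k^\iota-j^\iota=\iota\xi^{\iota-1}(k-j)$ for some $\xi\in(j,k)$, which gives a cruder bound with $\iota\xi^{\iota-1}\ge \tfrac12(k^{\iota-1}+j^{\iota-1})$ only when one uses the convexity/concavity of $t\mapsto t^{\iota-1}$ according to whether $\iota\ge 2$ or $1<\iota<2$; the factorization route above bypasses this case split entirely and gives the clean constant $\tfrac12$ in one shot.
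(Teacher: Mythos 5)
Your proof is correct: the normalization $r=j/k$ and the factorization $1+r-r^{\iota-1}-r^\iota=(1+r)(1-r^{\iota-1})\ge0$ for $r\in[0,1]$, $\iota>1$ establish the inequality cleanly, and the symmetry reduction to $k\ge j$ is legitimate. The paper itself gives no proof of this lemma, only a citation to \cite{KLiang}, so there is no argument in the text to compare against; your elementary two-line verification is a perfectly adequate self-contained substitute, and the closing remark about the mean value theorem, while not needed, does no harm.
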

\begin{lemma}\label{algebra}
If $0\le2\beta<\iota-1$, there exists a constant $C>0$ such that\\
(i). Let $A\in\M_\beta$ and $B\in\M_\beta^+$, then $AB$ and $BA$ belong to $\M_\beta$ and
\(
|AB|_\beta,~|BA|_\beta\le C|A|_\beta|B|_\beta^+.
\)
The proof is by the definition.\\
(ii). Let $A,B\in\M_\beta^+$, then $AB$ belongs to $\M_\beta^+$ and
\(
|AB|_\beta^+\le C|A|_\beta^+|B|_\beta^+.
\)
\begin{proof}
Since $A,~B\in\M_\beta^+$, then
\begin{align*}
&|(AB)_i^j|(ij)^{-\beta}(1+|i-j|)(i^{\iota-1}+j^{\iota-1})\\
\le~&|A|_\beta^+|B|_\beta^+
\sum_{k\ge1}\frac{k^{2\beta}(1+|i-j|)(i^{\iota-1}+j^{\iota-1})}{(1+|k-j|)(1+|i-k|)(k^{\iota-1}+j^{\iota-1})(k^{\iota-1}+i^{\iota-1})}.
\end{align*}
We discuss two cases. When $i=j$, the proof is simple.
When $i\ne j$, note $|i-k|+|k-j|\ge|i-j|$, we have $|k-j|\ge\frac12|i-j|$ or $|i-k|\ge\frac12|i-j|$.
The proof is by a straightforward computation.
\end{proof}\noindent
(iii). Let $A\in\M_\beta$, then for any $t>2\beta+1,~A\in\cal{B}(\l_t^2;\l_{-t}^2)$ and
\(
\|A\xi\|_{-t}\le C|A|_\beta\|\xi\|_t.
\)
\begin{proof}
 Let $A\xi=\eta,~\eta_i=\sum_{k\ge1}A_i^k\xi_k$, then
$$
\|\eta\|_{-t}^2=\sum_{i\ge1}i^{-t}|\eta_i|^2=\sum_{i\ge1}i^{-t}\big|\sum_{k\ge1}A_i^k\xi_k\big|^2
\le|A|_\beta^2\sum_{i\ge1}\frac1{i^{t-2\beta}}\big(\sum_{k\ge1}k^{2\beta-t}\big)\big(\sum_{k\ge1}k^t|\xi_k|^2\big).
$$
Note $t>1+2\beta$, then we obtain $\|A\xi\|_{-t}\le C|A|_\beta\|\xi\|_t$.
\end{proof}\noindent
(iv). Let $A\in\M_\beta^+$, then for any $s\in[0,2\iota-2\beta-1),~A\in\lst$ and satisfies
\[
\|A\|_{\lst}\le C(\beta,\iota,s)|A|_\beta^+,
\]
where $0\le2\beta<\iota-1$.

\noindent
(v). Let $A\in\M_\beta^+$, then for any $s\in[ 0,2\iota-2\beta-1),~A\in\lms$ and satisfies
\[
\|A\|_{\lms}\le C(\beta,\iota,s)|A|_\beta^+,
\]
where $0\le2\beta<\iota-1$.\\
The proof for (iv) and (v) is a little long and we will delay it in section \ref{section4}.
\end{lemma}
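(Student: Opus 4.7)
The plan is to prove part (iv) by a weighted Schur test on $\l_s^2$ and then to deduce part (v) by a duality argument; parts (i)--(iii) are already handled inline in the excerpt. Write $K:=|A|_\beta^+$, so $|A_i^j|\le K\,b_{ij}$ with
\[
b_{ij}=\frac{(ij)^{\beta}}{(1+|i-j|)(i^{\iota-1}+j^{\iota-1})}.
\]
The crucial structural input is the strict inequality $2\beta<\iota-1$ built into the definition of $\M_\beta^+$: it yields the negative exponent $2\beta-\iota+1<0$, which is exactly what makes the row/column sums of $b_{ij}$ integrable.

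For (iv), I apply the following form of Schur's test: if positive sequences $\phi_j,\psi_i$ and constants $M_1,M_2$ satisfy $\sum_j b_{ij}\phi_j\le M_1\psi_i$ and $\sum_i b_{ij}\psi_i\,i^s\le M_2\phi_j\,j^s$, then $\|A\|_{\B(\l_s^2)}^2\le K^2M_1M_2$. I take power-law weights $\phi_j=j^{\alpha}$, $\psi_i=i^{\alpha'}$. The core computation is the bound of $\sum_j b_{ij}j^\alpha$ obtained by splitting at $j\sim i$: on $\{j\le i\}$ use $i^{\iota-1}+j^{\iota-1}\ge i^{\iota-1}$ and the substitution $k=i-j$ in $1/(1+|i-j|)$; on $\{j>i\}$ use $i^{\iota-1}+j^{\iota-1}\ge j^{\iota-1}$ and convergence at infinity. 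Both halves yield $\sum_j b_{ij}j^\alpha\le C\,i^{2\beta+\alpha-\iota+1}\log i$, valid whenever $\alpha\in(-\beta-1,\iota-\beta-1)$; the symmetric calculation gives $\sum_i b_{ij}i^{\alpha'+s}\le C\,j^{2\beta+\alpha'+s-\iota+1}\log j$ whenever $\alpha'+s\in(-\beta-1,\iota-\beta-1)$.

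The two Schur inequalities now reduce to the linear constraints $\alpha'\ge 2\beta+\alpha-\iota+1$ and $\alpha\ge 2\beta+\alpha'-\iota+1$, that is $|\alpha-\alpha'|\le\iota-1-2\beta$. Setting $\alpha'=\alpha-(\iota-1-2\beta)+\delta$ for small $\delta>0$ satisfies both, and the range conditions on $\alpha$ collapse to $\alpha\in(-\beta-1,\iota-\beta-1)\cap(\iota-3\beta-2-\delta-s,\,2\iota-3\beta-2-\delta-s)$. An elementary interval check shows this intersection is nonempty precisely when $s<2\iota-2\beta-1-\delta$, which accommodates any $s\in[0,2\iota-2\beta-1)$ by choosing $\delta$ small; the logarithmic losses are absorbed by shifting $\alpha$ slightly inside the open admissible range.

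For (v), observe that the defining bound of $\M_\beta^+$ is symmetric in $(i,j)$, so the adjoint $A^*$ with $(A^*)_i^j=\overline{A_j^i}$ also lies in $\M_\beta^+$ with the same semi-norm. Since $\l_{-s}^2$ is the topological dual of $\l_s^2$ under the $\l_0^2$-pairing, applying part (iv) to $A^*$ yields $A=A^{**}\in\B(\l_{-s}^2)$ with the same constant. The main technical obstacle is the sharp power-counting that extracts the exact endpoint $2\iota-2\beta-1$ from the Schur constraints: at this endpoint the row- and column-sum inequalities saturate simultaneously, so the strict inequality $s<2\iota-2\beta-1$ is essential to leave room for the logarithmic losses and for the small shift $\delta>0$ in the choice of the Schur weights.
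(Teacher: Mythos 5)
Your argument is correct, and for part (iv) it is in substance the same as the paper's: the paper's Appendix proof is precisely a weighted Schur/Cauchy--Schwarz test with power weights $(i/j)^{s/p}$, $(i/j)^{s/q}$ for a conjugate pair $\frac1p+\frac1q=1$ (with $q=\infty$ allowed when $s<\beta$), which in your parametrization is the special case $\alpha=\alpha'=-s/p$; the paper avoids logarithmic factors by a finer case split ($j\ge i$, $\frac i2\le j<i$, $j<\frac i2$, etc.), whereas you allow logs and absorb them using the strict margins $\delta$ and $2(\iota-1-2\beta)-\delta$ in the Schur constraints — both are legitimate, and your two free exponents with the constraint $|\alpha-\alpha'|\le\iota-1-2\beta$ make the appearance of the threshold $s<2\iota-2\beta-1$ perhaps more transparent. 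Where you genuinely diverge is part (v): the paper reruns the row/column sum estimates with the inverted ratios $(j/i)^{s/p}$, $(j/i)^{s/q}$ and repeats the Cauchy--Schwarz computation, while you exploit the $(i,j)$-symmetry of the $\M_\beta^+$ bound and deduce boundedness on $\l_{-s}^2$ by duality from (iv) applied to the (conjugate) transpose; this is shorter and needs only the standard identification of $\l_{-s}^2$ as the dual of $\l_s^2$ under the $\l_0^2$ pairing, together with the routine check that the Banach adjoint is again given by the matrix action, which is unproblematic here.
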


\begin{lemma}\label{complete}
$(\M_\beta,|\cdot|_{\beta})$ and $(\M_\beta^+,|\cdot|_{\beta}^+)$ are Banach spaces.
\end{lemma}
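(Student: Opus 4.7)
The plan is to verify first that $|\cdot|_\beta$ (resp.\ $|\cdot|_\beta^+$) is a bona fide norm, then establish completeness by the standard ``pointwise-limit'' argument used for spaces of sequences/matrices defined by a weighted supremum. Both statements have the same structure, so I would present the case of $\M_\beta$ in detail and then indicate the two adjustments needed for $\M_\beta^+$.

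First, I would check that $|\cdot|_\beta$ is a norm: positive definiteness, absolute homogeneity and the triangle inequality all follow immediately from the fact that $|A|_\beta = \sup_{i,j} |A_i^j|(ij)^{-\beta}$ is a weighted sup-norm on $\C^{\Z_+\times\Z_+}$. Next, given a Cauchy sequence $\{A^{(n)}\}\subset \M_\beta$, the inequality
\[
|(A^{(n)})_i^j - (A^{(m)})_i^j| \le (ij)^\beta\, |A^{(n)}-A^{(m)}|_\beta
\]
shows that, for each fixed pair $(i,j)$, the scalar sequence $\{(A^{(n)})_i^j\}_n$ is Cauchy in $\C$. I would set $A_i^j := \lim_{n\to\infty} (A^{(n)})_i^j$ and use this matrix as the candidate limit.

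To see that $A\in\M_\beta$, I would use the fact that $\{|A^{(n)}|_\beta\}$ is bounded (as any Cauchy sequence), say by $M$, so $|(A^{(n)})_i^j|\le M (ij)^\beta$ for every $n,i,j$; passing to the pointwise limit gives $|A_i^j|\le M(ij)^\beta$, hence $|A|_\beta\le M$. To get norm convergence $A^{(n)}\to A$, fix $\varepsilon>0$ and choose $N$ with $|A^{(n)}-A^{(m)}|_\beta<\varepsilon$ for all $n,m\ge N$; then for each $(i,j)$ and $n\ge N$,
\[
|(A^{(n)})_i^j - A_i^j|(ij)^{-\beta} = \lim_{m\to\infty} |(A^{(n)})_i^j - (A^{(m)})_i^j|(ij)^{-\beta} \le \varepsilon,
\]
and taking the supremum over $(i,j)$ yields $|A^{(n)}-A|_\beta\le \varepsilon$.

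For $\M_\beta^+$ the proof is formally identical after replacing the weight $(ij)^{-\beta}$ with the heavier weight $(ij)^{-\beta}(1+|i-j|)(i^{\iota-1}+j^{\iota-1})$ in each displayed inequality; since this weight is strictly positive, the same pointwise-Cauchy argument again produces a candidate limit $A$, and the same bound $|A|_\beta^+\le M$ plus the same ``let $m\to\infty$ inside the sup'' trick give completeness. No step presents a genuine obstacle; the only thing to be careful about is the interchange of $\lim_{m\to\infty}$ and $\sup_{i,j}$, which is handled (as above) by moving the limit inside a fixed pair $(i,j)$ before taking the supremum.
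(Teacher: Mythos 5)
Your proof is correct and is exactly the standard weighted-sup-norm completeness argument that the paper takes for granted (the lemma is stated there without proof, being routine). Nothing to add: the pointwise-limit construction, the bound $|A|_\beta\le M$ from boundedness of the Cauchy sequence, and letting $m\to\infty$ at fixed $(i,j)$ before taking the supremum handle both $\M_\beta$ and $\M_\beta^+$ correctly.
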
\noindent
{\bf Parameter.} In the paper $\omega$ will play the role of a parameter belonging to $\Pi=[0,1]^n$. All the constructed operators or matrices will depend on $\omega$ in Lipschitz sense which will be clear in the following.\\ \indent
Let $D\subset\Pi$  and $\sigma>0$. We denote by $\M_\beta(D,\sigma)$ the set of mappings at $\T_\sigma^n\times D\ni(\phom)\mapsto Q(\phom)\in\M_\beta$ which is real analytic on $\phi\in\T_\sigma^n=\big\{\phi\in\C^n~\big|~|{\rm Im}\phi|<\sigma\big\}$ and Lipschitz continuous on $\omega\in D$. This space is equipped with the norm
\(
\|Q\|_\besi^{D,\L}=\|Q\|_\besi^D+\|Q\|_\besi^{D,lip},
\)
where we define
$
\|Q(\cdot,\omega)\|_\besi=\sup\limits_{|{\rm Im}\phi|<\sigma}|Q(\phom)|_\beta$,
$\|Q\|_\besi^D=\sup\limits_{\omega\in D}\|Q(\phom)\|_\besi$ and
$$\|Q\|_\besi^{D,lip}=
\begin{array}{c}
\sup\\ \substack{\omp\in D\\ \nom}
\end{array}
\frac{\|Q(\phom)-Q(\phomp)\|_\besi}{|\dom|}.$$
Similarly, we can define the subspace of $\M_\beta(D,\sigma)$, named by $\M_\beta^+(D,\sigma)$, the set of mappings at $\T_\sigma^n\times D\ni(\phom)\mapsto R(\phom)\in\M_\beta^+$ which is real analytic on $\phi\in\T_\sigma^n$ and Lipschitz continuous on $\omega\in D$. This space is equipped with the norm
\(
\|R\|_\besi^{D,\L,+}=\|R\|_\besi^{D,+}+\|R\|_\besi^{D,lip,+},
\)
where
we define $\|R\|_\besi^{D,+}$ and $\|R\|_\besi^{D,lip,+}$ similarly.
Generally, for a Banach space $(\cal B,\|\cdot\|)$, we define the parameterized space $\cal B(D,\sigma)$, the set of mappings at $\T_\sigma^n\times D\ni(\phom)\mapsto F(\phom)\in\cal B$ which is real analytic on $\phi\in\T_\sigma^n$ and Lipschitz continuous on $\omega\in D$. This space is equipped with the norm
$\|F\|_{\cal B, \sigma}^{D,\L}:=\|F(\cdot,\omega)\|_{\cal B, \sigma}^{D}+\|F(\cdot,\omega)\|_{\cal B, \sigma}^{D,lip}$,
where we define
$
\|F(\cdot,\omega)\|_{\cal{B}, \sigma}=\sup\limits_{|\Im\phi|<\sigma}\|F(\phom)\|_{\cal{B}}$,
$\|F\|_{\cal{B}, \sigma}^D=\sup\limits_{\omega\in D}\|F(\phom)\|_{\cal{B}, \sigma}$ and
$$\|F\|_{\cal{B}, \sigma}^{D,lip}=
\begin{array}{c}
\sup\\ \substack{\omp\in D\\ \nom}
\end{array}
\frac{\|F(\phom)-F(\phomp)\|_{\cal{B}, \sigma}}{|\dom|}.$$
In addition, for convenience we abbreviate $\B(D,0)$ as $\B(D)$.
\begin{lemma}\label{algebrapara}
If $0\le2\beta<\iota-1$ and $\iota>1$, there exists a constant $C>0$ such that\\
(i). If $A\in\M_\beta(D,\sigma)$ and $B\in\M_\beta^+(D,\sigma)$, then $AB$ and $BA$ belong to $\M_\beta(D,\sigma)$ and
\[
\|AB\|_\besi^{D,\L},~\|BA\|_\besi^{D,\L}\le  C\|A\|_\besi^{D,\L}\|B\|_\besi^{D,\L,+}.
\]
Moreover, if $A,~B\in\M_\beta^+(D,\sigma)$, then $AB$ belongs to $\M_\beta^+(D,\sigma)$ and
\[
\|AB\|_\besi^{D,\L,+}\le C\|A\|_\besi^{D,\L,+}\|B\|_\besi^{D,\L,+}.
\]
\noindent (ii). If $B\in\M_\beta^+(D,\sigma)$ then $e^B-\id$ and $e^B-\id-B$ belong to $\M_\beta^+(D,\sigma)$ and
\begin{align}
&\|e^B-\id\|_\besi^{D,\L,+}\le e^{C\|B\|_\besi^{D,+}}\|B\|_\besi^{D,\L,+}.\notag
\\
&\|e^B-\id-B\|_\besi^{D,\L,+}\le Ce^{C\|B\|_\besi^{D,+}}\big(\|B\|_\besi^{D,\L,+}\big)^2.\label{bexpbetaplusb}
\end{align}
Moreover, if $A\in\M_\beta(D,\sigma)$, then $Ae^B,~e^BA\in\M_\beta(D,\sigma)$ and
\begin{equation}\label{abexpbetaplus}
\|Ae^B\|_\besi^{D,\L},~\|e^BA\|_\besi^{D,\L}\le\|A\|_\besi^{D,\L}e^{3C\|B\|_\besi^{D,\L,+}}.
\end{equation}
\begin{proof}
If $A\in\M_\beta(D,\sigma),~B\in\M_\beta^+(D,\sigma)$, note $\dss e^B-\id=\sum_{n=1}^\infty\frac{B^n}{n!}$,thus for any $\omega\in D$ and $\phi\in\T_\sigma^n$,
\begin{align*}
|e^B-\id|_\beta^+\le\sum_{n=1}^{\infty}\frac{|B^n|_\beta^+}{n!}\le \frac1C\big(e^{(C|B|_\beta^+)}-1\big)\le \|B\|_\besi^{D,+}e^{C\|B\|_\besi^{D,+}}.
\end{align*}
It follows
\begin{equation}\label{bexpdplus}
\|e^B-\id\|_\besi^{D,+}\le \|B\|_\besi^{D,+}e^{C\|B\|_\besi^{D,+} }.
\end{equation}
In the following we estimate $\|e^B-\id\|_\besi^{D,lip,+}$. In fact,
$
e^{B(\omega)}-e^{B(\omega')}=\sum_{n=1}^\infty\frac{\big(B(\omega)\big)^n-\big(B(\omega')\big)^n}{n!}.
$
By induction for $n\geq 1$, we have
$
\big\|\bom^n-\bomp^n\big\|_\besi^+\le n\big(C\|B\|_\besi^{D,+}\big)^{n-1}\|B(\omega)-B(\omega')\|_\besi^+.
$
It follows for $\nom,~\omp\in D$,
\[
\|e^B(\omega)-e^{B}(\omega')\|_\besi^+\le\sum_{n=1}^\infty\frac{n\big(C\|B\|_\besi^{D,+}\big)^{n-1}\|B(\omega)-B(\omega')\|_\besi^+}{n!}.
\]
Thus,
$
\frac{\|e^B(\omega)-e^B(\omega')\|_\besi^+}{|\dom|}\le \|B\|_\besi^{D,lip}e^{C\|B\|_\besi^{D,+}}.
$
Together with \eqref{bexpdplus} we have
$
\|e^B-\id\|_\besi^{D,\L,+}\le \|B\|_\besi^{D,\L,+}e^{C\|B\|_\besi^{D,+}}.
$
Similarly, we obtain \eqref{bexpbetaplusb}. The proof of \eqref{abexpbetaplus} is similar.
\end{proof}
\noindent (iii) If $B\in\M_\beta^+(D,\sigma)$ and $ P\in\M_\beta(D,\sigma)$, then $e^{-B} P e^B$ belongs to $\M_\beta(D,\sigma)$ and
\begin{align*}
\|e^{-B} P e^B- P\|_\besi^D&\le Ce^{3C\|B\|_\besi^{D,+}}\| P\|_\besi^D\|B\|_\besi^{D,+},\\
\|e^{-B} P e^B- P\|_\besi^{D,\L}&\le Ce^{3C\|B\|_\besi^{D,\L,+}}\| P\|_\besi^{D,\L}\|B\|_\besi^{D,\L,+}.
\end{align*}
\begin{proof}
From \eqref{bexpdplus}, if $B\in\M_\besi^+(D,\sigma)$, then
$
\|e^B-\id\|_\besi^{D,+}\le \|B\|_\besi^{D,+}e^{\big(C\|B\|_\besi^{D,+}\big)}.
$
Similarly,
$
\|e^{-B}-\id\|_\besi^{D,+}\le \|B\|_\besi^{D,+}e^{\big(C\|B\|_\besi^{D,+}\big)}.
$
Since
$
e^{-B} P e^B- P=(e^{-B}-\id) P(e^B-\id)+(e^{-B}-\id) P+ P(e^B-\id),
$
the proof is clear by a straightforward computation.
\end{proof}
\end{lemma}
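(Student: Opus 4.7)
The plan is to handle Lemma~\ref{algebrapara} piece by piece, lifting the pointwise estimates of Lemma~\ref{algebra} to the parameterized setting. Each parameterized norm $\|\cdot\|_\besi^{D,\L}$ (resp.~$\|\cdot\|_\besi^{D,\L,+}$) splits cleanly into its sup piece and its Lipschitz-in-$\omega$ piece, so the two can be treated separately.

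For part (i), the sup quantities $\|AB\|_\besi^D$ and $\|AB\|_\besi^{D,+}$ follow by taking $\sup_{(\phom)\in\T_\sigma^n\times D}$ of the pointwise bounds in Lemma~\ref{algebra}(i) and (ii). For the Lipschitz pieces I would use the factorization
\begin{equation*}
A(\phom)B(\phom) - A(\phomp)B(\phomp) = \big(A(\phom)-A(\phomp)\big)B(\phom) + A(\phomp)\big(B(\phom)-B(\phomp)\big),
\end{equation*}
apply Lemma~\ref{algebra} to each summand, divide by $|\dom|$, and take the supremum over $\nom$ in $D$.

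For part (ii), I would expand $e^B - \id = \sum_{n\ge 1} B^n/n!$ and iterate Lemma~\ref{algebra}(ii) to obtain $|B^n|_\beta^+ \le C^{n-1}(|B|_\beta^+)^n$; summing yields the sup estimate. For the Lipschitz piece, the telescoping identity
\begin{equation*}
\bom^n - \bomp^n = \sum_{k=0}^{n-1} \bom^k\big(B(\omega)-B(\omega')\big)\bomp^{n-1-k}
\end{equation*}
gives $\|\bom^n - \bomp^n\|_\beta^+ \le n(C\|B\|_\besi^{D,+})^{n-1}\|B(\omega)-B(\omega')\|_\besi^+$; dividing by $|\dom|$ and summing against $1/n!$ (using $\sum_n n x^{n-1}/n! = e^x$) collapses to the advertised exponential bound. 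The estimate on $e^B-\id-B$ is obtained by starting the sum at $n=2$, which produces the quadratic factor. For $Ae^B$ and $e^BA$, decompose $e^B = \id + (e^B-\id)$ and apply part (i).

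For part (iii), I would exploit the algebraic identity
\begin{equation*}
e^{-B}Pe^B - P = (e^{-B}-\id)P(e^B-\id) + (e^{-B}-\id)P + P(e^B-\id),
\end{equation*}
bound each of the three summands via part (i) combined with the exponential bounds of part (ii), and obtain the Lipschitz version via the further decomposition $e^{-B(\omega)}Pe^{B(\omega)} - e^{-B(\omega')}Pe^{B(\omega')} = (e^{-B(\omega)}-e^{-B(\omega')})Pe^{B(\omega)} + e^{-B(\omega')}P(e^{B(\omega)}-e^{B(\omega')})$, invoking the Lipschitz estimate on $e^{\pm B}$ from part (ii). The main obstacle is the Lipschitz step in part (ii): the combinatorial factor $n$ from the telescoping identity must absorb cleanly into $1/n!$ without inflating the exponent past $C\|B\|_\besi^{D,+}$; once that bookkeeping is settled, everything else reduces to routine multiplicative manipulations in the bilinear algebra of $\M_\beta$ and $\M_\beta^+$.
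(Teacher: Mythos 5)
Your proposal follows essentially the same route as the paper: the same power-series expansion of $e^B-\id$ with the telescoping (inductive) bound $\|\bom^n-\bomp^n\|_\besi^+\le n\big(C\|B\|_\besi^{D,+}\big)^{n-1}\|B(\omega)-B(\omega')\|_\besi^+$ for the Lipschitz part of (ii), the standard product splitting for (i), and the same three-term identity $e^{-B}Pe^B-P=(e^{-B}-\id)P(e^B-\id)+(e^{-B}-\id)P+P(e^B-\id)$ for (iii). One minor remark: in your final Lipschitz decomposition for (iii) you treat $P$ as $\omega$-independent, whereas $P\in\M_\beta(D,\sigma)$ varies with $\omega$; this causes no real trouble, since applying the parameterized product estimates of part (i) directly to the three summands (as you also indicate, and as the paper does) already yields the Lipschitz bound with $\|P\|_\besi^{D,\L}$.
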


\subsection{A Reducibility Theorem}
Before present the new reducibility theorem we give a rough introduction of the proof and what is new here.  In fact the equation \eqref{introduction01} can be written as
\[
\rmi\dot x=(A+\e P(\phi))x,
\]
where $A=\text{diag}\{\lambda_j\}_{j\ge1}$ and $\lambda_j\sim j^\iota(j\to\infty)$ and $P=(P_i^j(\phi))$ with $\iota=\frac{2\l}{\l+1}$ and
\[
P_i^j(\phi)=P_j^i(\phi)=\int_\R\la x\ra^\mu W(\nu x,\phi)h_i(x)\overline{h_j(x)}dx.
\]
From Lemma \ref{chengfasuanzi} we can show that the map $\T^n\ni\phi~\mapsto~P(\phi)\in\lzd$ is analytic on $\T_s^n$ if $0\le\mu\le\delta(\l+1)$. If $\mu<\l-1$, then one can choose $\delta=\frac\mu{\l+1}$ and furthermore, $P(\phi)\in\B^\delta$ with $\delta<\frac{\l-1}{\l+1}$(we use the notation from \cite{BG}). From the reducibility Thm. in \cite{BG} and Lemma \ref{chengfasuanzi} we can prove the reducibility result for the equation \eqref{introduction01} when $\mu<\l-1$. \\
\indent The  main improvement is that we can deal with the case when $\l-1\le\mu<(\l-\frac23)\bigwedge\frac{\sqrt{4\l^2-2\l+1}-1}2$ for the equation \eqref{introduction01}  from Theorem \ref{Theorem2.10}.
In the new reducibility theorem we assume that  $0\le\delta<\frac{2\l}{\l+1}-\beta-\frac12$,
$0\leq 2\beta<\iota-1$, $2\beta\leq \delta$ and $0\le\mu\le\delta(\l+1)$.   In fact,  when $\frac13\bigwedge\frac{\sqrt{\l^2+2}-\l}2\le\mu<(\l-\frac23)\bigwedge\frac{\sqrt{4\l^2-2\l+1}-1}2$,  from Lemma \ref{mainlem} one can show that $P(\phi)\in\M_\beta$ with $\beta=\frac\mu{2(\l+1)}-\frac1{2(\l+1)}(\frac13\bigwedge\frac{\mu+1}{2\mu+2\l+1})$. If we choose $\delta=\frac{\l}{\l+1}$, it follows $0\leq 2\beta< \frac{\l-1}{\l+1}$ and all the assumptions in Theorem \ref{Theorem2.10} are easily checked, from which we can prove Theorem \ref{mainthm1}. Comparing with the proof in \cite{BG}, we need to control $\|P^-\|_{\beta,s}^\L+\|P^-\|_{\lzd}^\L$ in every step, while in \cite{BG} only $\|P^-\|_{\lzd}^\L$ was controlled.\\
\indent In the end we explain a little bit about  the equivalence of the two equations
$
\rmi\dot x=(A+P)x
$
and
$
\rmi\dot y=A^\infty(\ot)y,
$
which comes from the reducibility equality
\[
U^\infty(\ot)A^\infty(\ot)=-\rmi\frac d{dt}U^\infty(\ot)+(A^0+P^0(\ot))U^\infty(\ot)\quad\text{in}\quad\lzi(\Pi_*).
\]
 \noindent We remark that from $U^\infty(\ot)-\id\in\M_\beta^+$ and Lemma \ref{algebra}  we have $U^\infty(\ot)-\id\in\B(\l_{-s}^2)$ with $s\in[0,2\iota-2\beta-1)$. But it usually makes no sense for $U^\infty(\ot) A^\infty(\ot)x_0\in\l_{-2\iota}^2$ when $x_0\in\l_0^2$. The proof seriously depends on the homological equation \eqref{homoeqn}. See Lemma \ref{UA1},~\ref{UA2} and \ref{UlAlconverge} for details. Now we present the new reducibility theorem. \\
\indent Consider the non-autonomous, linear differential equation in a separable Hilbert space $\l_0^2$
\begin{equation}\label{redueqn}
{\rm i}\dot x(t)=\big(A+\e P(\omega_1t,\omega_2t,\dots,\omega_nt)\big)x(t),~\e\in\R,
\end{equation}
under the following conditions:\\
B1: $A=\text{diag}\{\lambda_j\}_{j\ge1}$ with $0<\lambda_1<\lambda_2<\cdots$.  There exists a $\iota>1$ such that
$\lambda_j\sim cj^{\iota}$ as $j\to \infty$.
\noindent B2: The map $\T^n\ni\phi~\mapsto~P(\phi) : = \big(P_i^j(\phi)\big)_{i,j\geq 1}\in\M_\beta$ is analytic on $\T_s^n$ and $P_i^j(\phi)=P_j^i(\phi)$ for $\phi\in\T^n$, where $0\le2\beta<\iota-1$.\\
B3: The map $\T^n\ni\phi~\mapsto~P(\phi)\in\lzd$ is analytic on $\T_s^n$, where $0\le\delta<\iota-\beta-\frac12$ and  $2\beta\leq \delta$.
\begin{Theorem}\label{Theorem2.10}
Assume that $B1 - B3$ are satisfied. Then for a $\gamma>0$ small, there exists $\e_*>0$ such that for all $0\le\e<\e_*$ there exists $\Pi_*\subset\Pi:=[0,1]^n$ satisfying $\text{meas}(\Pi\backslash\Pi_*)\le C\gamma$, such that for all $\omega\in\Pi_*$, the equation \eqref{redueqn} reduces to a linear equation
\begin{equation}\label{reduedeqn}
\rmi\dot y(t)=A^\infty(\ot)y(t),\quad A^\infty(\ot):=\text{diag}\{\lambda_1^\infty+\mu_1^\infty(\ot),\lambda_2^\infty+\mu_2^\infty(\ot),\cdots\},
\end{equation}
where $\{\lambda_j^\infty\}_{j\ge1}\in\R$ and the function $\mu_j^\infty(\phi):\T^n~\mapsto~\R$ is analytic on $\T_{\frac s2}^n$ with zero average.\\
\indent More precisely, for $\gamma>0$ small, there exists $\ve_*$ such that for all $0\le\ve<\ve_*$ there exists $\Pi_*\subset\Pi:=[0,1]^n$ satisfying $\text{meas}(\Pi\backslash\Pi_*)\le C\gamma$, and for $\omega\in\Pi_*$, there exists a linear unitary transformation $U^\infty(\phi)$ in $\ell_0^2$ which analytically depends on $\phi\in\T_{\frac s2}^n$ such that $t~\mapsto~y(t)\in\cal C^0(\R,\l_0^2)\cap\cal C^1(\R,\l_{-2\iota}^2)$ satisfies the equation \eqref{reduedeqn} if and only if
\begin{eqnarray}\label{implication}
t~\mapsto~x(t)=U^\infty(\omega t) y(t)\in\cal C^0(\R,\l_0^2)\cap\cal C^1(\R,\l_{-2\iota}^2)
\end{eqnarray} satisfies the equation \eqref{redueqn}, where there exists a positive constant  $C$ such that
\begin{eqnarray}\label{gujiforUinfty}
\|U^\infty(\ot)-\id\|_{\lzt}\le C\ve^\frac23,\quad|\lambda_j^\infty-\lambda_j|\le Cj^{2\beta}\e,\quad|\mu_j^\infty(\ot)|\le Cj^{2\beta}\e.
\end{eqnarray}
\end{Theorem}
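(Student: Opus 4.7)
The plan is to prove Theorem \ref{Theorem2.10} by a KAM iterative scheme. The novelty, compared with \cite{BG}, is that two quantities must be tracked at every stage: the matrix semi-norm $\|\cdot\|_{\beta,\sigma_l}^{\Pi_l,\L,+}$ on $\M_\beta^+$ and the operator norm $\|\cdot\|_{\lzd}^{\Pi_l,\L}$. The hypotheses $0\le 2\beta<\iota-1$, $2\beta\le\delta$ and $\delta<\iota-\beta-\frac12$ are exactly what ensure that the algebra lemmas (Lemma \ref{algebra}(i)--(v) and Lemma \ref{algebrapara}(i)--(iii)) can be applied simultaneously in both norms, so that the KAM step closes.

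The $l$-th stage is of the form $\rmi\dot x=(A^l(\phi)+\e P^l(\phi))x$, where $A^l(\phi)=\text{diag}\{\lambda_j^l+\mu_j^l(\phi)\}$ is diagonal and $P^l$ is off-diagonal of size $\e_l\sim \e_0^{(3/2)^l}$ in both norms. I would solve the truncated homological equation
\[
\rmi\,\omega\cdot\partial_\phi F^l+[F^l,A^l]=P^l-\text{diag}\,\overline{P^l}-\Pi_{|k|>N_l}P^l,
\]
the Fourier cutoff $N_l$ being chosen so that the high-mode remainder $\Pi_{|k|>N_l}P^l$ is superexponentially small in $\e_l$. In Fourier/eigenvalue coordinates this reads $F^l_{ij}(k)=P^l_{ij}(k)/\bigl(\rmi(\ko+\lambda_i^l-\lambda_j^l)\bigr)$, well-defined after excising from $\Pi_l$ a second-Melnikov set of the form
\[
|\ko+\lambda_i^l-\lambda_j^l|<\frac{\gamma(1+|i-j|)(i^{\iota-1}+j^{\iota-1})}{|k|^\tau}.
\]
The factor $(1+|i-j|)(i^{\iota-1}+j^{\iota-1})$, which by Lemma \ref{expd} matches the natural size $|\lambda_i^l-\lambda_j^l|\sim|i^\iota-j^\iota|$, is precisely what produces an $\M_\beta^+$ bound on $F^l$. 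Conjugating by the unitary $e^{-\rmi F^l}$ yields the new perturbation
\[
P^{l+1}=\rmi\int_0^1 e^{\rmi tF^l}\bigl[t(\pmdp)+\Pi_{|k|>N_l}P^l,\,F^l\bigr]e^{-\rmi tF^l}dt,
\]
whose smallness $\|P^{l+1}\|\lesssim\e_l^{3/2}$ in both norms follows from Lemma \ref{algebrapara}(i)--(iii); the diagonal term generated at the step updates $\lambda_j^{l+1}$ and $\mu_j^{l+1}(\phi)$ with bounds of order $\e_l j^{2\beta}$.

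Standard bookkeeping then gives $\sum_l\text{meas}(\Pi_l\setminus\Pi_{l+1})\le C\gamma$, the partial products $V^l=e^{-\rmi F^0}\cdots e^{-\rmi F^{l-1}}$ form a Cauchy sequence in $\lzt$ and converge to a unitary $U^\infty(\phi)$ analytic on $|\Im\phi|<s/2$, the telescoping estimate yields $\|U^\infty-\id\|_{\lzt}\le C\e^{2/3}$, and $A^l\to A^\infty=\text{diag}\{\lambda_j^\infty+\mu_j^\infty(\phi)\}$ with the estimates $|\lambda_j^\infty-\lambda_j|,\ |\mu_j^\infty|\le C\e j^{2\beta}$.

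The main obstacle, flagged by the authors in the discussion preceding the theorem, is the equivalence \eqref{implication}. Because $U^\infty(\phi)-\id\in\M_\beta^+$, Lemma \ref{algebra}(v) only gives boundedness on $\l_{-s}^2$ for $s<2\iota-2\beta-1$; in particular the expression $U^\infty A^\infty y$ need not lie in $\l_{-2\iota}^2$ when $y\in\l_0^2$, because the natural domain of $A^\infty$ is $\l_{2\iota}^2$. One therefore cannot prove \eqref{implication} by differentiating $x(t)=U^\infty(\ot)y(t)$ term-by-term. The correct route is to pass to the limit, in the $\B(\l_0^2,\l_{-2\iota}^2)$ topology, of the exact finite-stage intertwining identity $\rmi\partial_t V^l+(A+\e P)V^l=V^lA^l$, obtaining
\[
U^\infty(\ot)A^\infty(\ot)=-\rmi\tfrac{d}{dt}U^\infty(\ot)+(A+\e P(\ot))U^\infty(\ot)\quad\text{in}\ \lzi(\Pi_*),
\]
the right-hand side being manifestly well-defined on $\l_0^2$-valued paths. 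From this distributional identity, the equivalence \eqref{implication} between solutions in $\mathcal{C}^0(\R,\l_0^2)\cap\mathcal{C}^1(\R,\l_{-2\iota}^2)$ of \eqref{redueqn} and \eqref{reduedeqn} follows immediately by left-multiplying the equation for $y$ by $U^\infty$ (the forthcoming Lemmas \ref{UA1}, \ref{UA2}, \ref{UlAlconverge} of the paper will formalize exactly this passage to the limit).
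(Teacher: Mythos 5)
Your overall architecture (tracking the two norms $\|\cdot\|_{\beta,\cdot}^{\L,+}$ and $\|\cdot\|_{\lzd}^{\L}$ at every step, Melnikov excision, convergence of the products of transformations, and proving the equivalence \eqref{implication} by passing to the limit in the finite-stage intertwining identity instead of differentiating $U^\infty(\ot)y(t)$ term by term) does follow the paper's route. But there is a genuine gap at the core of the KAM step: the treatment of the oscillating diagonal. In your scheme the right-hand side of the homological equation is $P^l-\text{diag}\,\overline{P^l}$ (minus a Fourier tail), i.e.\ the non-constant diagonal part $P^l_{ii}(\phi)-\overline{P^l_{ii}}$ is eliminated through $F^l$, and you then invoke the constant-coefficient Fourier formula $F^l_{ij}(k)=P^l_{ij}(k)/\bigl(\rmi(\ko+\lambda_i^l-\lambda_j^l)\bigr)$. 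These two choices cannot both be maintained. If the oscillating diagonal is put into $F^l$, then $|F^l_{ii}|\sim\e_l\, i^{2\beta}$, which is incompatible with membership in $\M_\beta^+$ (the $+$ weight forces $|F_{ii}|\lesssim i^{2\beta-(\iota-1)}$ on the diagonal), so Lemma \ref{algebra}(iv)--(v) and Lemma \ref{algebrapara} no longer control $e^{\rmi F^l}$; worse, on the complex strip $|e^{\rmi F^l_{ii}(\phi)}|$ can be of size $e^{c\e_l i^{2\beta}}$, unbounded in $i$, so the transformation is not analytic in the operator norm and the Cauchy estimates of the iteration collapse. If instead (as your sentence about updating $\mu_j^{l+1}(\phi)$ suggests) the oscillating diagonal is kept inside $A^{l+1}=A^l+\text{diag}(P^l)$, then at the next stage the homological equation \eqref{homoeqn} has the variable-coefficient term $(\mu_i^l(\phi)-\mu_j^l(\phi))B_{ij}$ of unbounded size $\sim\e\,(i^{2\beta}+j^{2\beta})$, and your Fourier-division formula simply does not solve it.

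This is exactly the difficulty the paper resolves: one sets $B_{ii}=0$, leaves the oscillating diagonal in $A^l$, and solves the resulting small-divisor equation with large variable coefficient by Kuksin's lemma (Lemma \ref{Kuksin}) under the order condition \eqref{keycd}, with $E_1\sim|\lambda_i^l-\lambda_j^l|\gtrsim |i-j|(i^{\iota-1}+j^{\iota-1})$, $E_2\sim i^{2\beta}+j^{2\beta}$ and $\theta=\frac{2\beta}{\iota-1}<1$; this is precisely where the hypothesis $2\beta<\iota-1$ enters, and it produces the estimate \eqref{Bestimate} with the characteristic factor $\exp(C/\sigma^{a_3})$ (whence the paper's exponent $\e_{l+1}=\e_l^{4/3}$ rather than your $\e_l^{3/2}$ — that discrepancy is harmless, the missing mechanism is not). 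Without Kuksin's lemma, or some substitute controlling the unbounded $\phi$-dependent diagonal, your KAM step does not close. The remaining ingredients of your proposal (the excised sets with weight $(1+|i-j|)(i^{\iota-1}+j^{\iota-1})$, equivalent to the paper's $|i^\iota-j^\iota|$ by Lemma \ref{expd}; convergence of the transformation products; and the limiting identity \eqref{transformid} used to establish \eqref{implication} in $\lzi$) coincide with the paper's argument and are sound.
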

\begin{remark}
The assumption $2\beta\leq \delta$ is not necessary but it can simplify the proof.
\end{remark}
\begin{Corollary}\label{tuilun2.12}
Assume that $B1 - B3$ are satisfied.  Then for a $\gamma>0$ small, there exists $\e_*>0$ such that for all $0\le\e<\e_*$ there exists $\Pi_*\subset\Pi:=[0,1]^n$ satisfying $\text{meas}(\Pi\backslash\Pi_*)\le C\gamma$, such that for all $\omega\in\Pi_*$, there is a unitary transformation $U_F(\ot)$ in $\ell_0^2$, quasiperiodic with frequency $\omega$ and such that
\(
\|U_F(\ot)-\id\|_{\lzt}\le C\e^{\frac23},
\)
which transforms \eqref{redueqn} into the equation
\[
\rmi\dot z(t)=A_Fz(t),\quad A_F=\text{diag}\{\lambda_1^\infty,\lambda_2^\infty,\cdots\}.
\]
Moreover, if $x(0)\in\l_0^2$, then $x(t)=U_F(\ot)\text{diag}\{e^{-\rmi\lambda_j^\infty}t\}U_F^{-1}(0)x(0)$ is the solution of \eqref{redueqn} in the sense of (\ref{implication})
\end{Corollary}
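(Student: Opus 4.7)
The plan is to remove the remaining $\phi$-dependence of \eqref{reduedeqn} by a diagonal unitary conjugation built from solutions of a cohomological equation, and then compose with the transformation supplied by Theorem \ref{Theorem2.10}. Recall that on $\Pi_*$ the frequency $\omega$ satisfies the Melnikov-type Diophantine conditions that are encoded in the construction of $\Pi_*$, and by Theorem \ref{Theorem2.10} each $\mu_j^\infty$ is real, analytic on $\T_{s/2}^n$, of zero mean, and satisfies $|\mu_j^\infty|\le Cj^{2\beta}\e$.

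For each $j\ge1$ I would solve the cohomological equation
\[
\omega\cdot\partial_\phi\psi_j(\phi)=\mu_j^\infty(\phi),\qquad \int_{\T^n}\psi_j\,d\phi=0,
\]
via the Fourier series $\hat\psi_j(k)=\hat\mu_j^\infty(k)/(\rmi\ko)$ for $k\ne 0$. The Diophantine lower bound on $\ko$ and the exponential decay of $\hat\mu_j^\infty(k)$ yield a real analytic $\psi_j$ on a slightly shrunk strip, real-valued on $\T^n$. Setting $V(\phi):=\text{diag}\{e^{-\rmi\psi_j(\phi)}\}_{j\ge1}$ then gives a diagonal unitary on $\l_0^2$, quasi-periodic in $t$; a direct computation using $\rmi\dot V\,V^{-1}=\text{diag}\{\omega\cdot\partial_\phi\psi_j\}=\text{diag}\{\mu_j^\infty(\omega t)\}$ shows that $y(t)=V(\omega t)z(t)$ converts \eqref{reduedeqn} into $\rmi\dot z=A_Fz$ with $A_F=\text{diag}\{\lambda_j^\infty\}$. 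Composing, I would define $U_F(\phi):=U^\infty(\phi)V(\phi)$, which is unitary on $\l_0^2$ and quasi-periodic in $t$; the change of variable $x=U_F(\omega t)z$ then carries \eqref{redueqn} into the autonomous system $\rmi\dot z=A_Fz$, and solving the latter immediately yields $x(t)=U_F(\omega t)\,\text{diag}\{e^{-\rmi\lambda_j^\infty t}\}\,U_F^{-1}(0)x(0)$ in the sense of \eqref{implication}.

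The step I expect to be the hardest is the bound $\|U_F-\Id\|_{\lzt}\le C\e^{2/3}$. Writing $U_F-\Id=(U^\infty-\Id)V+(V-\Id)$, the first summand is controlled by $\|U^\infty-\Id\|_{\lzt}\le C\e^{2/3}$ from Theorem \ref{Theorem2.10} together with the unitarity of $V$. The delicate piece is the diagonal operator $V-\Id$ with entries $e^{-\rmi\psi_j(\omega t)}-1$: a naive Fourier bound on $\psi_j$ scales like $j^{2\beta}\e/\bar\gamma$, which only controls the low modes uniformly. To obtain the $\sup_j$ estimate compatible with $\e^{2/3}$, I would revisit the KAM iteration used to prove Theorem \ref{Theorem2.10}: at each step the high-$j$ tail of the frequency correction is truncated and only a sharper per-mode contribution survives, so that the effective estimate of $\mu_j^\infty$ in the final limit is better than the blanket bound recorded in \eqref{gujiforUinfty}. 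Balancing this refined Diophantine inversion against the $\e^{2/3}$ threshold is where the proof requires the most care; once this is in hand, all assertions of the corollary follow.
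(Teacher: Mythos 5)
Your construction is the one the paper intends: the paper gives no argument of its own for this corollary (it simply refers to \cite{BG}), and the cited proof is exactly what you propose — solve $\omega\cdot\partial_\phi\psi_j=\mu_j^\infty$ for each $j$ on $\Pi_*$ using the Diophantine condition and the zero average of $\mu_j^\infty$, gauge away the time-dependent diagonal part by $V(\phi)=\text{diag}\{e^{-\rmi\psi_j(\phi)}\}$, and set $U_F=U^\infty V$. The reduction to $\rmi\dot z=A_Fz$ and the solution formula $x(t)=U_F(\ot)\,\text{diag}\{e^{-\rmi\lambda_j^\infty t}\}\,U_F^{-1}(0)x(0)$ are correct as you present them.

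The genuine gap is the bound $\|U_F(\ot)-\id\|_{\lzt}\le C\e^{2/3}$, exactly the point you flag, and your proposed remedy does not work. You suggest that revisiting the KAM iteration yields a per-mode estimate of $\mu_j^\infty$ sharper than the bound $|\mu_j^\infty(\ot)|\le C\e j^{2\beta}$ of \eqref{gujiforUinfty}; but the iteration produces $\mu_j^\infty$ as the sum over $l$ of the zero-mean parts of the diagonal entries $P^l_{jj}$, and the only control available is $|P^l_{jj}|\le\|P^l\|_{\beta,s_l}\,j^{2\beta}\le\e_l j^{2\beta}$ — nothing in the scheme truncates or improves the $j^{2\beta}$ factor, and already at the first step the diagonal matrix elements estimated via Lemma \ref{mainlem} carry the growth $j^{2\beta}$ when $\beta>0$. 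Hence $\psi_j$ is only of size $\e j^{2\beta}/\gamma$ (with further loss from the small divisors), $\sup_j\sup_\phi|e^{-\rmi\psi_j(\phi)}-1|$ is not $O(\e^{2/3})$, and your decomposition $U_F-\id=(U^\infty-\id)V+(V-\id)$ controls only the first summand by Theorem \ref{Theorem2.10}. As written, this part of your proposal is an unproven assertion rather than a proof: you would need either to restrict to the case $\beta=0$ (where $|\mu_j^\infty|\le C\e$ uniformly and the estimate is immediate), to prove the closeness of $U_F$ to the identity in a weaker (e.g. weighted $\B(\l_t^2,\l_{-t}^2)$) norm, or to supply a different argument such as the one in \cite{BG} to which the paper defers.
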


\subsection{Squaring the order of the Perturbation}
Let $\T_s^n$ be the complexified torus with $|\Im\phi|<s$
and $\Pi^-$ be a closed nonempty subset of $\Pi$ of positive measure. If the map $f:\T_n^s\times\Pi^-\mapsto\cal B(\l_{s_1}^2,\l_{s_2}^2)$ is analytic on $\phi\in\T_s^n$ and Lipschitz continuous on $\omega\in\Pi^-$, we define
$\|f\|_{\B(\l_{s_1}^2,\l_{s_2}^2),s}^{\Pi^-,\L}=\|f\|_{\B(\l_{s_1}^2,\l_{s_2}^2),s}^{\Pi^-}+\|f\|_{\B(\l_{s_1}^2,\l_{s_2}^2),s}^{\Pi^-,lip}$.
Similarly,  if the map $f:\T_n^s\times\Pi^-\mapsto \cal{M}_{\beta}$ is analytic on $\phi\in\T_s^n$ and Lipschitz continuous on $\omega\in\Pi^-$, we define
$\|f\|_{\beta,s}^{\Pi^-,\L}=\|f\|_{\beta,s}^{\Pi^-}+\|f\|_{\beta,s}^{\Pi^-,lip}$. For convenience we omit the symbol $\Pi^-$ here. \\
\indent Now we consider the equation in $\l_0^2$
\begin{equation}\label{2.4}
{\rm i}\dot{x}(t)=(A^- +P^-(\omega t))x
\end{equation}
under the following conditions\\
\indent H1)
\begin{equation}\label{formalA}
A^-=\text{diag}\{\lambda_1^-(\omega)+\mu_1^-(\omega t,\omega),\lambda_2^-(\omega)+\mu_2^-(\omega t,\omega),\cdots\}.
\end{equation}
Here:\\
\indent H1.a) $\forall~i=1,\cdots,\lambda_i^-(\omega)$ is positive and Lipschitz continuous w.r.t $\omega\in\Pi^-$ and satisfies $C_0^-i^\iota\le\lambda_i^-\le C_1^-i^\iota$ with $C_0^-,~C_1^->0$. We also assume that there is $C_\lambda^->0$ independent of $\omega$ such that
$|\lambda_i^--\lambda_j^-|\ge C_\lambda^-|i^\iota-j^\iota|.$\\
\indent H1.b) There is $C_\omega^->0$ suitably small and $0\leq 2\beta<\iota-1$ such that
\[
\begin{array}{c}
\sup\\ \substack{\omp\in\Pi^-\\ \nom}
\end{array}\frac{|\lambda_i^-(\phom)-\lambda_i^-(\phomp)|}{|\omega-\omega'|}\le C_\omega^-i^{2\beta}.
\]
\indent H1.c) $\forall~i=1,\cdots$, $\mu_i^-(\phi,\omega):\T_s^n\times\Pi^-\mapsto\R$ is analytic w.r.t $\phi$, Lipschitz continuous w.r.t $\omega$, and has zero average, i.e.
$\dss\int_{\T^n}\mu_i^-(\phi,\omega)d\phi=0.$
Moreover it fulfills the estimates
\[
\|\mu_i^-\|_\s\le C_\mu^- i^{2\beta},\quad
\begin{array}{c}
\sup\\ \substack{\omp\in\Pi^-\\ \nom}
\end{array} \frac{\|\mu_i^-(\phom)-\mu_i^-(\phomp)\|_\s}{|\omega-\omega'|}\le C_\omega^- i^{2\beta},
\]
where we denote $\|\mu_i^-(\cdot,\omega)\|_s=\sup\limits_{|\Im\phi|\le s}|\mu_i^-(\phom)|$.\\
\indent H2) The map $P^-:\T_s^n\times\Pi^-\mapsto \cal X$ is analytic w.r.t $\phi\in\T_s^n$ in the norm of $|\cdot |_{\beta}$ or $\|\cdot\|_{\lzd}$ and Lipschitz continuous w.r.t $\omega\in\Pi^-$ uniformly in $\phi\in\T_s^n$,
where $\cal X= \mathcal{M}_\beta$ or $\lzd$. \\
\indent H3) There exist $\gamma^->0$ and $\tau>n+\frac2{\iota-1}$ such that, for any $\omega\in\Pi^-$, one has
$$|\lra k\omega|\ge\frac{\gamma^-}{|k|^\tau},\quad\forall~k\in\Z^n\backslash\{0\},$$
$$|\lambda_i^--\lambda_j^-+\lra k\omega|\ge\frac{\gamma^-|i^\iota-j^\iota|}{1+|k|^\tau},\quad\forall~k\in\Z^n,~i\ne j.$$
Let now
$$B:\T_s^n\ni(\phi_1,\dots,\phi_n)\mapsto B(\phi_1,\dots,\phi_n)\in\mathcal{M}_\beta^+$$
be an analytic map with $B(\phi_1,\dots,\phi_n)$ anti-self-adjoint for each real value of $(\phi_1,\dots,\phi_n)$. Consider the corresponding unitary operator $e^{B(\phi_1,\dots,\phi_n)}$ and, for any $\omega\in\Pi^-$ consider the unitary transformation of basis $x=e^{B(\omega t)}y$. Substitution in equation (\ref{2.4}) yields
\[
\rmi\dot{y}=(A^++P^+(\omega t))y,\quad
A^+:=A^-+\text{diag}(P^-).
\]
In fact, $A^+=\text{diag}\{\lambda_i^++\mu_i^+(\ot)\}$, where $\lambda_i^+=\lambda_i^-+\overline{P_{ii}^-(\phi)}$ (the overline denotes angular average) and $\mu_i^+(\omega t)=\mu_i^-(\omega t,\omega)+P_{ii}^-(\phi)-\overline{P_{ii}^-(\phi)}$. Hence the functions $\mu_i^+(\phi)$ have zero average and $\text{diag}(P^-):=\text{diag}\{P_{11}^-(\omega t),P_{22}^-(\omega t),\dots\}$. The new perturbation $P^+$ is given by
\begin{align*}
P^+&:=\big([A^-,B]-\rmi\dot{B}+(P^--\text{diag}(P^-))\big)\\
&+\big(e^{-B}A^-e^B-A^--[A^-,B]\big)+\big(e^{-B}P^-e^B-P^-\big)-\rmi\big(e^{-B}\frac{d}{dt}e^B-\dot{B}\big).
\end{align*}
The main step of the proof is to construct $B$ such that  the following vanish, i.e. to solve for the unknown $B$ the equation
$[A^-,B]-{\rm i}\dot{B}+(P^--\text{diag}(P^-))=0.$
The construction is based on a lemma by Kuksin and a method from Bambusi \& Graffi \cite{BG}.
We also use the same notation as Bambusi \& Graffi \cite{BG} for reader's convenience.  The proof of Lemma \ref{Bdeguji} is similar as Lemma 3.2  in \cite{BG} and we will concentrate the difference with the proof in \cite{BG}. In the following we introduce Kuksin's lemma for completeness. \\
\indent On the $n$-dimensional torus consider the equation
\begin{equation}\label{unknownB}
-{\rm i}\sum_{k=1}^n \omega_k\frac\partial{\partial\phi_k}\chi(\phi)+E_1\chi(\phi)+E_2h(\phi)\chi(\phi)=b(\phi).
\end{equation}
Here $\chi$ denotes the unknown, while $b,h$ denote given analytic functions on $\T_s^n$. $h$ has zero average; $E_1$, $E_2$ are positive constants and $\|h\|_\s\le1$. Concerning the frequency vector $\omega=(\omega_1,\dots,\omega_n)$ Assumptions are:
$$|\lra k\omega|\ge\frac{\gamma_2}{|k|^\tau},\quad\forall~k\in\Z^n\backslash\{0\},$$
$$|\lra k\omega+E_1|\ge\frac{\gamma_1}{|k|^\tau+1},\quad\forall~k\in\Z^n.$$
The important  hypothesis is an order assumption, namely: given $0<\theta<1$ and $C_0>0$ we assume
\begin{equation}\label{keycd}
E_1^\theta\ge C_0E_2.
\end{equation}
\begin{lemma}[Kuksin]\label{Kuksin}
Under the above assumptions, equation \eqref{unknownB} has a unique analytic solution $\chi$ which for any $0<\sigma<s$ fulfills the
\[
\|\chi\|_{s-\sigma}\le\frac{C_1}{\gamma_1\sigma^{a_4}}\exp{\left(\frac{C_2}{\gamma_2^{a_5}\sigma^{a_3}}\right)}\|b\|_s.
\]
Here $a_3,a_4,a_5,C_1,C_2$ are constants independent of $E_1,E_2,\sigma,s,\gamma_1,\gamma_2,\omega$.
\end{lemma}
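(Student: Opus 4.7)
The plan is to remove the variable-coefficient term $E_2 h(\phi)\chi$ by a gauge transformation, reducing the equation to a constant-coefficient homological equation on the torus which can be solved by standard Fourier techniques. The resulting two-step decomposition matches the structure of the claimed bound: the constant coefficient inversion produces the prefactor $1/(\gamma_1\sigma^{a_4})$, while the gauge correction produces the exponential factor $\exp(C_2/(\gamma_2^{a_5}\sigma^{a_3}))$.

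First, since $h$ has zero average and $\omega$ satisfies the Diophantine bound $|\langle k,\omega\rangle|\ge\gamma_2/|k|^\tau$ for $k\neq 0$, I would solve the auxiliary transport equation $-i\sum_{k=1}^n\omega_k\partial_{\phi_k}g(\phi)=E_2 h(\phi)$ by Fourier expansion, setting $\hat g_k = E_2\hat h_k/\langle k,\omega\rangle$ for $k\neq 0$ and $\hat g_0 = 0$. Using $|\hat h_k|\le\|h\|_s e^{-|k|s}$ together with the Diophantine estimate and the elementary bound $\sum_k |k|^\tau e^{-|k|\sigma/2} \le C\sigma^{-(n+\tau)}$, a standard analytic calculation gives $\|g\|_{s-\sigma/2}\le CE_2\|h\|_s/(\gamma_2^{a_5}\sigma^{a_3})$ for suitable exponents $a_3,a_5$.

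Second, substituting $\chi = e^{-g}\psi$ into the original equation, a direct differentiation yields
\[
-i\omega\cdot\nabla(e^{-g}\psi) + (E_1 + E_2 h)e^{-g}\psi = e^{-g}\bigl[-i\omega\cdot\nabla\psi + (i\omega\cdot\nabla g + E_1 + E_2 h)\psi\bigr],
\]
and by the defining equation for $g$ the variable-coefficient term cancels, leaving the constant coefficient equation $-i\omega\cdot\nabla\psi + E_1\psi = e^g b$. Fourier expansion gives $\hat\psi_k = \widehat{e^g b}_k/(\langle k,\omega\rangle + E_1)$, and the second Diophantine estimate $|\langle k,\omega\rangle + E_1|\ge\gamma_1/(|k|^\tau + 1)$ provides $\|\psi\|_{s-\sigma}\le C\|e^g b\|_{s-\sigma/2}/(\gamma_1\sigma^{a_4})$ by the same analytic Fourier argument.

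Combining the two steps gives $\|\chi\|_{s-\sigma}\le\|e^{-g}\|_{s-\sigma/2}\|\psi\|_{s-\sigma}\le e^{2\|g\|_{s-\sigma/2}}\cdot C\|b\|_s/(\gamma_1\sigma^{a_4})$, and the bound on $\|g\|$ yields the exponential factor of the claimed form. Uniqueness follows because the transformed constant-coefficient equation has a unique analytic solution under the Diophantine assumption (the multiplier $\langle k,\omega\rangle + E_1$ is never zero), and the gauge $\chi\leftrightarrow\psi$ is bijective. The main obstacle is the careful bookkeeping of analyticity losses and the role of the order hypothesis $E_1^\theta\ge C_0 E_2$: it is precisely this assumption which allows the $E_2$-dependence in $\|g\|_{s-\sigma/2}$ to be absorbed so that the exponential factor depends only on $\gamma_2$ and $\sigma$ (with constants $C_1,C_2$ independent of $E_1,E_2$), as stated.
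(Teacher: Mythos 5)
The paper does not reprove this lemma: it is quoted as Kuksin's lemma and used as a black box (the proof is in Kuksin's work and in the appendix of Bambusi--Graffi \cite{BG}), so your proposal has to be judged on its own. As written it has a genuine gap, and it sits exactly at the point you wave at in your last sentence. Your gauge transformation is algebraically correct: solving $-\rmi\,\omega\cdot\nabla g=E_2h$ costs $\|g\|_{s-\sigma/2}\le CE_2\|h\|_s/(\gamma_2\sigma^{n+\tau})$, and the conjugated equation is constant-coefficient, so you end with
\[
\|\chi\|_{s-\sigma}\le\frac{C}{\gamma_1\sigma^{a_4}}\exp\Big(\frac{CE_2}{\gamma_2\sigma^{n+\tau}}\Big)\|b\|_s .
\]
The claimed estimate, however, has an exponential factor independent of $E_1$ and $E_2$, and this independence is the whole point of the lemma: in the application $E_1=|\lambda_i^--\lambda_j^-|$ and $E_2=\|\mu_i^--\mu_j^-\|_s+1\sim i^{2\beta}+j^{2\beta}$ are both unbounded, so a bound with $E_2$ in the exponent is useless for the KAM iteration. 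The order hypothesis $E_1^\theta\ge C_0E_2$ does not bound $E_2$ by a constant (it only compares it with $E_1$, which is also unbounded), so your closing claim that this hypothesis lets the $E_2$-dependence ``be absorbed'' is not a proof step; with only the gauge argument it is simply false.

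What is missing is the dichotomy that drives the actual proof. One fixes a threshold of the form $A=\big(C/(\gamma_2\sigma^{n+\tau+2})\big)^{1/(1-\theta)}$. If $E_1\le A$, then the order assumption gives $E_2\le E_1^\theta/C_0\le A^\theta/C_0$, and plugging this into your gauge (or an equivalent iteration) bound turns $\exp(CE_2/(\gamma_2\sigma^{n+\tau}))$ into $\exp\big(C_2/(\gamma_2^{a_5}\sigma^{a_3})\big)$ with $a_5=\frac1{1-\theta}$ and $a_3=n+\tau+\frac{\theta(n+\tau+2)}{1-\theta}$ --- precisely the exponents recorded after Lemma \ref{Bdeguji}. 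If instead $E_1>A$, the gauge route must be abandoned: one treats $E_2h\chi$ perturbatively, observing that the divisors $\la k,\omega\ra+E_1$ are of size $\gtrsim E_1$ for the low modes and $\ge\gamma_1/(1+|k|^\tau)$ otherwise, so that $\|(-\rmi\,\omega\cdot\nabla+E_1)^{-1}E_2h\|\lesssim E_2\sigma^{-(n+\tau)}/E_1\le C\gamma_2\sigma^2<\tfrac12$ by $E_2/E_1\le E_1^{\theta-1}/C_0\le A^{\theta-1}/C_0$, and a Neumann series yields the bound with only the prefactor $C_1/(\gamma_1\sigma^{a_4})$ and no exponential loss. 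Without this case analysis your argument cannot produce constants independent of $E_1,E_2$, so the proposal does not establish the lemma as stated.
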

By Kuksin's lemma as \cite{BG} we have
\begin{lemma}\label{Bdeguji}
Let
\begin{eqnarray}\label{xianzhitiaojian1}
\gamma^-\ge\gamma_0/2, 0<\sigma\le C(\gamma_0,n,\tau)<1, 0\le C_\omega^-\le1,
\end{eqnarray}
and $\theta=\frac{2\beta}{\iota-1}$ in \eqref{keycd}. For any $0<\sigma<s$, equation
\begin{equation}\label{homoeqn}
[A^-,B]-{\rm i}\dot{B}+(P^--\text{diag}(P^-))=0
\end{equation}
has a unique solution $B\in\mathcal{M}_{\beta,s-\sigma}^+$ analytic on $\T_{s-\sigma}^n$, fulfilling the estimate
\begin{equation}\label{Bestimate}
\|B\|_{\beta,s-\sigma}^{\L,+}\le\expc\big(\|P^-\|_{\beta,s}^\L+\|P^-\|_{\lzd,s}^\L\big),
\end{equation}
where $C=C(\gamma_0,\beta,\iota,n,\tau)$, $a_3=n+\tau+\frac{\theta(n+\tau+2)}{1-\theta}$.
\end{lemma}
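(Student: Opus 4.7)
\emph{Plan.} The plan is to follow the strategy of Lemma 3.2 in \cite{BG}, reducing \eqref{homoeqn} to a scalar equation on the torus for each matrix entry and invoking Kuksin's Lemma \ref{Kuksin}; the novelty compared with \cite{BG} is to carry along both the matrix-space norm $\|\cdot\|_{\beta,s}$ and the operator norm $\|\cdot\|_{\lzd,s}$ in parallel. First I would decouple \eqref{homoeqn}: since $A^-$ is diagonal and the diagonal part of $P^-$ is cancelled by $\text{diag}(P^-)$, one may set $B_{ii}\equiv 0$ and for $i\ne j$ reduce to the scalar equation
\[
-\rmi\sum_{k=1}^n\omega_k\partial_{\phi_k}B_{ij}+(\lambda_i^--\lambda_j^-)B_{ij}+(\mu_i^-(\phi)-\mu_j^-(\phi))B_{ij}=-P^-_{ij}(\phi),
\]
which fits \eqref{unknownB} with $E_1=\lambda_i^--\lambda_j^-$, $E_2 h(\phi)=\mu_i^-(\phi)-\mu_j^-(\phi)$ and $b(\phi)=-P^-_{ij}(\phi)$. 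Next I would verify the order assumption \eqref{keycd} with $\theta=2\beta/(\iota-1)$: by H1.a one has $E_1\ge C_\lambda^-|i^\iota-j^\iota|$, by H1.c one has $E_2\le C_\mu^-(i^{2\beta}+j^{2\beta})$, and Lemma \ref{expd} applied to $i\ne j$ yields $|i^\iota-j^\iota|\ge\tfrac12(i^{\iota-1}+j^{\iota-1})$, so that $E_1^\theta\ge c(i^{(\iota-1)\theta}+j^{(\iota-1)\theta})=c(i^{2\beta}+j^{2\beta})\ge C_0E_2$. The Diophantine data $\gamma_1=\gamma^-|i^\iota-j^\iota|$ and $\gamma_2=\gamma^-$ are supplied by H3.

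With these hypotheses in place Kuksin's lemma produces
\[
\|B_{ij}\|_{s-\sigma}\le\frac{C}{\gamma^-|i^\iota-j^\iota|\,\sigma^{a_4}}\exp\Big(\frac{C}{(\gamma^-)^{a_5}\sigma^{a_3}}\Big)\|P^-_{ij}\|_s.
\]
I would then multiply by the $\M_\beta^+$ weight $(ij)^{-\beta}(1+|i-j|)(i^{\iota-1}+j^{\iota-1})$ and use Lemma \ref{expd} once more in the form $(1+|i-j|)(i^{\iota-1}+j^{\iota-1})/|i^\iota-j^\iota|\le 4$; after absorbing the algebraic $\sigma$- and $\gamma^-$-losses into the exponential via $\gamma^-\ge\gamma_0/2$ and $\sigma<1$, this gives the non-Lipschitz estimate $\|B\|_{\beta,s-\sigma}^+\le\expc\|P^-\|_{\beta,s}$, with $a_3=n+\tau+\theta(n+\tau+2)/(1-\theta)$ as advertised. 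For the Lipschitz part I would apply Kuksin to the equation satisfied by $B(\omega)-B(\omega')$ at two parameters $\omega,\omega'\in\Pi^-$; the source picks up extra terms proportional to $P^-_{ij}\cdot\partial_\omega(\lambda_i^--\lambda_j^-+\lra k\omega+\mu_i^-_k-\mu_j^-_k)$ which by H1.b and H1.c are of size $i^{2\beta}+j^{2\beta}+|k|$. The $|k|$-factor is absorbed into the exponential prefactor by a further shrinking of the analyticity strip, and after the extra division by $|i^\iota-j^\iota|$ the remaining quotient $(i^{2\beta}+j^{2\beta})/|i^\iota-j^\iota|\cdot(1+|i-j|)(i^{\iota-1}+j^{\iota-1})/|i^\iota-j^\iota|$ is bounded by $O(1/(i\vee j)^{\iota-1-2\beta})$ thanks to $2\beta<\iota-1$. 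To match this Lipschitz excess against the $\M_\beta^+$ weight I would use two complementary bounds on the entries $P^-_{ij}$ — the pointwise bound $|P^-_{ij}|\le(ij)^\beta\|P^-\|_{\beta,s}$ and the operator-norm bound $|P^-_{ij}|\le(i\wedge j)^\delta\|P^-\|_{\lzd,s}$ derived from $\|P^-e_j\|_{\l_{-2\delta}^2}\le\|P^-\|_{\lzd}$ together with its adjoint counterpart — the hypothesis $2\beta\le\delta$ ensuring that the second is effective precisely where the first is not. Summing the two contributions yields the stated inequality $\|B\|_{\beta,s-\sigma}^{\L,+}\le\expc(\|P^-\|_{\beta,s}^\L+\|P^-\|_{\lzd,s}^\L)$.

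The hardest point will be this last combination: reconciling the Lipschitz loss produced by $\omega$-differentiation of the small divisors with the weighted $\M_\beta^+$ norm, and simultaneously tracking both $\|\cdot\|_{\beta,s}^\L$ and $\|\cdot\|_{\lzd,s}^\L$ of $P^-$ so that the subsequent KAM iteration can reinitialize with the same two norms on the next-step perturbation $P^+$. The technical condition $2\beta\le\delta$ is precisely what makes the two complementary entry estimates combine consistently with the weight $(ij)^{-\beta}(1+|i-j|)(i^{\iota-1}+j^{\iota-1})$; without it, neither norm alone would be able to absorb the Lipschitz correction coming from H1.b.
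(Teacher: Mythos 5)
Your main construction coincides with the paper's: the same entrywise reduction of \eqref{homoeqn} to \eqref{unknownB} with $E_1=\lambda_i^--\lambda_j^-$, $E_2h=\mu_i^--\mu_j^-$ (normalized so $\|h\|_s\le1$), $\gamma_1=\gamma^-|i^\iota-j^\iota|$, $\gamma_2=\gamma^-$, the same verification of \eqref{keycd} via H1.a, H1.c and Lemma \ref{expd}, the same weighting by $(ij)^{-\beta}(1+|i-j|)(i^{\iota-1}+j^{\iota-1})$, and the same treatment of the Lipschitz seminorm by applying Kuksin's lemma to the difference equation and absorbing the polynomial losses in $\sigma$ and $\gamma^-$ into the exponential after a further strip shrinkage. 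Up to that point the proposal is sound.

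The final step, however, rests on a misconception. Your own computation already shows that the Lipschitz excess produced by H1.b is harmless with the $\M_\beta$ bound alone: the quotient
\[
\frac{i^{2\beta}+j^{2\beta}}{|i^\iota-j^\iota|}\cdot\frac{(1+|i-j|)(i^{\iota-1}+j^{\iota-1})}{|i^\iota-j^\iota|}
\]
is $O(1)$ as soon as $2\beta<\iota-1$, so multiplying by $|P^-_{ij}|\le(ij)^\beta\|P^-\|_{\beta,s}$ and the weight $(ij)^{-\beta}$ closes the estimate everywhere, with no region left over where a ``complementary'' bound is needed. This is exactly what the paper does: its intermediate estimate \eqref{BestiMbeta} controls $\|B\|_{\beta,s-\sigma}^{\L,+}$ by $\expc\|P^-\|_{\beta,s}^{\L}$ alone, and \eqref{Bestimate} follows trivially because its right-hand side is the larger sum carried through the iteration for bookkeeping. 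Your claimed second bound $|P^-_{ij}|\le(i\wedge j)^\delta\|P^-\|_{\lzd,s}$ would in fact \emph{not} be effective near the diagonal: against the weight it produces $(ij)^{-\beta}(i\wedge j)^\delta\sim i^{\delta-2\beta}$ for $j$ close to $i$, which grows when $\delta>2\beta$, so the dichotomy you describe runs in the wrong direction. Relatedly, your assertion that without $2\beta\le\delta$ ``neither norm alone would be able to absorb the Lipschitz correction'' contradicts Remark 2.11 of the paper, which states that $2\beta\le\delta$ is not necessary; that hypothesis is a convenience for other parts of the scheme (e.g.\ Lemma \ref{Alconverge}), not for this lemma. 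None of this invalidates your proof — the estimate you need already follows from the part you did correctly — but the stated rationale for the last step should be discarded.
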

\begin{proof}
The equation equals to
\[
-{\rm i}\sum_{k=1}^n \omega_k\frac\partial{\partial\phi_k}B_{ij} +(\lambda_i^--\lambda_j^-)B_{ij}+(\mu_i^-(\phi)-\mu_j^-(\phi))B_{ij}=-P_{ij},\quad i\ne j,\omega\in\Pi^-.
\]
Assume $E_1=(\lambda_i^--\lambda_j^-)\ge0$ and $h_{i,j}(\phi)=\frac{\mu_i^-(\phi)-\mu_j^-(\phi)}{\|\mu_i^-(\phi)-\mu_j^-(\phi)\|_s+1}$, $E_2=\|\mu_i^-(\phi)-\mu_j^-(\phi)\|_s+1$, also denote $\gamma_1=\gamma^-|i^\iota-j^\iota|$ and $\gamma_2=\gamma^-$. We can choose $\theta=\frac{2\beta}{\iota-1}$ and a suitable constant $C_0$ such that \eqref{keycd} holds. In fact, since $|\lambda_i^--\lambda_j^-|\ge C_\lambda^-|i^\iota-j^\iota|\ge\frac{C_\lambda^-}2|i-j|(i^{\iota-1}+j^{\iota-1})$ and $\|\mu_i^-(\phi)-\mu_j^-(\phi)\|_s\le C_\mu^-(i^{2\beta}+j^{2\beta})$.
As \cite{BG}, one has  $E_1^\theta\ge C_0E_2$ with $\theta $ defined above.
Then a direct application of Kuksin's lemma yields
$$\|B_{ij}\|_{s-\sigma}\le\frac{C_1}{\gamma^-|i^\iota-j^\iota|\sigma^{a_4}} \exp{\left(\frac{C_2}{(\gamma^-)^{a_5}\sigma^{a_3}}\right)}\|P_{ij}\|_s, \quad {\rm for\ \ } i\ne j.$$
Note our assumptions for $P^-$,
it results in
$$\|B_{ij}\|_{s-\sigma}(ij)^{-\beta}(1+|i-j|)(i^{\iota-1}+j^{\iota-1})\le\frac{4C_1}{\gamma^-\sigma^{a_4}} \exp{\left(\frac{C_2}{(\gamma^-)^{a_5}\sigma^{a_3}}\right)}\|P^-\|_{\beta,s},\quad \forall i\ne j.$$
When $i=j$, $B_{ij}(\phi)=0$. Therefore, $B(\phi)\in\mathcal{M}_\beta^+$ for any $|\Im\phi|<s-\sigma$ and
$$\|B(\phi)\|_{\beta,s-\sigma}^+\le\frac{4C_1}{\gamma^-\sigma^{a_4}} \exp{\left(\frac{C_2}{(\gamma^-)^{a_5}\sigma^{a_3}}\right)}\|P^-\|_{\beta,s}.$$
A similar computation follows
for $\omega\ne\omega'$, $i\ne j$,
\begin{align*}
&\frac{\|\Delta B_{ij}\|_{s-3\sigma}}{|\omega-\omega'|}(ij)^{-\beta}(1+|i-j|)(i^{\iota-1}+j^{\iota-1})\\
\le&\frac{C}{(\gamma^-)^2\sigma^{2a_4+1}} \exp{\left(\frac{2C_2}{(\gamma^-)^{a_5}\sigma^{a_3}}\right)}\|P^-\|_{\beta,s}^\mathcal{L}.
\end{align*}
Thus,
\[
\|B(\phi)\|_{s-3\sigma}^{\mathcal{L},+}\le\frac{C(\gamma_0,n,\tau)}{(3\sigma)^{2a_4+1}} \exp{\left(\frac{C(\gamma_0,\beta,\iota,n,\tau)}{(3\sigma)^{a_3}}\right)}\|P^-\|_{\beta,s}^\mathcal{L},
\]
where $a_4=n+\tau,~a_5=\frac{1}{1-\theta},~a_3=n+\tau+\frac{\theta(n+\tau+2)}{1-\theta}>1$. So if we denote $m=[2a_4+3]$ and choose $3\sigma\le\min\{\frac{1}{m!C(\gamma_0,n,\tau)},1\}$, then
$\frac{C(\gamma_0,n,\tau)}{(3\sigma)^{2a_4+1}}\le\frac1{(3\sigma)^mm!}\le\exp{\Big(\frac1{3\sigma}\Big)}$.
Redefine $3\sigma$ as $\sigma$ one obtains
\begin{equation}\label{BestiMbeta}
\|B(\phi)\|_{\beta,s-\sigma}^{\mathcal{L},+}\le \expc\|P^-\|_{\beta,s}^\mathcal{L},
\end{equation}
where $C=C(\gamma_0,\beta,\iota,n,\tau)$.
\end{proof}
In fact by Cauchy's estimate and  (ii) of Lemma \ref{algebrapara} and (\ref{BestiMbeta}) we have
\begin{lemma}\label{gujiforPlusIII}
\begin{align*}
&\|\dot B\|_\besn2^{\L,+}\le\frac{C(n)}\sigma\|B\|_\besn{}^{\L,+},\\
&\|\frac{d}{dt}e^B\|_\besn2^{\L,+}\le\frac{C(\beta,n)}\sigma e^{C(\beta)\|B\|_\besn{}^+}\|B\|_\besn{}^{\L,+},\\
&\|\frac{d}{dt}(e^B-\id-B)\|_\besn2^{\L,+}\le\frac{C(\beta,n)}\sigma e^{C(\beta)\|B\|_\besn{}^+}(\|B\|_\besn{}^{\L,+})^2.
\end{align*}
\end{lemma}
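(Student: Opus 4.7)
The plan is to reduce all three estimates to the Cauchy estimate in the $\phi$-variables, combined with the product and exponential bounds already established in Lemma~\ref{algebrapara}(ii). The key observation I would lean on is this: for any analytic map $F:\T_{s-\sigma}^n\to\M_\beta^+$ that is Lipschitz-continuous in the parameter $\omega\in\Pi^-$, the time derivative along the flow, $\dot F=\sum_{k=1}^n\omega_k(\partial_{\phi_k}F)(\ot,\omega)$, depends on $\omega$ both through the linear prefactor and through $F$ itself. Since $\Pi\subset[0,1]^n$ forces $|\omega_k|\le 1$, and the Banach-space-valued Cauchy estimate in each $\phi_k$-variable at fixed $\omega$ yields
\[
\sup_{|\Im\phi|<s-2\sigma}|\partial_{\phi_k}F(\phi,\omega)|_\beta^+\le\frac1\sigma\sup_{|\Im\phi|<s-\sigma}|F(\phi,\omega)|_\beta^+,
\]
while differencing in $\omega$ commutes with $\partial_{\phi_k}$ so the same bound transfers verbatim to the Lipschitz-in-$\omega$ piece, summing over $k$ I expect to conclude
\[
\|\dot F\|_\besn2^{\L,+}\le\frac{C(n)}\sigma\|F\|_\besn{}^{\L,+}.
\]

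Applied with $F=B$ this immediately yields the first inequality of the lemma. For the second, I would apply the same bound with $F=e^B-\id$; since $\id$ is constant in $\phi$, $\frac{d}{dt}e^B$ coincides with $\dot F$, and Lemma~\ref{algebrapara}(ii) already delivers
\[
\|e^B-\id\|_\besn{}^{\L,+}\le e^{C\|B\|_\besn{}^{+}}\|B\|_\besn{}^{\L,+},
\]
which, plugged into the Cauchy bound above, produces the asserted estimate. The third inequality is obtained in exactly the same way with $F=e^B-\id-B$, this time invoking \eqref{bexpbetaplusb} to majorize $\|F\|_\besn{}^{\L,+}$ by $Ce^{C\|B\|_\besn{}^{+}}(\|B\|_\besn{}^{\L,+})^2$.

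The only point worth a second look — rather than a genuine obstacle — is the interplay between $\phi$-analyticity and the entrywise $\M_\beta^+$ norm: because $|\cdot|_\beta^+$ is a weighted supremum over matrix entries, analyticity of $F$ into $\M_\beta^+$ is simply Banach-space-valued analyticity, and the scalar Cauchy estimate extends to this setting with the same constant. There is no KAM-type small-divisor issue here; Lemma~\ref{Bdeguji} has already absorbed all the hard analysis, and Lemma~\ref{gujiforPlusIII} functions purely as a Cauchy-estimate corollary needed in the subsequent iteration step to control the time-derivative terms $\frac{d}{dt}e^B$ and $\frac{d}{dt}(e^B-\id-B)$ which appear in the explicit formula for the new perturbation $P^+$.
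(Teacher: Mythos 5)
Your proposal is correct and follows exactly the route the paper intends: the paper's entire justification of Lemma \ref{gujiforPlusIII} is the one-line remark that it follows ``by Cauchy's estimate and (ii) of Lemma \ref{algebrapara} and (\ref{BestiMbeta})'', which is precisely your Banach-space-valued Cauchy estimate on $\dot F=\sum_k\omega_k\partial_{\phi_k}F$ (with $|\omega_k|\le1$) applied to $F=B$, $F=e^B-\id$ and $F=e^B-\id-B$, combined with \eqref{bexpbetaplusb} and the bound on $\|e^B-\id\|_\besn{}^{\L,+}$. The only point you gloss slightly --- the $\omega$-difference also hits the prefactor $\omega_k$, producing a term $(\omega_k-\omega_k')\partial_{\phi_k}F$ --- is harmless, since that term is controlled by the sup part of the $\L$-norm and the stated constants absorb it.
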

\begin{lemma}\label{kaojing1}
For any $|s_1|\le1$, if $\|B\|_\besn{}^{\L,+}\ll 1$, then
$$\|e^{s_1B}\|_{\lzt,s-2\sigma}^\L, \|e^{s_1B}\|_{\ldt,s-2\sigma}^\L \leq 2. $$
\end{lemma}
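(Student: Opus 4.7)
The plan is to transfer the $\mathcal M_\beta^+$-smallness of $B$ to operator-norm smallness on $\l_0^2$ and on $\l_{-2\delta}^2$, and then sum the exponential series on each of those two Banach spaces, mimicking the power-series bookkeeping already carried out in Lemma \ref{algebrapara}(ii).

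First I would apply items (iv) and (v) of Lemma \ref{algebra} with the choices $s=0$ and $s=2\delta$ respectively. Hypothesis B3 asserts $\delta<\iota-\beta-\tfrac12$, which rearranges to $2\delta<2\iota-2\beta-1$, so both parameter values fall in the admissible range $[0,2\iota-2\beta-1)$ and the embeddings apply with a constant $C$ depending only on $\beta,\iota,\delta$. This gives, uniformly in $(\phi,\omega)\in\T_{s-\sigma}^n\times\Pi^-$,
\[
\|B(\phi,\omega)\|_{\lzt}+\|B(\phi,\omega)\|_{\ldt}\le C\,|B(\phi,\omega)|_\beta^+,
\]
and since the bound is linear in $B$ the same estimate transfers to the Lipschitz-in-$\omega$ quotients. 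Hence
\[
\|B\|_{\lzt,s-\sigma}^{\L}+\|B\|_{\ldt,s-\sigma}^{\L}\le C\,\|B\|_\besn{}^{\L,+}\ll 1.
\]

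Let $\cal Y$ denote either $\lzt$ or $\ldt$. Submultiplicativity of the operator norm and $|s_1|\le 1$ give, for the sup piece,
\[
\|e^{s_1 B}-\id\|_{\cal Y,s-2\sigma}\le\sum_{n\ge 1}\frac{\|B\|_{\cal Y,s-\sigma}^n}{n!}\le e^{\|B\|_{\cal Y,s-\sigma}}-1,
\]
and for the Lipschitz-in-$\omega$ piece the telescoping identity $B(\omega)^n-B(\omega')^n=\sum_{k=0}^{n-1}B(\omega)^k(B(\omega)-B(\omega'))B(\omega')^{n-1-k}$ together with submultiplicativity yields $\|B(\omega)^n-B(\omega')^n\|_{\cal Y}\le n\|B\|_{\cal Y}^{n-1}\|B(\omega)-B(\omega')\|_{\cal Y}$, which after summation produces
\[
\|e^{s_1B}-\id\|_{\cal Y,s-2\sigma}^{\L}\le e^{\|B\|_{\cal Y,s-\sigma}^{\L}}\,\|B\|_{\cal Y,s-\sigma}^{\L}.
\]
Combined with the bound of the previous paragraph, once the threshold on $\|B\|_\besn{}^{\L,+}$ is fixed small enough the right-hand side is at most $1$, so $\|e^{s_1B}\|_{\cal Y,s-2\sigma}^{\L}\le 2$, which is the desired conclusion.

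There is no real obstacle in this argument: everything is a standard power-series bound leveraging submultiplicativity of operator norms, essentially a verbatim rerun of the proof of Lemma \ref{algebrapara}(ii) with $\mathcal B(\l_0^2)$ or $\mathcal B(\l_{-2\delta}^2)$ norms in place of $|\cdot|_\beta^+$. The only structural input is the embedding $\mathcal M_\beta^+\hookrightarrow\mathcal B(\l_0^2)\cap\mathcal B(\l_{-2\delta}^2)$ supplied by Lemma \ref{algebra}(iv)(v), and the only verification that must actually be made is that the hypothesis $\delta<\iota-\beta-\tfrac12$ places $2\delta$ inside the admissible range of that embedding.
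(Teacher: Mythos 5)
Your proof is correct, and it reaches the bound by a mildly different route than the paper. The paper first estimates $e^{s_1B}-\id$ inside $\mathcal{M}_\beta^+$ using the exponential bound of Lemma \ref{algebrapara}(ii), and only then transfers the result to the operator norms via Lemma \ref{algebra}(iv),(v): schematically, exponentiate in $\mathcal{M}_\beta^+$, then embed. You reverse the order: you embed $B$ itself into $\mathcal{B}(\ell_0^2)$ and $\mathcal{B}(\ell_{-2\delta}^2)$ via Lemma \ref{algebra}(iv),(v) (checking, exactly as the paper does, that $2\delta<2\iota-2\beta-1$ follows from B3 so that the embedding applies, and that linearity carries the Lipschitz-in-$\omega$ quotients through), and then sum the exponential series directly in those spaces, where submultiplicativity holds with constant $1$, redoing the telescoping argument of Lemma \ref{algebrapara}(ii) in the operator-norm setting. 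Both routes hinge on the same structural input, the embedding $\mathcal{M}_\beta^+\hookrightarrow\mathcal{B}(\ell_0^2)\cap\mathcal{B}(\ell_{-2\delta}^2)$, and on a Neumann-type series bound; the paper's version is shorter because it reuses the already-proved Lemma \ref{algebrapara}(ii) and only needs the triangle inequality $\|e^{s_1B}\|\le 1+\|e^{s_1B}-\id\|$, while yours is self-contained modulo the embedding and avoids the constant $C$ in the quasi-algebra bound of $\mathcal{M}_\beta^+$ at the cost of rerunning the series estimate. No gap: the only points that needed verification (admissibility of $2\delta$, transfer of the Lipschitz seminorm, and the final triangle inequality with $\id$, which does not contribute to the Lipschitz part) are all handled or immediate.
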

\begin{proof}
If  $\|B\|_\besn{}^{\L,+}\ll 1$, then
by   (iv) of  Lemma \ref{algebra} and  Lemma \ref{algebrapara}
\begin{align*}
\|e^{s_1B}\|_{\lzt,s-2\sigma}^\L&=\|e^{s_1B}-\id+\id\|_{\lzt,s-2\sigma}^\L
\le1+\|e^{s_1B}-\id\|_{\lzt,s-\sigma}^\L\\
&\le1+C(\beta,\iota)\|e^{s_1B}-\id\|_{\besn{}}^{\L,+}
\le1+C(\beta,\iota)e^{C\|B\|_\besn{}^{+}}\|B\|_\besn{}^{\L,+}\\
&\le 2.
\end{align*}
Similarly, by $0\le\delta<\iota-\beta-\frac12$, (v) of  Lemma \ref{algebra} and Lemma \ref{algebrapara},  if $\|B\|_\besn{}^{\L,+}\ll 1$,
$
\|e^{s_1B}\|_{\ldt,s-2\sigma}^\L\le2.
$
\end{proof}
\begin{lemma}\label{iterule}
Consider the system
\begin{equation}\label{sysminus}
\rmi\dot x=(A^-+P^-(\ot))x
\end{equation}
with the stated assumptions. Assume furthermore that also \eqref{xianzhitiaojian1} holds. Then there exists an anti-self-adjoint operator $B\in\M_\besn{}^+$ analytically depending on $\phi\in\T_\besn{}^n$, and Lipschitz continuous in $\omega\in\Pi^-$ such that\\
(1). $B$ fulfills the estimate  \eqref{Bestimate};\\
(2). For any $\omega\in\Pi^-$ the unitary operator $e^{B(\ot)}$ transforms the system \eqref{sysminus} into the system
\[
\rmi\dot y=(A^++P^+(\ot))y;
\]
(3). The new perturbation $P^+$ fulfills the estimate
\[
\|P^+\|_{\besn 2}^\L+\|P^+\|_{\lzd,s-2\sigma}^\L\le\exptc\big(\|P^-\|_{\beta,s}^\L+\|P^-\|_{\lzd,s}^\L\big)^2,
\]
where we assume that $\|B\|_\besn{}^\L\le\frac1{C(\beta)}$ and $0<\sigma\le\frac1{C(n,\beta)}<1$ and $\tau>1$;\\
(4). For any positive $K$ such that $2(\|P^-\|_{\beta,s}^\L+\|P^-\|_{\lzd,s}^\L)(1+K^\tau)<\gamma^--\gamma^+$, there exists a closed set $\Pi^+\subset\Pi^-$ fulfilling
\[
|\Pi^-\backslash\Pi^+|\le\frac{C\gamma_0}{K^{\tau-n-\frac2{\iota-1}}},
\]
under the condition \eqref{Clambda},\eqref{gammazero},\eqref{Comla} and \eqref{tagaK};\\
(5). If $\omega\in\Pi^+$ assumptions $H1 - H3$ above are fulfilled by $A^+$  and $P^+$ provided that the constants are replaced by the new ones defined by
\begin{equation}\label{iteration}
\begin{aligned}
\gapl&=\gami-2\big(\|P^-\|_{\beta,s}^\L+\|P^-\|_{\lzd,s}^\L\big)(1+K^\tau),\\
\clpl&=\clmi-2\big(\|P^-\|_{\beta,s}^\L+\|P^-\|_{\lzd,s}^\L\big),\\
\copl&=\comi+2\big(\|P^-\|_{\beta,s}^\L+\|P^-\|_{\lzd,s}^\L\big),\\
\cmpl&=\cmmi+2\big(\|P^-\|_{\beta,s}^\L+\|P^-\|_{\lzd,s}^\L\big).
\end{aligned}
\end{equation}
\end{lemma}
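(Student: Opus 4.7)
The proof is the standard KAM iterative step: I solve the homological equation to build a conjugating operator, show that the new perturbation is quadratically small in both tracked norms, and excise a small set of frequencies. I apply Lemma \ref{Bdeguji} directly to \eqref{homoeqn}; anti-self-adjointness of the solution $B$ is automatic because $P^- - \text{diag}(P^-)$ is Hermitian and the small divisors $\lambda_i^- - \lambda_j^- + \ko$ are real, and Lemma \ref{Bdeguji} produces $B \in \M_\besn{}^+$ together with estimate \eqref{Bestimate}, proving item (1). The unitary change $x = e^{B(\ot)}y$ then conjugates \eqref{sysminus} into $\rmi\dot y = (A^+ + P^+)y$ with $A^+ = A^- + \text{diag}(P^-)$; the angular average $\overline{P_{ii}^-}$ is absorbed into the new eigenvalue $\lambda_i^+$ and the oscillating remainder into $\mu_i^+(\ot)$. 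The bracketed homological term cancels by construction, leaving
\begin{align*}
P^+ &= \big(e^{-B}A^-e^B - A^- - [A^-,B]\big) + \big(e^{-B}P^-e^B - P^-\big) - \rmi\big(e^{-B}\tfrac{d}{dt}e^B - \dot B\big),
\end{align*}
which is item (2).

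Each piece of $P^+$ is second order in $(B, P^-)$. For the $\M_\beta$ norm at width $s - 2\sigma$ I invoke Lemma \ref{algebrapara}(ii)-(iii) and Lemma \ref{gujiforPlusIII}: the middle piece is bounded by $Ce^{C\|B\|_\besn{}^{\L,+}}\|P^-\|_{\beta,s}^\L \|B\|_\besn{}^{\L,+}$, while the $A^-$ and time-derivative pieces are rewritten through the structural identity $[A^-, B] = \rmi\dot B - (P^- - \text{diag}(P^-))$ supplied by the homological equation, producing expressions of the same shape. Substituting \eqref{Bestimate} for $\|B\|_\besn{}^{\L,+}$ yields $(\|P^-\|_{\beta,s}^\L + \|P^-\|_{\lzd,s}^\L)^2$ multiplied by $\exptc$. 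For the $\lzd$ norm I use Lemma \ref{kaojing1}, which gives $\|e^{\pm B}\|_{\lzt, s-2\sigma}^\L, \|e^{\pm B}\|_{\ldt, s-2\sigma}^\L \le 2$; splitting
\[
e^{-B}Ge^B - G = (e^{-B} - \id) G e^B + G(e^B - \id)
\]
for $G = A^-$ or $P^-$ and controlling the factors $(e^{\pm B} - \id)$ on $\l_0^2$ and $\l_{-2\delta}^2$ via Lemma \ref{algebra}(iv)-(v), permissible because $2\beta \le \delta < \iota - \beta - \tfrac12$, converts each piece into a product of $\|B\|_\besn{}^{\L,+}$ and $\|P^-\|_{\lzd,s}^\L$; the $A^-$ piece is again treated through the homological identity. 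Assembly yields item (3).

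For item (4) I define $\Pi^+$ as the set of $\omega \in \Pi^-$ on which $|\lambda_i^+ - \lambda_j^+ + \ko| \ge \gapl |i^\iota - j^\iota|/(1 + |k|^\tau)$ holds for every $i \ne j$ and every $|k| \le K$; frequencies with $|k| > K$ require no cut once $2(\|P^-\|_{\beta,s}^\L + \|P^-\|_{\lzd,s}^\L)(1 + K^\tau) < \gamma^- - \gamma^+$, because the new inequality then follows from the old Diophantine condition. Since $|\lambda_i^+ - \lambda_i^-| \le Ci^{2\beta}\|P^-\|_{\beta,s}^\L$ with a comparable Lipschitz constant in $\omega$, each forbidden set has Lebesgue measure at most $C\gamma_0/(K^{\tau-n}|i^\iota - j^\iota|)$; the constraint $|\ko| \lesssim |k|$ combined with H1.a forces $|i^\iota - j^\iota| \lesssim |k|$, so by Lemma \ref{expd} the admissible pairs on each $|k|$-layer satisfy $|i - j|(i^{\iota-1} + j^{\iota-1}) \lesssim |k|$; summing and exploiting $\iota > 1$ gives the stated measure bound. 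The updated constants $\clpl, \copl, \cmpl$ in \eqref{iteration} follow by differentiating $\overline{P_{ii}^-}$ in $\omega$ and using $|P_{ii}^-(\phi,\omega)| \le i^{2\beta}\|P^-\|_{\beta,s}^\L$, which gives item (5).

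The main obstacle, and the improvement over \cite{BG}, is propagating both the $\M_\beta$ and the $\lzd$ norm of $P$ simultaneously through a single KAM step. The former is what allows the homological equation to be solved at all, via the order assumption $E_1^\theta \ge C_0 E_2$ in Kuksin's lemma; the latter captures the genuinely unbounded oscillatory character of the physical-space perturbation. Lemma \ref{kaojing1}, together with the compatibility condition $2\beta \le \delta$ built into B3, is precisely what allows both norms to close against one another in one step.
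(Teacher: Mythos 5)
Your items (1)--(3) essentially reproduce the paper's argument: solve \eqref{homoeqn} via Lemma \ref{Bdeguji}, split $P^+$ into the three pieces $(I),(II),(III)$, and use the homological identity to replace $[A^-,B]$ (and $[[A^-,B],B]$) by bounded quantities before invoking Lemma \ref{kaojing1} and Lemma \ref{algebra}(iv)--(v); the only slip there is presentational — the splitting $e^{-B}Ge^B-G=(e^{-B}-\id)Ge^B+G(e^B-\id)$ is legitimate for $G=P^-$ but cannot be applied with $G=A^-$ in the $\lzd$ norm, since $A^-$ maps $\l_0^2$ only into $\l_{-2\iota}^2$ and the piece actually appearing in $P^+$ is the second-order remainder $e^{-B}A^-e^B-A^--[A^-,B]$; you do defer that piece to the homological identity, which is the paper's route (bound $\|[A^-,B]\|_{\lzd,s-2\sigma}^\L$ by $\|\dot B\|_{\lzt,s-2\sigma}^\L+2\|P^-\|_{\lzd,s}^\L$ and then the double-integral Taylor remainder), so this is acceptable.

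Item (4), however, contains a genuine error: you have swapped the roles of $|k|\le K$ and $|k|>K$. With $\gamma^+=\gamma^--2\e^-(1+K^\tau)$ and $|\lambda_i^+-\lambda_i^-|\le\e^- i^{2\beta}$, Lemma \ref{expd} together with $2\beta<\iota-1$ gives $i^{2\beta}+j^{2\beta}\le 2|i^\iota-j^\iota|$, hence for every $\omega\in\Pi^-$ and every $|k|\le K$, $i\ne j$,
\begin{equation*}
|\lambda_i^+-\lambda_j^++\ko|\ \ge\ \frac{|i^\iota-j^\iota|}{1+|k|^\tau}\big(\gamma^--2\e^-(1+|k|^\tau)\big)\ \ge\ \frac{\gamma^+|i^\iota-j^\iota|}{1+|k|^\tau},
\end{equation*}
so no excision at all is needed for $|k|\le K$; it is precisely for $|k|>K$ that $2\e^-(1+|k|^\tau)$ exceeds $\gamma^--\gamma^+$ and the inequality can fail, and there the resonant sets $\cal R_{ij}^k$ must be removed. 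Your definition of $\Pi^+$ (cutting only for $|k|\le K$ and claiming the $|k|>K$ conditions are automatic) leaves H3 unverified for $|k|>K$, so the next iteration step cannot be performed; moreover your measure count, summed over $|k|\le K$, cannot yield the stated bound $C\gamma_0K^{-(\tau-n-\frac2{\iota-1})}$ — the decay in $K$ comes exactly from the tail sum over $|k|>K$, where each resonant set has measure $O\big(\gamma_0|i^\iota-j^\iota|\,|k|^{-\tau-1}\big)$ and the constraint $|i^\iota-j^\iota|\lesssim|k|$ limits the admissible pairs. Reversing the two regimes restores the paper's argument and the claimed estimate.
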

\begin{remark}\label{iteremark}
If $\|P^-\|_{\beta,s}^\L+\|P^-\|_{\lzd,s}^\L\le\ve^-,$ we can choose
\[
\gamma^+=\gamma^--2\ve^-(1+K^\tau),~\clpl=\clmi-2\ve^-,~\copl=\comi+2\ve^-,~\cmpl=\cmmi+2\ve^-
\]
in place of \eqref{iteration}.
\end{remark}
\begin{proof}
Similar as \cite{BG}, we can prove that $B$ is anti-self-adjoint operator and $e^{B(\ot)}$ is a unitary operator and (1) and (2) follow easily. For (3), we write  the new perturbation $P^+:=(I)+(II)+(III)$, where
$(I)=e^{-B}A^-e^B-A^--[A^-,B]$,
$(II)=e^{-B}P^-e^B-P^-$ and $(III)=-\rmi(e^{-B}\frac{d}{dt}e^B-\dot B)$.
We first estimate (I). By \eqref{homoeqn} one has
\[
[[A^-,B],B]=[\rmi\dot B-(P^--\text{diag}(P^-)),B]=\rmi[\dot B,B]-[\pmdp,B].
\]
For any $|s_2|\le 1$,  from  (iii) of Lemma \ref{algebrapara} we have
\begin{align*}
&~\|e^{-s_2B}[[A^-,B],B]e^{s_2B}\|_\besn2^\L\\
\le&~C(\beta)\|[\dot B,B]\|_\besn2^\L
+C(\beta)\|[\pmdp,B]\|_\besn2^\L\\
\le & C(\beta,n)\expc\big(\|P^-\|_{\beta,s}^{\L}\big)^2.
\end{align*}
Since
\[
(I)=e^{-B}A^-e^B-A^--[A^-,B]=\int_0^1ds_1\int_0^{s_1}e^{-s_2B}[[A^-,B],B]e^{s_2B}ds_2,
\]
it follows
\[
\|(I)\|_\besn2^\L\le\int_0^1ds_1\int_0^{s_1}\|e^{-s_2B}[[A^-,B],B]e^{s_2B}\|_{\besn2}^{\L}ds_2
\le C(\beta,n)\expc\big(\|P^-\|_\besn2^\L\big)^2.
\]
If $\|B\|_\besn{}^{\L,+}\le\frac1{C(\beta)} $,  by   (iii) of Lemma \ref{algebrapara} one has
$
\|(II)\|_\besn{2}^{\L} \le C_1(\beta)\exp{\Big(\frac{C}{\sigma^{a_3}}\Big)}\big(\|P^-\|_{\beta,s}^\L\big)^2
$
and
$
\|(III)\|_\besn2^\L \le C(\beta,n)\expc\big(\|P^-\|_{\beta,s}^{\L}\big)^2.
$
Combining all the above estimates, if $\|B\|_\besn{}^{+, \L}\le\frac1{C(\beta)}$ and $0<\sigma<1$, then
\begin{eqnarray}\label{PPlus1}
\|P^+\|_\besn2^\L \le C(\beta,n)\expc\big(\|P^-\|_\besn2^\L\big)^2,
\end{eqnarray}
where $a_3=n+\tau+\frac{\theta(n+\tau+2)}{1-\theta}$ and $\tau>1$. In the following we turn to the estimate on $\|P^+\|_{\lzd,s-2\sigma}^\L$.\\
\indent By \eqref{homoeqn}, Cauchy's estimate and (iv) of  Lemma \ref{algebra} we have
\begin{align*}
\|[A^-,B]\|_{\lzd,s-2\sigma}^\L &\le\|\dot B\|_{\lzt,s-2\sigma}^\L+2\|P^-\|_{\lzd,s}^\L\\
&\le\frac{C(\beta,n,\iota)}{\sigma}\|B\|_\besn{}^{\L,+}+2\|P^-\|_{\lzd,s}^\L,
\end{align*}
together with \eqref{BestiMbeta}, (iv),  (v) of Lemma \ref{algebra}  we obtain
\begin{align*}
&\|[[A^-,B],B]\|_{\lzd,s-2\sigma}^\L\le\|[A^-,B]B\|_{\lzd,s-2\sigma}^\L+\|B[A^-,B]\|_{\lzd,s-2\sigma}^\L\\
\le~& C(\beta,\iota)\|B\|_\besn{}^{\L,+}\|[A^-,B]\|_{\lzd,s-2\sigma}^\L\\
\le~&C(\beta,n,\iota)\expc\|P^-\|_{\beta,s}^{\L}\big(\|P^-\|_{\beta,s}^{\L}+\|P^-\|_{\lzd,s}^\L\big).
\end{align*}
Hence,
\begin{align}
\|(I)\|_{\lzd,s-2\sigma}^\L&\le\int_0^1ds_1\int_0^{s_1}\|e^{-s_2B}[[A^-,B],B]e^{s_2B}\|_{\lzd,s-2\sigma}^\L ds_2\notag\\
&\le C(\beta,n,\iota)\expc\|P^-\|_{\beta,s}^{\L}\big(\|P^-\|_{\beta,s}^{\L}+\|P^-\|_{\lzd,s}^\L\big).\label{Psupsub}
\end{align}
By \eqref{BestiMbeta}, $0\le\delta<\iota-\beta-\frac12$ and (iv), (v) of Lemma \ref{algebra},   we have
\begin{align*}
&\|[P^-,B]\|_{\lzd,s-2\sigma}^\L\le\|P^-B\|_{\lzd,s-2\sigma}^\L+ \|BP^-\|_{\lzd,s-2\sigma}^\L\\
\le& C(~\|P^-\|_{\lzd,s-2\sigma}^\L\|B\|_{\lzt,s-\sigma}^\L+\|B\|_{\ldt,s-\sigma}^\L\|P^-\|_{\lzd,s-2\sigma}^\L)\\
\le&~C(\beta,\iota)\expc\|P^-\|_{\beta,s}^\L\|P^-\|_{\lzd,s}^\L
\end{align*}
and then from Lemma \ref{kaojing1}, one can draw
\begin{align*}
\|(II)\|_{\lzd,s-2\sigma}^\L
&=\Big\|\int_0^1e^{-s_1B}[P^-,B]e^{s_1B}ds_1\Big\|_{\lzd,s-2\sigma}^\L\\
&\le C(\beta,\iota)\expc\|P^-\|_{\beta,s}^\L\|P^-\|_{\lzd,s}^\L.
\end{align*}
Similarly,  with   (iv) of Lemma \ref{algebra}, one obtains
\begin{align*}
\|(III)\|_{\lzd,s-2\sigma}^\L&\le
\|(III)\|_{\lzt,s-2\sigma}^\L\le C(\beta,\iota)\|(III)\|_\besn2^{\L,+}\\
&\le C(\beta,n,\iota)\expc\big(\|P^-\|_{\besn2}^\L\big)^2.
\end{align*}
Thus, if $\|B\|_\besn{}^\L\ll1$ and $0<\sigma<1$, we have
\begin{eqnarray}\label{PPlus2}
\|P^+\|_{\lzd,s-2\sigma}^\L \le C(\beta,n,\iota)\expc\|P^-\|_{\beta,s}^{\L}\big(\|P^-\|_{\beta,s}^{\L}+\|P^-\|_{\lzd,s}^\L\big),
\end{eqnarray}
where $a_3=n+\tau+\frac{\theta(n+\tau+2)}{1-\theta}$ and $\tau>1$.
Combining with (\ref{PPlus1}) and (\ref{PPlus2}), we obtain (3). In the following we turn to prove (4) and  (5). \\
For H1.a)
As \cite{BG}, if choose $\clpl=\clmi-2\big(\|P^-\|_{\beta,s}^\L+\|P^-\|_{\lzd,s}^\L\big)$, then $|\lambda_i^+-\lambda_j^+|\ge\clpl|i^\iota-j^\iota|$.\\
H1.b) Choose $\copl=\comi+2(\|P^-\|_{\beta,s}^\L+\|P^-\|_{\lzd,s}^\L)$, then
$
\sup\limits_{\substack{\omega,\omega'\in \Pi^{-} \\ \omega\neq \omega' }} \frac{|\lambda_i^+(\omega)-\lambda_i^+(\omega')|}{|\dom|}\le\copl i^{2\beta}
$
holds true. \\
H1.c)
Similarly, if $\cmpl=\cmmi+2\big(\|P^-\|_{\beta,s}^\L+\|P^-\|_{\lzd,s}^\L\big)$, then $\|\mu_i^+\|_\s\le\cmpl i^{2\beta}$.
If choose $\copl=\comi+2\big(\|P^-\|_{\beta,s}^\L+\|P^-\|_{\lzd,s}^\L\big)$,  we have
$\sup\limits_{\omega, \omega'\in \Pi^{-}, \omega\neq \omega'}
\frac{\|\mu_i^+(\phom)-\mu_i^+(\phomp)\|_\s}{|\dom|}\le\copl i^{2\beta}$. \\
H3): To check $H3$ for next step, one need to throw away suitable parameter sets. This step is very similar as \cite{BG} and we only give a sketch here.
In fact, if choose $\gamma^+=\gamma^--2\big(\|P^-\|_{\beta,s}^\L+\|P^-\|_{\lzd,s}^\L\big)(1+K^\tau)$, then for any $|k|\le K,~i\ne j$
\begin{equation}\label{secMenikov}
|\lambda_i^+-\lambda_j^++\lra k\omega|\ge\frac{\gamma^+|i^\iota-j^\iota|}{1+|k|^\tau},
\end{equation}
and
$$
|\lra k\omega|\ge\frac{\gamma^+}{|k|^\tau} , \qquad    \forall~k\in\Z^n\backslash\{0\}.$$
For $|k|> K$ and $i\ne j$, we need to throw away a suitable parameter set in $\Pi^-$ to guarantee \eqref{secMenikov} holds true. Clearly, a standard procedure  shows us
\begin{align*}
&\Pi^-\backslash\Pi^+ := \bigcup_{|k|>K}\bigcup_{\substack{i\ne j\\i,j\ge1}}\wt{\cal R}_{ij}^k(\omega):=\bigcup_{|k|>K}\bigcup_{\substack{i\ne j\\i,j\ge1}}\Big\{\omega\in\Pi^-:|\lambda_i^+-\lambda_j^++\lra k\omega|<\frac{\gamma^+|i^\iota-j^\iota|}{1+|k|^\tau}\Big\}\\
&\subset\bigcup_{|k|>K}\bigcup_{\substack{i\ne j\\i,j\ge1}}\Big\{\omega\in\Pi:|\lambda_i^+-\lambda_j^++\lra k\omega|<\frac{\gamma^+|i^\iota-j^\iota|}{1+|k|^\tau}\Big\}
:=\bigcup_{|k|>K}\bigcup_{\substack{i\ne j\\i,j\ge1}}\cal R_{ij}^k(\omega).\notag
\end{align*}
Next we will estimate the measure for $\bigcup\limits_{|k|>K}\bigcup\limits_{\substack{i\ne j\\i,j\ge1}}\cal R_{ij}^k(\omega)$.
Similar as \cite{BG}, if \begin{equation}\label{Clambda}
C_{\lambda,l}\ge\frac12C_\lambda,\quad\text{for any}~l,
\end{equation}
 and
 \begin{equation}\label{gammazero}
\gamma_l<\frac14C_\lambda,\quad\text{for any}~l,
\end{equation}
 and
\begin{equation}\label{Comla}
\copl<\frac{C_\lambda}{16n},
\end{equation}
and
\begin{equation}\label{tagaK}
\tau>n+\frac2{\iota-1},~\gamma^+\le\gamma_0~\text{and}~K\ge2
\end{equation}
are fulfilled, then
$
\text{meas}(\Pi^-\backslash\Pi^+)\le\frac{C\gamma_0}{K^{\tau-n-\frac2{\iota-1}}}.
$
\end{proof}

\subsection{Iteration}
In this section we set up the iteration. First we preassign the value of the various constants. Hence we keep $\ve_0$, $K$, $s_0$ and $\gamma_0$ fixed which satisfy
\begin{equation}\label{initialdata}
0<s_0<1,\quad \ve_0\le\min\{C_*,\exp{(-a_6s_0^{-a_3})}\},\quad \gamma_0<\frac14C_\lambda,\quad K=4^3,
\end{equation}
where $C_*$ and $a_6$ depend on $\beta,C_\lambda,n,\tau,\gamma_0,\iota$ and $a_3\sim n,\tau,\beta,\iota$ and
$\ve(\|P\|_{\beta, s}^{\cal L}+\|P\|^{\cal L}_{\cal{B}(\ell_0^2, \ell_{-2s}^2)}): = \ve_0$.  \\
For $l\ge1$ we define
\begin{align*}
&\ve_{l}=\ve_{l-1}^\frac43,\quad\sigma_l=\left(\frac{3C}{|\ln{\ve_{l-1}}|}\right)^{\frac1{a_3}},\quad s_l=s_{l-1}-2\sigma_l,\quad
K_l=lK,\quad\gamma_l=\gamma_{l-1}-2\ve_{l-1}(1+K_l^\tau),\\
&C_{\lambda,l}=C_{\lambda,l-1}-2\ve_{l-1},\quad C_{\omega,l}=C_{\omega,l-1}+2\ve_{l-1},\quad C_{\mu,l}=C_{\mu,l-1}+2\ve_{l-1}.
\end{align*}
The initial values of the sequences are chosen as follows:
$$\gamma_0:=\gamma,~s_0=s,~C_{\lambda,0}:=C_\lambda,~C_{\omega,0}:=0,~C_{\mu,0}:=0.$$
From these settings, we can obtain that for any $l\ge0$,
a). $\frac{\gamma_0}2\le\gamma_l\le\gamma_0<\frac14 C_\lambda$,
b). $0<\sigma_l\le\min\{C(\gamma_0,n,\tau),\frac1{C(n,\beta)}\}<1$,
c). $0\le C_{\omega,l}\le\min\{\frac{C_\lambda}{16n},1\}$,
d). $C_{\lambda,l}\ge\frac12 C_\lambda$,
e). $K_l\ge2$ with $l \geq 1$,
 and f). $0\leq \ve_l\le\frac1{C(\beta)}.$
\begin{Proposition}
There exist $\ve_*=\ve_*(\gamma,s)>0$ and, for any $l\ge1$, a closed set $\Pi_l^\gamma\subset\Pi$ such that, if $0\le\e<\ve_*$, one can construct for $\omega\in\Pi_l^\gamma$ a unitary transformation $U^l$, analytic and quasi-periodic in $t$ with frequencies $\omega$, mapping the system
${\rm i}\dot{x}=(A+\ve P(\omega t))x$,
into the system
\begin{equation}\label{iterationsystem}
{\rm i}\dot{x}=(A^l+P^l(\omega t))x,
\end{equation}
where\\
(1). $U^l(\omega t)$ is as follow: $U^l(\ot)= e^{B^1(\ot)}e^{B^2(\ot)}\cdots e^{B^l(\ot)}$, and the anti-self-adjoint operator $B^j\in\mathcal{M}_\beta^+$ depending analytically on $\phi\in\T_{s_{j-1}-\sigma_j}^n$, are Lipschitz continuous in $\omega\in\Pi_l^\gamma$ and fulfilling \eqref{Bestimate}  with $P^{j-1},~s_{j-1},~\sigma_j$ in place of $P^-,~s,~\sigma$, respectively.\\
(2). $A^l$ has the form of \eqref{formalA} with the upper index ``minus" replaced by $l$, i.e.
$$A^l:=\text{diag}\{\lambda_1^l(\omega)+\mu_1^l(\omega t,\omega),\lambda_2^l(\omega)+\mu_2^l(\omega t,\omega),\cdots\}.$$
(3). The corresponding $\lambda_i^l$ and $\mu_i^l$ fulfill conditions H1, H3 of the previous section, provided $\lambda_i^-$, $\mu_i^-$ are replaced by $\lambda_i^l$, $\mu_i^l$, respectively.\\
(4). $P^l$ fulfills condition H2 with the upper index ``minus" replaced by $l$
and the following estimates hold:
$
\|P^l\|_{\beta,s_l}^\L+\|P^l\|_{\lzd,s_l}^\L\le \exp{\Big(\frac{2C}{\sigma_l^{a_3}}\Big)}\Big(\|P^{l-1}\|_{\beta,s_{l-1}}^\L+\|P^{l-1}\|_{\lzd,s_{l-1}}^\L\Big)^2\le \ve_l$,
$\|B^l\|_{\beta,s_{l-1}-\sigma_l}^{\L,+}\le \exp{\Big(\frac{C}{\sigma_l^{a_3}}\Big)}\Big(\|P^{l-1}\|_{\beta,s_{l-1}}^\L+\|P^{l-1}\|_{\lzd,s_{l-1}}^\L\Big)\le\ve_{l-1}^{\frac23}$
and
$
|\Pi_l^\gamma\backslash\Pi_{l+1}^\gamma|\le\frac{C\gamma}{K_{l+1}^{b_1}}=\frac{C\gamma}{((l+1)K)^{b_1}}
$
with $b_1=\tau-n-\frac{2}{\iota-1}>1$.
\end{Proposition}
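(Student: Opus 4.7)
The plan is to prove the Proposition by induction on $l$, using Lemma \ref{iterule} (the one-step KAM lemma) repeatedly.

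\textbf{Base step.} For $l=0$, regard the initial system as an instance of \eqref{sysminus} with $A^-=A$, $P^-=\ve P$, $s_0=s$, $\Pi^-=\Pi$, and initial constants $\gamma_0=\gamma$, $C_{\lambda,0}=C_\lambda$, $C_{\omega,0}=0=C_{\mu,0}$. Assumptions B1--B3 together with the Diophantine condition on $\omega$ (which holds off a set of measure $O(\gamma)$ from a standard argument) yield H1--H3. The smallness $\ve_0\leq\exp(-a_6 s_0^{-a_3})$ in \eqref{initialdata} ensures $\|B^1\|_{\beta,s_0-\sigma_1}^{\mathcal L,+}\leq \exp(C/\sigma_1^{a_3})\ve_0\leq \ve_0^{2/3}\ll 1$ via Lemma \ref{Bdeguji} and the definition $\sigma_1=(3C/|\ln\ve_0|)^{1/a_3}$. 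Then Lemma \ref{iterule} delivers $B^1$, $U^1=e^{B^1}$, the new normal form $A^1$, the new perturbation $P^1$, and the admissible set $\Pi_1^\gamma$ with $|\Pi\setminus\Pi_1^\gamma|\leq C\gamma/K_1^{b_1}$.

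\textbf{Inductive step.} Suppose the statement holds at level $l\geq 1$, i.e.\ \eqref{iterationsystem} holds with $A^l,P^l$ satisfying H1--H3 on $\Pi_l^\gamma$ with constants $C_{\lambda,l},C_{\omega,l},C_{\mu,l},\gamma_l$ and with $\|P^l\|_{\beta,s_l}^{\mathcal L}+\|P^l\|_{\lzd,s_l}^{\mathcal L}\leq \ve_l$. We apply Lemma \ref{iterule} with $s\rightsquigarrow s_l$, $\sigma\rightsquigarrow\sigma_{l+1}$, $K\rightsquigarrow K_{l+1}$, $\gamma^-\rightsquigarrow\gamma_l$, $\Pi^-\rightsquigarrow\Pi_l^\gamma$. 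The key computation is
\[
\|P^{l+1}\|_{\beta,s_{l+1}}^{\mathcal L}+\|P^{l+1}\|_{\lzd,s_{l+1}}^{\mathcal L}\leq \exp\Big(\tfrac{2C}{\sigma_{l+1}^{a_3}}\Big)\ve_l^2 = \ve_l^2\cdot\ve_l^{-2/3}=\ve_l^{4/3}=\ve_{l+1},
\]
where we used the definition of $\sigma_{l+1}$, and similarly $\|B^{l+1}\|_{\beta,s_l-\sigma_{l+1}}^{\mathcal L,+}\leq \exp(C/\sigma_{l+1}^{a_3})\ve_l=\ve_l^{2/3}$, which is $\leq \ve_l^{2/3}\ll 1/C(\beta)$ by the fast decay of $\ve_l$. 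Setting $U^{l+1}=U^l e^{B^{l+1}}$ gives the required form. The new admissible set $\Pi_{l+1}^\gamma\subset\Pi_l^\gamma$ obeys $|\Pi_l^\gamma\setminus\Pi_{l+1}^\gamma|\leq C\gamma/K_{l+1}^{b_1}$ with $b_1=\tau-n-2/(\iota-1)>1$, directly from item (4) of Lemma \ref{iterule}.

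\textbf{Persistence of the parameters.} I must verify that the recursive constants remain in the admissible ranges listed in (a)--(f) after the table of initial values. Since $\sum_{l\geq 1}\ve_{l-1}(1+K_l^\tau)$ converges by the super-exponential decay $\ve_l=\ve_{l-1}^{4/3}$ (dominating the polynomial $l^\tau$) provided $\ve_0$ is small, one gets $\gamma_l\geq\gamma_0/2$, $C_{\lambda,l}\geq C_\lambda/2$, $C_{\omega,l},C_{\mu,l}$ remain $\leq\min\{C_\lambda/(16n),1\}$, etc. The analyticity widths $\sigma_l=(3C/|\ln\ve_{l-1}|)^{1/a_3}$ form a geometrically decreasing sequence because $|\ln\ve_{l-1}|=(4/3)^{l-1}|\ln\ve_0|$, so $\sum 2\sigma_l$ converges and $s_l\geq s_0/2$ for all $l$ once $\ve_0$ is small enough. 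All of this is packaged into a single smallness threshold $\ve_*(\gamma,s)$.

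\textbf{Main obstacle.} The delicate point is the tension in the choice of $\sigma_{l+1}$: it must be large enough that the loss of analyticity $\sum\sigma_l$ remains summable, yet small enough that the divergent factor $\exp(2C/\sigma_{l+1}^{a_3})$ in the quadratic estimate is absorbed by the square of $\ve_l$ to produce the new bound $\ve_{l+1}=\ve_l^{4/3}$. The specific choice $\sigma_{l+1}=(3C/|\ln\ve_l|)^{1/a_3}$ is exactly what balances these, and the verification that $\|B^{l+1}\|_{\beta,s_l-\sigma_{l+1}}^{\mathcal L,+}$ stays in the smallness regime required by Lemma \ref{iterule} at every step is what requires the initial condition $\ve_0\leq\exp(-a_6 s_0^{-a_3})$ in \eqref{initialdata}. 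Everything else is a bookkeeping exercise combining the measure estimates over $l$ (which sum to $O(\gamma)$ because $\sum_l l^{-b_1}<\infty$) and the Lipschitz control of eigenvalues and normal forms provided in items (3)--(5) of Lemma \ref{iterule}.
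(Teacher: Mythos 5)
Your proposal follows essentially the same route as the paper's proof: an induction on $l$ applying Lemma \ref{iterule} at each step, with the quantitative heart being $\exp\big(\tfrac{2C}{\sigma_{l+1}^{a_3}}\big)\ve_l^2=\ve_l^{-2/3}\ve_l^2=\ve_{l+1}$ and $\exp\big(\tfrac{C}{\sigma_{l+1}^{a_3}}\big)\ve_l=\ve_l^{2/3}$ coming from the choice $\sigma_{l+1}=(3C/|\ln\ve_l|)^{1/a_3}$, together with the bookkeeping that $\gamma_l$, $C_{\lambda,l}$, $C_{\omega,l}$, $C_{\mu,l}$, $s_l$ stay in the admissible ranges. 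The only (cosmetic) discrepancy is in the base step: the paper first excises the full initial resonant set $V_0(\gamma)$ (both the Diophantine and the second Melnikov conditions), of measure $\le C(n,\iota)\gamma$ rather than $C\gamma/K_1^{b_1}$, to secure H3 before the first application of the lemma, which is what you already allude to with the $O(\gamma)$ removal.
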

\begin{proof}
We proceed by induction applying Lemma \ref{iterule}. First we apply it to the original system to obtain the system \eqref{iterationsystem} for $l=1$. To this end we notice that all assumptions are satisfied except the non-resonance conditions H3 on the frequencies. We define
\begin{align*}
U_1^0(\gamma)&=\bigcup_{k\ne0}\Big\{\omega\in\Pi:=[0,1]^n:|\la k,\omega\ra|<\frac{\gamma}{|k|^\tau}\Big\},\\
U_2^0(\gamma)&=\bigcup_{k\in\Z^n}\bigcup_{\substack{i,j\ge1\\i\ne j}} \Big\{\omega\in\Pi:|\lambda_i-\lambda_j+\la k,\omega\ra|<\frac{\gamma|i^\iota-j^\iota|}{1+|k|^\tau}\Big\}
\end{align*}
and
$V_0(\gamma)=U_1^0(\gamma)\bigcup U_2^0(\gamma)$.
When $\tau>n+\frac2{\iota-1}$, we obtain
$\text{meas}(V_0(\gamma))\le C(n,\iota)\gamma$.
Denote $\Pi_0=\Pi\backslash V_0(\gamma)$, for $\omega\in\Pi_0$, if the initial data are given suitably as (\ref{initialdata}), then we can apply Lemma \ref{iterule} and the starting point of our induction is established.\\
Now we assume that
\begin{equation}\label{l-thsystem}
{\rm i}\dot{x}=(A^l+P^l(\omega t))x
\end{equation}
and all assumptions are satisfied for Lemma \ref{iterule}. Then there exists an anti-self-adjoint operator $B^{l+1}$ which satisfies
\[
\|B^{l+1}\|_{\beta,s_l-\sigma_{l+1}}^{\L,+}\le \expcl\Big(\|P^l\|_{\beta,s_l}^\L+\|P^l\|_{\lzd,s_l}^\L\Big)
\le\e_l^{-\frac13}\e_l=\e_l^\frac23,
\]
where $B^{l+1}$ is analytically depending on $\phi\in\T_{s_l-\sigma_{l+1}}^n$ and Lipschitz continuous in $\omega\in\Pi_l^\gamma$. By the unitary operator $e^{B^{l+1}(\omega t)}$ transforms the system \eqref{l-thsystem} into the system
${\rm i}\dot{x}=(A^{l+1}+P^{l+1}(\omega t))x$,
the new perturbation $P^{l+1}$ fulfills
$$
\|P^{l+1}\|_{\beta,s_{l+1}}^\L+\|P^{l+1}\|_{\lzd,s_{l+1}}^\L\le \exptcl\left(\|P^l\|_{\beta,s_l}^\L+\|P^l\|_{\lzd,s_l}^\L\right)^2\le \ve_{l+1}.
$$
Moreover, there exists a closed set $\Pi_{l+1}^\gamma\subset\Pi_l^\gamma$ and $b_1=\tau-n-\frac{2}{\iota-1}>1$ fulfilling
$|\Pi_l^\gamma\backslash\Pi_{l+1}^\gamma|\le\frac{C\gamma}{K_{l+1}^{b_1}}$.
If $\omega\in\Pi_{l+1}^\gamma$, then assumptions H1-H3 are fulfilled by $ A^{l+1},~P^{l+1}$ provided that the constants are replaced by the new ones defined by
\[
\gamma_{l+1}=\gamma_l-2\ve_l(1+K_{l+1}^\tau),~C_{\lambda,l+1}=C_{\lambda,l}-2\ve_l,~ C_{\omega,l+1}=C_{\omega,l}+2\ve_l,~C_{\mu,l+1}=C_{\mu,l}+2\ve_l.
\]
\end{proof}
In the following we will prove Theorem \ref{Theorem2.10}, but we first need a series of preparation lemmas.
\begin{lemma}\label{Alconverge}
$A^l(\phi)\to A^\infty(\phi)$ in $\M_{\frac\iota2}(\Pi_*,\frac s2)$ and $\lzi(\Pi_*,\frac s2)$, if $\e_0\ll1$, then
\begin{align*}
&\|A^\infty\|_{\lzi,\frac s2}\le\|A^\infty\|_{\frac\iota2,\frac s2},\\
&\|A^\infty-A^0\|_{\lzi,\frac s2}\le\|A^\infty-A^0\|_{\frac\iota2,\frac s2}\le6\e_0.
\end{align*}
\end{lemma}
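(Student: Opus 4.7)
The plan is to exploit two structural facts: at each KAM step we have the exact diagonal identity $A^{l+1}-A^l=\text{diag}(P^l)$ produced by the averaging, and, because every $A^l-A^0$ is diagonal, the matrix norm $|\cdot|_{\iota/2}$ and the operator norm $\|\cdot\|_{\lzi}$ collapse to the same scalar quantity $\sup_i |a_i|/i^\iota$. So the proof reduces to a Cauchy argument with super-exponential summability.

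First, I would estimate one term of the telescoping series. Since $\text{diag}(P^l)$ has only diagonal entries $P_{ii}^l(\phi,\omega)$, and $|P_{ii}^l|\le \|P^l\|_{\beta,s_l}\, i^{2\beta}$ with $2\beta<\iota-1<\iota$, one gets
$$\|\text{diag}(P^l)\|_{\iota/2,\,s/2}^{\L}=\|\text{diag}(P^l)\|_{\lzi,\,s/2}^{\L}\le \|P^l\|_{\beta,s_l}^{\L}\le\ve_l,$$
where in the first equality I used that for a diagonal operator $A_i^j=\delta_{ij}a_i$, both $|A|_{\iota/2}$ and the $\B(\l_0^2,\l_{-2\iota}^2)$-norm equal $\sup_i|a_i|/i^\iota$, and in the final inequality I used the iterative bound from the Proposition. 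To pass from the shrinking domains $\T^n_{s_l}$ to the common domain $\T^n_{s/2}$ I need $s_l\ge s/2$; this follows from the explicit formula $\sigma_l=(3C/|\ln\ve_{l-1}|)^{1/a_3}$ with $\ve_{l-1}=\ve_0^{(4/3)^{l-1}}$, which gives geometric decay of $\sigma_l$ and, for $\ve_0$ small enough (as already required by \eqref{initialdata}), $\sum_{j\ge1}2\sigma_j\le s_0/2$.

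Next I would run the telescoping sum $A^{l+1}-A^0=\sum_{j=0}^{l}\text{diag}(P^j)$ in the Banach spaces $\M_{\iota/2}(\Pi_*,s/2)$ and $\lzi(\Pi_*,s/2)$ (both complete, see Lemma \ref{complete}), yielding a Cauchy sequence whose common tail is bounded by $\sum_{j\ge l}\ve_j$. Since $\ve_l=\ve_0^{(4/3)^l}$, the sum satisfies $\sum_{l\ge0}\ve_l\le\ve_0\bigl(1+\ve_0^{1/3}+\ve_0^{7/9}+\cdots\bigr)$, which for $\ve_0\ll1$ is comfortably $\le 6\ve_0$. This simultaneously provides the limit $A^\infty$ and the quantitative estimate $\|A^\infty-A^0\|_{\iota/2,s/2}^{\L}\le 6\ve_0$. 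The inequality $\|A^\infty-A^0\|_{\lzi,s/2}\le\|A^\infty-A^0\|_{\iota/2,s/2}$ is then automatic because $A^\infty-A^0$ is diagonal (so the two norms coincide pointwise in $\phi$), and the same observation gives $\|A^\infty\|_{\lzi,s/2}\le\|A^\infty\|_{\iota/2,s/2}$.

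There is no substantive obstacle here: everything is diagonal, and the super-exponential rate $\ve_l=\ve_0^{(4/3)^l}$ makes the series trivially summable. The only subtlety worth stating carefully is the identification of $|\cdot|_{\iota/2}$ and $\|\cdot\|_{\lzi}$ on diagonal operators—this is what allows both convergence statements to be proved simultaneously from the single scalar bound on $P_{ii}^l$, without invoking the more delicate part (iii) of Lemma \ref{algebra}, which would lose factors.
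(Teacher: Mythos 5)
Your proposal is correct and follows essentially the same route as the paper: a telescoping argument on the diagonal increments $A^{l+1}-A^l=\text{diag}(P^l)$, the bound $|P_{ii}^l|\le\ve_l i^{2\beta}$ with $2\beta<\iota$, the comparison of the matrix norm $|\cdot|_{\iota/2}$ with the $\cal B(\l_0^2,\l_{-2\iota}^2)$ norm on diagonal operators (the paper's Lemma \ref{normmatope}), and summability of $\ve_l=\ve_0^{(4/3)^l}$ on the common strip $|\Im\phi|<\frac s2$. The only cosmetic difference is that the paper splits the increment into the $\lambda_i^l$ and $\mu_i^l$ parts (getting $2\e_l+4\e_l=6\e_l$ exactly), whereas you treat the whole diagonal at once, which changes nothing of substance.
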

\begin{proof}
From the iteration, for $l\ge0$ one has
\begin{eqnarray}\label{piaoyiguji1}
|\lambda_i^{l+1}-\lambda_i^{l}|\le\|P^l\|_{\beta,s_{l}}i^{2\beta}\le\ve_{l}i^{2\beta}
\end{eqnarray}
and
\begin{eqnarray}\label{piaoyiguji2}
|\mu_i^{l+1}-\mu_i^{l}|\le2\|P^l\|_{\beta,s_{l}}i^{2\beta}\le2\ve_{l}i^{2\beta}.
\end{eqnarray}
Thus, for given $i\in\Z_+$, $\{\lambda_i^l\}$ and $\{\mu_i^l\}$ are Cauchy sequences. Define
$
\lambda_i^\infty:=\lambda_i+\sum_{l=0}\lambda_i^{l+1}-\lambda_i^l $ and
$\mu_i^\infty:=\sum_{l=0}\mu_i^{l+1}-\mu_i^l$.
For $l\ge0$, we have
$
|\lambda_i^\infty-\lambda_i^l|  \le 2\e_li^{2\beta}$ and
$|\mu_i^\infty-\mu_i^l|\le  4\e_li^{2\beta}$.
Note $2\beta<\iota-1<\iota$, then
$
|(\lambda_i^\infty+\mu_i^\infty)-(\lambda_i^l+\mu_i^l)| \le 6\e_li^\iota.
$
It follows
\(
\|A^\infty-A^l\|_{\frac\iota2,\frac s2}\le6\e_l,
\)
together with Lemma \ref{normmatope} and thus,
\[
\|A^\infty-A^l\|_{\lzi,\frac s2}\le\|A^\infty-A^l\|_{\frac\iota2,\frac s2}\le6\e_l\to0\qquad(\text{as }l\to\infty).
\]
It follows that $A^l\to A^\infty$ in $\M_{\frac\iota2}(\Pi_*,\frac s2)$  and $\lzi(\Pi_*,\frac s2)$.
Set $l=0$, we have
\[
\|A^\infty-A^0\|_{\lzi,\frac s2}\le\|A^\infty-A^0\|_{\frac\iota2,\frac s2}\le6\e_0.
\]
Then for any $(\phom)\in\T_\frac s2^n\times\Pi_*$, one obtains $A^\infty(\phom)\in\M_{\frac\iota2}\bigcap\lzi$ and
$
\|A^\infty\|_{\lzi,\frac s2}\le\|A^\infty\|_{\frac\iota2,\frac s2}.
$
\end{proof}
\begin{remark}
From the iteration as \cite{BG} one can also have
$
|\lambda_i^{l+1}-\lambda_i^{l}|\le\ve_{l}i^{\delta}$ and
$
|\mu_i^{l+1}-\mu_i^{l}| \le2\ve_{l}i^{\delta}$. But note the assumption $\delta\geq 2\beta$, these estimates are weaker than
(\ref{piaoyiguji1}) and (\ref{piaoyiguji2}).
\end{remark}

Recall $\|P^l\|_{\beta,s_l}^{\Pi_l,\L}+\|P^l\|_{\lzd,s_l}^{\Pi_l,\L}\le\ve_l$,
and $s_l\ge\frac s2$ and $\Pi_*=\bigcap_{l\ge0}\Pi_l$, it follows
\begin{lemma}\label{Plconverge}
$P^l(\phi)\to0$ in $\M_\beta(\Pi_*,\frac s2)$ and $\lzd(\Pi_*,\frac s2)$.
\end{lemma}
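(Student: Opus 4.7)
The assertion is essentially a book-keeping consequence of the iterative estimate
\[
\|P^l\|_{\beta,s_l}^{\Pi_l,\L}+\|P^l\|_{\lzd,s_l}^{\Pi_l,\L}\le\e_l
\]
obtained at the end of the previous proposition, combined with the facts $\Pi_*\subset\Pi_l$ and (to be verified) $s_l\ge s/2$ for every $l\ge 0$. My plan is to first confirm these two inclusions, then invoke monotonicity of the $\|\cdot\|_{\beta,\sigma}^{D,\L}$ and $\|\cdot\|_{\lzd,\sigma}^{D,\L}$ norms in both the domain $\sigma$ and the parameter set $D$, and finally observe $\e_l\to 0$.

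\textbf{Step 1: Summability of $\sigma_l$ and $s_l\ge s/2$.} From $\e_l=\e_{l-1}^{4/3}$ one has $|\ln\e_l|=(4/3)^l|\ln\e_0|$, so
\[
\sigma_{l+1}=\Bigl(\frac{3C}{(4/3)^l|\ln\e_0|}\Bigr)^{1/a_3}
\]
decays geometrically, and $\sum_{l\ge 1}2\sigma_l\le C'|\ln\e_0|^{-1/a_3}$. Because the initial data \eqref{initialdata} demands $\e_0\le\exp(-a_6 s_0^{-a_3})$ with $a_6$ sufficiently large, we can make $\sum_{l\ge1}2\sigma_l\le s_0/2$, giving $s_l\ge s_0/2=s/2$ for every $l$. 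This is the only quantitative check required; it is routine but worth recording explicitly.

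\textbf{Step 2: Monotonicity and passage to the limit.} Since $\Pi_*=\bigcap_{l\ge0}\Pi_l\subset\Pi_l$ and $\T_{s/2}^n\subset\T_{s_l}^n$, directly from the definitions of the norms $\|\cdot\|_{\beta,\sigma}^{D,\L}$ and $\|\cdot\|_{\lzd,\sigma}^{D,\L}$ (both monotone nondecreasing in $\sigma$ and in $D$), we get
\[
\|P^l\|_{\beta,s/2}^{\Pi_*,\L}\le\|P^l\|_{\beta,s_l}^{\Pi_l,\L}\le\e_l,\qquad
\|P^l\|_{\lzd,s/2}^{\Pi_*,\L}\le\|P^l\|_{\lzd,s_l}^{\Pi_l,\L}\le\e_l.
\]
Since $\e_l=\e_0^{(4/3)^l}\to 0$ superexponentially as $l\to\infty$, both right-hand sides tend to zero. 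Therefore $P^l\to 0$ in $\M_\beta(\Pi_*,s/2)$ and in $\lzd(\Pi_*,s/2)$.

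\textbf{Main obstacle.} There is essentially no conceptual obstacle — the hard analytic work has already been done in Lemma \ref{iterule} (the quadratic smallness $\e_l^{4/3}$) and in the iteration proposition (the bookkeeping of $s_l$, $\gamma_l$, $\Pi_l$). The only point where one has to be a little careful is Step 1: one must confirm that the loss $2\sigma_l$ at each stage, whose size is governed by the super-exponential decay of $\e_l$, is indeed summable to a quantity bounded by $s/2$; this is precisely what the constraint $\e_0\le\exp(-a_6 s_0^{-a_3})$ guarantees.
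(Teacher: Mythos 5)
Your proposal is correct and follows essentially the same route as the paper, which deduces the lemma directly from the iterative bound $\|P^l\|_{\beta,s_l}^{\L}+\|P^l\|_{\lzd,s_l}^{\L}\le\e_l$ together with $s_l\ge\frac s2$, $\Pi_*=\bigcap_{l\ge0}\Pi_l$, and $\e_l\to0$. Your Step 1 merely makes explicit the summability of the $\sigma_l$ that the paper builds into its choice of initial data, so nothing further is needed.
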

\par

In the sequel, let $B^0(\phi)={\bf0}$ and $\ve_{-1}=0$.
From Lemma \ref{algebrapara} and the induction, we have
\begin{lemma}\label{UlId}
For $l\ge0$,~$U^l(\phi)=e^{B^0(\phi)}e^{B^1(\phi)}\cdots e^{B^l(\phi)}$, if $\e_0\ll1$, then
\[
\|U^l(\phi)-\id\|_{\beta,\frac s2}^{\Pi_*,\L,+}\le3(\ve_0^{\frac23}+\cdots+\ve_{l-1}^{\frac23}).
\]
\end{lemma}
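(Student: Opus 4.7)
\textbf{Proof plan for Lemma \ref{UlId}.} I would argue by induction on $l$. The base case $l=0$ is immediate: since $B^0=\mathbf 0$ we have $U^0=\id$, and the right-hand side is the empty sum, so the bound is trivially $0\le 0$. For $l=1$, $U^1-\id=e^{B^1}-\id$, and Lemma~\ref{algebrapara}(ii) combined with the iteration estimate $\|B^1\|_{\beta,s_0-\sigma_1}^{\Pi_*,\L,+}\le \e_0^{2/3}$ from the Proposition yields $\|U^1-\id\|_{\beta,\frac s2}^{\Pi_*,\L,+}\le e^{C\e_0^{2/3}}\e_0^{2/3}\le 3\e_0^{2/3}$ provided $\e_0$ is small.

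For the inductive step, write $U^l=U^{l-1}e^{B^l}$ and use the identity
\[
U^l-\id \;=\; (U^{l-1}-\id)(e^{B^l}-\id) \;+\; (U^{l-1}-\id) \;+\; (e^{B^l}-\id).
\]
The inductive hypothesis places $U^{l-1}-\id\in\M_\beta^+$, and Lemma~\ref{algebrapara}(ii) places $e^{B^l}-\id\in\M_\beta^+$ with
\[
\|e^{B^l}-\id\|_{\beta,\frac s2}^{\Pi_*,\L,+}\le e^{C\|B^l\|_{\beta,s_{l-1}-\sigma_l}^{\Pi_*,+}}\|B^l\|_{\beta,s_{l-1}-\sigma_l}^{\Pi_*,\L,+}\le e^{C\e_{l-1}^{2/3}}\e_{l-1}^{2/3}.
\]
The product in $\M_\beta^+$ is controlled by Lemma~\ref{algebrapara}(i), which gives the recursion
\[
\|U^l-\id\|^{\L,+}_{\beta,\frac s2}\le \bigl(1+C\|e^{B^l}-\id\|_{\beta,\frac s2}^{\L,+}\bigr)\|U^{l-1}-\id\|^{\L,+}_{\beta,\frac s2}+\|e^{B^l}-\id\|_{\beta,\frac s2}^{\L,+}.
\]

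Unrolling this recursion yields
\[
\|U^l-\id\|_{\beta,\frac s2}^{\Pi_*,\L,+}\le\sum_{k=1}^{l}\|e^{B^k}-\id\|_{\beta,\frac s2}^{\L,+}\prod_{j=k+1}^{l}\bigl(1+C\|e^{B^j}-\id\|_{\beta,\frac s2}^{\L,+}\bigr).
\]
The main technical point is controlling the accumulating multiplicative factor. Because $\e_l=\e_{l-1}^{4/3}$, the sum $\sum_{j\ge 0}\e_j^{2/3}$ converges as a super-geometric series, and taking $\e_0$ sufficiently small (absorbed into the smallness condition \eqref{initialdata}) guarantees
\[
\prod_{j\ge 1}\bigl(1+Ce^{C\e_{j-1}^{2/3}}\e_{j-1}^{2/3}\bigr)\le \tfrac32.
\]
Combining this with $\|e^{B^k}-\id\|^{\L,+}\le e^{C\e_{k-1}^{2/3}}\e_{k-1}^{2/3}\le 2\e_{k-1}^{2/3}$ gives the claimed bound $3\sum_{k=0}^{l-1}\e_k^{2/3}$. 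The only delicate step is verifying that the factor $e^{C\e_{k-1}^{2/3}}$ from Lemma~\ref{algebrapara}(ii), compounded along the iteration, can indeed be absorbed in the constant $3$; this is where one uses the freedom to shrink $\e_0$.
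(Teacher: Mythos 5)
Your argument is correct and is essentially the proof the paper has in mind: the paper states Lemma \ref{UlId} as a direct consequence of Lemma \ref{algebrapara} and induction, which is exactly what you carry out, using $U^l=U^{l-1}e^{B^l}$, the bound $\|B^l\|_{\beta,s_{l-1}-\sigma_l}^{\L,+}\le\e_{l-1}^{2/3}$, and the super-exponential decay $\e_l=\e_{l-1}^{4/3}$ to absorb the accumulated constants into the factor $3$ for $\e_0$ small. Your unrolled recursion (together with $s_l\ge\frac s2$, so the $\frac s2$-norms are controlled) fills in the details the paper omits, but it is the same route, not a different one.
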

Similarly,
\begin{lemma}\label{expBlId}
For $0\le l_1<l_2$, if $\e_0\ll1$, $\|e^{B^{l_1+1}}\cdots e^{B^{l_2}}-\id\|_{\beta,\frac s2}^{\Pi_*,\L,+}\le3(\ve_{l_1}^\frac23+\cdots+\ve_{l_2-1}^\frac23)$.
\end{lemma}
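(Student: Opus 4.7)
The plan is to mimic the proof of the preceding Lemma \ref{UlId} almost verbatim, since Lemma \ref{expBlId} is just its shifted version. The key quantitative inputs are: (i) the iteration bound $\|B^k\|_{\beta,s_{k-1}-\sigma_k}^{\Pi_*,\L,+}\le\ve_{k-1}^{2/3}$ from the Proposition above, which in particular gives $\|B^k\|_{\beta,s/2}^{\Pi_*,\L,+}\le\ve_{k-1}^{2/3}$; (ii) Lemma \ref{algebrapara}(ii), which bounds $\|e^{B^k}-\id\|_{\beta,s/2}^{\Pi_*,\L,+}$ by $e^{C\|B^k\|^+}\|B^k\|^{\L,+}$, and hence by roughly $\ve_{k-1}^{2/3}$ (the exponential factor is close to $1$ since $\ve_0\ll1$); (iii) Lemma \ref{algebrapara}(i), which gives the algebra property $\|AB\|^{\L,+}_{\beta,s/2}\le C\|A\|^{\L,+}\|B\|^{\L,+}$ on $\M_\beta^+$; and (iv) the superexponential decay $\ve_l=\ve_{l-1}^{4/3}$, which makes the tail $\sum_{k\ge l_1}\ve_k^{2/3}$ behave like a geometric series with ratio $\ve_{l_1}^{2/9}\ll1$.

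I would proceed by induction on $l_2-l_1$. For the base case $l_2=l_1+1$, Lemma \ref{algebrapara}(ii) together with $\|B^{l_1+1}\|_{\beta,s/2}^{\Pi_*,\L,+}\le\ve_{l_1}^{2/3}$ and $\ve_0\ll1$ yields
\[
\|e^{B^{l_1+1}}-\id\|_{\beta,s/2}^{\Pi_*,\L,+}\le e^{C\ve_{l_1}^{2/3}}\,\ve_{l_1}^{2/3}\le 3\ve_{l_1}^{2/3}.
\]
For the inductive step, use the identity
\[
\prod_{k=l_1+1}^{l_2}e^{B^k}-\id=(e^{B^{l_1+1}}-\id)\prod_{k=l_1+2}^{l_2}e^{B^k}+\Bigl(\prod_{k=l_1+2}^{l_2}e^{B^k}-\id\Bigr).
\]
Apply the inductive hypothesis to the second summand, which gives $3(\ve_{l_1+1}^{2/3}+\cdots+\ve_{l_2-1}^{2/3})$; for the first summand, apply Lemma \ref{algebrapara}(i) and the inductive hypothesis to get
\[
\Bigl\|\prod_{k=l_1+2}^{l_2}e^{B^k}\Bigr\|_{\beta,s/2}^{\Pi_*,\L,+}\le 1+3\sum_{k\ge l_1+1}\ve_k^{2/3}\le 2,
\]
so the first summand is bounded (up to an absolute constant) by $\ve_{l_1}^{2/3}$. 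Adding the two contributions yields the desired bound.

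The only real subtlety is to get the constant exactly $3$ rather than some larger absolute constant. This is where I would exploit (iv) carefully: since $\ve_k^{2/3}/\ve_{k-1}^{2/3}=\ve_{k-1}^{2/9}\le\ve_0^{2/9}$ and $\ve_0$ can be taken arbitrarily small, the extra constant produced by the exponential factor in Lemma \ref{algebrapara}(ii) and by the algebra constant $C$ in Lemma \ref{algebrapara}(i) is absorbed into the gap between $3\ve_{l_1}^{2/3}$ and the true bound, so the constant $3$ propagates through the induction exactly as in Lemma \ref{UlId}.

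I do not expect any genuine obstacle here: this is a routine repackaging of the argument for Lemma \ref{UlId}, the only bookkeeping being that the sum starts from $l_1$ instead of $0$. The statement is needed later (e.g.\ for establishing convergence of the tails $U^{l_2}(U^{l_1})^{-1}$ in $\M_\beta^+(\Pi_*,s/2)$, and for the arguments surrounding Lemmas \ref{UA1}, \ref{UA2}, \ref{UlAlconverge}), so it is important that the estimate be uniform in the endpoints and of the same form as Lemma \ref{UlId}.
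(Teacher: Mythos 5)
Your strategy (induction on the number of factors, the bound $\|B^{k}\|_{\beta,s_{k-1}-\sigma_k}^{\Pi_*,\L,+}\le\ve_{k-1}^{2/3}$, Lemma \ref{algebrapara}, and the superexponential decay of $\ve_l$ to keep the constant $3$) is exactly the argument the paper sketches (it only says ``Similarly'' to Lemma \ref{UlId}). However, one step fails as written: the estimate $\bigl\|\prod_{k=l_1+2}^{l_2}e^{B^k}\bigr\|_{\beta,\frac s2}^{\Pi_*,\L,+}\le 1+3\sum_{k\ge l_1+1}\ve_k^{2/3}\le 2$. The identity is \emph{not} in $\M_\beta^+$: by definition $|\id|_\beta^+=\sup_i\,i^{-2\beta}\cdot 1\cdot 2i^{\iota-1}=2\sup_i i^{\iota-1-2\beta}=\infty$ because $2\beta<\iota-1$, so the full product $\prod e^{B^k}=\id+\bigl(\prod e^{B^k}-\id\bigr)$ has infinite $|\cdot|_\beta^+$ norm, and Lemma \ref{algebrapara}(i) cannot be invoked in the form $\|(e^{B^{l_1+1}}-\id)\prod e^{B^k}\|^{\L,+}\le C\|e^{B^{l_1+1}}-\id\|^{\L,+}\|\prod e^{B^k}\|^{\L,+}$. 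This is precisely why the paper never takes a $+$-norm of $\id$ or of a full product of exponentials: in Lemma \ref{UlCauchy} it always splits off the identity before estimating.

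The repair is one line and preserves your induction and the constant $3$: write $(e^{B^{l_1+1}}-\id)\prod_{k=l_1+2}^{l_2}e^{B^k}=(e^{B^{l_1+1}}-\id)\bigl(\prod_{k=l_1+2}^{l_2}e^{B^k}-\id\bigr)+(e^{B^{l_1+1}}-\id)$. The first term is a product of two elements of $\M_\beta^+$, so Lemma \ref{algebrapara}(i) and the inductive hypothesis bound it by $C\cdot 2\ve_{l_1}^{2/3}\cdot 3(\ve_{l_1+1}^{2/3}+\cdots+\ve_{l_2-1}^{2/3})$, which is $o(\ve_{l_1}^{2/3})$ by $\ve_{k}=\ve_{k-1}^{4/3}$ and $\ve_0\ll1$; the second term is at most $(1+o(1))\ve_{l_1}^{2/3}$ by Lemma \ref{algebrapara}(ii), using $\|B^{l_1+1}\|_{\beta,\frac s2}^{\Pi_*,\L,+}\le\ve_{l_1}^{2/3}$ and $s_l\ge\frac s2$. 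Adding the inductive contribution $3(\ve_{l_1+1}^{2/3}+\cdots+\ve_{l_2-1}^{2/3})$ gives the stated bound. (Equivalently one may peel factors off on the right, which is how the recursive structure $U^{l+1}=U^le^{B^{l+1}}$ is handled for Lemma \ref{UlId}.) With this correction your proof coincides with the paper's.
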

\begin{lemma}\label{UlCauchy}
For $0\le l_1<l_2$, if $\e_0\ll1$, then $\|U^{l_1}(\phi)-U^{l_2}(\phi)\|_{\beta,\frac s2}^{\Pi_*,\L,+}\le8\ve_{l_1}^\frac23$.
\end{lemma}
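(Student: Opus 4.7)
The plan is to write $U^{l_2}(\phi)-U^{l_1}(\phi)$ as a product whose factors are already controlled by the two preceding lemmas, and then exploit the super-geometric decay of the sequence $\{\ve_l\}$ (which comes from the iteration rule $\ve_l=\ve_{l-1}^{4/3}$) to absorb the tail sums into a single leading term.

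First, writing $E_{l_1,l_2}:=e^{B^{l_1+1}}\cdots e^{B^{l_2}}$ and using $U^{l_2}=U^{l_1}E_{l_1,l_2}$, I would split
\[
U^{l_2}-U^{l_1}=U^{l_1}(E_{l_1,l_2}-\id)=(E_{l_1,l_2}-\id)+(U^{l_1}-\id)(E_{l_1,l_2}-\id).
\]
Each summand lies in $\M_\beta^+(\Pi_*,\tfrac s2)$, so (i) of Lemma \ref{algebrapara} gives
\[
\|U^{l_2}-U^{l_1}\|_{\beta,\frac s2}^{\Pi_*,\L,+}\le\|E_{l_1,l_2}-\id\|_{\beta,\frac s2}^{\Pi_*,\L,+}\bigl(1+C\|U^{l_1}-\id\|_{\beta,\frac s2}^{\Pi_*,\L,+}\bigr).
\]

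Next I would plug in Lemma \ref{UlId} and Lemma \ref{expBlId}:
\[
\|U^{l_1}-\id\|_{\beta,\frac s2}^{\Pi_*,\L,+}\le 3\sum_{j=0}^{l_1-1}\ve_j^{\frac23},\qquad \|E_{l_1,l_2}-\id\|_{\beta,\frac s2}^{\Pi_*,\L,+}\le 3\sum_{j=l_1}^{l_2-1}\ve_j^{\frac23}.
\]
Since $\ve_j=\ve_{j-1}^{4/3}$, one has $\ve_j^{2/3}=\ve_{l_1}^{(4/3)^{j-l_1}\cdot 2/3}$ for $j\ge l_1$, and with $\ve_0\ll 1$ the tail sum is bounded by a geometric series with very small ratio, so $\sum_{j\ge l_1}\ve_j^{2/3}\le 2\ve_{l_1}^{2/3}$. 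In particular (taking $l_1=0$) the first factor $1+C\|U^{l_1}-\id\|_{\beta,\frac s2}^{\Pi_*,\L,+}$ is bounded by $1+6C\ve_0^{2/3}\le \tfrac 43$ for $\ve_0$ small enough.

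Combining these bounds,
\[
\|U^{l_2}-U^{l_1}\|_{\beta,\frac s2}^{\Pi_*,\L,+}\le 6\ve_{l_1}^{\frac23}\cdot\tfrac 43\le 8\ve_{l_1}^{\frac23},
\]
which is the claimed estimate. No real obstacle is expected; the only point requiring a tiny bit of care is verifying that the quadratic iteration $\ve_l=\ve_{l-1}^{4/3}$ really produces a tail sum of order $\ve_{l_1}^{2/3}$ and not merely bounded by a constant — but this is immediate for $\ve_0$ small enough, which is already assumed throughout the iterative scheme.
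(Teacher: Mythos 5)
Your proof is correct and follows essentially the same route as the paper: the identical decomposition $U^{l_2}-U^{l_1}=(E_{l_1,l_2}-\id)+(U^{l_1}-\id)(E_{l_1,l_2}-\id)$, the same appeal to Lemma \ref{algebrapara}, Lemma \ref{UlId} and Lemma \ref{expBlId}, and the same absorption of the tail sum $\sum_{j\ge l_1}\ve_j^{2/3}$ via the super-exponential decay $\ve_j=\ve_{j-1}^{4/3}$. The only difference is cosmetic bookkeeping of the constants.
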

\begin{proof}
Note $U^{l_1}(\phi)=e^{B^0(\phi)}e^{B^1(\phi)}\cdots e^{B^{l_1}(\phi)},~U^{l_2}(\phi)=U^{l_1}(\phi)e^{B^{l_1+1}(\phi)}\cdots e^{B^{l_2}(\phi)}$, it  follows
\begin{align*}
U^{l_1}(\phi)-U^{l_2}(\phi)&=U^{l_1}(\phi)(\id-e^{B^{l_1+1}(\phi)}\cdots e^{B^{l_2}(\phi)})\\
&=(U^{l_1}(\phi)-\id)(\id-e^{B^{l_1+1}(\phi)}\cdots e^{B^{l_2}(\phi)})+(\id-e^{B^{l_1+1}(\phi)}\cdots e^{B^{l_2}(\phi)}).
\end{align*}
Then by Lemma \ref{algebrapara}, Lemma \ref{UlId} and Lemma \ref{expBlId}, we have
\begin{align*}
&\|U^{l_1}(\phi)-U^{l_2}(\phi)\|_{\beta,\frac s2}^{\L,+}\\
\le&~C(\beta)\|U^{l_1}(\phi)-\id\|_{\beta,\frac s2}^{\L,+}\|\id-e^{B^{l_1+1}(\phi)}\cdots e^{B^{l_2}(\phi)}\|_{\beta,\frac s2}^{\L,+}+\|\id-e^{B^{l_1+1}(\phi)}\cdots e^{B^{l_2}(\phi)}\|_{\beta,\frac s2}^{\L,+}\\
\le&~4(\ve_{l_1}^\frac23+\cdots+\ve_{l_2-1}^\frac23)\le8\ve_{l_1}^\frac23.
\end{align*}
\end{proof}
\indent For $l_2>l_1\ge0$, if $\e_0\ll1$, by Lemma \ref{UlCauchy} one has
\begin{eqnarray}\label{dingyiUinfty}
\|U^{l_1}-U^{l_2}\|_{\beta,\frac s2}^{\Pi_*}\le
\|U^{l_1}-U^{l_2}\|_{\beta,\frac s2}^{\Pi_*,+}\le 8\e_{l_1}^\frac23\to0\quad(\text{as }l_1\to\infty).
\end{eqnarray}
It follows that $\{U^l\}$ is a Cauchy sequence in $\M_\beta(\Pi_*,\frac s2)$. Define $U^\infty=\lim\limits_{l\to \infty}U^l$, then by
(\ref{dingyiUinfty}), we have
\begin{eqnarray}\label{Ulminus}
\|U^\infty-U^l\|_{\beta,\frac s2}^{\Pi_*,+}\le8\e_l^\frac23.
\end{eqnarray}
The following lemma is clear by (iv), (v) of Lemma \ref{algebra} and  $\delta\in[0,\iota-\beta-\frac12)$.
\begin{lemma}\label{Ulconverge}
$U^l(\phi) \to U^\infty(\phi)$ in $\M_\beta(\Pi_*,\frac s2)$, $\lzt(\Pi_*,\frac s2)$ and  $\ldt(\Pi_*,\frac s2)$, if $\e_0\ll1$, then
\begin{align*}
&\|U^\infty-\id\|_{\lzt,\frac s2}^{\Pi_*},~\|U^\infty-\id\|_{\ldt,\frac s2}^{\Pi_*}
\le C\|U^\infty-\id\|_{\beta,\frac s2}^{\Pi_*,+}\le C\e_0^\frac23,\\
&\|U^\infty\|_{\lzt,\frac s2}^{\Pi_*},~\|U^\infty\|_{\ldt,\frac s 2}^{\Pi_*}\le 1+C\e_0^\frac23,
\end{align*}
where $U^\infty(\phi)$ is real analytic on $\T_\frac s2^n$ and Lipschitz continuous on $\omega\in\Pi_*$.
\end{lemma}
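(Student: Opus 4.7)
The plan is to deduce everything from the embeddings in parts (iv) and (v) of Lemma \ref{algebra}, applied with $t=2\delta$. Since the hypotheses give $0\le\delta<\iota-\beta-\frac12$, the parameter $t=2\delta$ satisfies $t\in[0,2\iota-2\beta-1)$, so every element of $\M_\beta^+$ defines a bounded operator on both $\ell_0^2$ and $\ell_{-2\delta}^2$ with
\[
\|\cdot\|_{\lzt},~\|\cdot\|_{\ldt}\le C(\beta,\iota,\delta)\,|\cdot|_\beta^+.
\]
With this embedding in hand the three assertions all reduce to estimates on the $\M_\beta^+$-norm that are already available.

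First I would establish convergence: by Lemma \ref{UlCauchy} (or equivalently the estimate \eqref{Ulminus}) the sequence $\{U^l\}$ is Cauchy in $\M_\beta^+(\Pi_*,\frac s2)$, and applying the embeddings above term by term shows it is also Cauchy in $\lzt(\Pi_*,\frac s2)$ and in $\ldt(\Pi_*,\frac s2)$. Since $\lzt$ and $\ldt$ (as parameterized spaces over $\T_{s/2}^n\times\Pi_*$) are complete Banach spaces, the limit $U^\infty$ defined in \eqref{dingyiUinfty} belongs to all three.

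Next I would prove the quantitative estimates. Taking $l_1=0$ in Lemma \ref{UlCauchy}, and noting that by convention $B^0=\mathbf 0$ so $U^0=\id$, one has $\|U^\infty-\id\|_{\beta,s/2}^{\Pi_*,+}\le C\e_0^{2/3}$ after letting $l_2\to\infty$. The embeddings then give
\[
\|U^\infty-\id\|_{\lzt,s/2}^{\Pi_*},~\|U^\infty-\id\|_{\ldt,s/2}^{\Pi_*}\le C\|U^\infty-\id\|_{\beta,s/2}^{\Pi_*,+}\le C\e_0^{2/3},
\]
and the triangle inequality $\|U^\infty\|_{\lzt,s/2}^{\Pi_*}\le 1+\|U^\infty-\id\|_{\lzt,s/2}^{\Pi_*}$ (and similarly for $\ldt$) yields $1+C\e_0^{2/3}$.

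Finally, analyticity and Lipschitz dependence follow from uniform convergence: each $U^l(\phi)=e^{B^1(\phi)}\cdots e^{B^l(\phi)}$ is analytic in $\phi\in\T_{s/2}^n$ in the $|\cdot|_\beta^+$-norm and Lipschitz in $\omega\in\Pi_*$ (as a finite composition of maps with these properties, by Lemma \ref{algebrapara}); the convergence in \eqref{Ulminus} is uniform in both $\phi$ and $\omega$ in the $\|\cdot\|_{\beta,s/2}^{\Pi_*,\L,+}$ norm, so the limit inherits analyticity in $\phi$ and Lipschitz continuity in $\omega$. There is no real obstacle here --- the only non-routine point is verifying that $2\delta$ lies in the admissible range $[0,2\iota-2\beta-1)$ of the Lemma \ref{algebra} embeddings, which is exactly the quantitative content of the standing assumption $0\le\delta<\iota-\beta-\frac12$.
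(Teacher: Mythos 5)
Your proposal is correct and follows exactly the route the paper intends: the paper itself states the lemma as "clear by (iv), (v) of Lemma \ref{algebra} and $\delta\in[0,\iota-\beta-\frac12)$," and your argument simply fills in those details — the embedding of $\M_\beta^+$ into $\lzt$ and $\ldt$ (with $s=0$ and $s=2\delta<2\iota-2\beta-1$), the bound $\|U^\infty-\id\|_{\beta,\frac s2}^{\Pi_*,+}\le C\e_0^{2/3}$ from Lemma \ref{UlCauchy} with $l_1=0$ and $U^0=\id$, the triangle inequality, and analyticity/Lipschitz continuity passed to the limit by uniform convergence. Nothing further is needed.
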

In the following we denote $V^l(\phi)=e^{-B^l(\phi)}e^{-B^{l-1}(\phi)}\cdots e^{-B^0(\phi)}$ for $l\ge0$.
\begin{lemma}\label{VlId}
For $l\ge0$, if $\e_0\ll1$, then $\|V^l(\phi)-\id\|_{\beta,\frac s2}^{\Pi_*,\L,+}\le3(\ve_0^{\frac23}+\cdots+\ve_{l-1}^{\frac23})$.
\end{lemma}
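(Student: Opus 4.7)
The plan is to prove Lemma \ref{VlId} by induction on $l$, mirroring the argument of Lemma \ref{UlId} but with the factors composed on the left rather than on the right. The base case $l=0$ is immediate because $B^0 = \mathbf{0}$ gives $V^0 = \id$, so the right-hand side (an empty sum) and the left-hand side are both zero. For the inductive step I would use the telescoping identity
$$V^{l+1} - \id \;=\; e^{-B^{l+1}} V^l - \id \;=\; (e^{-B^{l+1}} - \id)(V^l - \id) \,+\, (e^{-B^{l+1}} - \id) \,+\, (V^l - \id).$$
All three pieces live in $\M_\beta^+(\Pi_*, \frac{s}{2})$ because $s_l - \sigma_{l+1} \ge \frac{s}{2}$ for every $l$ and $\Pi_* \subset \Pi_l^\gamma$ for every $l$.

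Next I would plug in the iteration bound $\|B^{l+1}\|_{\beta, s_l - \sigma_{l+1}}^{\L, +} \le \ve_l^{\frac23}$ and apply part (ii) of Lemma \ref{algebrapara}. Provided $\ve_0$ is small enough that $\|B^{l+1}\|^+ \ll 1$, this gives
$$\|e^{-B^{l+1}} - \id\|_{\beta, \frac{s}{2}}^{\Pi_*, \L, +} \le 2\, \ve_l^{\frac23}.$$
Combining this with the product estimate in part (i) of Lemma \ref{algebrapara} and the inductive hypothesis $\|V^l - \id\|_{\beta, \frac{s}{2}}^{\Pi_*, \L, +} \le 3(\ve_0^{\frac23} + \cdots + \ve_{l-1}^{\frac23})$ yields
$$\|V^{l+1} - \id\|_{\beta, \frac{s}{2}}^{\Pi_*, \L, +} \le 3\sum_{j=0}^{l-1} \ve_j^{\frac23} \;+\; 2\,\ve_l^{\frac23} \;+\; 6C\, \ve_l^{\frac23} \sum_{j=0}^{l-1} \ve_j^{\frac23}.$$
The cross-term is then absorbed by choosing $\ve_0$ small enough that $6C \sum_{j \ge 0} \ve_j^{\frac23} \le 1$, which is legitimate because the super-exponential recursion $\ve_l = \ve_{l-1}^{\frac43}$ makes $\sum_j \ve_j^{\frac23}$ comparable to $\ve_0^{\frac23}$. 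This upgrades the coefficient of $\ve_l^{\frac23}$ from $2$ to $3$ and closes the induction with the bound $3 \sum_{j=0}^{l} \ve_j^{\frac23}$.

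The main (and essentially only) obstacle is the constant bookkeeping: because the factors $e^{-B^j}$ in $V^l$ are composed in the opposite order to those in $U^l$, I must check that the algebra estimates in parts (i) and (ii) of Lemma \ref{algebrapara} do not care which factor is the ``inner'' one, i.e. that the bound on $AB$ in $\M_\beta^+$ is symmetric under swapping $A$ and $B$. Once this is noted, the proof is a transcription of the one for Lemma \ref{UlId}, and the constant $3$ in the estimate survives thanks to the rapid decay of $\{\ve_l\}$.
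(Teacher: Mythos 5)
Your proposal is correct and follows essentially the same route as the paper: the paper proves Lemma \ref{VlId} exactly as Lemma \ref{UlId} ("similarly"), i.e.\ by induction on $l$, using the bound $\|B^{l+1}\|_{\beta,s_l-\sigma_{l+1}}^{\L,+}\le\ve_l^{\frac23}$ together with the product and exponential estimates of Lemma \ref{algebrapara}, the reversed order of the factors in $V^l$ being immaterial because the $\M_\beta^+$ product estimate is symmetric. Your absorption of the cross term for $\e_0\ll1$, exploiting $\ve_l=\ve_{l-1}^{4/3}$, is precisely the bookkeeping that produces the constant $3$, so the argument is complete.
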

Similarly, we have
\begin{lemma}\label{VlCauchy}
For $0\le l_1<l_2$,  if $\e_0\ll1$, then $\|V^{l_1}(\phi)-V^{l_2}(\phi)\|_{\beta,\frac s2}^{\Pi_*,\L,+}\le8\ve_{l_1}^\frac23$.
\end{lemma}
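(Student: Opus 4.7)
The proof will mirror that of Lemma \ref{UlCauchy}, with the product order reversed. The plan is to factor out a common piece so that the difference $V^{l_1}-V^{l_2}$ is expressed as a product of two small operators in $\M_\beta^+$, and then apply the algebra of Lemma \ref{algebrapara} together with the already-established bounds on $V^{l_1}-\id$ (Lemma \ref{VlId}) and on tail products of exponentials.

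First I would write, using the definition $V^l(\phi)=e^{-B^l(\phi)}\cdots e^{-B^0(\phi)}$, the identity
\begin{equation*}
V^{l_2}(\phi)=e^{-B^{l_2}(\phi)}\cdots e^{-B^{l_1+1}(\phi)}\,V^{l_1}(\phi),
\end{equation*}
so that
\begin{equation*}
V^{l_1}-V^{l_2}=\bigl(\id-e^{-B^{l_2}}\cdots e^{-B^{l_1+1}}\bigr)V^{l_1}
=\bigl(\id-e^{-B^{l_2}}\cdots e^{-B^{l_1+1}}\bigr)(V^{l_1}-\id)+\bigl(\id-e^{-B^{l_2}}\cdots e^{-B^{l_1+1}}\bigr).
\end{equation*}

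Next I would establish the analogue of Lemma \ref{expBlId} with the reversed product,
\begin{equation*}
\bigl\|e^{-B^{l_2}}\cdots e^{-B^{l_1+1}}-\id\bigr\|_{\beta,\frac s2}^{\Pi_*,\L,+}\le 3(\ve_{l_1}^{2/3}+\cdots+\ve_{l_2-1}^{2/3}).
\end{equation*}
This follows by exactly the same inductive argument as in Lemma \ref{expBlId}: since each $B^j$ is anti-self-adjoint, the norm $\|{-B^j}\|_{\beta,s_{j-1}-\sigma_j}^{\L,+}=\|B^j\|_{\beta,s_{j-1}-\sigma_j}^{\L,+}\le\ve_{j-1}^{2/3}$ from item (4) of the Proposition, and the product estimate in (i)--(ii) of Lemma \ref{algebrapara} is symmetric in its arguments.

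Combining these, by (i) of Lemma \ref{algebrapara} and Lemma \ref{VlId},
\begin{align*}
\|V^{l_1}-V^{l_2}\|_{\beta,\frac s2}^{\Pi_*,\L,+}
&\le C(\beta)\bigl\|\id-e^{-B^{l_2}}\cdots e^{-B^{l_1+1}}\bigr\|_{\beta,\frac s2}^{\Pi_*,\L,+}\|V^{l_1}-\id\|_{\beta,\frac s2}^{\Pi_*,\L,+}\\
&\quad+\bigl\|\id-e^{-B^{l_2}}\cdots e^{-B^{l_1+1}}\bigr\|_{\beta,\frac s2}^{\Pi_*,\L,+}\\
&\le 4(\ve_{l_1}^{2/3}+\cdots+\ve_{l_2-1}^{2/3}).
\end{align*}
Finally I would sum the geometric tail. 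Recall $\ve_l=\ve_{l-1}^{4/3}$, so $\ve_{l_1+k}^{2/3}\le\ve_{l_1}^{2/3}\cdot 2^{-k}$ for $\ve_0$ small enough, hence $\ve_{l_1}^{2/3}+\ve_{l_1+1}^{2/3}+\cdots+\ve_{l_2-1}^{2/3}\le 2\ve_{l_1}^{2/3}$, which yields the claimed bound $8\ve_{l_1}^{2/3}$.

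There is no real obstacle here beyond bookkeeping: the only mild point of care is the reversed multiplication order, which necessitates stating (and proving, by the same induction as Lemma \ref{expBlId}) the tail estimate with $e^{-B^j}$'s composed in decreasing index order. Once that symmetric version is in hand, everything else is a direct transcription of the argument used for Lemma \ref{UlCauchy}.
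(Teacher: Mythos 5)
Your proposal is correct and is essentially the argument the paper intends: it states Lemma \ref{VlCauchy} with only the word ``Similarly,'' meaning the proof of Lemma \ref{UlCauchy} transcribed with the factor $\id-e^{-B^{l_2}}\cdots e^{-B^{l_1+1}}$ pulled out on the left and the reversed-order analogue of Lemma \ref{expBlId}, exactly as you do. Your handling of the tail sum $\ve_{l_1}^{2/3}+\cdots+\ve_{l_2-1}^{2/3}\le 2\ve_{l_1}^{2/3}$ and the symmetry of the product estimates in Lemma \ref{algebrapara} is the same bookkeeping the paper relies on.
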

It follows that $\{V^l\}$ is a Cauchy sequence in $\M_\beta(\Pi_*,\frac s2)$. Define $V^\infty=\lim\limits_{l\to \infty}V^l$, then we have
\begin{lemma}\label{Vlconverge}
$V^l(\phi)\to V^\infty(\phi)$ in $\mbhs$,  $\lzt(\Pi_*,\frac s2)$ and  $\ldt(\Pi_*,\frac s2)$, if $\e_0\ll1$, then
\begin{align*}
&\|V^\infty-\id\|_{\lzt,\frac s2}^{\Pi_*},~\|V^\infty-\id\|_{\ldt,\frac s2}^{\Pi_*}
\le C\|V^\infty-\id\|_{\beta,\frac s2}^{\Pi_*,+}\le C\e_0^\frac23,\\
&\|V^\infty\|_{\lzt,\frac s2}^{\Pi_*},~\|V^\infty\|_{\ldt,\frac s 2}^{\Pi_*}\le 1+C\e_0^\frac23,
\end{align*}
where $V^\infty(\phi)$ is real analytic on $\T_\frac s2^n$ and Lipschitz continuous on $\omega\in\Pi_*$.
\end{lemma}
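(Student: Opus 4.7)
The plan is to mirror the argument for Lemma \ref{Ulconverge}, relying on the Cauchy estimate of Lemma \ref{VlCauchy} together with the completeness of $\mbphs$. By Lemma \ref{complete}, the space $(\M_\beta^+, |\cdot|_\beta^+)$ is Banach, and hence the parameterized version $\mbphs$ is complete in the norm $\|\cdot\|_{\beta,\frac{s}{2}}^{\Pi_*,\L,+}$. Since $\{V^l\}$ is Cauchy there by Lemma \ref{VlCauchy}, I would first define $V^\infty:=\lim_{l\to\infty}V^l$ in $\mbphs$; real analyticity in $\phi\in\T_{s/2}^n$ and Lipschitz continuity in $\omega\in\Pi_*$ are then inherited from the uniform convergence, since the norm $\|\cdot\|_{\beta,s/2}^{\Pi_*,\L,+}$ controls both the sup over $|\Im\phi|<s/2$ and the Lipschitz quotient in $\omega$.

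Next I would promote this convergence to $\lzt(\Pi_*,\frac{s}{2})$ and $\ldt(\Pi_*,\frac{s}{2})$ by applying parts (iv) and (v) of Lemma \ref{algebra}. The hypothesis $0\le\delta<\iota-\beta-\frac12$ from B3 places both $0$ and $2\delta$ in the admissible range $[0,2\iota-2\beta-1)$, so the embedding $\M_\beta^+\hookrightarrow\B(\l_0^2)\cap\B(\l_{-2\delta}^2)$ yields
\[
\|V^{l_1}-V^{l_2}\|_{\lzt,\frac{s}{2}}^{\Pi_*},\; \|V^{l_1}-V^{l_2}\|_{\ldt,\frac{s}{2}}^{\Pi_*}\le C\|V^{l_1}-V^{l_2}\|_{\beta,\frac{s}{2}}^{\Pi_*,+}\le C\e_{l_1}^{2/3},
\]
which forces $V^l\to V^\infty$ in the stated operator-space topologies; uniqueness of limits identifies these limits with $V^\infty$.

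For the quantitative estimates, I would exploit the convention $B^0={\bf 0}$, which gives $V^0=\id$. Then Lemma \ref{VlId} with $l\to\infty$, together with the super-geometric tail bound $\sum_{l\ge0}\e_l^{2/3}\le C\e_0^{2/3}$ (valid since $\e_l=\e_{l-1}^{4/3}$ and $\e_0\ll1$), produces
\[
\|V^\infty-\id\|_{\beta,\frac{s}{2}}^{\Pi_*,+}\le C\e_0^{2/3}.
\]
A final application of Lemma \ref{algebra} (iv)--(v) converts this into the stated bounds on $\|V^\infty-\id\|_{\lzt,s/2}^{\Pi_*}$ and $\|V^\infty-\id\|_{\ldt,s/2}^{\Pi_*}$, and the triangle inequality against $\|\id\|_{\lzt}=\|\id\|_{\ldt}=1$ delivers the bounds on $\|V^\infty\|$ in both operator norms.

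There is no genuine obstacle here: the proof is strictly parallel to that of Lemma \ref{Ulconverge}, the only real asymmetry between the two being that $V^l$ multiplies the $e^{-B^j}$ factors in reverse order, which is irrelevant to the $|\cdot|_\beta^+$ estimates used. The one point demanding slight care is the passage of analyticity on $\T_{s/2}^n$ and of Lipschitz continuity on $\Pi_*$ to the limit, but this is standard for Banach-valued function spaces once completeness of $\mbphs$ is in hand, so no new ingredient beyond what was already used for $U^\infty$ is needed.
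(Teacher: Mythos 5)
Your proposal is correct and follows essentially the same route as the paper: the paper also defines $V^\infty$ as the limit of the Cauchy sequence $\{V^l\}$ furnished by Lemma \ref{VlCauchy} (with $V^0=\id$ and the tail bound from Lemma \ref{VlId}), and then invokes (iv), (v) of Lemma \ref{algebra} together with $\delta\in[0,\iota-\beta-\frac12)$ to pass to the $\lzt$ and $\ldt$ norms, exactly as in the parallel Lemma \ref{Ulconverge}. No gap.
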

The following is clear by (\ref{Ulminus}) and Cauchy's estimate.
\begin{lemma}\label{dtUlj}
For $l\ge0$, if $\e_0\ll1$, then
$\big\|\frac{\partial U^l(\phi)}{\partial\phi_j}-\frac{\partial U^\infty(\phi)}{\partial\phi_j}\big\|_{\beta,\frac{3s}8}^{\Pi_*,+}\le Cs^{-1}\ve_l^\frac23,\quad\forall j\in\{1,2,\cdots,n\}$.
\end{lemma}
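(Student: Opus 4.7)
The plan is to derive this bound directly from the convergence estimate \eqref{Ulminus} combined with a standard Cauchy estimate in the $\phi$-variable on the analyticity strip; no further KAM-type work is needed, since the statement is purely a complex-analytic shrinking-of-domain result once the $\phi$-dependence of $U^l$ and $U^\infty$ is in hand.

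First, I would unpack what \eqref{Ulminus} already gives: for every $l\ge 0$, $\|U^\infty-U^l\|_{\beta,\frac s2}^{\Pi_*,+}\le 8\e_l^{2/3}$. By the very definition of $\|\cdot\|_{\beta,\frac s2}^{\Pi_*,+}$, this translates into a uniform pointwise bound on each matrix entry of $U^\infty(\phi,\omega)-U^l(\phi,\omega)$, weighted by $(ij)^{-\beta}(1+|i-j|)(i^{\iota-1}+j^{\iota-1})$, valid on the full strip $|\Im\phi|<\tfrac s2$ and for all $\omega\in\Pi_*$, together with the analogous bound on the Lipschitz-in-$\omega$ difference quotient.

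Next, recall that $U^l$ and $U^\infty$ are $\M_\beta^+$-valued analytic functions of $\phi$ on $|\Im\phi|<\tfrac s2$ (from the construction and from Lemma \ref{Ulconverge}), so each matrix entry is scalar-analytic in each complex coordinate $\phi_j$. I would then apply the standard one-dimensional Cauchy integral formula for $\partial_{\phi_j}$ at each point of the narrower strip $|\Im\phi|<\tfrac{3s}8$: the distance from this strip to the boundary of $|\Im\phi|<\tfrac s2$ equals $\tfrac s8$, so Cauchy's estimate yields a factor $(s/8)^{-1}=8/s$. Performing the estimate entrywise and keeping the weights of $|\cdot|_\beta^+$ fixed gives
$\|\partial_{\phi_j}U^l-\partial_{\phi_j}U^\infty\|_{\beta,\frac{3s}8}^{\Pi_*,+}\le C s^{-1}\e_l^{2/3}$,
which is the claim. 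The Lipschitz-in-$\omega$ half of the $(\Pi_*,\L,+)$-norm is handled identically, applying Cauchy's estimate to $\omega\mapsto U^l(\phi,\omega)-U^\infty(\phi,\omega)$ using the Lipschitz bound contained in \eqref{Ulminus}.

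There is no genuine obstacle in this argument; the only point that warrants attention is the compatibility of the entrywise Cauchy bound with the weighted supremum defining $|\cdot|_\beta^+$, which is immediate because the weights are independent of $\phi$, so Cauchy's pointwise inequality can be applied for each $(i,j)$ before taking the supremum over $(i,j)$ and over $|\Im\phi|<\tfrac{3s}8$.
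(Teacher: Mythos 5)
Your proof is essentially the paper's own: the authors establish this lemma precisely by combining the convergence bound \eqref{Ulminus} with Cauchy's estimate entrywise on the narrower strip $|\Im\phi|<\frac{3s}8$ (distance $\frac s8$ to the boundary of $|\Im\phi|<\frac s2$), which is exactly your argument. The only inessential slip is your closing remark on the Lipschitz-in-$\omega$ half: the norm in the statement carries no $\L$ superscript, so that part is not needed, and in any case one cannot apply Cauchy's estimate in $\omega$ (where there is only Lipschitz continuity); for an $\L$-version one would instead apply Cauchy in $\phi_j$ to the difference $U(\cdot,\omega)-U(\cdot,\omega')$ and use the Lipschitz bound of Lemma \ref{UlCauchy}.
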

By Lemma \ref{dtUlj}, Cauchy estimate and  (iv) and (v) of Lemma \ref{algebra}, we have
\begin{lemma}\label{dUl}
$\frac d{dt}U^l(\omega t)\to\frac d{dt}U^\infty(\omega t)$ in $\cal{M}_{\beta}^{+}(\Pi_{*})$, $\lzt(\Pi_*)$ and  $\ldt(\Pi_*)$ when $l\rightarrow \infty$.
\end{lemma}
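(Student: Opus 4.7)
The plan is to reduce the time derivative of $U^l(\omega t)$ to the parameter derivatives $\partial_{\phi_j}U^l$, exploit the $\M_\beta^+$-convergence already supplied by Lemma \ref{dtUlj}, and then transfer this convergence to the two operator norms via the embeddings in Lemma \ref{algebra}.

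First, since each $U^l(\phi)$ is analytic on $\T_{s/2}^n$ and $\omega t\in\T^n$ for real $t$, the chain rule gives
\[
\frac{d}{dt}U^l(\ot)=\sum_{j=1}^{n}\omega_j\,\frac{\partial U^l}{\partial\phi_j}(\ot),
\qquad
\frac{d}{dt}U^\infty(\ot)=\sum_{j=1}^{n}\omega_j\,\frac{\partial U^\infty}{\partial\phi_j}(\ot),
\]
so by the triangle inequality and $|\omega_j|\le 1$ for $\omega\in\Pi\subset[0,1]^n$,
\[
\Big\|\tfrac{d}{dt}U^l(\ot)-\tfrac{d}{dt}U^\infty(\ot)\Big\|_{\beta,\frac{3s}{8}}^{\Pi_*,+}
\le\sum_{j=1}^{n}\Big\|\tfrac{\partial U^l}{\partial\phi_j}-\tfrac{\partial U^\infty}{\partial\phi_j}\Big\|_{\beta,\frac{3s}{8}}^{\Pi_*,+}.
\]
Each summand is $O(s^{-1}\ve_l^{2/3})$ by Lemma \ref{dtUlj}, giving convergence in $\M_\beta^+(\Pi_*,\tfrac{3s}{8})$, hence \emph{a fortiori} in $\M_\beta^+(\Pi_*)$ by restricting to real $\phi$ (equivalently, evaluating at $\phi=\omega t$).

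Next I would upgrade this to the two operator-norm topologies. Applying part (iv) of Lemma \ref{algebra} with $s=0$ (which lies in the admissible range $[0,2\iota-2\beta-1)$ since $0\le 2\beta<\iota-1$), the natural embedding $\M_\beta^+\hookrightarrow\B(\l_0^2)=\lzt$ is continuous, so
\[
\Big\|\tfrac{d}{dt}U^l(\ot)-\tfrac{d}{dt}U^\infty(\ot)\Big\|_{\lzt}^{\Pi_*}
\le C(\beta,\iota)\,\Big\|\tfrac{d}{dt}U^l(\ot)-\tfrac{d}{dt}U^\infty(\ot)\Big\|_{\beta}^{\Pi_*,+}
\xrightarrow{l\to\infty}0.
\]
For the $\ldt$-norm I would invoke part (v) of Lemma \ref{algebra} with parameter $2\delta$; this requires $2\delta<2\iota-2\beta-1$, which is exactly the hypothesis $\delta<\iota-\beta-\tfrac12$ already listed in B3 of Theorem \ref{Theorem2.10}. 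The same estimate then yields
\[
\Big\|\tfrac{d}{dt}U^l(\ot)-\tfrac{d}{dt}U^\infty(\ot)\Big\|_{\ldt}^{\Pi_*}
\le C(\beta,\iota,\delta)\,\Big\|\tfrac{d}{dt}U^l(\ot)-\tfrac{d}{dt}U^\infty(\ot)\Big\|_{\beta}^{\Pi_*,+}\to 0.
\]

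I do not expect any serious obstacle: the nontrivial analytic work is already absorbed into Lemma \ref{dtUlj} (Cauchy's estimate applied to \eqref{Ulminus}) and into the embedding lemmas. The only point that needs a moment of care is checking the range of the weight exponent $2\delta$ in the $\M_\beta^+\hookrightarrow\ldt$ embedding, but this is immediate from B3. As a side benefit the argument produces a quantitative rate $\ve_l^{2/3}$ for all three convergences, uniform in $t\in\R$ and in $\omega\in\Pi_*$, which is what is needed in the subsequent lemmas showing that $U^\infty A^\infty=-\rmi\,\tfrac{d}{dt}U^\infty+(A^0+P^0)U^\infty$ in $\lzi(\Pi_*)$.
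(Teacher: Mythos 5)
Your proposal is correct and follows essentially the same route as the paper, which likewise deduces the lemma from Lemma \ref{dtUlj} (Cauchy's estimate applied to \eqref{Ulminus}) together with the embeddings (iv) and (v) of Lemma \ref{algebra}, the only work being the chain rule $\frac{d}{dt}U^l(\ot)=\sum_{j=1}^n\omega_j\,\partial_{\phi_j}U^l(\ot)$ and the verification that $0$ and $2\delta$ lie in the admissible range $[0,2\iota-2\beta-1)$, which your use of B3 handles correctly. No gaps.
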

From Lemma \ref{dUl}, we have
\begin{Corollary}\label{dtUlconverge}
$\frac d{dt}U^l(\omega t)\to\frac d{dt}U^\infty(\omega t)$ in $\lzi(\Pi_*)$ when $l\rightarrow \infty$.
\end{Corollary}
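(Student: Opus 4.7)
The plan is to leverage the continuous embedding $\ell_{-2\delta}^2\hookrightarrow\ell_{-2\iota}^2$, which holds precisely because assumption $B3$ gives $0\le\delta<\iota-\beta-\frac12<\iota$. Indeed, for every sequence $\xi=(\xi_j)_{j\ge 1}$,
\[
\|\xi\|_{-2\iota}^2=\sum_{j\ge 1}j^{-2\iota}|\xi_j|^2\le\sum_{j\ge 1}j^{-2\delta}|\xi_j|^2=\|\xi\|_{-2\delta}^2,
\]
so that any bounded linear operator $T\in\cal{B}(\ell_0^2,\ell_{-2\delta}^2)$ automatically satisfies
\[
\|T\|_{\cal{B}(\ell_0^2,\ell_{-2\iota}^2)}\le\|T\|_{\cal{B}(\ell_0^2,\ell_{-2\delta}^2)}.
\]
This inclusion is parameter-independent and therefore passes at once to the parameterized spaces $\lzd(\Pi_*)\hookrightarrow\lzi(\Pi_*)$ equipped with the norms $\|\cdot\|^{\Pi_*}$ taken as $\sup$ over $\omega\in\Pi_*$ (and over $|\Im\phi|<0$ in the $\B(\cdot)$ convention given just before the statement).

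Given this embedding, the corollary follows immediately from Lemma \ref{dUl}. That lemma already establishes that $\frac{d}{dt}U^l(\omega t)\to\frac{d}{dt}U^\infty(\omega t)$ in $\lzd(\Pi_*)$; hence the sequence is Cauchy in $\lzd(\Pi_*)$, and by the norm inequality above it is also Cauchy in the coarser space $\lzi(\Pi_*)$. The limit in $\lzi(\Pi_*)$ must coincide with $\frac{d}{dt}U^\infty(\omega t)$ by uniqueness of limits in the Hausdorff topology induced by the weaker norm.

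There is really no substantive obstacle to overcome here: the entire argument amounts to recognizing that one has already proved convergence in a stronger topology. The only point worth checking is that the inequality $\delta<\iota$ is genuinely forced by the standing hypotheses, which it is, since $\beta\ge 0$ and $\delta<\iota-\beta-\frac12\le\iota-\frac12$. The reason this corollary is singled out is functional rather than technical: $A^\infty$ belongs to $\lzi(\Pi_*,\frac s2)$ (by Lemma \ref{Alconverge}), so in order to identify $\frac{d}{dt}U^\infty$ with $-\rmi U^\infty A^\infty + \rmi(A^0+P^0)U^\infty$ and later to make sense of the solution $x(t)=U^\infty(\omega t)y(t)\in\cal C^1(\R,\ell_{-2\iota}^2)$ as claimed in Theorem \ref{Theorem2.10}, one needs the time-derivative of $U^l$ to converge in exactly the operator norm $\lzi$, matching the target space of $A^\infty$.
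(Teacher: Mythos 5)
Your argument is correct and is essentially the paper's own (the paper simply writes ``From Lemma \ref{dUl}, we have'' and leaves the norm comparison implicit): convergence in a stronger operator norm passes to the weaker norm of $\lzi$, and $\delta<\iota$ is indeed guaranteed by B3. One small misquotation to fix: Lemma \ref{dUl} asserts convergence in $\lzt(\Pi_*)$ and $\ldt(\Pi_*)$, i.e.\ in $\B(\l_0^2)$ and $\B(\l_{-2\delta}^2)$, not in $\lzd(\Pi_*)=\B(\l_0^2,\l_{-2\delta}^2)$ as you state; this is harmless, since $\|\xi\|_{-2\delta}\le\|\xi\|_0$ makes the $\B(\l_0^2)$-convergence imply the $\lzd$-convergence you invoke (or, more directly, $\|\xi\|_{-2\iota}\le\|\xi\|_0$ embeds $\B(\l_0^2)$ straight into $\lzi$).
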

\begin{lemma}\label{A0Ulconverge}
$A^0U^l(\phi)\to A^0U^\infty(\phi)$ in $\lzi(\Pi_*,\frac s2)$ when $l\rightarrow \infty$.
\end{lemma}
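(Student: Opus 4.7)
The plan is to exploit the fact that $A^0$, being independent of the parameters $\phi$ and $\omega$ and acting diagonally with $\lambda_j\sim j^\iota$ by B1, maps $\ell_0^2$ boundedly into $\ell_{-2\iota}^2$. Then the convergence $A^0U^l\to A^0U^\infty$ in $\mathcal{B}(\ell_0^2,\ell_{-2\iota}^2)=\lzi$ will reduce, by submultiplicativity of operator norms, to the already-established convergence of $U^l$ in $\lzt(\Pi_*,\frac{s}{2})$ coming from Lemma \ref{Ulconverge}.

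First I would verify that $A^0\in\lzi$ with a universal bound. For any $x\in\ell_0^2$, B1 gives $\lambda_j\le Cj^\iota$, whence
\[
\|A^0x\|_{-2\iota}^2=\sum_{j\ge1}j^{-2\iota}\lambda_j^2|x_j|^2\le C\sum_{j\ge1}|x_j|^2=C\|x\|_0^2.
\]
This bound is pointwise in $(\phi,\omega)$ since $A^0$ does not depend on either. Second, for every fixed $(\phi,\omega)\in\mathbb{T}_{s/2}^n\times\Pi_*$, composing operators yields
\[
\|A^0(U^l-U^\infty)(\phi,\omega)\|_{\lzi}\le\|A^0\|_{\lzi}\,\|(U^l-U^\infty)(\phi,\omega)\|_{\lzt}.
\]
Analyticity of $A^0(U^l-U^\infty)(\phi)$ on $\mathbb{T}_{s/2}^n$ and continuity on $\Pi_*$ are inherited from the corresponding properties of $U^l-U^\infty$ since $A^0$ is a fixed bounded operator between the relevant Hilbert spaces.

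Finally, taking the supremum over $(\phi,\omega)\in\mathbb{T}_{s/2}^n\times\Pi_*$ and inserting \eqref{Ulminus} (which yields $\|U^\infty-U^l\|_{\beta,s/2}^{\Pi_*,+}\le 8\e_l^{2/3}$) together with (iv) of Lemma \ref{algebra} (which converts an $\mathcal{M}_\beta^+$-bound into an $\lzt$-operator bound, applied pointwise in the parameters), I obtain
\[
\|A^0U^l-A^0U^\infty\|_{\lzi,\frac{s}{2}}^{\Pi_*}\le C\|U^l-U^\infty\|_{\lzt,\frac{s}{2}}^{\Pi_*}\le C\|U^l-U^\infty\|_{\beta,\frac{s}{2}}^{\Pi_*,+}\le C\e_l^{\frac23}\longrightarrow 0.
\]
The argument is mostly bookkeeping. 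The only conceptual point to watch is that $A^0$ itself does \emph{not} lie in any $\mathcal{M}_\beta^+$ for the admissible range $0\le 2\beta<\iota-1$ (its diagonal entries grow like $j^\iota$, not $j^{2\beta}$), so one cannot appeal to the matrix algebra of $\mathcal{M}_\beta^+$ to handle the product $A^0(U^l-U^\infty)$; one must instead use the direct $\ell_0^2\to\ell_{-2\iota}^2$ bound on $A^0$ established in the first step.
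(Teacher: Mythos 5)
Your argument is correct and is essentially the paper's own proof: your direct computation $\|A^0x\|_{-2\iota}\le C\|x\|_0$ is exactly the content of Lemma \ref{normmatope} applied to the diagonal matrix $A^0\in\M_{\frac\iota2}$, and the rest (submultiplicativity $\|A^0(U^l-U^\infty)\|_{\lzi}\le\|A^0\|_{\lzi}\|U^l-U^\infty\|_{\lzt}$ combined with Lemma \ref{Ulconverge}, i.e.\ \eqref{Ulminus} plus (iv) of Lemma \ref{algebra}) matches the paper step for step. Your closing remark that $A^0$ does not lie in $\M_\beta^+$, so the product must be handled through the direct $\lzi$ bound rather than the matrix algebra, is a correct reading of why the paper routes the estimate through Lemma \ref{normmatope}.
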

\begin{proof}
From  Lemma \ref{normmatope}, one has $\|A_0\|_{\lzi}\le C$. For any $\phi\in\T_{\frac s2}^n$, by Lemma \ref{Ulconverge} we have
\begin{align*}
\|A^0U^l(\phi)-A^0U^\infty(\phi)\|_{\lzi,\frac s2}&\le\|A^0\|_{\lzi,\frac s2}\|U^l(\phi)-U^\infty(\phi)\|_{\lzt,\frac s2}\\
&\le C\e_l^\frac23\to0\quad\text{as }l\to\infty.
\end{align*}
\end{proof}
\begin{lemma}\label{P0Ulconverge}
$P^0U^l(\phi)\to P^0U^\infty(\phi)$ in $\lzd(\Pi_*,\frac s2)$ and $\lzi(\Pi_*,\frac s2)$ when $l\rightarrow \infty$.
\end{lemma}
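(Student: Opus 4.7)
My plan is to mimic the argument of Lemma \ref{A0Ulconverge}, with the role of $A^0$ replaced by the fixed bounded operator $P^0$. The point is that $P^0$ does not depend on the iteration index, so the difference $P^0 U^l(\phi) - P^0 U^\infty(\phi)$ factors cleanly as $P^0\bigl(U^l(\phi)-U^\infty(\phi)\bigr)$, and what remains is an operator-norm computation that only draws on facts already established.

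First I would handle the convergence in $\lzd(\Pi_*,\frac s2)$. Assumption B3 gives $P^0(\phi)\in\lzd$ analytically on $\T_s^n$ with a uniform bound, so for every $\phi$ with $|\Im\phi|<\frac s2$ and every $\omega\in\Pi_*$,
\[
\|P^0(\phi)U^l(\phi)-P^0(\phi)U^\infty(\phi)\|_{\lzd}\le \|P^0(\phi)\|_{\lzd}\,\|U^l(\phi)-U^\infty(\phi)\|_{\lzt}.
\]
By Lemma \ref{Ulconverge}, $\|U^l-U^\infty\|_{\lzt,\frac s2}^{\Pi_*}\le C\e_l^{2/3}\to 0$, which yields the claimed convergence in $\lzd(\Pi_*,\frac s2)$. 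The Lipschitz-in-$\omega$ part of the $\B$-norm follows by the same splitting, using that $P^0$ is itself Lipschitz in $\omega$ and that $U^l\to U^\infty$ also in the Lipschitz component of $\lzt(\Pi_*,\frac s2)$ (which is part of Lemma \ref{Ulconverge}).

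For the convergence in $\lzi(\Pi_*,\frac s2)$ I would use the embedding $\ell_{-2\delta}^2\hookrightarrow \ell_{-2\iota}^2$, which holds because $\delta<\iota-\beta-\frac12<\iota$; consequently $\|\cdot\|_{-2\iota}\le\|\cdot\|_{-2\delta}$ and therefore $P^0(\phi)\in\lzi$ with $\|P^0(\phi)\|_{\lzi}\le\|P^0(\phi)\|_{\lzd}$. Applied to the same identity,
\[
\|P^0(\phi)U^l(\phi)-P^0(\phi)U^\infty(\phi)\|_{\lzi}\le\|P^0(\phi)\|_{\lzd}\,\|U^l(\phi)-U^\infty(\phi)\|_{\lzt}\le C\e_l^{2/3},
\]
uniformly on $|\Im\phi|<\frac s2$ and $\omega\in\Pi_*$, which gives the second convergence.

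I do not foresee any essential obstacle: the lemma is really a corollary of Lemma \ref{Ulconverge} combined with the $\ell^2$-scale embedding. The only small point to be careful about is to verify the Lipschitz-in-$\omega$ half of the norm (so that convergence holds in $\B(\Pi_*,\frac s2)$ and not merely in $\B(\Pi_*)$), which is immediate from the triangle inequality applied to $P^0(\phi,\omega)U^l(\phi,\omega)-P^0(\phi,\omega')U^l(\phi,\omega')$ after adding and subtracting $P^0(\phi,\omega)U^\infty(\phi,\omega)$ and its counterpart at $\omega'$.
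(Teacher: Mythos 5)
Your proposal is correct and follows essentially the same route as the paper: factor the difference as $P^0(U^l-U^\infty)$, bound $\|P^0 U^l-P^0U^\infty\|_{\lzd,\frac s2}\le\|P^0\|_{\lzd,\frac s2}\|U^l-U^\infty\|_{\lzt,\frac s2}\le C\ve_0\ve_l^{2/3}$ via Lemma \ref{Ulconverge}, and pass to $\lzi$ by the embedding $\|\cdot\|_{-2\iota}\le\|\cdot\|_{-2\delta}$. The only (harmless) difference is that you additionally spell out the Lipschitz-in-$\omega$ component, which the paper's proof does not address explicitly.
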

\begin{proof}
For any $\phi\in\T_{\frac s2}^n$, by Lemma \ref{Ulconverge} one has
\begin{align*}
&\|P^0(\phi)U^l(\phi)-P^0(\phi)U^\infty(\phi))\|_{\lzi,\frac s2}\le
\|P^0(\phi)U^l(\phi)-P^0(\phi)U^\infty(\phi))\|_{\lzd,\frac s2}\\
\le&~\|P^0\|_{\lzd,\frac s2}\|U^l-U^\infty\|_{\lzt,\frac s2}\le C\ve_0\ve_l^\frac23\to0\quad\text{as }l\to\infty.
\end{align*}
\end{proof}

\begin{lemma}\label{UlPlconverge}
$U^lP^l(\phi)\to 0$ in $\lzd(\Pi_*,\frac s2)$ and $\lzi(\Pi_*,\frac s2)$.
\end{lemma}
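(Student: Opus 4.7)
The plan is to split $U^l P^l = (U^l - \id) P^l + P^l$ and estimate each piece in $\lzd(\Pi_*, \frac s2)$; the convergence in $\lzi(\Pi_*, \frac s2)$ will then follow for free from the continuous embedding $\l_{-2\delta}^2 \hookrightarrow \l_{-2\iota}^2$, which holds because $\delta < \iota - \beta - \frac12 < \iota$ by B3, so that $\|\cdot\|_{-2\iota} \le \|\cdot\|_{-2\delta}$ pointwise. This mirrors exactly the structure of the preceding Lemmas \ref{A0Ulconverge} and \ref{P0Ulconverge}.

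For the term $P^l$ alone there is nothing to do: Lemma \ref{Plconverge} already gives $\|P^l\|_{\lzd, \frac s2}^{\Pi_*} \le \e_l \to 0$.

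For the product $(U^l - \id) P^l$, the point is to realize the factor $U^l - \id$ as a bounded operator on the target space $\l_{-2\delta}^2$. By Lemma \ref{UlId}, $U^l - \id \in \M_\beta^+(\Pi_*, \frac s2)$ with norm controlled by $3\sum_{j<l}\e_j^{2/3}$, hence uniformly small. Applying (v) of Lemma \ref{algebra} (whose hypothesis $2\delta < 2\iota - 2\beta - 1$ is exactly part of B3), we obtain
\begin{equation*}
\|U^l - \id\|_{\ldt, \frac s2}^{\Pi_*} \le C \|U^l - \id\|_{\beta, \frac s2}^{\Pi_*, +} \le C \e_0^{2/3},
\end{equation*}
which is Lemma \ref{Ulconverge}. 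Composing the bounded operator $U^l - \id : \l_{-2\delta}^2 \to \l_{-2\delta}^2$ with $P^l : \l_0^2 \to \l_{-2\delta}^2$ gives
\begin{equation*}
\|(U^l - \id) P^l\|_{\lzd, \frac s2}^{\Pi_*} \le \|U^l - \id\|_{\ldt, \frac s2}^{\Pi_*} \|P^l\|_{\lzd, \frac s2}^{\Pi_*} \le C \e_0^{2/3} \e_l \to 0.
\end{equation*}

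Adding the two estimates yields $\|U^l P^l\|_{\lzd, \frac s2}^{\Pi_*} \to 0$, and consequently $\|U^l P^l\|_{\lzi, \frac s2}^{\Pi_*} \to 0$ as well. There is no genuine obstacle here; the only care needed is to make sure the condition $2\delta < 2\iota - 2\beta - 1$ built into B3 is invoked when passing from $\M_\beta^+$-control of $U^l - \id$ to $\ldt$-control, exactly as was done in Lemma \ref{Ulconverge}.
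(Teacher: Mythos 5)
Your proof is correct and follows essentially the same route as the paper: the paper likewise bounds $\|U^l\|_{\ldt,\frac s2}\le 1+C\|U^l-\id\|_{\beta,\frac s2}^+$ via Lemma \ref{UlId} and (v) of Lemma \ref{algebra}, then composes with $\|P^l\|_{\lzd,\frac s2}\le\e_l\to0$ and uses $\|\cdot\|_{\lzi}\le\|\cdot\|_{\lzd}$. Writing the splitting $U^lP^l=(U^l-\id)P^l+P^l$ explicitly is only a cosmetic variant of the same estimate.
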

\begin{proof}
For any $\phi\in\T_{\frac s2}^n$, by Lemma \ref{UlId} and (v) of Lemma \ref{algebra} we have
\[
\|U^l\|_{\ldt,\frac s2}\le 1+C\|U^l-\id\|_{\beta,\frac s2}^+\le 1+C\e_0^\frac23,
\]
then one obtains
\begin{align*}
&\|U^l(\phi)P^l(\phi)\|_{\lzi,\frac s2}\le \|U^l(\phi)P^l(\phi)\|_{\lzd,\frac s2}\\
\le&~\|U^l\|_{\ldt,\frac s2}\|P^l\|_{\lzd,\frac s2}\le C\e_l\to0\quad\text{as }l\to\infty.
\end{align*}
\end{proof}

\begin{lemma}\label{UA1}
For any $l\ge1$, $U^lA^l\in\lzi$.
\end{lemma}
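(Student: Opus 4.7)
The plan is to read $U^l A^l$ off the conjugation identity that is built into the iteration itself. Indeed, the transformation $x = U^l(\omega t) y$ carries the original system $\rmi \dot x = (A^0 + P^0(\omega t))x$ into the $l$-th reduced system $\rmi \dot y = (A^l + P^l(\omega t))y$, which at the operator level is the equality
\[
\rmi \frac{d}{dt} U^l(\omega t) + U^l(\omega t)\bigl(A^l + P^l(\omega t)\bigr) = \bigl(A^0 + P^0(\omega t)\bigr) U^l(\omega t).
\]
Rearranging yields
\[
U^l A^l = -\rmi \frac{d}{dt} U^l + (A^0 + P^0) U^l - U^l P^l,
\]
which is the key identity driving the proof.

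I would first make this identity rigorous by testing both sides against finitely supported sequences in $\ell_0^2$. On such sequences $A^l y \in \ell_0^2$ is unambiguous, $U^l$ acts as an absolutely convergent product $e^{B^1}\cdots e^{B^l}$ of operators with $B^j \in \M_\beta^+$, and the time-derivative in $\phi$-variables pulls through term by term. Thus on this dense subspace the displayed identity is a genuine pointwise equality of $\ell_0^2$-valued expressions, which then determines $U^l A^l$ on all of $\ell_0^2$ by the bound supplied below.

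Second, I would estimate each term on the right in the norm of $\lzi(\Pi_*)$, using the preparatory lemmas. Corollary \ref{dtUlconverge} asserts $\frac{d}{dt}U^l \to \frac{d}{dt}U^\infty$ in $\lzi(\Pi_*)$, so each $\frac{d}{dt}U^l$ belongs to $\lzi(\Pi_*)$. Similarly, Lemma \ref{A0Ulconverge} gives $A^0 U^l \in \lzi(\Pi_*,\tfrac{s}{2})$; Lemma \ref{P0Ulconverge} gives $P^0 U^l \in \lzi(\Pi_*,\tfrac{s}{2})$; and Lemma \ref{UlPlconverge} gives $U^l P^l \in \lzi(\Pi_*,\tfrac{s}{2})$. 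Summing these four contributions exhibits $U^l A^l$ as an operator in $\lzi(\Pi_*)$ with an explicit bound controlled by $\varepsilon_0$, $\varepsilon_l$ and the Cauchy estimate for $\partial_\phi U^l$.

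The only subtle step is the first one: a priori $U^l A^l$ has no meaning on all of $\ell_0^2$ because $A^l$ sends $\ell_0^2$ only into $\ell_{-2\iota}^2$, while $U^l$ is not known to be bounded on $\ell_{-2\iota}^2$ (the estimates of Lemma \ref{algebra}(v) are restricted to $s \in [0, 2\iota-2\beta-1)$). The conjugation identity bypasses this obstacle: it shows that although the factorized composition $U^l \circ A^l$ is not obviously well defined, the sum on the right-hand side always is, and it therefore provides the natural bounded extension of $U^l A^l$ from finitely supported sequences to an element of $\B(\ell_0^2, \ell_{-2\iota}^2)$.
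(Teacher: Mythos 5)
Your plan rests on the identity $U^l(A^l+P^l)=-\rmi\frac{d}{dt}U^l+(A^0+P^0)U^l$, which is the paper's \eqref{yuehuahengdeng}; in the paper that identity is only formulated in $\lzi$ \emph{after} Lemma \ref{UA1}, and giving it a rigorous meaning is essentially the content of the lemma you are asked to prove, so you must take care not to argue in a circle. Your way out --- verify the identity on a dense class and then let the boundedness of the right-hand side define the extension of $U^lA^l$ --- is a legitimate strategy, but the step you treat as routine is where all the work lies, and your sketch does not supply it. First, a small slip: the equality is not one of ``$\ell_0^2$-valued expressions'', since $A^0U^ly$ lies only in $\ell_{-2\iota}^2$. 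More seriously, the identity for $U^l=e^{B^1}\cdots e^{B^l}$ must be obtained by chaining the $l$ single-step conjugation identities, and already after one factor the relevant vectors $e^{B^{j+1}}\cdots e^{B^l}y$ are no longer finitely supported: they lie only in $\ell_s^2$ with $s<2\iota-2\beta-1$ (Lemma \ref{algebra}(iv)). At each stage one must therefore check that $A^j$ maps such vectors into $\ell_{s-2\iota}^2$, that $e^{\pm B^j}$ and $U^j$ still act boundedly there (this forces $s>2\beta+1$, via Lemma \ref{algebra}(v), the window $(2\beta+1,\,2\iota-2\beta-1)$ being nonempty because $2\beta<\iota-1$), and that the single-step identity --- whose formulation already invokes the homological equation \eqref{homoeqn} in the definition of $P^{j+1}$ --- holds on that class. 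This inductive bookkeeping is precisely what the paper carries out through the rearrangement \eqref{UAinduce}, where \eqref{homoeqn} and assumption B3 are used to interpret $[A^l,B^{l+1}]$ as an element of $\B(\ell_0^2,\ell_{-2\delta}^2)$; appealing to ``the transformation carries the original system into the $l$-th reduced system'' assumes the very operator identity that has to be built, and ``the time-derivative pulls through term by term'' does not address the difficulty.

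If that dense-class identity were actually established, the rest of your argument is fine and would give a variant of the paper's proof: $\frac{d}{dt}U^l\in\B(\ell_0^2)\subset\lzi$, $A^0U^l,\,P^0U^l\in\lzi$, and $U^lP^l\in\lzi$ (using $P^0,P^l\in\lzd$, i.e.\ B3 and H2, together with Lemma \ref{algebra}(v)), so the left-hand side extends boundedly from the dense class and agrees with the paper's operator. As written, however, the proposal assumes its pivotal identity instead of proving it, and that is a genuine gap.
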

\begin{proof}
When $l=0$, we have $U^0A^0=A^0\in\lzi$.
Suppose $U^lA^l\in \lzi$, we consider $U^{l+1}A^{l+1}$.
From
$
U^{l+1}A^{l+1}=U^le^{B^{l+1}}A^{l+1}$ and
$A^{l+1}=A^l+\text{diag}(P^l)$,
it results in
\[
e^{B^{l+1}}A^{l+1}=e^{B^{l+1}}A^l+e^{B^{l+1}}\text{diag}(P^l).
\]
Besides,
\[
e^{B^{l+1}}A^le^{-B^{l+1}}-A^l=[B^{l+1},A^l]+\int_0^1ds_1\int_0^{s_1}e^{s_2B^{l+1}}[B^{l+1},[B^{l+1},A^l]]e^{-s_2B^{l+1}}ds_2:=\wt P^l.
\]
 and
 \begin{eqnarray}\label{homologicalequations}
 [A^l,B^{l+1}]-\rmi\dot B^{l+1}+P^l-\text{diag}(P^l)=0.
 \end{eqnarray}
Thus, we have
\begin{align}
e^{B^{l+1}}A^l&=A^le^{B^{l+1}}+\wt P^le^{B^{l+1}},\notag\\
U^{l+1}A^{l+1}&=U^le^{B^{l+1}}\big(A^l+\text{diag}(P^l)\big)=U^le^{B^{l+1}}A^l+U^le^{B^{l+1}}\text{diag}(P^l)\notag\\
&=U^lA^le^{B^{l+1}}+U^l\wt P^le^{B^{l+1}}+U^{l+1}\text{diag}(P^l).\label{UAinduce}
\end{align}
From $U^lA^l\in\lzi $, Lemma \ref{algebra}, we have $U^lA^le^{B^{l+1}}\in \lzi$. From (\ref{homologicalequations}) and assumption B3, we have
$ [A^l,B^{l+1}]\in \cal{B}(\ell_0^2, \ell_{-2\delta}^2)$. From $B^{l+1}\in \cal{M}_{\beta}^+$ and Lemma \ref{algebra}, we have $\wt P^l\in \cal{B}(\ell_0^2, \ell_{-2\delta}^2)$.
Note $e^{B^{l+1}}\in \cal{B}(\ell_0^2)$, we obtain
$\wt P^le^{B^{l+1}}\in \cal{B}(\ell_0^2, \ell_{-2\delta}^2)$.  From above, $U^{l}\in \cal{B}(\ell_{-2\delta}^2)$, it follows $U^l\wt P^le^{B^{l+1}}\in \cal{B}(\ell_0^2, \ell_{-2\delta}^2). $
Similarly, $U^{l+1}\text{diag}(P^l)\in \cal{B}(\ell_0^2, \ell_{-2\delta}^2).$ From above and (\ref{UAinduce}), one can draw $U^{l+1}A^{l+1}\in\lzi$.
 \end{proof}

 \begin{remark}
 If we only assume B1, B2 without B3, one can't prove
$[A^l,B^{l+1}]\in \cal{B}(\ell_0^2, \ell_{-2\delta}^2)$ without any further assumption.
 \end{remark}

 \begin{lemma}\label{UA2}
For any $l\ge1$, if $\e_0\ll1$, then one has
\[
||U^lA^l||_{\lzi,s_l}\le C\prod_{i=0}^{l-1}(1+\e_i^\frac12)+\sum_{i=0}^{l-1}\e_i^\frac12.
\]
\end{lemma}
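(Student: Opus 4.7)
The plan is to proceed by induction on $l$, starting from the algebraic identity already derived in the proof of Lemma~\ref{UA1}:
\begin{equation*}
U^{l+1}A^{l+1}=U^lA^l\,e^{B^{l+1}}+U^l\wt P^l\,e^{B^{l+1}}+U^{l+1}\,\text{diag}(P^l),
\end{equation*}
where $\wt P^l:=e^{B^{l+1}}A^le^{-B^{l+1}}-A^l$. Setting $a_l:=\|U^lA^l\|_{\lzi,s_l}$, the target is a recursive estimate of the form $a_{l+1}\le(1+C\e_l^{1/2})\,a_l+C\e_l^{1/2}$, which can then be iterated from the base case $a_0=\|A^0\|_{\lzi}\le C$ (using Lemma~\ref{Alconverge} together with the standard embedding $\M_{\iota/2}\hookrightarrow\lzi$).

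For the first summand I would use the inductive hypothesis $U^lA^l\in\lzi$ together with the bound $\|e^{B^{l+1}}\|_{\lzt,s_{l+1}}\le 1+C\e_l^{2/3}$ (Lemmas~\ref{algebrapara} and~\ref{kaojing1}) to get $\|U^lA^le^{B^{l+1}}\|_{\lzi,s_{l+1}}\le(1+C\e_l^{2/3})\,a_l$. For the third summand, combine $\|U^{l+1}\|_{\lzt,s_{l+1}}\le 1+C\e_0^{2/3}$ (as in Lemma~\ref{Ulconverge}) with $\|\text{diag}(P^l)\|_{\lzi,s_{l+1}}\le\|P^l\|_{\lzd,s_l}\le\e_l$ to obtain a contribution of size $O(\e_l)$, using the continuous inclusion $\ell_{-2\delta}^2\hookrightarrow\ell_{-2\iota}^2$ which holds because $\delta<\iota$.

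The delicate point is the middle summand. A priori the Taylor expansion
\begin{equation*}
\wt P^l=[B^{l+1},A^l]+\int_0^1\!\!\int_0^{s_1}e^{s_2B^{l+1}}\bigl[B^{l+1},[B^{l+1},A^l]\bigr]e^{-s_2B^{l+1}}\,ds_2\,ds_1
\end{equation*}
involves $A^l$ applied twice to $B^{l+1}$ and is unbounded. The key observation is to eliminate every occurrence of $[A^l,B^{l+1}]$ via the homological equation~\eqref{homoeqn}, rewriting it as $-\rmi\dot B^{l+1}+P^l-\text{diag}(P^l)\in\lzd$. A Cauchy estimate then gives $\|\dot B^{l+1}\|_{\lzt,s_{l+1}}\le C\sigma_{l+1}^{-1}\e_l^{2/3}$, and since $\sigma_{l+1}^{-1}\sim|\ln\e_l|^{1/a_3}$ this loss is absorbed into $\e_l^{1/2}$ for $\e_0$ small. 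Composing with $\|U^l\|_{\ldt,s_l}\le 1+C\e_0^{2/3}$ (Lemma~\ref{Ulconverge}-type bound) and with $\|e^{B^{l+1}}\|_{\lzt,s_{l+1}}\le 1+C\e_l^{2/3}$ yields
\begin{equation*}
\|U^l\wt P^l e^{B^{l+1}}\|_{\lzi,s_{l+1}}\le\|U^l\wt P^l e^{B^{l+1}}\|_{\lzd,s_{l+1}}\le C\e_l^{1/2}.
\end{equation*}

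Summing the three contributions gives the desired recursion $a_{l+1}\le(1+C\e_l^{1/2})\,a_l+C\e_l^{1/2}$. A standard Grönwall-type iteration then produces $a_l\le a_0\prod_{i=0}^{l-1}(1+C\e_i^{1/2})+\sum_{i=0}^{l-1}C\e_i^{1/2}\prod_{j=i+1}^{l-1}(1+C\e_j^{1/2})$, and since $\sum_i\e_i^{1/2}$ converges by the super-exponential schedule $\e_i=\e_{i-1}^{4/3}$, the tail products are uniformly bounded, so after absorbing absolute constants one obtains the stated bound. The main obstacle throughout is thus the exploitation of the homological equation to trade the unbounded commutator $[A^l,B^{l+1}]$ for the bounded quantities $\dot B^{l+1}$ and $P^l-\text{diag}(P^l)$; all other steps are routine applications of Lemmas~\ref{algebra}--\ref{algebrapara}.
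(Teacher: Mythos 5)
Your proposal follows essentially the same route as the paper: it starts from the identity $U^{l+1}A^{l+1}=U^lA^le^{B^{l+1}}+U^l\wt P^le^{B^{l+1}}+U^{l+1}\text{diag}(P^l)$ established in Lemma \ref{UA1}, controls $\wt P^l$ in $\lzd$ by using the homological equation \eqref{homoeqn} together with a Cauchy estimate exactly as in the proof of Lemma \ref{iterule} (see \eqref{Psupsub}), and then iterates the resulting recursion from $\|A^0\|_{\lzi}\le C$, which is the paper's induction. The only cosmetic point is that for the terms $U^{l+1}\text{diag}(P^l)$ and $U^l\wt P^l e^{B^{l+1}}$ the composition requires the $\ldt=\B(\l_{-2\delta}^2)$ bound on $U^{l+1}$ (respectively $U^l$) rather than the $\lzt$ bound, which is available from the Lemma \ref{Ulconverge}-type estimates you already invoke.
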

\begin{proof}  From the prove of Lemma \ref{iterule}, one has $\|\wt P^l\|_{\lzd,s_{l+1}}\le C\e_l^\frac23$ (see \eqref{Psupsub}).\\
When $l=0$, we have $U^0A^0=A^0$ and $\|A^0\|_{\lzi,s_0}\le C$. When $l=1$, by \eqref{UAinduce} we have
\[
U^1A^1=A^0e^{B^1}+\wt P^0e^{B^1}+e^{B^1}\text{diag}(P^0),
\]
where $\|\wt P^0\|_{\lzd,s_1}\le C\e_0^\frac23$ and $\|\text{diag}(P^0)\|_{\lzd,s_1}\le\|P^0\|_{\lzd,s_1}\le\e_0$.
Combining  (iv) of Lemma \ref{algebra} with  (ii) of Lemma \ref{algebrapara}, one has
$
\|e^{B^{1}}\|_{\lzt,s_1}\le1+\|e^{B^1}-\id\|_{\beta,s_0-\sigma_1}^+\le2.
$
Similarly, we have $\|e^{B^{1}}\|_{\ldt,s_1}\le2$. Hence,
\begin{align*}
&\|U^1A^1\|_{\lzi,s_1}\le\|A^0e^{B^1}\|_{\lzi,s_1}+\|\wt P^0e^{B^1}\|_{\lzi,s_1}+\|e^{B^1}\text{diag}(P^0)\|_{\lzi,s_1}\\
\le&~\|A^0\|_{\lzi,s_1}\|e^{B^1}\|_{\lzt,s_1}+\|\wt P^0e^{B^1}\|_{\lzd,s_1}+\|e^{B^1}\text{diag}(P^0)\|_{\lzd,s_1}\\
\le&~ C(1+\e_0^\frac12)+\e_0^\frac12.
\end{align*}
By induction we finish the proof.
\end{proof}
\noindent From  \eqref{UAinduce} and  Lemma \ref{UA2},  we have
\begin{align*}
&\|U^{l+1}A^{l+1}-U^lA^l\|_{\lzi,\frac s2}\le\|U^lA^l\|_{\lzi,s_l}\|e^{B^{l+1}}-\id\|_{\lzt,\frac s2}\\
+&\|U^l\|_{\ldt,\frac s2}\|\wt P^l\|_{\lzd,s_{l+1}}\|e^{B^{l+1}}\|_{\lzt,\frac s2}+\|U^{l+1}\|_{\ldt,\frac s2}\|\text{diag}(P^l)\|_{\lzd,s_{l+1}}\\
\le&C\e_l^\frac23+4C\e_l^\frac23+2\e_l\le6C\e_l^\frac23.
\end{align*}
It follows $\{U^lA^l\}$ is a Cauchy sequence in $\lzi(\Pi_*,\frac s2)$ and thus define
$
U^\infty A^\infty:= \lim\limits_{l\rightarrow \infty} U^{l}A^{l}$
in the norm of $|\cdot |_{\lzi,\frac s2}$ below.
It is clear that
$$\|U^\infty A^\infty-U^lA^l\|_{\lzi,\frac s2}
\le\sum_{t=l}^\infty\|U^{t+1}A^{l+1}-U^tA^t\|_{\lzi,\frac s2}
\le C\e_l^\frac23.
$$
Thus we have the following.
\begin{lemma}\label{UlAlconverge}
When $l\rightarrow \infty$, $U^lA^l(\phi)\to U^\infty A^\infty(\phi)$ in $\lzi(\Pi_*,\frac s2)$ and
$U^\infty A^\infty$ satisfies
$
\|U^\infty A^\infty-A^0\|_{\lzi,\frac s2}\le C\e_0^\frac23
$
for some positive constant $C$.
\end{lemma}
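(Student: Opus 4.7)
The plan is to make rigorous the telescoping-series argument already sketched in the paragraph immediately preceding the statement, and carry the bound through to the limit. The starting point is identity \eqref{UAinduce}, so I would write
\[
U^{l+1}A^{l+1}-U^lA^l = U^lA^l(e^{B^{l+1}}-\id) + U^l\wt P^l\, e^{B^{l+1}} + U^{l+1}\text{diag}(P^l),
\]
which is just a consequence of the homological equation \eqref{homoeqn}, exactly as in Lemma \ref{UA1}. The goal is to estimate each of the three summands in the $\lzi$-norm uniformly in $l$ by a quantity summable in $l$.

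For the first summand, Lemma \ref{UA2} gives $\|U^lA^l\|_{\lzi,s_l}\le C$ (a constant independent of $l$ once $\e_0$ is small), and (iv) of Lemma \ref{algebra} combined with (ii) of Lemma \ref{algebrapara} bounds $\|e^{B^{l+1}}-\id\|_{\lzt,s/2}\le C\e_l^{2/3}$. For the middle summand, $\wt P^l$ is controlled in $\lzd$ by $C\e_l^{2/3}$ thanks to \eqref{Psupsub}, while Lemma \ref{Ulconverge} supplies uniform $\ldt$-bounds on $U^l$ and uniform $\lzt$-bounds on $e^{B^{l+1}}$, so the three factors compose cleanly from $\ell_0^2$ through $\ell_{-2\delta}^2$ and into $\ell_{-2\iota}^2$. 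The last summand is handled the same way with $\|\text{diag}(P^l)\|_{\lzd,s_{l+1}}\le\|P^l\|_{\lzd,s_l}\le\e_l$. Putting these together yields the key one-step bound $\|U^{l+1}A^{l+1}-U^lA^l\|_{\lzi,s/2}\le 6C\e_l^{2/3}$.

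Since $\e_l=\e_{l-1}^{4/3}$, the tails $\sum_{t\ge l}\e_t^{2/3}$ are dominated by $C\e_l^{2/3}$, so $\{U^lA^l\}$ is Cauchy in $\lzi(\Pi_*,s/2)$ and its limit $U^\infty A^\infty$ is well-defined in that space with $\|U^\infty A^\infty-U^lA^l\|_{\lzi,s/2}\le C\e_l^{2/3}$. Taking $l=0$ and using $U^0=\id$, $A^0$ as the initial data gives
\[
\|U^\infty A^\infty-A^0\|_{\lzi,s/2}\le\sum_{l=0}^\infty 6C\e_l^{2/3}\le C\e_0^{2/3},
\]
which is the claimed estimate.

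Rather than lengthy calculation, the only subtle point is bookkeeping: $\wt P^l$ and $\text{diag}(P^l)$ are controlled only as maps $\ell_0^2\to\ell_{-2\delta}^2$, so the surrounding unitaries must be bounded both on $\ell_0^2$ and on $\ell_{-2\delta}^2$ for the composition to land in $\lzi$. These $\ldt$-bounds are exactly where the assumption $0\le\delta<\iota-\beta-\frac12$ enters, through (iv)--(v) of Lemma \ref{algebra}; together with the $\lzi$-control on $U^lA^l$ from Lemma \ref{UA2}, they constitute the substantive input behind what is otherwise a routine Cauchy-in-Banach-space argument.
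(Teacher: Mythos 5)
Your proposal is correct and follows essentially the same route as the paper: the telescoping estimate based on the identity \eqref{UAinduce}, the one-step bound $\|U^{l+1}A^{l+1}-U^lA^l\|_{\lzi,\frac s2}\le 6C\e_l^{2/3}$ obtained from Lemma \ref{UA2}, \eqref{Psupsub}, the $\lzt$/$\ldt$ bounds on the unitaries, and then summing the geometric-type tail to define $U^\infty A^\infty$ and get $\|U^\infty A^\infty-A^0\|_{\lzi,\frac s2}\le C\e_0^{2/3}$. Your bookkeeping remark about needing $U^l$ and $e^{B^{l+1}}$ bounded on both $\ell_0^2$ and $\ell_{-2\delta}^2$, via (iv)--(v) of Lemma \ref{algebra} and $0\le\delta<\iota-\beta-\frac12$, matches the paper's use of these lemmas exactly.
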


From the construction we can prove the reducibility identity
\begin{eqnarray}\label{yuehuahengdeng}
U^l(\ot)(A^l(\ot)+P^l(\ot))=-\rmi\frac d{dt}U^l(\ot)+(A^0+P^0)U^l(\ot)\quad\text{in}~\lzi(\Pi_*).
\end{eqnarray}
From the above lemmas and let $l\to \infty$ in (\ref{yuehuahengdeng}),  one has
\begin{equation}\label{transformid}
 U^\infty(\ot)A^\infty(\ot)=-\rmi\frac d{dt}U^\infty(\ot)+(A^0+P^0(\ot))U^\infty(\ot), \qquad \omega\in\Pi_*,
\end{equation}
where the  identity holds in $\lzi$.\\
Proof of Theorem \ref{Theorem2.10}.  The measure estimate for $\Pi_{*}$ is similar as \cite{BG}. The estimate (\ref{gujiforUinfty}) is clear from  Lemma \ref{Ulconverge}
and the proof of Lemma \ref{Alconverge}.  We only need to prove the equivalence of two relative equations. \\
\indent If $t~\mapsto~y(t)\in\cal C^0(\R,\l_0^2)\cap\cal C^1(\R,\l_{-2\iota}^2)$ satisfies the equation \eqref{reduedeqn}, define $x(t)=U^\infty(\ot)y(t)$. By a straightforward computation, we have
\begin{align*}
\rmi \dot x&=\rmi\Big(\frac d{dt}U^\infty(\ot)\Big)y(t)+U^\infty(\ot)\rmi\dot y\\
\underline{\text{by \eqref{reduedeqn}}}\quad&=
\rmi\Big(\frac d{dt}U^\infty(\ot)\Big)y(t)+U^\infty(\ot)A^\infty(\ot)y(t)\\
\underline{\text{by \eqref{transformid}}}\quad&=(A^0+P^0(\ot))U^\infty(\ot)y(t)\\
\quad&=(A^0+P^0(\ot))x(t).
\end{align*}
From $y(t)\in\cal C^0(\R,\l_0^2)\cap\cal C^1(\R,\l_{-2\iota}^2)$, we can draw $x(t)\in\cal C^0(\R,\l_0^2)\cap\cal C^1(\R,\l_{-2\iota}^2)$.
On the contrary, if $\cal C^0(\R,\l_0^2)\cap\cal C^1(\R,\l_{-2\iota}^2)\ni x(t)$ satisfies \eqref{redueqn}, we define $y(t)=V^\infty(\ot)x(t)$. By Lemma \ref{Vlconverge} one has  $y(t)\in\cal C^0(\R,\l_0^2)\bigcap\cal C^1(\R,\l_{-2\iota}^2)$.
Since $V^\infty(\ot)U^\infty(\ot)=\id$, it follows
\begin{equation}\label{diffid}
\big(\frac d{dt}V^\infty(\ot)\big)U^\infty(\ot)=-V^\infty(\ot)\frac d{dt}U^\infty(\ot),
\end{equation}
and
\begin{align*}
\rmi\dot{y}(t)&=\rmi\Big(\frac d{dt}V^\infty(\ot)\Big)x(t)+V^\infty(\ot)\rmi \dot{x}(t)\\
\underline{\text{by \eqref{redueqn}}}\quad
&=\rmi\Big(\frac d{dt}V^\infty(\ot)\Big)U^\infty(\ot)y(t)+V^\infty(\ot)(A^0+P^0(\ot))x(t)\\
\underline{\text{by \eqref{diffid}}}\quad
&=-\rmi V^\infty(\ot)\Big(\frac d{dt}U^\infty(\ot)\Big)y(t)+V^\infty(\ot)(A^0+P^0(\ot))U^\infty(\ot)y(t)\\
\underline{\text{by \eqref{transformid}}}\quad
&=V^\infty(\ot)U^\infty(\ot)A^\infty(\ot)y(t)\\
&=A^\infty(\ot)y(t).
\end{align*}
\qed

\begin{remark}\label{jieshao2.29}
Lemma \ref{UlAlconverge} and \ref{VlA0converge} prove that $U^\infty(\ot)A^\infty(\ot)y$ and $V^\infty(\ot)A^0 x$ belong to $\l_{-2\iota}^2$ if $x,y\in\l_0^2$.
\end{remark}
The proof of Corollary \ref{tuilun2.12}, see \cite{BG}.

\subsection{Proof of Main Theorems}\hfill\par\noindent
We first prove Theorem \ref{mainthm1} based on Theorem \ref{Theorem2.10}.  All the assumptions B1- B3 should be checked.  Define $H_0=-\frac{d^2}{dx^2}+\cal V(x)$. $H_0$ is self-adjoint in $L^2(\R)$ and $spec(H_0)$ is discrete, and all eigenvalues
$
0<\lambda_1<\lambda_2<\lambda_3<\cdots\quad\text{are simple},
$
and $\lambda_j\sim cj^{\frac{2\l}{\l+1}}$ when $j\to \infty$ and all eigenfunctions $\{h_j(x)\}_{j\ge1}$ form a complete basis in $ L^2$.
As \cite{BG}, the equation \eqref{introduction01} can be written as (\ref{redueqn}), where
$A=\text{diag}\{\lambda_1,\lambda_2,\cdots\}$
and $P(\phi)=(P_i^j(\phi))_{i,j\ge1}$ with
$
P_i^j(\phi)=\int_\R\la x\ra^\mu W(\nu x,\phi)h_i(x)h_j(x)dx$  and $P_i^j(\phi)=P_j^i(\phi)$ for $\phi\in\T^n$.
As \cite{BG} and \cite{BamII}, the assumption B1 is satisfied. In the following we will show the assumptions B2 - B3 are fulfilled for the equation (\ref{redueqn}).
\begin{lemma}\label{yanzhengB2}
The map $P(\phi)$  is analytic from $\T_{s}^n$ into $\M_\beta$,  where $s=\rho-\delta_0$ with $s>0$ and
\[
\beta=
\begin{cases}
\frac{\mu}{2(\l+1)}-\frac1{2(\l+1)}\left(\frac13\bigwedge\frac{\mu+1}{2\mu+2\l+1}\right),&  \frac13\bigwedge\frac{\sqrt{\l^2+2}-\l}2\le\mu<\left(\l-\frac23\right)\bigwedge\frac{\sqrt{4\l^2-2\l+1}-1}2,\\
0 ,&\mu<\frac13\bigwedge\frac{\sqrt{\l^2+2}-\l}2.
\end{cases}
\]
\end{lemma}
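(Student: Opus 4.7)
The strategy is to Fourier-expand $W(\nu x,\phi)$ in the first variable and then apply the oscillatory integral estimate of Lemma \ref{mainlem} to each Fourier mode. More precisely, write
\[
W(\nu x,\phi)=\sum_{k\in\Z^d}\widehat W_k(\phi)\,e^{{\rm i}\la k,\nu\ra x},
\qquad
P_i^j(\phi)=\sum_{k\in\Z^d}\widehat W_k(\phi)\int_\R\la x\ra^\mu e^{{\rm i}\la k,\nu\ra x}h_i(x)h_j(x)\,dx.
\]
The first key observation is that the $k=0$ contribution vanishes: by assumption A3, $W$ is odd in $\varphi$, hence $\widehat W_0(\phi)=\int_{\T^d}W(\varphi,\phi)\,d\varphi\equiv 0$. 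This is precisely what avoids the obstruction that would otherwise force $\mu<\l-1$ via Lemma \ref{mainlem2}. Thus only $k\ne 0$ modes contribute, and for these Lemma \ref{mainlem} with $f(x)=\la x\ra^\mu$ and Diophantine denominator $\la k,\nu\ra\ne 0$ yields
\[
\Big|\int_\R\la x\ra^\mu e^{{\rm i}\la k,\nu\ra x}h_i(x)h_j(x)\,dx\Big|
\le C\bigl(|\la k,\nu\ra|\vee|\la k,\nu\ra|^{-1}\bigr)(\lambda_i\lambda_j)^{\beta'},
\]
where $\beta'=\frac{\mu}{4\l}-\frac{1}{4\l}\bigl(\frac13\wedge\frac{\mu+1}{2\mu+2\l+1}\bigr)$.

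Next I would convert this into the weight $(ij)^\beta$ required by the definition of $\M_\beta$. Using $\lambda_j\sim j^{2\l/(\l+1)}$ from Remark \ref{pudeguji},
\[
(\lambda_i\lambda_j)^{\beta'}\le C(ij)^{\frac{2\l}{\l+1}\beta'}
=C(ij)^{\frac{\mu}{2(\l+1)}-\frac{1}{2(\l+1)}(\frac13\wedge\frac{\mu+1}{2\mu+2\l+1})},
\]
which reproduces the stated $\beta$ in the first regime. In the second regime $\mu<\frac13\wedge\frac{\sqrt{\l^2+2}-\l}{2}$, a short computation shows $\beta'\le 0$, so $(\lambda_i\lambda_j)^{\beta'}\le\lambda_1^{2\beta'}\le C$ and one can take $\beta=0$. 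Both cases thus follow from the same mode-by-mode estimate.

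For the sum over $k\ne 0$ I would use the Diophantine bound from A2, $|\la k,\nu\ra|^{-1}\le\bar\gamma^{-1}|k|^{\tau_1}$ together with $|\la k,\nu\ra|\le C|k|$, giving $|\la k,\nu\ra|\vee|\la k,\nu\ra|^{-1}\le C|k|^{\tau_1\vee 1}$. The smoothness hypothesis A3, namely that $\partial_\varphi^\alpha W$ exists up to order $d([1\vee\tau_1]+d+2)$ and is analytic in $\phi$ on $\T_\rho^n$, ensures via integration by parts in $\varphi$ that
\[
|\widehat W_k(\phi)|\le C|k|^{-d([1\vee\tau_1]+d+2)},\qquad \phi\in\overline{\T_\rho^n},
\]
so that $\sum_{k\ne 0}|k|^{\tau_1\vee 1}|\widehat W_k(\phi)|$ converges uniformly on $\overline{\T_\rho^n}$. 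Combining the three pieces gives $|P_i^j(\phi)|\le C(ij)^\beta$ uniformly in $\phi\in\overline{\T_\rho^n}$, hence $\sup_{|\Im\phi|\le s}|P(\phi)|_\beta<\infty$ for any $s<\rho$, in particular $s=\rho-\delta_0$.

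Finally, for the analyticity statement, I would note that each $\widehat W_k(\phi)$ is analytic on $\T_\rho^n$ because $W(\varphi,\cdot)$ is, and the series for $P_i^j(\phi)$ converges absolutely and uniformly on $\T_s^n$ by the same decay estimate, so $\phi\mapsto P(\phi)\in\M_\beta$ is analytic on $\T_s^n$ by Weierstrass. The only mildly delicate step is the choice of the number of derivatives $d([1\vee\tau_1]+d+2)$ in A3; I expect this to be exactly tuned so that, after losing $|k|^{\tau_1\vee 1}$ from the small divisors and $|k|^{d-1}$ from the volume of spheres, the series still sums. No serious obstacle is anticipated beyond the bookkeeping of these exponents.
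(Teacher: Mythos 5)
Your proposal is correct and follows essentially the same route as the paper: Fourier expansion of $W$ in $\varphi$, oddness killing the $k=0$ mode, Lemma \ref{mainlem} for each $k\ne0$, the Diophantine condition A2 to bound $|\la k,\nu\ra|^{-1}\le\bar\gamma^{-1}|k|^{\tau_1}$, and the derivative count in A3 to make the $k$-sum converge. The only cosmetic difference is that the paper expands in both $\varphi$ and $\phi$ and uses the anisotropic decay $\prod_m\la k_m\ra^{-([1\vee\tau_1]+d+2)}$ together with $|k_{i_0}|\ge|k|/d$, whereas you keep $\widehat W_k(\phi)$ as an analytic function of $\phi$ and use the equivalent isotropic bound; both are justified by A3.
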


\begin{proof}
We discuss the case  when $\frac13\bigwedge\frac{\sqrt{\l^2+2}-\l}2\le\mu<(\l-\frac23)\bigwedge\frac{\sqrt{4\l^2-2\l+1}-1}2$.
In this case, if $|\Im\phi|<\rho-\delta_0$, then
 \begin{align*}
|P_i^j(\phi)|&=\Big|\int_\R\la x\ra^\mu\sum_{k\in\Z^ d,l\in\Z^n}\widehat W(k,l)e^{\rmi k\cdot\nu x}e^{\rmi l\phi}h_i(x)h_j(x)dx\Big|\\
&\le\sum_{l\in\Z^n}e^{|l|(\rho-\delta_0)}\sum_{k\in\Z^ d}|\widehat W(k,l)|\Big|\int_\R e^{\rmi k\cdot\nu x}\la x\ra^\mu h_i(x)h_j(x)dx\Big|\\
\underline{ W(-\varphi,\phi)=-W(\varphi,\phi)}\quad&\le \sum_{l\in\Z^n}e^{|l|(\rho-\delta_0)}\sum_{0\ne k\in\Z^ d}|\widehat W(k,l)|\Big|\int_\R e^{\rmi k\cdot\nu x}\la x\ra^\mu h_i(x)h_j(x)dx\Big|\\
\underline{ \text{Lemma} ~ \ref{mainlem}}\quad&\le C\sum_{l\in\Z^n}e^{|l|(\rho-\delta_0)}\sum_{k\ne0}|\widehat W(k,l)|(|k\cdot\nu|\vee|k\cdot\nu|^{-1})(ij)^\beta\\
\underline{ A2}\quad&\le\frac C{\bar\gamma}\sum_{l\in\Z^n}e^{|l|(\rho-\delta_0)}\sum_{k\ne0}|\widehat W(k,l)||k|^{1\vee\tau_1}(ij)^\beta\\
\underline{A3}\quad&\le\frac C{\bar\gamma}\sum_{l\in\Z^n}e^{|l|(\rho-\delta_0)}\sum_{k\ne0}\frac{e^{-|l|\rho}|k|^{1\vee\tau_1}(ij)^\beta}{\la k_1\ra^{[1\vee\tau_1]+ d+2}\cdots\la k_d\ra^{[1\vee\tau_1]+ d+2}}\\
\underline{\exists~i_0\in\{1,\cdots,d\},~|k_{i_0}|\ge\frac{|k|}{d}}\quad&\le\frac{C}{\bar\gamma}(ij)^\beta.
\end{align*}
It follows  $P(\phi)$ is an analytic map from $\T_{s}^n$ into $\M_\beta$ with $0\leq 2\beta<\frac{\l-1}{\l+1}$. The rest is similar.
\end{proof}

\begin{remark}
From Assumption {\rm A1} we can show that when $|x|\geq R_0>0$ large enough, Assumption 1.1 in Lemma \ref{mainlem} is satisfied for the potential $\cal V(x)$ and thus, we can apply it  in the above proof.
From Lemma \ref{yanzhengB2} we prove that  B2 is satisfied for the equation (\ref{redueqn}).
\end{remark}
In the following we will show that B3 is satisfied. Following \cite{BamII, BamIII}, we have
\begin{lemma}\label{CVT}
Let $g\in S^{m_1,m_2}$, then one has $$g^{w}(x,-\rmi\px)\in\B(\H^{s+s_1},\H^s),\quad\forall~s\in\R,~\forall~s_1\ge m_1+[m_2]\quad
\text{with}\quad[m_2]:=m_2\vee0.$$
\end{lemma}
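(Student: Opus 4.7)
The strategy is to identify the claim with an $L^2$-boundedness statement and then exploit the Weyl symbolic calculus adapted to the weight $\lambda(x,\xi)$ that is developed in \cite{BamII,BamIII}. Setting $\Lambda:=\lambda^w(x,-\rmi\px)$, the $\H^s$ norm equals $\|\Lambda^{s(\l+1)}\,\cdot\,\|_{L^2}$, so $g^w\in\B(\H^{s+s_1},\H^s)$ is equivalent to the $L^2$-boundedness of
\[
T:=\Lambda^{s(\l+1)}\,g^w(x,-\rmi\px)\,\Lambda^{-(s+s_1)(\l+1)}.
\]
Hence it suffices to realise $T$ as the Weyl quantization of a symbol lying in a class whose quantization is bounded on $L^2$.

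For the symbolic computation I would first verify that $\lambda\in S^{1,0}$, that the metric $|dx|^2+\lambda^{-2\l}|d\xi|^2$ associated with the class $S^{\bullet,\bullet}$ is a Hörmander metric (slowly varying, temperate, satisfying the uncertainty principle), and that for any $a\in\R$ the operator $\Lambda^a$ admits a Weyl symbolic expansion with principal part $\lambda^a\in S^{a,0}$ and remainders of strictly lower order. This is the standard construction obtained by applying the Helffer--Sjöstrand functional-calculus formula to the self-adjoint operator $\Lambda$ within this calculus. Combined with the Weyl composition rule, which sends $S^{m_1,m_2}\cdot S^{m'_1,m'_2}$ into $S^{m_1+m'_1,m_2+m'_2}$ with principal part the pointwise product, one obtains that $T$ is a Weyl pseudodifferential operator whose full symbol lies in $S^{m_1-s_1(\l+1),m_2}$, with principal part $g\cdot\lambda^{-s_1(\l+1)}$.

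The uniform control of this symbol then follows from the elementary pointwise inequality $\la x\ra\le C\lambda(x,\xi)$, a consequence of $\lambda^{2\l}\ge 1+|x|^{2\l}$. This gives $|g|\le C\lambda^{m_1+[m_2]}$, whence, using $\l+1>1$ and $s_1\ge m_1+[m_2]$,
\[
\bigl|g\,\lambda^{-s_1(\l+1)}\bigr|\le C\,\lambda^{m_1+[m_2]-s_1(\l+1)}\le C,
\]
and the analogous computation for $\partial_\xi^{k_1}\partial_x^{k_2}(g\lambda^{-s_1(\l+1)})$, via Leibniz together with the $S^{m_1,m_2}$ derivative bounds, shows that the symbol of $T$ sits uniformly inside $S^{0,0}$ (in fact in a better-decaying subclass). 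Applying the Calderón--Vaillancourt type theorem of this calculus, established in \cite{BamII}, yields $\|T\|_{\B(L^2)}\le C$ and concludes the argument. I expect the main technical obstacle to be the derivation and uniform control of the symbolic expansion of $\Lambda^a$: since $\Lambda$ is a genuine self-adjoint pseudodifferential operator, one must verify through the functional-calculus step that the resulting symbol and all its remainders obey the correct $S^{\bullet,\bullet}$ estimates so that the triple composition stays inside the scale; once this is secured, the rest of the proof reduces to bookkeeping of symbol orders and a direct appeal to the $L^2$-boundedness theorem.
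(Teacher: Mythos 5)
First, a point of comparison: the paper itself offers no proof of Lemma \ref{CVT} at all --- it is simply quoted from \cite{BamII,BamIII} --- so your plan (conjugation by powers of $\lambda^w$, the Weyl calculus attached to the weight $\lambda$, a symbolic description of $[\lambda^w]^a$ via functional calculus, then an $L^2$-boundedness theorem for order-zero symbols) is indeed the route of the cited references, and you correctly single out the symbolic expansion of $[\lambda^w]^a$ as the only genuinely technical input.

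However, your final exponent estimate has a real gap. From $s_1\ge m_1+[m_2]$ you conclude $m_1+[m_2]-s_1(\l+1)\le 0$ ``using $\l+1>1$''; this implication holds only when $s_1\ge 0$. If $m_1+[m_2]<0$ and $s_1$ is negative (which the statement allows, e.g.\ $s_1=m_1+[m_2]=-1$, $\l=2$), the exponent equals $-1+3=2>0$, the principal symbol $g\,\lambda^{-s_1(\l+1)}$ is unbounded, the full symbol of $T$ does not lie in $S^{0,0}$, and the Calder\'on--Vaillancourt step cannot be invoked. This is not repairable by a sharper estimate: with this paper's normalization $\H^s=D([\lambda^w]^{s(\l+1)})\simeq\l_{2s}^2$, the statement itself fails for negative orders --- take $g=\lambda^{-a}$ with $a>0$, whose quantization agrees with $[\lambda^w]^{-a}$ up to lower-order terms; membership in $\B(\H^{-a},\H^0)$ would force $[\lambda^w]^{a\l}$ to be bounded on $L^2$, which is false. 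The lemma is consistent (and is only ever applied in the paper, cf.\ Lemmas \ref{l0ldeltanorm} and \ref{xmuxingzhi}, where $\la x\ra^\mu\in OPS^{\mu,0}$ and order-zero operators are used) in the regime $s_1\ge 0$, where $s_1(\l+1)\ge s_1\ge m_1+[m_2]$ and your computation does go through. So either restrict your argument (and the statement) to $s_1\ge0$, or work with the normalization of $\H^s$ used in \cite{BamII,BamIII}; as written, the quoted step is incorrect. A minor further slip: the H\"ormander metric adapted to $S^{m_1,m_2}$ is $\la x\ra^{-2}|dx|^2+\lambda^{-2\l}|d\xi|^2$, not $|dx|^2+\lambda^{-2\l}|d\xi|^2$, since the class gains one power of $\la x\ra$ per $x$-derivative.
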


\begin{Definition}
An operator $G$ will be said to be pseudodifferential of class $OPS^{m_1,m_2}$ if there
exists a symbol $g\in S^{m_1,m_2}$ such that $G = g^w(x,-\rmi\px)$.
\end{Definition}

\begin{Definition}
An operator $F$ will be said to be a pseudodifferential operator of class $\cal O^{m_1,m_2}$
if there exists a sequence $f\in S^{m_1^{(j)}, m_2^{(j)}}$ with $m_1^{(j)}+[m_2^{(j)}]\leq m_1^{(j-1)}+[m_2^{(j-1)}] $ and,  for any $\kappa$
there exist $N$ and an operator $R_N\in \cal{B}(\H^{s-\kappa}, \H^s)$ for any $s$ such that
$F=\sum\limits_{j=1}^N f_j^{w}+R_N$.
\end{Definition}

\begin{lemma}\label{multiplythm}
If  $a(x,\xi)\in S^{m_1,m_2}$ and $b(x,\xi)\in S^{m_1',m_2'}$, then  $a\cdot b\in S^{m_1+m_1',m_2+m_2'}.$
\end{lemma}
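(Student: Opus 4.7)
The plan is to verify the defining symbol estimate \eqref{symnorm} for the product $a\cdot b$ by a direct application of the Leibniz rule, exploiting the fact that the two exponent losses $k_1\l$ (in the $\lambda$-power) and $k_2$ (in the $\langle x\rangle$-power) distribute additively across a product of derivatives. Specifically, for any $k_1, k_2 \geq 0$, I would expand
\[
\partial_\xi^{k_1}\partial_x^{k_2}(a\cdot b)
= \sum_{j_1=0}^{k_1}\sum_{j_2=0}^{k_2}\binom{k_1}{j_1}\binom{k_2}{j_2}\bigl(\partial_\xi^{j_1}\partial_x^{j_2} a\bigr)\bigl(\partial_\xi^{k_1-j_1}\partial_x^{k_2-j_2} b\bigr).
\]

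Next, I would invoke the hypotheses $a\in S^{m_1,m_2}$ and $b\in S^{m_1',m_2'}$ to estimate each factor separately: the first factor is bounded by $C_{j_1,j_2}\,\lambda(x,\xi)^{m_1-j_1\l}\langle x\rangle^{m_2-j_2}$ and the second by $C_{k_1-j_1,k_2-j_2}'\,\lambda(x,\xi)^{m_1'-(k_1-j_1)\l}\langle x\rangle^{m_2'-(k_2-j_2)}$. Multiplying and collecting exponents, every term in the Leibniz sum is dominated by
\[
C_{j_1,j_2,k_1,k_2}\,\lambda(x,\xi)^{(m_1+m_1')-k_1\l}\,\langle x\rangle^{(m_2+m_2')-k_2},
\]
which is precisely the bound required for $S^{m_1+m_1',\,m_2+m_2'}$. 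Summing over the finitely many indices $(j_1,j_2)$ absorbs the binomial constants into a single new constant $C_{k_1,k_2}''$.

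There is no real obstacle in this argument; it is a standard bookkeeping calculation. The only point that deserves verification is that the additive decomposition of the exponents is automatic: the cost $-j_1\l$ in $a$ plus the cost $-(k_1-j_1)\l$ in $b$ equals $-k_1\l$ independent of the split, and similarly for $\langle x\rangle$. Thus no interpolation, no use of the explicit form $\lambda(x,\xi)=(1+\xi^2+|x|^{2\l})^{1/(2\l)}$ beyond its role as a positive multiplicative weight, and no additional assumption on $\l$ is needed. The resulting family of constants $\{C_{k_1,k_2}''\}$ provides the semi-norms placing $a\cdot b$ in $S^{m_1+m_1',\,m_2+m_2'}$, which completes the proof.
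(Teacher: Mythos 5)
Your argument is correct: the Leibniz expansion together with the defining estimate \eqref{symnorm} for each factor gives exactly the bound $C\,\lambda(x,\xi)^{m_1+m_1'-k_1\l}\la x\ra^{m_2+m_2'-k_2}$ term by term, since the losses $j_1\l+(k_1-j_1)\l=k_1\l$ and $j_2+(k_2-j_2)=k_2$ add independently of the split. The paper itself states this lemma without proof, treating it as standard symbol calculus (as in the cited works of Bambusi), and your computation is precisely the canonical argument one would supply.
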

Given a symbol $g\in S^{m_1,m_2}$ we will write
\[
g\sim\sum_{j\ge0}g_j,\quad g_j\in S^{m_1^{(j)},m_2^{(j)}},\quad m_1^{(j+1)}+[m_2^{(j+1)}]\le m_1^{(j)}+[m_2^{(j)}],
\]
if $\forall~\kappa$ there exist $N$ and $r_N\in S^{-\kappa,0}$ such that $g=\sum_{j=0}^Ng_j+r_{N}$.
The following lemma is from \cite{BamI}.
\begin{lemma}\label{moyalproduct}
Given a couple of symbols $a\in S^{m_1,m_2}$  and $b\in S^{m_1',m_2'}$, then there exists a symbol $c$, denoted by $c = a\#b$
such that
\[
(a\#b)^w(x,D_x)=a^w(x,D_x)b^w(x,D_x),
\]
furthermore one has
\(
a\#b\sim\sum_{j\ge0}c_j
\)
with
\[
c_j=\sum_{k_1+k_2=j}\frac1{k_1!k_2!}(\frac12)^{k_1}(-\frac12)^{k_2}(\partial_\xi^{k_1}D_x^{k_2}a)(\partial_\xi^{k_2}D_x^{k_1}b)
\in S^{m_1+m_1'-j\l,m_2+m_2'-j}
\]
where $D_x=-\rmi\px$.
\end{lemma}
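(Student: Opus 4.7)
The plan is to derive the Moyal composition formula for the symbol class $S^{m_1,m_2}$ adapted to the weight $\lambda(x,\xi)=(1+\xi^2+|x|^{2\l})^{1/(2\l)}$, following the strategy used for standard Hörmander classes but carefully tracking the non-standard $(x,\xi)$-mixing in the weight. First I would start from the definition of the Weyl quantization: for test functions $\psi$,
\[
(a^w b^w\psi)(x)=\frac1{(2\pi)^2}\iiiint e^{\rmi(x-y)\xi+\rmi(y-z)\eta}a\Big(\tfrac{x+y}2,\xi\Big)b\Big(\tfrac{y+z}2,\eta\Big)\psi(z)\,dy\,d\xi\,dz\,d\eta .
\]
After the change of variables $y\mapsto\frac{x+z}2+y$, $\eta\mapsto\xi+\eta$ and interchanging the order of integration (justified on Schwartz class), one identifies the kernel of $a^w b^w$ as that of a Weyl operator with symbol
\[
(a\#b)(x,\xi)=\frac1{\pi^2}\iint e^{-2\rmi(z_1\zeta_2-z_2\zeta_1)}a(x+z_1,\xi+\zeta_1)\,b(x+z_2,\xi+\zeta_2)\,dz_1\,d\zeta_1\,dz_2\,d\zeta_2,
\]
which, formally, equals $\exp\bigl(\tfrac{\rmi}{2}(\partial_\xi^{(a)}\partial_x^{(b)}-\partial_x^{(a)}\partial_\xi^{(b)})\bigr)\,a(x,\xi)b(y,\eta)\big|_{y=x,\eta=\xi}$.

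Next I would expand the exponential in a finite Taylor series of order $N$ and group the terms by total order $j=k_1+k_2$. Reorganising the binomial coefficients from $(\partial_\xi^{(a)}\partial_x^{(b)}-\partial_x^{(a)}\partial_\xi^{(b)})^j$ and pulling out the factor $(\rmi/2)^j$ gives exactly the claimed expression
\[
c_j=\sum_{k_1+k_2=j}\frac1{k_1!k_2!}\Big(\tfrac12\Big)^{k_1}\Big(-\tfrac12\Big)^{k_2}(\partial_\xi^{k_1}D_x^{k_2}a)(\partial_\xi^{k_2}D_x^{k_1}b),
\]
where $D_x=-\rmi\partial_x$. The symbol-class membership of $c_j$ is then a direct verification: by definition of $S^{m_1,m_2}$ each $\partial_\xi$ lowers the first index by $\l$ (because $\lambda^{-\l}\sim\la\xi\ra^{-1}$ near the $\xi$-direction) and each $\partial_x$ lowers the second index by $1$, so $\partial_\xi^{k_1}D_x^{k_2}a\in S^{m_1-k_1\l,m_2-k_2}$ and $\partial_\xi^{k_2}D_x^{k_1}b\in S^{m_1'-k_2\l,m_2'-k_1}$. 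Lemma \ref{multiplythm} on products then gives
\[
c_j\in S^{m_1+m_1'-j\l,\;m_2+m_2'-j},
\]
as required, and the chain $m_1^{(j+1)}+[m_2^{(j+1)}]\le m_1^{(j)}+[m_2^{(j)}]$ is immediate since each bump in $j$ only decreases both indices.

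The decisive step, and the one I expect to be the main obstacle, is controlling the remainder
\[
r_N(x,\xi)=(a\#b)(x,\xi)-\sum_{j=0}^{N-1}c_j(x,\xi).
\]
Using Taylor's formula with integral remainder for the Fourier multiplier $\exp(\tfrac{\rmi}{2}\sigma(D))$ applied to $a(x,\xi)b(y,\eta)$ at $(y,\eta)=(x,\xi)$, $r_N$ is an $N$-fold oscillatory integral whose integrand carries $N$ derivatives distributed between $a$ and $b$. I would then run the standard integration-by-parts trick using the identities $(1-\Delta_{z_i,\zeta_i})^M e^{-2\rmi(z_1\zeta_2-z_2\zeta_1)}=(1+4|z_{3-i}|^2+4|\zeta_{3-i}|^2)^M e^{-2\rmi(z_1\zeta_2-z_2\zeta_1)}$ to gain decay in the $z$'s and $\zeta$'s, paying with derivatives on $a$ and $b$ (which lowers their symbol classes further); the price is cheap because of the $(1/j!)$ in the Taylor remainder. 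The subtlety here is that the weight $\lambda(x+z,\xi+\zeta)$ must be compared with $\lambda(x,\xi)$ after shifts, which I would handle by the usual slicing $|z|\le\la x\ra/2$, $|\zeta|\le\la\xi\ra/2$ versus the complementary region, combined with the bounds $\lambda(x+z,\xi+\zeta)\le C(1+|z|+|\zeta|)^C\lambda(x,\xi)$ and the symmetric lower bound on the slice. Choosing $N=N(\kappa)$ large enough, these estimates yield $r_N\in S^{-\kappa,0}$ and consequently $r_N^w\in\cal B(\H^{s-\kappa},\H^s)$ for any $s$ by Lemma \ref{CVT}, which is exactly the asymptotic statement claimed.
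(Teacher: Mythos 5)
Your argument is correct in outline, but note that the paper itself offers no proof of this lemma: it is quoted verbatim from \cite{BamI} (Bambusi's symbolic calculus for these anisotropic classes), so there is no internal proof to compare against, and what you have written is essentially the standard Weyl--H\"ormander composition argument that underlies the cited result. Your identification of the twisted-product integral, the Taylor expansion giving the stated $c_j$, and the class membership $c_j\in S^{m_1+m_1'-j\l,\,m_2+m_2'-j}$ (each $\partial_\xi$ costs $\lambda^{-\l}$ and each $\partial_x$ costs $\la x\ra^{-1}$ directly from the definition \eqref{symnorm}, then Lemma \ref{multiplythm}) are all fine; the parenthetical ``$\lambda^{-\l}\sim\la\xi\ra^{-1}$'' is only true where $|\xi|\gtrsim|x|^{\l}$, but you do not actually need it since the gain is built into the definition of $S^{m_1,m_2}$. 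Two points deserve to be made explicit in the remainder step. First, the temperance bound you invoke does hold here, since $\lambda(x+z,\xi+\zeta)^{2\l}\le C\,(1+\zeta^2+|z|^{2\l})\,\lambda(x,\xi)^{2\l}$, so the slicing/integration-by-parts scheme closes. Second, the paper's definition of $\sim$ demands $r_N\in S^{-\kappa,0}$, i.e.\ second index $0$, whereas your Taylor remainder naturally lands in (roughly) $S^{m_1+m_1'-N\l,\,m_2+m_2'-N}$; you should add the elementary inclusions $S^{m,m'}\subset S^{m,0}$ for $m'\le0$ and $S^{m,m'}\subset S^{m+m',0}$ for $m'\ge0$ (using $\la x\ra\le C\lambda$) to conclude that taking $N$ large enough indeed places $r_N$ in $S^{-\kappa,0}$, and then Lemma \ref{CVT} gives the mapping property required by the paper's notion of asymptotic sum.
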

From Lemma \ref{moyalproduct}, we have
\begin{Corollary}\label{opsmproduct}
If   $A\in OPS^{m_1,m_2},~B\in OPS^{m_1',m_2'}$, then $AB\in OPS^{m_1+m_1',m_2+m_2'}.$
\end{Corollary}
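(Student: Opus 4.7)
The plan is to reduce Corollary \ref{opsmproduct} to a direct reading of the Moyal product Lemma \ref{moyalproduct}. By the definition of $OPS^{m_1,m_2}$, one may write $A = a^w(x,-\rmi\px)$ and $B = b^w(x,-\rmi\px)$ with $a \in S^{m_1,m_2}$ and $b \in S^{m_1',m_2'}$. Applying Lemma \ref{moyalproduct} produces a symbol $c = a\#b$ such that $AB = c^w(x,-\rmi\px)$, together with the asymptotic expansion $c \sim \sum_{j\ge 0} c_j$ where $c_j \in S^{m_1+m_1'-j\l,\,m_2+m_2'-j}$.

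The next step is to verify that the symbol $c$ itself lies in $S^{m_1+m_1',m_2+m_2'}$. For the leading term, $c_0 = ab$ belongs to $S^{m_1+m_1',m_2+m_2'}$ by Lemma \ref{multiplythm}. Each higher-order term $c_j$ (for $j \ge 1$) lives in a class with both exponents strictly smaller; since $\lambda(x,\xi) \ge 1$ and $\langle x\rangle \ge 1$, the defining estimate \eqref{symnorm} gives the monotonicity $S^{a',b'} \subseteq S^{a,b}$ whenever $a' \le a$ and $b' \le b$, so in particular $c_j \in S^{m_1+m_1',m_2+m_2'}$ for every $j \ge 0$.

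It then remains to absorb the smoothing tail $r_N$ provided by the asymptotic expansion. For any $\kappa$ there exist $N$ and $r_N \in S^{-\kappa,0}$ with $c = \sum_{j=0}^N c_j + r_N$; taking $\kappa$ large ensures that $r_N$ is negligible in the sense of the calculus, giving $c \in S^{m_1+m_1',m_2+m_2'}$ and hence $AB \in OPS^{m_1+m_1',m_2+m_2'}$.

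The only subtlety I foresee is in the handling of $r_N$: the inclusion $S^{-\kappa,0} \subseteq S^{m_1+m_1',m_2+m_2'}$ fails mechanically when $m_2+m_2' < 0$, since the class $S^{-\kappa,0}$ imposes no decay in $x$. I expect this to be a cosmetic issue, resolvable either by noting that the asymptotic equivalence $\sim$ in Lemma \ref{moyalproduct} is defined modulo smoothing remainders in $\B(\H^{s-\kappa},\H^s)$ (which is exactly the $\cal O^{m_1,m_2}$ convention of the text), or by a Borel-type resummation adapted to the bi-indexed classes $S^{m_1,m_2}$. Apart from this bookkeeping, the corollary is an immediate consequence of Lemma \ref{moyalproduct} combined with Lemma \ref{multiplythm}, and no new analytic input is required.
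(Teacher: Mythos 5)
Your argument is correct and is essentially the route the paper intends: Corollary \ref{opsmproduct} is presented as an immediate consequence of Lemma \ref{moyalproduct} (with Lemma \ref{multiplythm} handling the leading term $c_0=ab$), which is exactly what you spell out. The remainder issue you flag is harmless and needs no resummation: since $\lambda(x,\xi)\ge c\la x\ra$, one has $S^{-\kappa,0}\subset S^{m_1+m_1',m_2+m_2'}$ as soon as $\kappa\ge -(m_1+m_1'+m_2+m_2')$, so fixing one such large $\kappa$ already places $r_N$, and hence $c=a\#b$, in the target class.
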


\begin{lemma}\label{l0ldeltanorm}
 If $0\leq \mu\leq \delta(\l+1)$, then
$(\lambda^{w}(x,D_x))^{-\delta(\ell+1)} \la x\ra^{\mu} \in \cal B(\cal{H}^0)$.
\end{lemma}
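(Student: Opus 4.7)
The plan is to exhibit $(\lambda^{w}(x,D_x))^{-\delta(\l+1)}\la x\ra^{\mu}$ as a pseudodifferential operator in a suitable class $OPS^{m_1,m_2}$ and then invoke Lemma \ref{CVT} to deduce $L^{2}$-boundedness under exactly the hypothesis $\mu\le\delta(\l+1)$.

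First I would observe that the multiplication operator $\la x\ra^{\mu}$ is the Weyl quantization of its own symbol $\la x\ra^{\mu}$, which is independent of $\xi$. A direct check gives $|\partial_\xi^{k_1}\partial_x^{k_2}\la x\ra^{\mu}|\le C_{k_1,k_2}\la x\ra^{\mu-k_2}$ with the $\xi$-derivatives vanishing for $k_1\ge1$, so $\la x\ra^{\mu}\in S^{0,\mu}$. On the other hand $\lambda(x,\xi)\in S^{1,0}$, and by the standard functional calculus for the elliptic, positive, self-adjoint operator $\lambda^{w}(x,D_x)$ (which underlies the very definition of the scale $\H^{s}$ used in the paper), its real power $[\lambda^{w}(x,D_x)]^{-\delta(\l+1)}$ belongs to $OPS^{-\delta(\l+1),0}$ with principal symbol $\lambda^{-\delta(\l+1)}\in S^{-\delta(\l+1),0}$.

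Next, by Corollary \ref{opsmproduct} applied to $A=[\lambda^{w}]^{-\delta(\l+1)}\in OPS^{-\delta(\l+1),0}$ and $B=\la x\ra^{\mu}=\text{Op}^{w}(\la x\ra^{\mu})\in OPS^{0,\mu}$, one obtains
\[
[\lambda^{w}(x,D_x)]^{-\delta(\l+1)}\la x\ra^{\mu}\in OPS^{-\delta(\l+1),\mu}.
\]
Now I would apply Lemma \ref{CVT} with $m_1=-\delta(\l+1)$ and $m_2=\mu\ge 0$, so that $[m_2]=\mu$. The lemma yields boundedness $\H^{s+s_1}\to\H^{s}$ for every $s\in\R$ and every $s_1\ge m_1+[m_2]=-\delta(\l+1)+\mu$. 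Under the hypothesis $\mu\le\delta(\l+1)$ the right-hand side is $\le 0$, so the choice $s=0$, $s_1=0$ is admissible; this gives the desired membership in $\cal B(\H^{0})=\cal B(L^{2})$.

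The only non-routine point is the appeal to the functional calculus asserting that $[\lambda^{w}]^{-\delta(\l+1)}$ is itself pseudodifferential of order $-\delta(\l+1)$; this is a classical consequence of the ellipticity of $\lambda^{w}$ in the Weyl--H\"ormander symbol class adapted to the weight $\lambda$, as used throughout \cite{BamII,BamI,BamIII}, and it can be taken as a black box here. Everything else reduces to a one-line application of the composition rule (Corollary \ref{opsmproduct}) and the mapping property (Lemma \ref{CVT}), with the constraint $\mu\le\delta(\l+1)$ appearing precisely as the condition $s_1\ge m_1+[m_2]$ with $s_1=0$.
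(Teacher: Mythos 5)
Your overall route is the same as the paper's: realize the operator as a composition in the paper's symbol classes and conclude with Lemma \ref{CVT}, with the hypothesis $\mu\le\delta(\l+1)$ entering exactly through $s_1\ge m_1+[m_2]$ with $s_1=0$. The one point of divergence is precisely your black box: you assert $[\lambda^{w}(x,D_x)]^{-\delta(\l+1)}\in OPS^{-\delta(\l+1),0}$, i.e.\ that the negative power is \emph{exactly} the Weyl quantization of a symbol in $S^{-\delta(\l+1),0}$. The framework the paper sets up (and what it actually invokes) only gives the weaker membership $(\lambda^{w}(x,D_x))^{-\delta(\l+1)}\in\mathcal O^{-\delta(\l+1),0}$: for each $\kappa$ one writes it as $\sum_{j\le N}f_j^{w}+R_N$, where the finite sum lies in $OPS^{-\delta(\l+1),0}$ but $R_N$ is only known to be a bounded remainder in $\B(\H^{-\kappa},\H^{0})$, not the quantization of a symbol in the class. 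Consequently the paper's proof has a second step that your argument omits: taking $\kappa=\mu$, it composes the pseudodifferential part with $\la x\ra^{\mu}$ just as you do (it places $\la x\ra^{\mu}$ in $OPS^{\mu,0}$ rather than your $OPS^{0,\mu}$, which is immaterial for Lemma \ref{CVT} since $m_1+[m_2]=\mu-\delta(\l+1)\le 0$ either way), and then checks separately that $R_N\la x\ra^{\mu}\in\B(\H^{0})$, which follows because $\la x\ra^{\mu}\in\B(\H^{0},\H^{-\mu})$ by Lemma \ref{CVT} and $R_N\in\B(\H^{-\mu},\H^{0})$. So you should either justify the exact-power claim independently (the functional calculus cited in this setting is stated only modulo such smoothing remainders, which is why the class $\mathcal O^{m_1,m_2}$ is introduced at all) or insert this two-line remainder estimate; with that addition your proof coincides with the paper's.
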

\begin{proof}
It is clear that
\begin{eqnarray}\label{zhongyaoguancha}
(\lambda^{w}(x,D_x))^{-\delta(\ell+1)} \in \cal O^{-\delta(\ell+1),0}.
\end{eqnarray} From the definition, there exists $N$ and an operator
$R_N\in \cal{B}(\cal H^{-\mu}, \cal H^0)$ and $(\lambda^{w}(x,D_x))^{-\delta(\ell+1)} =\sum\limits_{j\geq 0}^{N}f_j^{w}+R_N$.  From (\ref{zhongyaoguancha}) we have
$\sum\limits_{j\geq 0}^{N}f_j^{w}\in OPS^{-\delta(\l+1), 0}$. On the other hand, $\la x\ra^{\mu}\in OPS^{\mu,0}$ since $\mu\geq 0$(see \cite{BamIII}).  From Lemma \ref{CVT}, it follows
$\la x\ra^{\mu}\in \cal{B}(\cal H^0, \cal H^{-\mu})$ and thus $(\sum\limits_{j\geq 0}^Nf_j^{w} ) \cdot \la x\ra^{\mu}\in OPS^{\mu-\delta(\l+1),0}\subset OPS^{0,0}$ by $\mu\leq \delta(\l+1)$. Therefore,
$(\sum\limits_{j\geq 0}^{N}f_j^{w}) \cdot  \la x\ra^{\mu}\in \B(\H^{0})$. For the second part it is easy to check that
$R_{N} \la x\ra^{\mu} \in \cal B(\H^{0})$.  Combining with the two parts we finish the proof.
\end{proof}

\begin{lemma}\label{xmuxingzhi}
If $0\leq \mu\leq \delta(\l+1)$, then the multiplication operator $\la x\ra^\mu\in\cal{B}(\H^0,\H^{-\delta})$.
\end{lemma}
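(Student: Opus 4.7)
The plan is to reduce the claim directly to Lemma \ref{l0ldeltanorm}. Recall that $\H^0=L^2$ and, by construction, $\H^\delta=D([\lambda^{w}(x,-\rmi\partial_x)]^{\delta(\l+1)})$ with the graph norm. Since $\H^{-\delta}$ is defined as the dual of $\H^\delta$ and the weight operator $\lambda^{w}(x,-\rmi\partial_x)$ is positive and self-adjoint on $L^2$, functional calculus yields the equivalent norm
\[
\|u\|_{\H^{-\delta}}\sim \|[\lambda^{w}(x,-\rmi\partial_x)]^{-\delta(\l+1)}u\|_{L^2}.
\]

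First I would record this equivalent norm and observe that boundedness of $\la x\ra^\mu:\H^0\to\H^{-\delta}$ is therefore the same as boundedness of the composite operator
\[
[\lambda^{w}(x,-\rmi\partial_x)]^{-\delta(\l+1)}\,\la x\ra^\mu:L^2\longrightarrow L^2.
\]
Under the hypothesis $0\le\mu\le\delta(\l+1)$, this is exactly the content of Lemma \ref{l0ldeltanorm}: the operator $(\lambda^{w}(x,D_x))^{-\delta(\l+1)}\la x\ra^\mu$ lies in $\B(\H^0)$. Applying this bound gives, for every $f\in L^2$,
\[
\|\la x\ra^\mu f\|_{\H^{-\delta}}\le C\,\|[\lambda^{w}(x,-\rmi\partial_x)]^{-\delta(\l+1)}\la x\ra^\mu f\|_{L^2}\le C\,\|f\|_{L^2},
\]
which is the desired conclusion.

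The only mildly non-trivial point is the justification of the dual norm identification of $\H^{-\delta}$ via the functional calculus of $\lambda^{w}$; apart from that, the lemma is an immediate corollary of Lemma \ref{l0ldeltanorm}, since the balance condition $\mu\le\delta(\l+1)$ is exactly what is needed to absorb the growth $\la x\ra^\mu$ in $OPS^{\mu,0}$ by the smoothing factor in $OPS^{-\delta(\l+1),0}$.
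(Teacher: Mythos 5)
Your argument is correct and follows essentially the same route as the paper: both reduce the statement to Lemma \ref{l0ldeltanorm} through the duality between $\H^{-\delta}$ and $\H^{\delta}$. The paper simply carries out your ``mildly non-trivial point'' explicitly, writing $\|\la x\ra^\mu f\|_{\H^{-\delta}}$ as $\sup_{\|g\|_{\H^\delta}=1}\big|\la[\lambda^w]^{\delta(\l+1)}g,[\lambda^w]^{-\delta(\l+1)}\la x\ra^\mu f\ra\big|$ and applying the Cauchy--Schwarz inequality, which is exactly the dual-norm identification you invoke via functional calculus.
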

\begin{proof}
From the self-adjointness, H\"older inequality and Lemma  \ref{l0ldeltanorm} one has
\begin{align*}
&\|\la x\ra^\mu\|_{\B(\H^0, \H^{-\delta})}=\sup_{\|f\|_{\H^0}=1}\|\la x\ra^\mu f\|_{\H^{-\delta}}
=\sup_{\substack{\|f\|_{\H^0}=1,\\ \|g\|_{\H^\delta}=1\phantom,}} \big|\la g,\la x\ra^\mu f\ra\big|\\
=&~\sup_{\substack{\|f\|_{\H^0}=1,\\ \|g\|_{\H^\delta}=1\phantom,}}
\big|\la[\lambda^w]^{\delta(\l+1)} g,[\lambda^w]^{-\delta(\l+1)}\la x\ra^\mu f\ra\big| \\
\le&~\sup_{\substack{\|f\|_{\H^0}=1,\\ \|g\|_{\H^\delta}=1\phantom,}}
\|[\lambda^w]^{\delta(\l+1)} g\|_{\H^0}\cdot\|[\lambda^w]^{-\delta(\l+1)}\la x\ra^\mu f\|_{\H^0} \\
\le&~\sup_{\|f\|_{\H^0}=1}\|[\lambda^w]^{-\delta(\l+1)}\la x\ra^\mu f\|_{\H^0}= \|[\lambda^w]^{-\delta(\l+1)}\la x\ra^\mu\|_{\B(\H^0)}<\infty.
\end{align*}
\end{proof}

\begin{lemma}\label{chengfasuanzi}
Suppose that $g(x,\phi)$ is continuous on $x\in\R$ and analytic on $\phi\in\T_s^n$ and there exists a positive constant $C>0$ such that
$
|g(x,\phi)|\le C $ on $(x,\phi)\in\R\times\overline{\T_s^n}$,
then if $0\leq \mu\leq \delta(\l+1)$, for any $\phi\in \T_s^n$, $\la x\ra^\mu g(x,\phi)$ is an analytic map from $\T_s^n$ to $ \B( \H^0,\H^{-\delta})$. On the other hand, if
$\mu<0$, then $\la x\ra^\mu g(x,\phi)$ is an analytic map from $\T_s^n$ to $ \B( \H^0)$.
\end{lemma}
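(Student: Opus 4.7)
\textbf{Proof plan for Lemma \ref{chengfasuanzi}.} The plan is to write the multiplication operator $M_{\langle x\rangle^\mu g(x,\phi)}$ as the composition $M_{\langle x\rangle^\mu}\circ M_{g(x,\phi)}$, handle each factor separately, and then transfer the $\phi$-analyticity through a Fourier expansion on $\T^n$. Throughout, I will use that multiplication by a bounded function on $\R$ is trivially bounded on $L^2=\H^0$ with operator norm equal to its $L^\infty$ norm.

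Consider first the case $0\le\mu\le\delta(\l+1)$. For each fixed $\phi\in\overline{\T_s^n}$, the function $g(\cdot,\phi)$ is continuous on $\R$ and bounded by $C$, so the multiplier $M_{g(x,\phi)}$ maps $\H^0$ into itself with $\|M_{g(x,\phi)}\|_{\cal B(\H^0)}\le C$. By Lemma \ref{xmuxingzhi}, $M_{\langle x\rangle^\mu}\in\cal B(\H^0,\H^{-\delta})$. Composing,
\[
\|M_{\langle x\rangle^\mu g(x,\phi)}\|_{\cal B(\H^0,\H^{-\delta})}\le\|M_{\langle x\rangle^\mu}\|_{\cal B(\H^0,\H^{-\delta})}\cdot C,
\]
uniformly in $\phi$. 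In the case $\mu<0$, the weight $\langle x\rangle^\mu$ is bounded by $1$, so $M_{\langle x\rangle^\mu g(x,\phi)}\in\cal B(\H^0)$ with norm $\le C$. This establishes pointwise boundedness.

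For analyticity in $\phi$, I expand $g$ in a Fourier series on the torus:
\[
g(x,\phi)=\sum_{k\in\Z^n}\hat g_k(x)e^{\rmi k\cdot\phi},\qquad \hat g_k(x)=\frac{1}{(2\pi)^n}\int_{\T^n}g(x,\varphi)e^{-\rmi k\cdot\varphi}d\varphi.
\]
Since $g(x,\cdot)$ extends analytically to $\T_s^n$ with $|g(x,\phi)|\le C$, a standard contour shift in each coordinate gives the uniform estimate $\sup_{x\in\R}|\hat g_k(x)|\le Ce^{-|k|s'}$ for every $s'<s$. Then for $|\Im\phi|\le s'$, the partial sums $\sum_{|k|\le N}\hat g_k(x)\langle x\rangle^\mu e^{\rmi k\cdot\phi}$ form a Cauchy sequence in $\cal B(\H^0,\H^{-\delta})$ (respectively $\cal B(\H^0)$ in the $\mu<0$ case), because each term is bounded in operator norm by $Ce^{-|k|s'}e^{|k||\Im\phi|}\|M_{\langle x\rangle^\mu}\|_{\cal B(\H^0,\H^{-\delta})}$ and this is summable. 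The limit equals $M_{\langle x\rangle^\mu g(x,\phi)}$ (the series converges pointwise in $x$, hence in $L^2$ after multiplying test functions), so $\phi\mapsto M_{\langle x\rangle^\mu g(x,\phi)}$ is given by a norm-convergent power/Fourier series on every strictly smaller strip, which is precisely analyticity into the operator space.

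The only step that requires a little care is the exchange of the Fourier summation in $\phi$ with the action of the multiplication operator on $\H^0$. I would justify this by fixing $f\in\H^0$, showing $\sum_k \hat g_k(x)\langle x\rangle^\mu e^{\rmi k\cdot\phi}f(x)$ converges in $\H^{-\delta}$ (or $\H^0$) using the uniform bound $|\hat g_k(x)|\le Ce^{-|k|s'}$ and Lemma \ref{xmuxingzhi}, and noting that the limit coincides a.e.~with $\langle x\rangle^\mu g(x,\phi)f(x)$ by dominated convergence. The main conceptual point — and the reason the hypothesis $0\le\mu\le\delta(\l+1)$ appears — is embedded entirely in Lemma \ref{xmuxingzhi}; once that is invoked, the rest is a mechanical combination of a Fourier expansion with a uniform $L^\infty$ bound.
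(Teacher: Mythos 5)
Your proposal is correct and follows essentially the same route as the paper: factor the multiplication operator as $\la x\ra^\mu$ composed with the bounded multiplier $g(\cdot,\phi)$, invoke Lemma \ref{xmuxingzhi} for the weight (which is exactly where the hypothesis $0\le\mu\le\delta(\l+1)$ enters), treat $\mu<0$ trivially, and obtain $\phi$-analyticity from the analyticity of $g$ --- the paper dismisses this last point with ``the rest is clear,'' which you flesh out via the Fourier expansion in $\phi$. One small quantifier fix in that step: from boundedness on the closed strip the contour shift gives $\sup_x|\hat g_k(x)|\le Ce^{-s|k|}$, and you should then restrict to $|\Im\phi|\le s'<s$ so each term is $O(e^{-(s-s')|k|})$ and the operator-norm series is summable; as written, pairing decay $e^{-|k|s'}$ with $|\Im\phi|\le s'$ only yields bounded, not summable, terms.
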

\begin{proof}
From the boundedness of $g(x,\phi)$ on  $\phi\in \T_s^n$ and the definition, we can draw that the multiplication operator $g(x,\phi)\in \B(\H^0)$ on $\T_s^n$.   Together with Lemma \ref{xmuxingzhi} and $0\leq \mu\leq \delta(\l+1)$, one has
the multiplication operator $\la x\ra^{\mu}g(x,\phi) \in \B( \H^0,\H^{-\delta})$ for $\phi\in \T_s^n$.   The rest is clear.
\end{proof}

\noindent From Lemma \ref{chengfasuanzi}  it follows the map $\T^n\ni\phi~\mapsto~P(\phi)\in\lzd$ is analytic on $\T_s^n$  if  $0\leq \mu\leq \delta(\l+1)$.  As above we discuss two cases. When $\frac13\bigwedge\frac{\sqrt{\l^2+2}-\l}2\le\mu<(\l-\frac23)\bigwedge\frac{\sqrt{4\l^2-2\l+1}-1}2$,  one has $\beta=\frac\mu{2(\l+1)}-\frac1{2(\l+1)}(\frac13\bigwedge\frac{\mu+1}{2\mu+2\l+1})$. If choose $\delta=\frac{\l}{\l+1}$,  we have $\mu\leq \delta(\l+1)$,  $0\le\delta<\frac{2\l}{\l+1}-\beta-\frac12$ and  $2\beta\leq \delta$. When  $0\leq \mu<\frac13\bigwedge\frac{\sqrt{\l^2+2}-\l}2$,  one has $\beta=0$ and $\delta=\frac{\l}{\l+1}$.  If $\mu<0$, set $\beta=\delta=0$. This confirms the assumption B3. \\
\noindent Proof of Theorem \ref{mainthm1}.  As we mentioned above, the equation \eqref{introduction01} can be written as (\ref{redueqn}). Since all the assumptions B1 - B3 are checked,
we can use Theorem \ref{Theorem2.10} and Corollary \ref{tuilun2.12} to finish the proof. For details, see \cite{BG}.   \\
\noindent Proof of Theorem \ref{mainthm2}. It is similar. \\
\noindent Proof of Corollary \ref{maincorBam}. As above
\(\dss
P_i^j(\phi)=\int_\R\la x\ra^\mu g(x,\phi)h_i(x)h_j(x)dx.
\)
Clearly, for all $\phi\in\T_\rho^n$, there exists some positive constant $C$ such that $|\la x\ra^\mu g(x,\phi)|\le C(\mu)|x|^\mu$ is satisfied for $|x|\ge 1$ and $\mu\geq 0$. By  Lemma \ref{mainlem2} one has
\[
|P_i^j(\phi)|\le C\Big|\int_\R\la x\ra^\mu g(x,\phi) h_i(x)\overline{h_j(x)}dx\Big|\le C(ij)^{\frac\mu{2(\l+1)}}, \qquad 0\leq \mu<\l-1.
\]
Thus, we have $0\le2\beta = \frac{\mu}{\l+1}<\frac{\l-1}{\l+1}$.  By Lemma \ref{chengfasuanzi}, B3 is satisfied if we choose $\delta=\frac{\l}{\l+1}$. The following is similar as above.\\
\indent If $\mu<0$,  we set $\beta=0$ and $\delta=0$. The rest is similar. \qed

\section{Estimates on eigenfunctions}\label{section3}
In this section we will prove Lemma \ref{mainlem} and \ref{mainlem2} based on Langer's turning point method and oscillatory integrals. For the proof the rough idea is that we first rewrite the eigenfunction into
the sum of two different functions, and then use Lemma \ref{oscint} to estimate the relative integrals, if necessary.
\subsection{Langer's turning point and the new form of the eigenfunctions}
\indent Consider the function
\begin{eqnarray}\label{tezhengfangcheng}
h_n^{\prime\prime}(x)+(\lambda_n-V(x))h_n(x)=0,\quad x\geq 0,
\end{eqnarray}
where $V(x)$ satisfies Assumption \ref{poteass}.   From Lemma \ref{potesim} there exists a positive constant $R\ge 2\widetilde{R}\geq 2R_0$ such that the following conditions are satisfied:
\begin{flalign}
(\rm i).\phantom{\rm ii}&~V(x)\le xV'(x),\quad\text{for}~x\in[{\txs\frac R2},\infty),&\label{potecd1}\\
(\rm ii).\phantom{\rm i}&~|V(x)|<V(R),\quad\text{for}~x\in[0,R).&\label{potecd2}
\end{flalign}
Let $n_0 : = \min\big\{n\in\Z_+|\lambda_n\ge V(R)\big\}$ and $\lambda_n=V(X_n)$ for $n\geq n_0$. From the above, $X_n$ is unique when $n\geq n_0$ as the figure \ref{potefig} below.
\begin{figure}[H]
\centering
\epsfig{file=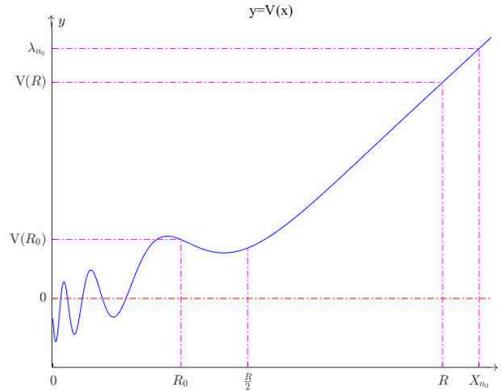,width=0.45\textwidth}
\caption{Potential Function}\label{potefig}
\end{figure}
\begin{lemma}\label{eigenfunction}
For $n\ge n_0$, if $x\ge0$, we have
\begin{equation}\label{eigenfunexp}
h_n(x)=\psi_1^{(n)}(x)+\psi_2^{(n)}(x)\quad\text{with}\quad|\psi_2^{(n)}(x)|\le\frac{C}{X_n^{\l+1}}|\psi_1^{(n)}(x)|,
\end{equation}
where
\begin{align}
\psi^{(n)}_1(x)&:=C_n(\lambda_n-V(x))^{-\frac{1}{4}}\Big(\frac{\pi\zeta_n}{2}\Big)^{\frac{1}{2}}H_{\frac{1}{3}}^{(1)}(\zeta_n)\label{eigenfunmain}
\end{align}
and
\(\dss
C_n\sim X_n^{\frac{\l-1}2},~\zeta_n(x)=\int_{X_n}^x(\lambda_n-V(t))^\frac12dt~\text{with}~\arg\zeta_n(x)=
\begin{cases}
\frac\pi2,&x>X_n,\\
-\pi,&x<X_n.
\end{cases}
\)
\end{lemma}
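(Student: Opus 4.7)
\textbf{Proof plan for Lemma \ref{eigenfunction}.} The plan is to apply the Langer turning point method with Bessel order $\nu=\tfrac13$, which is the canonical comparison equation for a simple turning point of the ODE \eqref{tezhengfangcheng}. First I would introduce the Langer phase $\zeta_n(x)=\int_{X_n}^x(\lambda_n-V(t))^{1/2}dt$ (with the branches specified in the statement, so that $\zeta_n$ is a decreasing real number for $x<X_n$ and lies on the positive imaginary axis for $x>X_n$), and define the candidate
\[
\psi_1^{(n)}(x)=C_n\,(\lambda_n-V(x))^{-1/4}\Big(\tfrac{\pi\zeta_n}{2}\Big)^{1/2}H^{(1)}_{1/3}(\zeta_n).
\]
The motivation is classical: the ansatz $h=(\lambda_n-V)^{-1/4}(\zeta)^{1/2}w(\zeta)$ transforms $h''+(\lambda_n-V)h=0$ into a perturbed Bessel equation of order $\frac13$,
\[
w''(\zeta)+\Big(1-\frac{\nu^2-1/4}{\zeta^2}\Big)w(\zeta)=\theta_n(x(\zeta))\,w(\zeta),\quad \nu=\tfrac13,
\]
where the remainder $\theta_n$ is an explicit rational expression in the first and second derivatives of $V$ (the ``Langer error'', cf.\ Chapter 22.27 of \cite{T2}). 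Because $\sqrt{\pi\zeta/2}\,H^{(1)}_{1/3}(\zeta)$ is an exact solution of the unperturbed equation with the correct exponentially decreasing behaviour as $\zeta\to i\infty$ (i.e.\ as $x\to+\infty$, beyond $X_n$), and because $h_n\in L^2$ forces decay at $+\infty$, this is the correct leading term.

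Next I would compute $\mathcal L\psi_1^{(n)}:=(\psi_1^{(n)})''+(\lambda_n-V)\psi_1^{(n)}$ explicitly and identify it with a pointwise remainder $r_n(x)\psi_1^{(n)}(x)$, where $r_n$ is built from $V',V'',V'''$ through the change-of-variable formulas for $\zeta_n$. Using Assumption \ref{poteass}(ii) to control $|xV^{(j)}|\le C|V^{(j-1)}|$, and the scaling $V(x)\sim|x|^{2\l}$, one gets $r_n(x)=O(X_n^{-2(\l+1)})$ uniformly on $x\ge 0$, with the decay controlled cleanly both in the oscillatory region $x<X_n$ and in the decaying region $x>X_n$ (the Hankel-function prefactor neutralises the apparent singularity at the turning point itself). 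The remainder term $\psi_2^{(n)}:=h_n-\psi_1^{(n)}$ then satisfies the inhomogeneous equation
\[
\mathcal L\psi_2^{(n)}=-r_n\psi_1^{(n)}+\text{(homogeneous correction to enforce normalization and decay)}.
\]
Solving it by variation of parameters, using the pair $\sqrt{\pi\zeta_n/2}H^{(1,2)}_{1/3}(\zeta_n)$ as a fundamental system (and using the Wronskian identity for Hankel functions of order $1/3$), I would set up a Volterra-type integral equation from $x=+\infty$ inward and close it by contraction on the weighted space with norm $\sup_x |\psi_1^{(n)}(x)|^{-1}|\cdot|$; the small parameter controlling the contraction is exactly $\sup|r_n|\cdot\|\zeta_n\|$, which (after integrating once through the turning-point scale) produces the claimed relative bound $|\psi_2^{(n)}|\le CX_n^{-(\l+1)}|\psi_1^{(n)}|$.

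Finally, to pin down $C_n\sim X_n^{(\l-1)/2}$, I would use the large-argument asymptotic $\sqrt{\pi\zeta/2}H^{(1)}_{1/3}(\zeta)\sim e^{i(\zeta-\frac{5\pi}{12})}$ for $\zeta\in\mathbb R_+$ large, so that in the oscillatory bulk $x\in[R,X_n-\delta X_n]$ one has $|\psi_1^{(n)}(x)|^2\sim C_n^2(\lambda_n-V(x))^{-1/2}$. The $L^2$-normalization $\|h_n\|_{L^2(\mathbb R)}=1$, together with the bound on $\psi_2^{(n)}$ and the fact that $h_n$ decays super-exponentially for $x\gg X_n$, reduces to the WKB integral
\[
1\sim C_n^2\int_{0}^{X_n}(\lambda_n-V(x))^{-1/2}\,dx \sim C_n^2\,X_n^{1-\l},
\]
where the last equivalence uses $V(x)\sim|x|^{2\l}$ and the standard estimate for this turning-point integral (the contribution of the boundary layer of size $O(X_n^{-(\l-1)/3})$ around $X_n$ is lower order). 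This yields $C_n\sim X_n^{(\l-1)/2}$. The main obstacle I anticipate is the fourth step: verifying that the error bound on $r_n$ survives the passage through the turning point and gives the clean relative estimate $X_n^{-(\l+1)}$ rather than a weaker one—this requires carefully balancing the singular factor $(\lambda_n-V)^{-1/4}$ against the vanishing of $\sqrt{\zeta_n}H^{(1)}_{1/3}(\zeta_n)$ at $\zeta_n=0$, and using the explicit form of the first two Langer correction coefficients to see cancellation at leading order.
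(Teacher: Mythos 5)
Your proposal follows essentially the same route as the paper's proof: the Langer substitution reducing \eqref{tezhengfangcheng} to a perturbed Bessel equation of order $\tfrac13$, a Volterra integral equation set up from $x=+\infty$ with the pair $\sqrt{\pi\zeta/2}\,J_{1/3},\ \sqrt{\pi\zeta/2}\,H^{(1)}_{1/3}$ and solved by iteration, with the contraction parameter of size $O\big(1/(X_n\lambda_n^{1/2})\big)=O(X_n^{-(\l+1)})$, exactly as in Titchmarsh and Yajima--Zhang. The only differences are cosmetic: the paper controls the error through the integral bound $\int_0^\infty|f|\,|\lambda-V|^{1/2}dx=O\big(1/(X\lambda^{1/2})\big)$ (your claimed uniform pointwise bound $O(X_n^{-2(\l+1)})$ is actually larger near the turning point, though integrably so, which is why the integral formulation is the one to use), and it cites Titchmarsh for $C_n\sim X_n^{(\l-1)/2}$ rather than redoing the WKB normalization integral as you suggest.
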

For the proof see section \ref{section4}.
\begin{lemma}\label{eigens}
For $1\le n<n_0$, if $x\ge2R$, we have
\begin{equation}\label{eigenexps}
h_n(x)=\psi_1^{(n)}(x)+\psi_2^{(n)}(x)\quad\text{with}\quad
|\psi_2^{(n)}(x)|\le\frac{C}{xV^\frac12(x)}|\psi_1^{(n)}(x)|,
\end{equation}
where
$
\zeta_n(x)=\int_{X_n}^x(\lambda_n-V(t))^\frac12dt$
with $\arg\zeta_n(x)=\frac\pi2$ and $X_n=R$.
\end{lemma}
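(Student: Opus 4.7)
The plan is to adapt the Langer turning-point construction of Lemma \ref{eigenfunction} to the classically forbidden regime, in which $x \ge 2R$ lies strictly above every turning point for each of the finitely many $n$ with $1 \le n < n_0$. First I would fix the geometric picture: by the definition of $n_0$ we have $\lambda_n < V(R)$, and Assumption \ref{poteass}(i) combined with \eqref{potecd1} implies that $V$ is strictly increasing on $[R/2,\infty)$. Hence $V(x) \ge V(R) > \lambda_n$ for all $x \ge 2R$, so the ODE \eqref{tezhengfangcheng} is of decaying-exponential type throughout the region of interest. Because $n$ ranges over a finite set, constants that depend on $n_0$ (or on the gap $V(R) - \lambda_n$) may be absorbed into $C$.

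Next I would perform the Langer change of variables with reference point $R$ playing the role that $X_n$ plays in Lemma \ref{eigenfunction}. Setting $\zeta_n(x) := \int_R^x (\lambda_n - V(t))^{1/2}\,dt$ with the branch $\arg \zeta_n = \pi/2$, one has $\zeta_n(x) = i\int_R^x (V(t) - \lambda_n)^{1/2}\,dt$, purely imaginary with $|\zeta_n(x)|$ growing at least like $|x|^{\l+1}$. The usual Langer substitution converts \eqref{tezhengfangcheng} into a perturbed Bessel equation of order $1/3$, whose two distinguished solutions involve $H^{(1)}_{1/3}(\zeta_n)$ and $H^{(2)}_{1/3}(\zeta_n)$. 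The candidate
\[
\psi_1^{(n)}(x) = C_n (\lambda_n - V(x))^{-1/4} \sqrt{\tfrac{\pi \zeta_n}{2}}\, H^{(1)}_{1/3}(\zeta_n)
\]
exactly solves the comparison equation, and the large-argument asymptotics of $H^{(1)}_{1/3}$ along the positive imaginary axis reproduce the decaying Liouville--Green branch $(V-\lambda_n)^{-1/4}\exp\bigl(-\int_R^x (V-\lambda_n)^{1/2}\,dt\bigr)$.

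The remainder $\psi_2^{(n)} := h_n - \psi_1^{(n)}$ is then estimated via the standard Volterra integral equation whose inhomogeneity is the WKB control function
\[
F(x) = \frac{|V''(x)|}{(V-\lambda_n)^{3/2}} + \frac{|V'(x)|^2}{(V-\lambda_n)^{5/2}}.
\]
Using the pointwise bounds $|V^{(j)}(x)| \le C|x|^{-j}V(x)$ from Assumption \ref{poteass}(ii) and the lower bound $V(x) \ge D_1|x|^{2\l}$, both terms of $F$ are $O(|x|^{-\l-2})$, so $\int_x^\infty F(t)\,dt = O(|x|^{-\l-1}) = O\bigl(1/(xV^{1/2}(x))\bigr)$. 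Iterating the Volterra equation yields $|\psi_2^{(n)}(x)| \le \frac{C}{xV^{1/2}(x)} |\psi_1^{(n)}(x)|$; the fact that $h_n$ is the unique $L^2$-solution at infinity selects the decaying branch and hence the Hankel function of the first kind in $\psi_1^{(n)}$.

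The main obstacle is the book-keeping of branches and normalization. Because $\zeta_n$ is purely imaginary and $(\lambda_n - V)^{-1/4}$ is a fractional power of a negative real number, the branches must be chosen consistently so that $\psi_1^{(n)}$ is real-valued and so that its asymptotic matches the exponentially decaying rather than exponentially growing Liouville--Green mode; only with this matching does $h_n$ sit close to $\psi_1^{(n)}$ instead of its growing partner. Once these branch choices are fixed exactly as in the proof of Lemma \ref{eigenfunction}, the remaining argument is a direct repetition of the Olver-type error estimate, now trivially uniform over the finite parameter range $1 \le n < n_0$.
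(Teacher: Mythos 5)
Your proposal is correct and follows essentially the same route as the paper: Lemma \ref{eigens} is obtained there by repeating the integral-equation (Langer--Titchmarsh) construction used for Lemma \ref{eigenfunction}, now with reference point $X_n=R$ in the classically forbidden region, the error being controlled by the fixed-$\lambda$ tail estimate of Lemma \ref{6.5}, namely $\int_x^\infty|f(t)||\lambda-V(t)|^{1/2}\,dt\le \frac{C}{x\,V^{1/2}(x)}$ for $x>2X$, which is precisely the $O\bigl(1/(xV^{1/2}(x))\bigr)$ bound you derive from Assumption \ref{poteass}. Your handling of the recessive-branch selection and of uniformity over the finitely many $n<n_0$ matches the paper's (implicit) argument.
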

\subsection{Proof of Lemma \ref{mainlem}: Part 1}\label{introduction1}
\subsubsection{the integral on $[0, +\infty)$ in the three  cases}In the following lengthy proof  we first estimate the integral on $[0,+\infty)$ under Assumption \ref{pertass}. The rest integral estimation on $(-\infty,0)$ can be obtained by a coordinate transformation. In this way we prove the estimate \eqref{mainesto} of Lemma \ref{mainlem}. \\
\begin{Corollary}\label{eigenbound}
Given $n\in\Z_+$, $h_n(x)$ is bounded on $[0,\infty)$.
\end{Corollary}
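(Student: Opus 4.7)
The plan is short and rests directly on the Langer representations already established. Since $h_n$ satisfies the second-order linear ODE $-h_n'' + V(x) h_n = \lambda_n h_n$ with $V\in C^3$, the function $h_n$ is automatically $C^3$ on $[0,\infty)$ and therefore bounded on every bounded subinterval. The only thing to check is the behavior as $x\to +\infty$.

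To handle the tail I would split into the two regimes already used in the paper. For $n\ge n_0$, Lemma \ref{eigenfunction} decomposes $h_n=\psi_1^{(n)}+\psi_2^{(n)}$ on all of $[0,\infty)$, with $|\psi_2^{(n)}|\le CX_n^{-\l-1}|\psi_1^{(n)}|$, so one only has to bound $\psi_1^{(n)}$. For $1\le n<n_0$, Lemma \ref{eigens} provides the same kind of representation on $[2R,\infty)$, and continuity takes care of the compact set $[0,2R]$. Thus in both cases it suffices to show that
\[
\psi_1^{(n)}(x)=C_n\bigl(\lambda_n-V(x)\bigr)^{-1/4}\Bigl(\tfrac{\pi\zeta_n}{2}\Bigr)^{1/2}H^{(1)}_{1/3}(\zeta_n)
\]
stays bounded for $x\ge 0$ and in fact tends to $0$ as $x\to+\infty$.

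For large $x$, the convention $\arg\zeta_n=\pi/2$ gives $\zeta_n=ir_n(x)$ with $r_n(x)=\int_{X_n}^{x}(V-\lambda_n)^{1/2}\,dt\to+\infty$, so the classical asymptotic $H^{(1)}_{1/3}(ir)=O(r^{-1/2}e^{-r})$ together with $(V(x)-\lambda_n)^{-1/4}\to 0$ yields exponential decay of $\psi_1^{(n)}$ (and note that for $n<n_0$ this regime already starts from $x=2R$, so there is no turning-point issue in that case at all). The genuinely delicate region is a neighborhood of the turning point $x=X_n$ for $n\ge n_0$, where the factor $(\lambda_n-V(x))^{-1/4}\sim c(x-X_n)^{-1/4}$ looks singular. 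This is exactly the scenario the Langer model was designed for: since $\zeta_n\sim\tfrac{2}{3}\bigl(V'(X_n)\bigr)^{1/2}(x-X_n)^{3/2}$ near $X_n$, and the small-argument expansion $H^{(1)}_{1/3}(z)\sim c_1 z^{-1/3}+c_2 z^{1/3}$ gives $\sqrt{\zeta_n}\,H^{(1)}_{1/3}(\zeta_n)=O(|\zeta_n|^{1/6})=O((x-X_n)^{1/4})$, the two singular factors cancel exactly and $\psi_1^{(n)}$ is continuous across $X_n$. This cancellation is the only obstacle; once it is recorded, the three regimes (compact interval, neighborhood of $X_n$, and exponentially decaying tail) patch together to give the boundedness of $h_n$ on $[0,\infty)$.
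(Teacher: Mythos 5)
Your proposal is correct and follows essentially the same route as the paper: decompose $h_n$ via Lemma \ref{eigenfunction} (and Lemma \ref{eigens} for $1\le n<n_0$, with continuity on the compact piece) and then bound $\psi_1^{(n)}$ through the behavior of $\sqrt{\tfrac{\pi\zeta_n}{2}}H^{(1)}_{1/3}(\zeta_n)$, which is exactly what the paper's Lemma \ref{Bessel} packages — the $|\zeta_n|^{1/6}$ small-argument bound encoding your turning-point cancellation against $(\lambda_n-V(x))^{-1/4}$, and the $e^{-|\zeta_n|}$ bound on the imaginary axis giving your exponential tail decay. The only cosmetic difference is that you dispatch the bounded region by ODE regularity instead of invoking the Bessel bounds there as well; for fixed $n$ this is harmless.
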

\begin{proof}
If $n\ge n_0$, by \eqref{eigenfunmain},~\eqref{eigenfunexp} and Lemma \ref{Bessel} we obtain $|h_n(x)|\le C$ for $x\ge0$.
If $1\le n<n_0$ and $x\in [0,~2R]$ one has $|h_n(x)|\le C$. For $x\ge2R$ one also has $|h_n(x)|\le C$ by  \eqref{eigenfunmain}, \eqref{eigenexps} and Lemma \ref{Bessel}.
\end{proof}
As in \cite{YajimaZhang}, we have
\begin{lemma}\label{qQesti}
Assume $V(x)$ satisfies Assumption \ref{poteass}, then there exists constants $a_1,a_2,A_1,A_2$ such that the following estimates
are satisfied uniformly for $n\ge n_0$:
\begin{equation}\label{qesti}
\begin{alignedat}{5}
a_1X_n^{2\l-1}(X_n-x)\le\lambda_n-V(x)&\le a_2X_n^{2\l-1}(X_n-x),&\qquad&\text{for~}0\le x<X_n,\\
V(x)-\lambda_n&\ge a_1X_n^{2\l-1}(x-X_n),&&\text{for~}x\ge X_n
\end{alignedat}
\end{equation}
and
\begin{equation}\label{Qesti}
\begin{alignedat}{5}
A_1X_n^{\l-\frac12}(X_n-x)^{\frac32}\le-\zeta_n(x)&\le A_2X_n^{\l-\frac12}(X_n-x)^{\frac32},&\qquad&\text{for~}0\le x<X_n,\\
-\rmi \zeta_n(x)&\ge A_1X_n^{\l-\frac12} (x-X_n)^{\frac32},&&\text{for~}x\ge X_n,
\end{alignedat}
\end{equation}
where $0<a_1\le1\le a_2$ and $0<A_1\le1\le A_2$.
\end{lemma}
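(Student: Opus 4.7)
\textbf{Plan for the proof of Lemma~\ref{qQesti}.} The strategy is to first establish the two-sided bound \eqref{qesti} on $\lambda_n - V(x)$ directly from convexity and the logarithmic-derivative control $|xV'(x)| \le C_1|V(x)|$, and then to deduce \eqref{Qesti} by simply integrating the square root. Throughout I will use that $n \ge n_0$ implies $\lambda_n \ge V(R)$, so by the monotonicity of $V$ on $[R,\infty)$ (which follows from \eqref{convex}) the turning point satisfies $X_n \ge R \ge 2R_0$; moreover $D_1 X_n^{2\l} \le \lambda_n \le D_2 X_n^{2\l}$, so $\lambda_n \sim X_n^{2\l}$.

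For the \emph{upper bound} in \eqref{qesti}, I would write $\lambda_n - V(x) = \int_x^{X_n}V'(t)\,dt$. When $x \ge R_0$ the convexity \eqref{convex} gives $V'(t) \le V'(X_n)$, and the assumption $|X_nV'(X_n)| \le C_1 V(X_n) = C_1\lambda_n$ together with $\lambda_n\le D_2 X_n^{2\l}$ yields $V'(X_n) \le C_1 D_2 X_n^{2\l-1}$; integrating delivers $\lambda_n - V(x) \le a_2 X_n^{2\l-1}(X_n-x)$. For $0 \le x < R_0$ one instead uses the trivial bound $\lambda_n - V(x) \le \lambda_n \le D_2 X_n^{2\l}$, and notes that since $X_n \ge 2R_0$ we have $X_n - x \ge X_n/2$, so $X_n^{2\l-1}(X_n-x) \ge X_n^{2\l}/2$, closing the estimate. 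The bound $V(x) - \lambda_n \ge a_1 X_n^{2\l-1}(x-X_n)$ for $x \ge X_n$ is the cleanest of the four: convexity on $[R_0,\infty)$ gives $V(x) \ge V(X_n) + V'(X_n)(x-X_n)$, and the condition \eqref{potecd1}, namely $V(x) \le xV'(x)$ for $x \ge R/2$, forces $V'(X_n) \ge V(X_n)/X_n = \lambda_n/X_n \ge D_1 X_n^{2\l-1}$.

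The \emph{lower bound} $\lambda_n - V(x) \ge a_1 X_n^{2\l-1}(X_n-x)$ for $0\le x<X_n$ is the most delicate, because $V'$ can be very small near $0$. I would split into three subregions. If $x \in [X_n/2, X_n]$, convexity gives $V(X_n) - V(x) \ge V'(x)(X_n-x)$, and \eqref{potecd1} combined with $V(x) \ge D_1 x^{2\l}$ yields $V'(x) \ge V(x)/x \ge D_1 x^{2\l-1} \ge D_1 2^{1-2\l} X_n^{2\l-1}$; integration closes this case. If $x \in [R_0, X_n/2]$, I bound $V(X_n) - V(x) \ge V(X_n) - V(X_n/2)$ and apply the previous argument at the point $X_n/2$ to find $V(X_n) - V(X_n/2) \ge c X_n^{2\l}$; since $X_n^{2\l-1}(X_n-x) \le X_n^{2\l}$ in this range, this suffices. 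Finally for $x \in [0, R_0]$, $V(x) \le V(R_0)$ is bounded independent of $n$, while $\lambda_n \to \infty$, so for $n \ge n_0$ (enlarging $n_0$ if needed, absorbable in the constant) $\lambda_n - V(x) \ge \lambda_n/2 \ge (D_1/2)X_n^{2\l} \gtrsim X_n^{2\l-1}(X_n-x)$.

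Once \eqref{qesti} is established, \eqref{Qesti} follows by integrating the square root: for $0 \le x < X_n$,
\[
-\zeta_n(x) = \int_x^{X_n}(\lambda_n - V(t))^{1/2}\,dt
\]
and substituting the two-sided estimate gives $-\zeta_n(x) \sim X_n^{\l-1/2}\cdot\tfrac{2}{3}(X_n-x)^{3/2}$, while for $x \ge X_n$ the chosen branch $\arg\zeta_n(x) = \pi/2$ means $-\rmi\zeta_n(x) = \int_{X_n}^x (V(t)-\lambda_n)^{1/2}dt$, and the lower bound $V(t)-\lambda_n \ge a_1 X_n^{2\l-1}(t-X_n)$ integrates to $A_1 X_n^{\l-1/2}(x-X_n)^{3/2}$. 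The main technical nuisance, as noted above, is the lower bound on $\lambda_n - V(x)$ for $x$ small, where the piecewise argument using the behavior of $V(x)/x$ (which is increasing by \eqref{potecd1}) is essential; everything else is a direct consequence of convexity and the $V(x) \sim x^{2\l}$ asymptotics.
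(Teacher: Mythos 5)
Your proposal is correct and is essentially the paper's own argument: a case split according to the distance to the turning point, using convexity, \eqref{decay} and \eqref{potecd1} together with $\lambda_n=V(X_n)\sim X_n^{2\l}$, with the region near the origin handled via $X_n-x\sim X_n$ plus a finitely-many-$n$ (compactness) adjustment, and \eqref{Qesti} then following by integrating the square root of \eqref{qesti}. Two small patches are needed, and they are exactly what the paper's condition \eqref{potecd2} supplies: on $[0,R_0)$ Assumption \ref{poteass} does not force $V\ge0$, so one should use $|V(x)|\le V(R)\le\lambda_n$ to conclude $\lambda_n-V(x)\le2\lambda_n$ rather than $\le\lambda_n$; and the comparison $V(x)\le V(X_n/2)$ on $[R_0,X_n/2]$ relies on monotonicity of $V$, which \eqref{potecd1} guarantees only for $x\ge R/2$, the remaining strip being absorbed into your bounded-$V$, large-$\lambda_n$ case (the paper's Subcase 3.2).
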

\noindent For the proof see section \ref{section4}.\\
For the integral on $[0,+\infty)$, we have the following lemma.
\begin{lemma}\label{realplus}
If $f(x)$ satisfies Assumption \ref{pertass}, then
\[
\Big|\int_{0}^{+\infty} f(x) e^{{\rm i}kx}h_m(x)\overline{h_n(x)}dx\Big|
\le C(|k|^{-1}\vee|k|)(\lambda_m\lambda_n)^{\frac\mu{4\l}-\frac1{4\l}(\frac13\bigwedge\frac{\mu+1}{2\mu+2\l+1})},\quad \forall~k\ne0,
\]
where $C$ only depends on $(\mu,\l)$.
\end{lemma}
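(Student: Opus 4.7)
\textbf{Proof proposal for Lemma \ref{realplus}.}

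The plan is to split $[0,+\infty)$ into three regions depending on the positions of the turning points $X_m, X_n$ (WLOG $X_m\le X_n$, so $\lambda_m\le\lambda_n$), substitute the Langer representation \eqref{eigenfunexp} of $h_m,h_n$ on each piece, and estimate each resulting integral. In view of \eqref{eigenfunexp} it is enough to control the principal parts $\psi_1^{(m)}\overline{\psi_1^{(n)}}$, the error terms being smaller by a factor $X_n^{-\ell-1}$. I shall introduce buffer zones $[X_m-\Delta_m,X_m+\Delta_m]$, $[X_n-\Delta_n,X_n+\Delta_n]$ around the turning points with widths $\Delta_j$ (a function of $j$ and $k$) to be optimized at the end; this is exactly where the unusual exponent $\frac1{4\ell}(\frac13\wedge\frac{\mu+1}{2\mu+2\ell+1})$ will come from.

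Away from the turning points, I replace the Hankel functions in $\psi_1^{(j)}$ by their asymptotic forms $\sqrt{\pi\zeta_j/2}H^{(1)}_{1/3}(\zeta_j)\sim e^{\rmi\zeta_j}$ (Lemma \ref{Bessel}), so that on each of the three sub-intervals
\[
[0,X_m-\Delta_m],\qquad [X_m+\Delta_m,X_n-\Delta_n],\qquad [X_n+\Delta_n,+\infty),
\]
the product $\psi_1^{(m)}\overline{\psi_1^{(n)}}$ becomes a sum of at most four terms of the form
\[
C_mC_n(\lambda_m-V)^{-\frac14}(\lambda_n-V)^{-\frac14}\,e^{\rmi\Phi_\pm(x)},\quad \Phi_\pm(x)=\pm\zeta_m(x)\pm\overline{\zeta_n(x)}+kx.
\]
(On the right interval both $\zeta$'s are purely imaginary and the exponential is decaying; on the middle interval one of them is.) On the regions where at least one of $\zeta_m,\zeta_n$ is imaginary I will just use pointwise exponential decay controlled by \eqref{Qesti}, times a factor $|x|^\mu$ from $f$. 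On the truly oscillatory piece $[0,X_m-\Delta_m]$ the derivative of the phase is $\Phi'_\pm(x)=\pm\sqrt{\lambda_m-V(x)}\pm\sqrt{\lambda_n-V(x)}+k$, which is bounded below by a positive constant times $\sqrt{\lambda_n-V(x)}\vee |k|$ (as $X_m\le X_n$), and then the non-stationary oscillatory integral Lemma~\ref{oscint} (integration by parts once) yields a gain by $1/|\Phi'_\pm|$, together with an extra $|k|$ factor that accounts for differentiating the amplitude.

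On the buffer zones $[X_j-\Delta_j,X_j+\Delta_j]$ I cannot use integration by parts, but the $L^\infty$ bound for $\sqrt{\pi\zeta/2}H^{(1)}_{1/3}(\zeta)$ is uniform (Lemma \ref{Bessel}) and Lemma \ref{qQesti} gives $|\lambda_j-V(x)|\sim X_j^{2\ell-1}\Delta_j$ at distance $\Delta_j$ from the turning point; then
\[
|h_j(x)|\le C\,C_j(\lambda_j-V)^{-1/4}\sim X_j^{\frac{\ell-1}{2}}\cdot X_j^{-\frac{2\ell-1}{4}}(|x-X_j|+\Delta_j)^{-1/4},
\]
and integrating $|f|$ against $|h_m h_n|$ over a band of width $\Delta_j$ costs at most $\Delta_j^{1/2}X_j^\mu X_j^{-1/2}$ after combining the two eigenfunctions and absorbing $|f(x)|\le C|x|^\mu\le CX_n^\mu$. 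Balancing this turning-point contribution against the oscillatory gain $1/|\Phi'|\sim (\Delta_j X_j^{2\ell-1})^{-1/2}$ from the IBP away from the turning points produces two possible optimizers: the ``Airy'' scale $\Delta_j\sim X_j^{-(2\ell-1)/3}$ (giving the exponent $\frac13$) and a second scale coming from solving $\Delta_j^{1/2}X_j^\mu X_j^{-1/2}\sim (\Delta_j X_j^{2\ell-1})^{-1/2}X_j^\mu\cdot|k|^{-1}$, which after simplification yields the exponent $\frac{\mu+1}{2\mu+2\ell+1}$. Taking the minimum of the two gives $\frac13\wedge\frac{\mu+1}{2\mu+2\ell+1}$, and recalling $X_j\sim\lambda_j^{1/(2\ell)}$ and $C_j\sim X_j^{(\ell-1)/2}$ assembles the factor $(\lambda_m\lambda_n)^{\mu/(4\ell)-\frac1{4\ell}(\frac13\wedge\frac{\mu+1}{2\mu+2\ell+1})}$.

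The two bookkeeping items I expect to be the main obstacles are (i) tracking the $(|k|\vee|k|^{-1})$ factor, which arises from differentiating the amplitude $f(x)(\lambda_m-V)^{-1/4}(\lambda_n-V)^{-1/4}$ during the integration-by-parts step (Assumption \ref{pertass} gives $|f'|\le C|x|^{\mu-1}$, and one needs the additional $|k|$-factor to cover the worst case where $|\Phi'|$ is not much bigger than $|k|$), and (ii) correctly handling the case $m<n_0$ or $n<n_0$, where Lemma \ref{eigens} replaces \eqref{eigenfunexp} with an expansion valid only on $[2R,\infty)$; the compact piece $[0,2R]$ is then estimated trivially from $|f(x)h_m(x)\overline{h_n(x)}|\le C$ by Corollary \ref{eigenbound}, and a single integration by parts in $e^{\rmi kx}$ there yields the $|k|^{-1}$ factor. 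All other contributions are lower-order and can be absorbed once the optimal $\Delta_j$ is chosen.
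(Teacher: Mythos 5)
Your overall framework (Langer representation, Hankel asymptotics, splitting at the turning points, oscillatory-integral gains via Lemma \ref{oscint}) is the right family of tools, but the proposal has a genuine gap at exactly the point where the stated exponent is decided. On $[0,X_m-\Delta_m]$ the product $\psi_1^{(m)}\overline{\psi_1^{(n)}}$ oscillates with the \emph{difference} phase $\zeta_m-\zeta_n+kx$, whose derivative is $-g(x)$ with $g(x)=\sqrt{\lambda_n-V(x)}-\sqrt{\lambda_m-V(x)}-k$. Your claimed lower bound $|\Phi'_\pm|\gtrsim\sqrt{\lambda_n-V(x)}\vee|k|$ is simply false for this term: when $X_m\le X_n\le 4X_m$ and $\lambda_n-\lambda_m$ is comparable to $k$ times a power of $X_m$, $g$ vanishes at an interior point of the oscillatory region, so a single integration by parts is not available there. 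Worse, the stationary point can sit where $g'$ is also small (near the bottom of the well, since $g'(x)\sim V'(x)(\lambda_n-\lambda_m)(\lambda-V)^{-3/2}$ and $V'(x)$ is small for small $x$), so the phase is nearly degenerate. This resonant regime is precisely what the paper spends the bulk of its effort on: after reducing to $[X_m^{2/3},X_m-X_m^{1/3}]$ it runs a five-way case analysis on the size of $\lambda_n-\lambda_m$ relative to $kX_m^{\l-\frac13}$, $kX_m^{\l-\frac16}$, $kX_m^{\l}$, $kX_n^{\l}$ (Lemmas \ref{xmcomplex1}--\ref{xmcomplex5}), in each case either locating the zero of $g$, exploiting monotonicity of $g$, or cutting the interval at a point such as $a=X_m^{\frac{2\mu+2\l}{2\mu+2\l+1}}$ and trading a trivial size estimate on $[X_m^{2/3},a]$ against a lower bound $g'\gtrsim kX_m^{-\frac{2\mu+4\l}{2\mu+2\l+1}}$ on the rest.

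Consequently your explanation of where $\frac1{4\l}\big(\frac13\bigwedge\frac{\mu+1}{2\mu+2\l+1}\big)$ comes from is not correct: the exponent $\frac{\mu+1}{2\mu+2\l+1}$ is not produced by optimizing a buffer width $\Delta_j$ around the turning point (that optimization only yields the Airy exponent $\frac13$, which the paper also gets on $[X_m-X_m^{-\frac13},X_n]$ and beyond), but by the degenerate stationary-phase analysis in the bulk just described. Without an argument for the resonant case $\lambda_n-\lambda_m\sim k\,X_m^{\l}$ your scheme cannot reach the claimed bound, and the $(|k|\vee|k|^{-1})$ bookkeeping also changes: in the paper the $|k|^{-1}$ comes from phase-derivative lower bounds proportional to $k$ in these resonant cases, while the $|k|$-type factor comes from the trivial H\"older estimate when $k$ is as large as a power of $X_n$, not from differentiating the amplitude. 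Secondary points (treating $m<n_0$ via Corollary \ref{eigenbound}, exponential decay past $X_n$, the $X_n>4X_m$ case where $g$ is uniformly large) are fine and essentially match the paper's Lemmas \ref{mnsmall}, \ref{msmall}, \ref{xninf}, \ref{xnsimple}.
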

Define
$\dss n_1:=\min\big\{n>n_0:X_n^{\frac13}\ge 2X_{n_0}\big\}$.
Assume $m\le n$ in the following. We now prepare to prove Lemma \ref{realplus} in three different cases, which are
$m,n<n_1$, $m<n_0$ and $n\ge n_1$ and $m,n\ge n_0$.
For  the first case we have
\begin{lemma}\label{mnsmall} If $f(x)$ satisfies Assumption \ref{pertass}, then one has
\[
\Big|\int_{0}^{+\infty}f(x) e^{{\rm i}kx}h_m(x)\overline{h_n(x)}dx\Big|
\le C(X_mX_n)^{\frac\mu2-\frac12},\quad \forall~k\in\R,
\]
where $C$ only depends on $(\mu,\l)$ and $1\le m\le n<n_1$.
\end{lemma}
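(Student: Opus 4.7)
The plan is to exploit the fact that the index range $1\le m\le n<n_1$ is finite, so all the relevant turning points $X_m,X_n$ lie in a fixed compact interval bounded away from $0$ and $\infty$ (depending only on the structural constants of $V$). Consequently the target bound $(X_mX_n)^{\mu/2-1/2}$ is essentially a positive constant depending only on $(\mu,\l)$, so the task reduces to showing that the integral is bounded by a constant uniformly in $k\in\R$. Since $|e^{\rmi kx}|=1$, oscillatory cancellation is not needed; only crude absolute-value estimates are required.

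First I would split
\[
\int_0^{+\infty}f(x)\,e^{\rmi kx}h_m(x)\overline{h_n(x)}\,dx
=\int_0^{2R}+\int_{2R}^{+\infty}.
\]
For the compact piece $\int_0^{2R}$, I use Corollary \ref{eigenbound} to bound $|h_m|,|h_n|\le C$ uniformly, note that $f$ is continuous on $[0,R_0]$ and satisfies $|f(x)|\le C_2 x^\mu\le C_2(2R)^\mu$ on $[R_0,2R]$, and integrate against $|e^{\rmi kx}|=1$ over a bounded interval, giving $O(1)$.

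For the tail $\int_{2R}^{+\infty}$, I invoke Lemma~\ref{eigenfunction} when $m\ge n_0$ and Lemma~\ref{eigens} when $m<n_0$ (and similarly for $n$) to write $h_m(x)=\psi_1^{(m)}(x)+\psi_2^{(m)}(x)$ with $\psi_2^{(m)}$ negligible compared to $\psi_1^{(m)}$. On $[2R,+\infty)$ we lie in the classically forbidden region $x>X_m,X_n$, so $\zeta_m(x),\zeta_n(x)$ are purely imaginary with positive imaginary part; by the standard large-argument asymptotics of $H^{(1)}_{1/3}$, together with Lemma~\ref{qQesti} (and its analogue for $n<n_0$, which yields $|\zeta_n(x)|\gtrsim x^{\l+1}$ from $V(t)\sim t^{2\l}$), one obtains exponential decay $|\psi_1^{(m)}(x)\psi_1^{(n)}(x)|\le C\,x^{-\l}\,e^{-cx^{\l+1/2}}$ or similar. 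Multiplying by the polynomial weight $|f(x)|\le C x^\mu$ and integrating gives a finite constant.

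The main obstacle is bookkeeping the Hankel-function asymptotics in the forbidden region while the index is small: in particular handling the case $m<n_0\le n<n_1$, where $h_m$ uses $X_m=R$ and $h_n$ uses the genuine turning point $X_n\in[R,2R]$ say, so the two functions $\zeta_m,\zeta_n$ have slightly different structures. However, since both produce exponentially decaying profiles on $[2R,+\infty)$ and the prefactor $C_n\sim X_n^{(\l-1)/2}$ is uniformly bounded for $n<n_1$, the integrals are uniformly finite. Combining the two pieces yields an absolute constant $C(\mu,\l)$ as upper bound, which in turn is dominated by a $(\mu,\l)$-multiple of $(X_mX_n)^{\mu/2-1/2}$ because the latter is bounded below by a positive constant in this finite regime; this is the claimed inequality.
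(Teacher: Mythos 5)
Your overall strategy is the one the paper uses: since $1\le m\le n<n_1$ forces $X_m,X_n$ into a fixed interval (roughly $[R,\,8X_{n_0}^3]$), the factor $(X_mX_n)^{\frac\mu2-\frac12}$ is comparable to a constant, so it suffices to bound the integral by a constant uniformly in $k$, treating a compact piece by crude bounds and the far tail by the exponential decay of $\psi_1^{(m)}\psi_1^{(n)}$ in the forbidden region. The paper does exactly this, except that it splits at $2X_{n_1}$ and handles $[0,2X_{n_1}]$ by H\"older together with $\|h_m\|_{L^2}=\|h_n\|_{L^2}=1$, while you split at $2R$ and use Corollary \ref{eigenbound}; that difference is immaterial.

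The genuine problem is the location of your split. You assert that $[2R,+\infty)$ lies in the classically forbidden region $x>X_m,X_n$ for all indices $<n_1$ (and, in passing, that $X_n\in[R,2R]$ for $n_0\le n<n_1$). The definition $n_1=\min\{n>n_0:\,X_n^{1/3}\ge 2X_{n_0}\}$ only guarantees $X_n<8X_{n_0}^3$ for $n<n_1$, and since $X_{n_0}\ge R$ this bound is in general far larger than $2R$; so for $n_0\le n<n_1$ the turning point $X_n$ may well exceed $2R$. On $[2R,X_n)$ the phase $\zeta_n$ is negative real rather than purely imaginary, $h_n$ oscillates instead of decaying, and near $x=X_n$ the factor $(\lambda_n-V(x))^{-\frac14}$ degenerates, so your claimed pointwise bound $|\psi_1^{(m)}(x)\psi_1^{(n)}(x)|\le Cx^{-\l}e^{-cx^{\l+1/2}}$ fails on that range, and Lemma \ref{qQesti} cannot be invoked there in the way you suggest. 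The gap is easy to close within your own scheme: move the splitting point beyond all turning points with index $<n_1$, e.g. to $2X_{n_1}$ as in the paper; the compact piece $[0,2X_{n_1}]$ is still $O(1)$ by Corollary \ref{eigenbound} or by H\"older, and on $[2X_{n_1},\infty)$ one has $V(x)-\lambda_n\ge C_0$ and $|\zeta_n(x)|\ge C_0(x-X_n)$, which already yields an integrable exponentially decaying bound (the sharper rate $e^{-cx^{\l+1}}$ is not needed). With that correction your argument coincides with the paper's proof.
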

\begin{proof}
By Lemma \ref{eigenfunction} and \ref{eigens}, for $x\ge 2X_{n_1}$ there exists a positive constant $C$ such that
\[
h_n(x)=\psi_1^{(n)}(x)+\psi_2^{(n)}(x)\quad\text{with}\quad|\psi_2^{(n)}(x)|\le C|\psi_1^{(n)}(x)|.
\]
When $x\ge 2X_{n_1}$,
 there exists a $C_0>0$ such that
$V(x)-\lambda_n\ge C_0$ and $
|\zeta_n(x)|\ge C_0(x-X_n)$,
then
\begin{align*}
&\Big|\int_{2X_{n_1}}^{+\infty}f(x) e^{{\rm i}kx}h_m(x)\overline{h_n(x)}dx\Big|\le C\int_{2X_{n_1}}^\infty x^\mu(V(x)-\lambda_n)^{-\frac12}e^{-(|\zeta_m(x)|+|\zeta_n(x)|)}dx\\
\le&~C\int_{2X_{n_1}}^\infty(x-X_n)^\mu e^{-C_0(x-X_n)}dx\le C(X_mX_n)^{\frac\mu2-\frac12}.
\end{align*}
Besides, by H\"older inequality we have
\(\dss
\Big|\int_{0}^{2X_{n_1}}f(x) e^{{\rm i}kx}h_m(x)\overline{h_n(x)}dx\Big|\le C(X_mX_n)^{\frac\mu2-\frac12}.
\)
\end{proof}

For the second case we have
\begin{lemma}\label{msmall} If $f(x)$ satisfies Assumption \ref{pertass}, then
\[
\Big|\int_{0}^{+\infty}f(x) e^{{\rm i}kx}h_m(x)\overline{h_n(x)}dx\Big|
\le C(X_mX_n)^{\frac\mu2-\frac16},\quad \forall~k\in\R,
\]
where $C$ only depends on $(\mu,\l)$ and $1\le m<n_0,~n\ge n_1$.
\end{lemma}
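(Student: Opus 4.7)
The plan is to exploit the fact that for $m<n_0$ the eigenvalue $\lambda_m$ is bounded by a constant, so $h_m(x)$ is localized in $[0,2R]$ up to a super--exponentially decaying tail, while the hypothesis $n\ge n_1$ forces $X_n^{1/3}\ge 2X_{n_0}\gg R$, placing the effective support of $h_m$ deep inside the classically allowed region of $h_n$, at a distance of order $X_n$ from the turning point $X_n$. In this configuration $h_n$ has only a small WKB amplitude there and no oscillatory cancellation in $e^{\rmi kx}$ is needed; a pointwise modulus bound suffices, which is why the stated estimate is uniform in $k$. I will split the integral as $\int_0^{2R}+\int_{2R}^{X_n/2}+\int_{X_n/2}^{+\infty}$ and treat each piece separately.

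On $[0,X_n/2]$ the key input is the amplitude estimate $|h_n(x)|\le C X_n^{-1/2}$. Indeed, by Lemma \ref{qQesti} one has $\lambda_n-V(x)\ge c\,X_n^{2\l}$ and $|\zeta_n(x)|\ge c\,X_n^{\l+1}\gg 1$ on this range, so the large--argument Hankel asymptotics give $|\sqrt{\pi\zeta_n/2}\,H_{1/3}^{(1)}(\zeta_n)|\le C$; plugging into Lemma \ref{eigenfunction} yields $|\psi_1^{(n)}(x)|\le C X_n^{(\l-1)/2}X_n^{-\l/2}=CX_n^{-1/2}$, and the correction $\psi_2^{(n)}$ is an $O(X_n^{-\l-1})$ perturbation. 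On $[0,2R]$ I then use $|h_m|\le C$ (uniform over the finite set $m<n_0$) and $|f|\le C$, giving $\int_0^{2R}\le CX_n^{-1/2}$. On $[2R,X_n/2]$ the Langer representation of Lemma \ref{eigens}, combined with the asymptotics $|H_{1/3}^{(1)}(\rmi y)|\le C y^{-1/2}e^{-y}$ for $y\to\infty$, yields the super--exponential bound $|h_m(x)|\le C\la x\ra^{-\l/2}e^{-cx^{\l+1}}$, so that $\int_{2R}^{X_n/2}\le CX_n^{-1/2}\int_{2R}^{+\infty}x^\mu x^{-\l/2}e^{-cx^{\l+1}}dx\le CX_n^{-1/2}$, the last integral being a constant independent of $n$.

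On the tail $[X_n/2,+\infty)$ the super--exponential decay of $h_m$ beats any polynomial in $X_n$: $|h_m(x)|\le C\exp(-cX_n^{\l+1})$, while the crude polynomial bound $|h_n(x)|\le CX_n^{N}$ for some fixed $N$ (e.g.\ the maximal amplitude $X_n^{(\l-2)/6}$ attained in the Airy layer) is enough. This piece is therefore bounded by $CX_n^{N}\exp(-cX_n^{\l+1})$ and is negligible compared with $X_n^{-1/2}$. Collecting the three pieces gives $|I|\le CX_n^{-1/2}$; since $\mu\ge 0$ entails $-1/2\le \mu/2-1/6$, and since $X_m$ ranges over the finite set $\{X_1,\dots,X_{n_0-1}\}$, one concludes $|I|\le C(X_mX_n)^{\mu/2-1/6}$ with $C=C(\mu,\l)$, as required. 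The main technical obstacle is the uniform control $|\sqrt{\pi\zeta_n/2}H_{1/3}^{(1)}(\zeta_n)|\le C$ when $\arg\zeta_n=-\pi$ (the branch--cut direction, relevant for $x<X_n$); this is handled by the standard connection formula expressing $H_{1/3}^{(1)}(-|z|)$ as a linear combination of $H_{1/3}^{(1)}(|z|)$ and $H_{1/3}^{(2)}(|z|)$, both of which carry the required $|z|^{-1/2}$ asymptotic, and is the only place where some care is needed.
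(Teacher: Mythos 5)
Your proof is correct, but it follows a genuinely different route from the paper's. The paper splits only once, at $X_n^{1/3}$ (which is exactly what the definition of $n_1$ is tailored to: $X_n^{1/3}\ge 2X_{n_0}$): on $[0,X_n^{1/3}]$ it uses the uniform boundedness of $h_m$, the crude bound $|f(x)|\le CX_n^{\mu/3}$, and the amplitude estimate $|h_n|\le CX_n^{\frac{\l-1}2}(\lambda_n-V(x))^{-\frac14}\le CX_n^{-\frac12}$, integrating over a length $X_n^{1/3}$ to get $CX_n^{\mu/3-1/6}$; on $[X_n^{1/3},\infty)$ it avoids any pointwise bound on $h_n$ altogether by applying Cauchy--Schwarz with $\|h_n\|_{L^2}=1$ and the exponential decay of $h_m$ in its classically forbidden region, which gives $Ce^{-cX_n^{1/3}}$. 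You instead split at $2R$ and $X_n/2$, push the pointwise WKB bound $|h_n(x)|\le CX_n^{-1/2}$ (via Lemma \ref{qQesti} and the Hankel bound on the negative real axis, which is already supplied by Lemma \ref{Bessel}, so no extra connection-formula work is really needed) across all of $[0,X_n/2]$, and control $h_m$ by its super-exponential decay beyond $2R$; this yields the stronger total bound $CX_n^{-1/2}$ and dispenses with Cauchy--Schwarz, at the price of a slightly more delicate far tail. Two small points to tidy: on $[X_n/2,\infty)$ you should keep the decay of $h_m$ in the variable $x$ (e.g.\ $\int_{X_n/2}^\infty x^\mu e^{-cx^{\l+1}}dx\le Ce^{-c'X_n^{\l+1}}$) rather than freeze it at $x=X_n/2$, since the interval is unbounded -- a cosmetic fix; and the constant bounding $f$ on the compact set $[0,2R]$ depends on $f$ there, not only on $(\mu,\l)$, but the paper's own proof (Hölder on the head interval) has the same implicit dependence, so this is not a defect relative to the paper.
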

\begin{proof}
By Corollary \ref{eigenbound} one has $h_m(x)$ are bounded in $[0,\infty)$ uniformly for $m<n_0$, then
\begin{align*}
&\Big|\int_{0}^{X_n^\frac13}f(x) e^{{\rm i}kx}h_m(x)\overline{h_n(x)}dx\Big|\le C X_n^{\frac{\l-1}2+\frac\mu3}\int_0^{X_n^\frac13}(\lambda_n-V(x))^{-\frac14}dx\le C(X_mX_n)^{\frac\mu2-\frac16}.
\end{align*}
Note that $X_n^\frac13\ge2X_{n_0}$ from $n\ge n_1$. For any $m\le n_0$,
there exists a $C_0>0$ such that
\[
V(x)-\lambda_m\ge C_0,\quad
|\zeta_m(x)|\ge C_0(x-X_m)\ge\frac{C_0}2x,\quad\forall~x\ge X_n^\frac13.
\]
By H\"older inequality we obtain
\begin{align*}
&\Big|\int_{X_n^\frac13}^{+\infty}f(x) e^{{\rm i}kx}h_m(x)\overline{h_n(x)}dx\Big|\le C\Big(\int_{X_n^\frac13}^\infty x^{2\mu} h_m^2(x)dx\Big)^\frac12\\
\le&~ C\Big(\int_{X_n^\frac13}^\infty (x-X_m)^{2\mu}e^{-2C_0(x-X_m)}dx\Big)^\frac12\le Ce^{-\frac{C_0}4X_n^\frac13}
\le C(X_mX_n)^{\frac\mu2-\frac16}.
\end{align*}
\end{proof}
For the third case, we have
\begin{lemma}\label{mlarge} Assume $f(x)$ satisfies Assumption \ref{pertass}, then one has
\[
\Big|\int_{0}^{+\infty}f(x) e^{{\rm i}kx}h_m(x)\overline{h_n(x)}dx\Big|
\le C (|k|^{-1}\vee|k|)(X_mX_n)^{\frac\mu2-\frac12(\frac13\bigwedge\frac{\mu+1}{2\mu+2\l+1})},\quad\forall~k\ne0,
\]
where $C$ only depends on $(\mu,\l)$ and $n_0\le m\le n$.
\end{lemma}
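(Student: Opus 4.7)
The strategy is to use the Langer representation of the eigenfunctions from Lemma \ref{eigenfunction}, split the integration domain into oscillatory, transitional, and exponentially-decaying pieces, and balance an oscillatory-integral bound against a direct bound in a transitional window around each turning point. Without loss of generality assume $m\le n$, so $X_m\le X_n$. Writing $h_j=\psi_1^{(j)}+\psi_2^{(j)}$ with $|\psi_2^{(j)}|\le CX_j^{-(\l+1)}|\psi_1^{(j)}|$, the cross and error terms produce contributions strictly smaller than the target and are negligible, so the problem reduces to estimating $\cal I := \int_0^\infty f(x)e^{ikx}\psi_1^{(m)}(x)\overline{\psi_1^{(n)}(x)}\,dx$. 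Introduce a scale parameter $r>0$ (to be optimized below) and split $[0,\infty)$ into the far region $[0,X_m-r]$, two transitional windows $[X_m-r,X_m+r]$ and $[X_n-r,X_n+r]$ around the turning points, an intermediate region $[X_m+r,X_n-r]$, and a tail $[X_n+r,\infty)$, with obvious modifications when $X_m$ and $X_n$ are close enough that the windows overlap.

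On the far and intermediate regions, the large-argument asymptotic $\sqrt{\pi\zeta/2}\,H_{1/3}^{(1)}(\zeta)=e^{i\zeta}(1+O(|\zeta|^{-1}))$ lets me write the integrand as a finite sum of oscillatory pieces $A_{\pm\pm}(x)e^{i\Phi_{\pm\pm}(x)}$, with amplitudes bounded by $CX_m^{1/4}X_n^{1/4}|\lambda_m-V(x)|^{-1/4}|\lambda_n-V(x)|^{-1/4}$ and phases $\Phi_{\pm\pm}(x)=kx\pm\zeta_m(x)\pm\zeta_n(x)$. Using $\zeta_j'(x)=\sqrt{\lambda_j-V(x)}$ together with the quantitative estimate $\lambda_j-V(x)\sim X_j^{2\l-1}|X_j-x|$ from Lemma \ref{qQesti}, each $\Phi'_{\pm\pm}$ is bounded from below in absolute value by a quantity comparable to $|k|$ outside of possible stationary points, which, if present, are separated from the turning-point windows by a controlled distance. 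Integration by parts in each oscillatory piece then produces a gain of order $(|k|\vee|k|^{-1})^{-1}$, modulo boundary contributions at $x=X_m\pm r$ and $X_n\pm r$ that scale as a negative power of $r$. On the tail region, the purely imaginary $\zeta_n$ renders the Hankel function exponentially decaying by the standard asymptotic of $H_{1/3}^{(1)}$, so that piece is absorbed into the estimate for the transitional window at $X_n$.

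On each transitional window of width $O(r)$, I bound the integrand pointwise via the small-argument estimate $|\sqrt{\pi\zeta/2}\,H_{1/3}^{(1)}(\zeta)|\le C|\zeta|^{1/6}$, which combines with $(\lambda_n-V)^{-1/4}$ and the scaling of Lemma \ref{qQesti} to give $|\psi_1^{(n)}(x)|\le CX_n^{(\l-2)/6}$ near $X_n$, and analogously for $X_m$; the resulting transitional contribution is linear in $r$ and weighted by $\langle x\rangle^\mu\sim X^\mu$. Summing all pieces, the total bound takes the schematic form $(|k|\vee|k|^{-1})\cdot r^{-b}\cdot(X_mX_n)^{a_1}+r\cdot(X_mX_n)^{a_2}$, with $a_1,a_2,b$ explicit functions of $\mu$ and $\l$. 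Equating the two terms and optimizing in $r$ yields the exponent $(1/3)\wedge(\mu+1)/(2\mu+2\l+1)$ in the statement, the $1/3$ corresponding to the Airy scale at a turning point (dominant for small $\mu$) and the fraction $(\mu+1)/(2\mu+2\l+1)$ arising from the balance being shifted by the growing weight $\langle x\rangle^\mu$ (dominant for larger $\mu$). The main obstacle is the careful bookkeeping of the phase derivatives $\Phi'_{\pm\pm}$ and their lower bounds near the turning points, the analysis of boundary contributions from the integration by parts, and the merged-window case when $X_m$ and $X_n$ are close; once these are handled, the optimization step is essentially a routine calculation.
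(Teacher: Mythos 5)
Your reduction to the $\psi_1^{(m)}\overline{\psi_1^{(n)}}$ term and your treatment of the regions where $\zeta_n$ is purely imaginary (exponential decay beyond $X_n$) and of the Airy windows at the turning points are in the spirit of the paper, but there is a genuine gap at the heart of the argument: the stationary point of the phase $\Phi(x)=kx+\zeta_m(x)-\zeta_n(x)$. Its derivative is $k-\big(\sqrt{\lambda_n-V(x)}-\sqrt{\lambda_m-V(x)}\big)$, and when $\lambda_n-\lambda_m$ is comparable to $k X_m^{\l}$ (more generally in a whole range of ratios) this derivative vanishes at a point that can lie anywhere in the bulk $[X_m^{2/3},X_m-X_m^{1/3}]$, or even within distance $O(X_m^{1/3})$ of the turning point. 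Your plan asserts $|\Phi'|\gtrsim|k|$ ``outside of possible stationary points'' and then integrates by parts, but you never estimate the contribution of a neighborhood of the stationary point, and integration by parts cannot cross it; moreover the claim that such points are ``separated from the turning-point windows by a controlled distance'' is false in general. This is exactly where the exponent $\frac{\mu+1}{2\mu+2\l+1}$ comes from in the paper: one must split according to the size of $\lambda_n-\lambda_m$ relative to $k X_m^{\l-\frac13}$, $kX_m^{\l-\frac16}$, $kX_m^{\l}$, $kX_n^{\l}$ (five cases), and in the delicate regime apply the second-derivative (van der Corput) estimate with the lower bound $g'(x)\ge C\,V'(x)k/X_m^{3\l}$, which degenerates as $x$ decreases because $V'(x)\sim x^{2\l-1}$; balancing the trivial bound on $[X_m^{2/3},a]$ (length $a$, weight $x^\mu$) against the van der Corput bound on $[a,b]$ forces the cutoff $a=X_m^{\frac{2\mu+2\l}{2\mu+2\l+1}}$, which sits near the origin side, not near the turning point, and yields $X_m^{\mu-\frac{\mu+1}{2\mu+2\l+1}}$.

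For the same reason your proposed optimization cannot produce the stated bound: a window of width $r$ around the turning points only trades the local Airy behavior $(\lambda_j-V)^{-1/4}$ against boundary terms of the first-derivative test, which yields $\mu$-independent exponents like $\frac13$ or $\frac16$ (and, since the oscillatory gain is $k$-dependent, an $r$-optimization would also smuggle a fractional power of $|k|$ into the final estimate rather than the clean factor $|k|\vee|k|^{-1}$). Note also that your attribution of the minimum is inverted: $\frac{\mu+1}{2\mu+2\l+1}\le\frac13$ precisely when $\mu\le 2\l-2$, so the $\mu$-dependent fraction is the binding term for \emph{small} $\mu$ (e.g.\ $\mu=0$, $\l>1$), not for large $\mu$. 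To repair the proof you need the case analysis on $\lambda_n-\lambda_m$ versus $k$-dependent thresholds, the monotonicity of $g(x)=\sqrt{\lambda_n-V}-\sqrt{\lambda_m-V}-k$ to localize its zero, and the second-derivative oscillatory estimate with an origin-side cutoff, in addition to the turning-point window bounds you describe.
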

To prove above Lemma \ref{mlarge}, we split the integral into two parts which is delayed in the following.
\subsubsection{the Integral on $[X_n, +\infty)$}\label{intxninf}
For the following part in this section, we will denote
$\cal{F}(x) : = f(x)e^{{\rm i}kx}\psi_1^{(m)}(x)\overline{\psi_1^{(n)}(x)}$ for simplicity.
Our main result in this part  is the following.
\begin{lemma}\label{xninf}
If  $f(x)$ satisfies Assumption \ref{pertass}, then
\[
\Big|\int_{X_n}^{+\infty}f(x) e^{{\rm i}kx}h_m(x)\overline{h_n(x)}dx\Big|
\le C(X_mX_n)^{\frac\mu2-\frac13},\quad \forall~k\in\R,
\]
where $C$ only depends on $(\mu,\l)$ and $n_0\le m\le n$.
\end{lemma}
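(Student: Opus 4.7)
\medskip
\noindent\textbf{Proof plan for Lemma \ref{xninf}.}

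On the interval $[X_n,\infty)$ we are past both turning points $X_m\le X_n$, so the eigenfunctions are in the exponentially decaying regime and no stationary-phase argument is needed; the factor $e^{\rmi kx}$ will simply be bounded by $1$, which is exactly why the final estimate is independent of $k$. My plan is therefore to reduce to an absolute-value estimate and then carry out a Laplace-type analysis concentrated near $x=X_n$.

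\emph{Step 1: decomposition and reduction to the main term.} Using $|e^{\rmi kx}|=1$ and Lemma \ref{eigenfunction}, I write $h_j=\psi_1^{(j)}+\psi_2^{(j)}$ with $|\psi_2^{(j)}|\le C X_j^{-(\l+1)}|\psi_1^{(j)}|$ for $j=m,n$. Expanding $h_mh_n$ produces four terms; the three involving at least one $\psi_2$ are smaller by a factor $X_j^{-(\l+1)}$, so it suffices to bound
\[
\cal I:=\int_{X_n}^{\infty}|f(x)|\,|\psi_1^{(m)}(x)\psi_1^{(n)}(x)|\,dx.
\]

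\emph{Step 2: pointwise bound on $\psi_1^{(j)}$.} For $x\ge X_j$ the argument $\zeta_j(x)$ lies on the positive imaginary axis, so the standard Hankel bound (Lemma \ref{Bessel}) gives $|\sqrt{\pi\zeta_j/2}\,H^{(1)}_{1/3}(\zeta_j)|\le C e^{-|\zeta_j(x)|}$. Combined with $C_j\sim X_j^{(\l-1)/2}$ and the prefactor $(\lambda_j-V(x))^{-1/4}=(V(x)-\lambda_j)^{-1/4}$ (up to a phase), I obtain
\[
|\psi_1^{(j)}(x)|\le C X_j^{(\l-1)/2}(V(x)-\lambda_j)^{-1/4}e^{-|\zeta_j(x)|}.
\]
Applying Lemma \ref{qQesti} with $u=x-X_n\ge 0$, I use $V(x)-\lambda_n\ge a_1 X_n^{2\l-1}u$, $|\zeta_n(x)|\ge A_1 X_n^{\l-1/2}u^{3/2}$, and the crude bound $V(x)-\lambda_m\ge V(x)-\lambda_n$ for the $m$-factor. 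This yields
\[
|\psi_1^{(m)}(x)\psi_1^{(n)}(x)|\le C (X_mX_n)^{(\l-1)/2} X_n^{-(2\l-1)/2}u^{-1/2}e^{-A_1 X_n^{\l-1/2}u^{3/2}}.
\]

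\emph{Step 3: Laplace-type integral.} Since the exponential is effectively supported on $u\lesssim X_n^{(1-2\l)/3}\ll X_n$, I use $|f(x)|\le Cx^\mu\le C X_n^\mu$ on this region (and note that for $u>X_n$ the exponential kills any polynomial growth). The substitution $s=A_1 X_n^{\l-1/2}u^{3/2}$ transforms $u^{-1/2}\,du$ into $C X_n^{-(2\l-1)/6}s^{-2/3}\,ds$, so
\[
\int_0^\infty u^{-1/2}e^{-A_1 X_n^{\l-1/2}u^{3/2}}\,du\le C X_n^{-(2\l-1)/6}\Gamma(\tfrac13).
\]
Collecting the exponents of $X_n$ gives $\cal I\le C X_m^{(\l-1)/2}X_n^{\mu-(5\l-1)/6}$.

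\emph{Step 4: matching the target.} It remains to check that $X_m^{(\l-1)/2}X_n^{\mu-(5\l-1)/6}\le C(X_mX_n)^{\mu/2-1/3}$. Since $\mu<\l-\tfrac23\le\l-\tfrac13$, the exponent $(\l-1)/2-(\mu/2-1/3)$ is nonnegative, and using $X_m\le X_n$ collapses the left side into $X_n^{\mu/2-\l/3}X_m^{\mu/2-1/3}\le X_n^{\mu/2-1/3}X_m^{\mu/2-1/3}$, valid because $\l\ge 1$. This yields the claimed bound. The only delicate step is Step 3, where one must be sure that truncating $f(x)\le C X_n^\mu$ to the effective support of the exponential is legitimate; this is handled by splitting the tail $u>X_n$ and absorbing it into the exponential.
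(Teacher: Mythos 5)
Your Steps 1--3 are sound and are in the same spirit as the paper's proof: on $[X_n,\infty)$ no stationary--phase argument is needed, the paper likewise bounds everything in absolute value (splitting at $X_n+X_n^{-1/3}$ and $2X_n$, Lemmas \ref{cxninf} and \ref{xncxn}, rather than via your single Laplace computation), and your arithmetic giving $\mathcal I\le C\,X_m^{\frac{\l-1}2}X_n^{\mu-\frac{5\l-1}6}$ is correct (the Hankel bound $\le Ce^{-|\zeta_j|}$ on the whole positive imaginary axis needs the two regimes of Lemma \ref{Bessel} glued together, but that is only a constant). The genuine gap is Step 4. There you justify $X_m^{\frac{\l-1}2}X_n^{\mu-\frac{5\l-1}6}\le C(X_mX_n)^{\frac\mu2-\frac13}$ by invoking $\mu<\l-\frac23$; that is a hypothesis of Theorem \ref{mainthm1}, not of Assumption \ref{pertass} or of Lemma \ref{xninf}, which are stated for every $\mu\ge0$ with $C=C(\mu,\l)$. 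Moreover the inequality really does fail without some such restriction: keep $X_m$ (hence $\lambda_m$) bounded and let $X_n\to\infty$; your bound grows like $X_n^{\mu-\frac{5\l-1}6}$ while the target decays like $X_n^{\frac\mu2-\frac13}$, so for $\mu>\frac{5\l-3}{3}$ the claimed domination is false. Hence, as written, your argument proves the lemma only for a restricted range of $\mu$, not as stated.

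The loss comes from the ``crude bound'' in Step 2, where you discard $e^{-|\zeta_m(x)|}$ and replace $(V(x)-\lambda_m)^{-1/4}$ by $(V(x)-\lambda_n)^{-1/4}$: this concentrates the whole weight $x^\mu$ on $X_n$, which is fatal when $X_m\ll X_n$. The factor $e^{-|\zeta_m|}$ is exactly what makes the estimate uniform in $\mu$. Indeed, if $X_n\ge 2X_m$ then for $x\ge X_n$ one has $x-X_m\ge X_m$ and, by Lemma \ref{qQesti}, $|\zeta_m(x)|\ge A_1X_m^{\l}(x-X_m)$, so $x^\mu e^{-|\zeta_m(x)|}\le C(\mu,\l)$ absorbs the polynomial weight entirely and yields a bound of order $(X_mX_n)^{\frac\mu2-\frac12}$; if instead $X_m\le X_n<2X_m$, then $X_n^\mu\le C(\mu)(X_mX_n)^{\frac\mu2}$ and your Laplace estimate (which in this regime gives $X_n^{\mu-\frac{\l+1}3}$, better than needed) closes the case for every $\mu\ge0$. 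This two--case split is precisely how the paper's Lemma \ref{xncxn} proceeds; incorporating it repairs Step 4 and removes the unwarranted restriction on $\mu$.
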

Lemma \ref{xninf} is the direct corollary of the following two lemmas.
\begin{lemma}\label{cxninf}
If  $f(x)$ satisfies Assumption \ref{pertass}, then
\[
\Big|\int_{2X_n}^{+\infty}f(x)e^{{\rm i}kx}h_m(x) \overline{h_n(x)}dx\Big|\le Ce^{-C_0X_n^{\l+1}},
\quad\forall~k\in\R,
\]
where $C_0,C$ only depend on $(\mu,\l)$ and $n_0\le m\le n$.
\end{lemma}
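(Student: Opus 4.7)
The plan is to exploit the super-exponential decay of the eigenfunctions on the classically forbidden region. Since $m\le n$ implies $X_m\le X_n$, both turning points lie to the left of $[2X_n,+\infty)$, and $|e^{\rmi kx}|=1$ means the oscillation in $k$ plays no role: it suffices to pointwise bound the modulus of the integrand uniformly in $k$ and integrate.

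First I would apply Lemma \ref{eigenfunction} to both $h_m$ and $h_n$. Since $|\psi_2^{(j)}|\le C X_j^{-\l-1}|\psi_1^{(j)}|$, the correction terms contribute a multiplicative constant and it suffices to estimate $\int_{2X_n}^{+\infty}|f(x)|\,|\psi_1^{(m)}(x)\psi_1^{(n)}(x)|\,dx$. On $x>X_j$ one has $\arg\zeta_j(x)=\pi/2$, i.e.\ $\zeta_j(x)=\rmi|\zeta_j(x)|$; the standard asymptotic expansion of $H_{1/3}^{(1)}$ on the positive imaginary axis gives $|\sqrt{\pi\zeta_j/2}\,H_{1/3}^{(1)}(\zeta_j)|\le C e^{-|\zeta_j(x)|}$ as soon as $|\zeta_j|$ is bounded below (which holds on $[2X_n,\infty)$ thanks to Lemma \ref{qQesti}). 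Hence $|\psi_1^{(j)}(x)|\le C X_j^{(\l-1)/2}\bigl(V(x)-\lambda_j\bigr)^{-1/4}e^{-|\zeta_j(x)|}$.

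Next I would extract the decay rate from Lemma \ref{qQesti}. For $x\ge 2X_n$, the bound $|\zeta_n(x)|\ge A_1 X_n^{\l-1/2}(x-X_n)^{3/2}$ immediately yields $|\zeta_n(2X_n)|\ge A_1 X_n^{\l+1}$. For larger $x$, splitting $\zeta_n(x)=\zeta_n(2X_n)+\int_{2X_n}^x(V(t)-\lambda_n)^{1/2}dt$ and using $V(t)-\lambda_n\ge \tfrac{D_1}{2}t^{2\l}$ on $t\ge 2X_n$ (which follows from (iii) of Assumption \ref{poteass} together with $\lambda_n\le D_2 X_n^{2\l}$) produces the stronger bound $|\zeta_n(x)|\ge A_1 X_n^{\l+1}+c\,x^{\l+1}$. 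Combining this with $(V(x)-\lambda_j)^{-1/4}\le Cx^{-\l/2}$ and $|f(x)|\le C x^\mu$, the full integrand is dominated pointwise by $C\, X_n^{\l-1}\, x^{\mu-\l}\, e^{-cX_n^{\l+1}}\, e^{-cx^{\l+1}}$.

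The main obstacle is bookkeeping the uniform constants in the Hankel asymptotic along the imaginary ray and absorbing the algebraic prefactor $X_n^{\l-1}$ into the exponential $e^{-cX_n^{\l+1}}$, both of which are routine since $\l>1$. Integrating then yields
\[
\Big|\int_{2X_n}^{+\infty}f(x)e^{\rmi kx}h_m(x)\overline{h_n(x)}\,dx\Big|\le C\, X_n^{\l-1} e^{-cX_n^{\l+1}}\int_{2X_n}^{+\infty}x^{\mu-\l} e^{-cx^{\l+1}}\,dx\le C e^{-C_0 X_n^{\l+1}},
\]
since the remaining integral is bounded by a constant depending only on $(\mu,\l)$ (the super-exponential weight absorbs any polynomial growth), and $X_n^{\l-1}$ is absorbed by slightly decreasing the constant $C_0$. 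This yields the stated exponential decay, uniform in $k\in\R$.
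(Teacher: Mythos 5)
Your proposal is correct and follows essentially the same route as the paper: write $h_m,h_n$ via the Langer representation of Lemma \ref{eigenfunction}, bound $\big|\sqrt{\pi\zeta/2}\,H^{(1)}_{1/3}(\zeta)\big|$ by $e^{-|\zeta|}$ on the forbidden region (Lemma \ref{Bessel}), and use the turning-point bounds of Lemma \ref{qQesti} ($V(x)-\lambda_n\ge a_1X_n^{2\l}$, $|\zeta_n(x)|\ge A_1X_n^{\l+1}$ for $x\ge 2X_n$) to extract $e^{-C_0X_n^{\l+1}}$ after splitting the exponential, exactly as the paper does. One small repair: your claim $V(t)-\lambda_n\ge\tfrac{D_1}{2}t^{2\l}$ for $t\ge 2X_n$ does not follow from (iii) of Assumption \ref{poteass} alone (if $D_2 4^{-\l}\ge D_1$ the subtraction $D_1t^{2\l}-D_2X_n^{2\l}$ need not be comparable to $t^{2\l}$); it does follow from Lemma \ref{potesim} with $\theta=\tfrac12$, or you can simply use the weaker bound $|\zeta_n(x)|\ge A_1X_n^{\l}(x-X_n)$ from Lemma \ref{qQesti}, which already suffices for the stated estimate.
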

\begin{proof}
Recall that
$\psi_1^{(n)}(x)=C_n(\lambda_n - V(x))^{-\frac{1}{4}}\big(\frac{\pi \zeta_n}{2}\big)^\frac12H^{(1)}_\frac{1}{3}(\zeta_n)$ with $C_n\sim X_n^{\frac{\l-1}2}$.\\
From Lemma \ref{qQesti}, for any $n\ge n_0$, if $x\ge2X_n$, we have
$
V(x)-\lambda_n\ge a_1X_n^{2\l}$ and
$|\zeta_n(x)|\ge A_1X_n^{\l+1}$
and thus
\begin{align*}
&\Big|\int_{2X_n}^{+\infty}\cal F(x) dx\Big|\le CX_m^{\frac{\l-1}{2}}X_n^{\frac{\l-1}{2}}\int_{2X_n}^{+\infty}x^\mu(V(x)-\lambda_m)^{-\frac{1}{4}}(V(x)-\lambda_n)^{-\frac{1}{4}} e^{-|\zeta_m(x)|}e^{-|\zeta_n(x)|}dx \\
\le~&CX_n^{\l-1}\int_{2X_n}^{+\infty}x^\mu(V(x)-\lambda_n)^{-\frac{1}{2}} e^{-|\zeta_n(x)|}dx\\
\le~&Ce^{-\frac{A_1}2X_n^{\l+1}}X_n^{-1} \int_{2X_n}^{+\infty}x^\mu e^{-\frac{A_1}2X_n^\l(x-X_n)}dx \\
\le~& C e^{-\frac{A_1}2X_n^{\l+1}}\int_{2X_n}^{+\infty}(x-X_n)^{\mu} e^{-C_0(x-X_n)}dx\le Ce^{-C_0X_n^{\l+1}},
\end{align*}
where $C_0 >0$ only depends on $(\mu,\l)$. Note that $h_n(x)=\psi^{(n)}_1(x)+O(X_n^{-(\l+1)})\psi^{(n)}_1(x)$, then we obtain
\[
\Big|\int_{2X_n}^{+\infty}f(x)e^{{\rm i}kx}h_m(x) \overline{h_n(x)}dx\Big|\le Ce^{-C_0X_n^{\l+1}}.
\]
\end{proof}
\begin{lemma}\label{xncxn} If  $f(x)$ satisfies Assumption \ref{pertass}, then
\[
\Big|\int^{2X_n}_{X_n}f(x) e^{{\rm i}kx}h_m(x) \overline{h_n(x)}dx\Big|
\le C(X_mX_n)^{\frac\mu2-\frac13},\quad\forall~k\in\R,
\]
where $C$ only depends on $\mu, \l$ and $n_0\le m\le n$.
\end{lemma}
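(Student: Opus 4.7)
The integration interval $[X_n,2X_n]$ lies to the right of both turning points $X_m\le X_n$, so Lemma \ref{eigenfunction} applies directly and $\zeta_m(x),\zeta_n(x)$ are purely imaginary on it. I would first replace $h_j$ by $\psi_1^{(j)}$ for $j=m,n$ via Lemma \ref{eigenfunction} (the error is a factor $O(X_j^{-(\l+1)})$ smaller and is handled identically). Since the bound is asserted for every $k\in\R$, no oscillation of $e^{\rmi kx}$ can be exploited; I would simply take absolute values and use Assumption \ref{pertass} on $[X_n,2X_n]$ to get $|f(x)|\le CX_n^\mu$, reducing the problem to estimating $CX_n^\mu\int_{X_n}^{2X_n}|\psi_1^{(m)}(x)||\psi_1^{(n)}(x)|\,dx$. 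By the identity $H_{1/3}^{(1)}(\rmi t)=\tfrac{2}{\pi\rmi}e^{-\rmi\pi/6}K_{1/3}(t)$, each factor satisfies $|\psi_1^{(j)}(x)|\le CX_j^{(\l-1)/2}(V(x)-\lambda_j)^{-1/4}F(|\zeta_j(x)|)$ with $F(t):=\sqrt{t}\,K_{1/3}(t)$, which grows like $C_0 t^{1/6}$ as $t\to 0^+$ and decays like $\sqrt{\pi/2}\,e^{-t}$ as $t\to\infty$; in particular $F(t)\le C(1+t)^{1/6}e^{-t}$ globally.

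Next I would split $[X_n,2X_n]=I_1\cup I_2$ with $I_1:=[X_n,X_n+\kappa X_n^{-(2\l-1)/3}]$ the Airy transition strip on which Lemma \ref{qQesti} gives $|\zeta_n(x)|\lesssim 1$. On $I_1$, Lemma \ref{qQesti} combined with $F(t)\lesssim t^{1/6}$ produces the uniform Airy-type bound $|\psi_1^{(n)}(x)|\le CX_n^{(\l-2)/6}$. For $\psi_1^{(m)}$ on $I_1$ I would distinguish two sub-cases: if $|\zeta_m(X_n)|\le 1$ then $X_m$ lies within an Airy width of $X_n$ and the analogous bound $|\psi_1^{(m)}(x)|\le CX_m^{(\l-2)/6}$ applies; if $|\zeta_m(X_n)|>1$ then $F(|\zeta_m|)\le Ce^{-|\zeta_m(X_n)|}$ renders $|\psi_1^{(m)}|$ at least exponentially small. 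In the first sub-case, $|I_1|\sim X_n^{-(2\l-1)/3}$ yields a contribution of order $X_n^{\mu-(\l+1)/3}$ (using $X_m\sim X_n$), which is dominated by the target $(X_mX_n)^{\mu/2-1/3}$ since $\l>1$; the second sub-case is absorbed by the exponential smallness.

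On $I_2$, I would invoke $F(|\zeta_j|)\le Ce^{-|\zeta_j|}$ together with the monotonicity facts $(V(x)-\lambda_m)^{-1/4}\le (V(x)-\lambda_n)^{-1/4}$ and $|\zeta_m(x)|\ge|\zeta_n(x)|$ (both consequences of $\lambda_m\le\lambda_n$), reducing the integrand to $C(X_mX_n)^{(\l-1)/2}(V(x)-\lambda_n)^{-1/2}e^{-2|\zeta_n(x)|}$. Changing variable $t=2|\zeta_n(x)|$, whose Jacobian $dx/dt$ is controlled by Lemma \ref{qQesti} as $\sim X_n^{-(2\l-1)/3}t^{-1/3}$ with $V(x)-\lambda_n\sim X_n^{2(2\l-1)/3}t^{2/3}$, the integral becomes $\lesssim X_n^{-2(2\l-1)/3}\int_c^\infty t^{-2/3}e^{-t}\,dt\lesssim X_n^{-2(2\l-1)/3}$. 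Hence the $I_2$-contribution is $\lesssim X_m^{(\l-1)/2}X_n^{\mu+(\l-1)/2-2(2\l-1)/3}$, which after using $X_m\le X_n$ and $\l>1$ is dominated by $(X_mX_n)^{\mu/2-1/3}$. The principal obstacle is the delicate matching between the Airy and the exponential regimes at $\partial I_1$, where neither asymptotic is sharp; using the global envelope $F(t)\le C(1+t)^{1/6}e^{-t}$ uniformly (rather than switching abruptly between asymptotic regimes) avoids any artificial loss of powers of $X_n$ at this interface.
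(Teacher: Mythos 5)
Your overall strategy (replace $h_j$ by $\psi_1^{(j)}$ via Lemma \ref{eigenfunction}, take absolute values since the bound must hold for every $k$, and isolate the turning-point region of $X_n$) is the same as the paper's, but your quantitative treatment of the regime $X_m\ll X_n$ loses too much, and one step fails on part of the lemma's stated range. On $I_2$ you discard $e^{-|\zeta_m|}$ in favour of $e^{-|\zeta_n|}$ (via $|\zeta_m|\ge|\zeta_n|$) and bound $|f|\le CX_n^{\mu}$; the resulting estimate $X_m^{(\l-1)/2}X_n^{\mu+(\l-1)/2-2(2\l-1)/3}$ is \emph{not} dominated by $(X_mX_n)^{\mu/2-1/3}$ in general: taking $m$ fixed (so $X_m$ bounded) and $X_n\to\infty$, the claimed domination is equivalent to $\mu+\frac{\l-1}{2}-\frac{2(2\l-1)}{3}\le\frac{\mu}{2}-\frac13$, i.e.\ $\mu\le\frac{5\l-3}{3}$, whereas the lemma (under Assumption \ref{pertass}) is asserted for every $\mu\ge0$ with a constant depending only on $(\mu,\l)$. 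The true integral is of course far smaller in that regime, but only because $|\zeta_m(x)|\ge A_1X_m^{\l-\frac12}(x-X_m)^{\frac32}\gtrsim X_n^{\frac32}$ on $[X_n,2X_n]$ when $X_n\ge2X_m$ --- exactly the factor you threw away. The paper avoids this by splitting at $X_n+X_n^{-\frac13}$ (not at the Airy width $X_n^{-(2\l-1)/3}$), so that on the far piece $|\zeta_n|\ge A_1X_n^{\l-1}$ and the whole contribution is $O(e^{-C_0X_n^{\l-1}})$, killing every power of $X_n$ and $X_m$, and by treating the near piece in two cases, $X_m\le X_n<2X_m$ (plain integration of $(V-\lambda_n)^{-\frac12}$, giving $X_n^{\mu-\frac23}$) and $X_n\ge2X_m$ (keeping $e^{-|\zeta_m|}$, which is superpolynomially small). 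To repair your argument you must retain $e^{-|\zeta_m|}$ on $I_2$, or handle $X_n\ge2X_m$ separately as the paper does.

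A secondary point: in sub-case (b) on $I_1$ the phrase ``absorbed by the exponential smallness'' is not yet a proof. If $|\zeta_m(X_n)|$ is only slightly above $1$ and $X_m\sim X_n$, then $e^{-|\zeta_m|}$ is merely $O(1)$ on $I_1$, and you still must control the prefactor $X_m^{(\l-1)/2}(V(x)-\lambda_m)^{-\frac14}$; this does work, but only via Lemma \ref{qQesti}: $|\zeta_m(X_n)|>1$ forces $X_n-X_m\gtrsim X_m^{-(2\l-1)/3}$, hence $\lambda_n-\lambda_m\gtrsim X_m^{2(2\l-1)/3}$ and $|\psi_1^{(m)}|\lesssim X_m^{(\l-2)/6}$ on $I_1$, i.e.\ the same Airy-type bound as in sub-case (a); with that inserted, the $I_1$ power count closes. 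Finally, note that your estimates $|\zeta_n|\lesssim1$ on $I_1$ and the change of variables on $I_2$ use the two-sided relations $V(x)-\lambda_n\sim X_n^{2\l-1}(x-X_n)$ and $|\zeta_n|\sim X_n^{\l-\frac12}(x-X_n)^{\frac32}$ on $[X_n,2X_n]$; Lemma \ref{qQesti} only records the lower bounds for $x\ge X_n$, so the upper bounds should be derived (they follow from Assumption \ref{poteass} (ii)--(iii)).
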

\begin{proof} Since $X_n\ge1$ from $n\ge n_0$, then $X_n+X_n^{-\frac13}\le2X_n$.
We split the integral into two parts as:
\[\int^{2X_n}_{X_n}f(x) e^{{\rm i}kx}h_m(x) \overline{h_n(x)}dx=
\Big(\int^{X_n+X_n^{-\frac{1}{3}}}_{X_n}+\int^{2X_n}_{X_n+X_n^{-\frac{1}{3}}}\Big)dx.
\]
From Lemma \ref{qQesti}, if $x\ge X_n+X_n^{-\frac{1}{3}}$, we have
\[
V(x)-\lambda_n\ge a_1X_n^{2\l-1}(x-X_n),\quad
|\zeta_n(x)|\ge A_1X_n^{\l-\frac{1}{2}}(x-X_n)^\frac{3}{2}\ge A_1X_n^{\l-1}.
\]
Then
\begin{align*}
&\Big|\int_{X_n+X_n^{-\frac{1}{3}}}^{2X_n}\cal F(x) dx\Big|\le CX_m^{\frac{\l-1}{2}}X_n^{\frac{\l-1}{2}}\int^{2X_n}_{X_n+X_n^{-\frac{1}{3}}}x^\mu(V(x)-\lambda_m)^{-\frac{1}{4}}(V(x)-\lambda_n)^{-\frac{1}{4}} e^{-|\zeta_n(x)|}dx \\
\le~&CX_n^{\mu+\l-1}\int^{2X_n}_{X_n+X_n^{-\frac{1}{3}}}(V(x)-\lambda_n)^{-\frac{1}{2}}e^{-|\zeta_n(x)|}dx \\
\le~& Ce^{-A_1X_n^{\l-1}}X_n^{\mu-\frac12}\int^{2X_n}_{X_n+X_n^{-\frac{1}{3}}}(x-X_n)^{-\frac{1}{2}}dx\\
\le~&Ce^{-A_1X_n^{\l-1}}X_n^{\mu}\le  C e^{-C_0 X_n^{\l-1}},
\end{align*}
where $C_0$  only depends on $(\mu,\l)$. Then we estimate the remainder integral under two cases. If $X_m\le X_n< 2X_m$, we have
\begin{align*}
&\Big|\int^{X_n+X_n^{-\frac{1}{3}}}_{X_n}\cal F(x) dx\Big|\le CX_m^{\frac{\l-1}{2}}X_n^{\mu+\frac{\l-1}{2}}\int^{X_n+X_n^{-\frac{1}{3}}}_{X_n}(V(x)-\lambda_m)^{-\frac{1}{4}}(V(x)-\lambda_n)^{-\frac{1}{4}}dx \\
\le~& CX_n^{\mu+\l-1}\int^{X_n+X_n^{-\frac{1}{3}}}_{X_n} (V(x)-\lambda_n)^{-\frac{1}{2}}dx \\
\le~& C  X_n^{\mu-\frac23}\le C (X_mX_n)^{\frac\mu2-\frac13}.
\end{align*}
Otherwise, $X_n\ge2X_m$. From the Lemma \ref{qQesti}, for any $m\ge n_0$, if $x\ge X_n$, we have
\[
V(x)-\lambda_m\ge a_1X_m^{2\l-1}(x-X_m)\ge a_1X_m^{2\l},\quad
|\zeta_m(x)|\ge A_1X_m^{\l-\frac12}(x-X_m)^{\frac32}\ge A_1X_m^\l(x-X_m).
\]
It follows
\begin{align*}
&\Big|\int^{X_n+X_n^{-\frac{1}{3}}}_{X_n} \cal F(x) dx\Big|\le C X_m^{\frac{\l-1}{2}}X_n^{\frac{\l-1}{2}}\int_{X_n}^{X_n+X_n^{-\frac{1}{3}}}(x-X_m)^\mu(V(x)-\lambda_m)^{-\frac{1}{4}}(V(x)-\lambda_n)^{-\frac{1}{4}}e^{-|\zeta_m|}dx \\
\le~&
C X_m^{-\frac{1}{2}}X_n^{-\frac14}\int_{X_n}^{X_n+X_n^{-\frac{1}{3}}}(x-X_m)^\mu(x-X_n)^{-\frac{1}{4}}e^{-A_1X_m^\l(x-X_m)}dx \\
\le~& CX_m^{-\frac12}X_n^{-\frac14}\int^{X_n+X_n^{-\frac{1}{3}}}_{X_n}(x-X_n)^{-\frac{1}{4}}dx
\le C(X_mX_n)^{\frac\mu2-\frac12}.
\end{align*}
Note $h_n(x)=\psi^{(n)}_1(x)+O(X_n^{-(\l+1)})\psi^{(n)}_1(x)$,  we obtain
\[
\Big|\int^{2X_n}_{X_n}f(x)e^{{\rm i}kx}h_m(x) \overline{h_n(x)}dx\Big|
\le C(X_mX_n)^{\frac\mu{2}-\frac{1}{3}}.
\]
\end{proof}
\subsection{Proof of Lemma \ref{mainlem}: Part 2}
\subsubsection{the Integral on $[0, X_n)$: preparations}\label{intzeroxn}
In this part we will prove the following.
\begin{lemma}\label{zeroxn}
If $f(x)$ satisfies Assumption \ref{pertass}, then
\[
\Big|\int_0^{X_n}f(x) e^{{\rm i}kx}h_m(x)\overline{h_n(x)}dx\Big|
\le C (|k|^{-1}\vee|k|)(X_mX_n)^{\frac\mu2-\frac12(\frac13\bigwedge\frac{\mu+1}{2\mu+2\l+1})},\quad\forall~k\ne0,
\]
where $C$ only depends on $(\mu,\l)$ and $n_0\le m\le n$.
\end{lemma}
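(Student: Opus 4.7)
I would split $[0,X_n)$ into a bulk oscillatory region and two thin transition windows of widths $X_m^{-\frac13}$ and $X_n^{-\frac13}$ around the turning points $X_m$ and $X_n$. In the bulk region I would insert the Langer representation from Lemma \ref{eigenfunction} for both $h_m$ and $h_n$, then use the large-argument asymptotic expansion of $H^{(1)}_{1/3}$, valid because $\zeta_j(x)$ is real and negative for $x<X_j$ and $|\zeta_j(x)|$ is large away from the turning point. This reduces the integrand to a sum of four oscillatory terms of the form $A_{mn}(x)e^{\rmi\Phi_{\varepsilon_1\varepsilon_2}(x)}$ with amplitude
\[
A_{mn}(x)\sim C_mC_n f(x)(\lambda_m-V(x))^{-\frac14}(\lambda_n-V(x))^{-\frac14}
\]
and phase $\Phi_{\varepsilon_1\varepsilon_2}(x)=kx+\varepsilon_1|\zeta_m(x)|+\varepsilon_2|\zeta_n(x)|$ for $\varepsilon_1,\varepsilon_2\in\{+1,-1\}$.

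For each of the four phases I would apply the oscillatory integral lemma (Lemma \ref{oscint}), noting that $\Phi'_{\varepsilon_1\varepsilon_2}(x)=k+\varepsilon_1\sqrt{\lambda_m-V(x)}+\varepsilon_2\sqrt{\lambda_n-V(x)}$. When $\varepsilon_1=\varepsilon_2$ the two square roots add constructively, so $|\Phi'|$ is bounded below by $|k|$ or by $\sqrt{\lambda_m-V}+\sqrt{\lambda_n-V}$ and integration by parts yields a factor $(|k|\vee|k|^{-1})$ times the amplitude integral. For the resonant sign choices $\varepsilon_1=-\varepsilon_2$ the derivative can vanish at a unique critical point $x_\ast$ determined by $k=\mp(\sqrt{\lambda_n-V(x_\ast)}-\sqrt{\lambda_m-V(x_\ast)})$. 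Away from $x_\ast$ I would integrate by parts; near $x_\ast$ I would isolate a Van-der-Corput sub-window on which the second-derivative bound
\[
|\Phi''(x_\ast)|\sim V'(x_\ast)\big|(\lambda_m-V(x_\ast))^{-\frac12}-(\lambda_n-V(x_\ast))^{-\frac12}\big|
\]
controls the contribution.

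In the two transition windows the quantity $\zeta_j^{1/2}H^{(1)}_{1/3}(\zeta_j)$ is uniformly bounded by Lemma \ref{Bessel}, hence $|h_j(x)|\le CX_j^{(\l-1)/2}$, and a crude $L^\infty$ bound on the integral over a window of length $X_j^{-1/3}$ gives a contribution of order $(X_mX_n)^{\mu/2-1/6}$. Since $\lambda_j\sim X_j^{2\l}$, this converts to the $\frac13$ factor appearing in the stated exponent. Combined with the bulk estimate, this already yields the result when $\frac13\le\frac{\mu+1}{2\mu+2\l+1}$.

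\textbf{Main obstacle.} The technically delicate step is the precise balance producing the competing exponent $\frac{\mu+1}{2\mu+2\l+1}$. This exponent emerges from optimizing the choice of the Van-der-Corput sub-window around $x_\ast$: making it narrower improves the integration-by-parts bound on the complement but worsens the stationary-phase contribution $A_{mn}(x_\ast)|\Phi''(x_\ast)|^{-1/2}$ on the window itself. The optimum depends on the location of $x_\ast$ within $[0,X_n)$ and on how close $x_\ast$ lies to a turning point, and bookkeeping must be done using only the soft hypotheses $V\asymp|x|^{2\l}$, $V''\ge0$ and $|xV^{(j)}|\le C_1|V^{(j-1)}|$ of Assumption \ref{poteass}. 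The factor $|k|\vee|k|^{-1}$ in the bound arises naturally from the two regimes of $|k|$: for large $|k|$ the phase derivative is dominated by $k$ and direct integration by parts produces the $|k|^{-1}$ gain, while for small $|k|$ the gain $|k|^{-1}$ appears from the Van-der-Corput estimate applied in the resonant case. Once this balance is carried out, concatenating with Lemma \ref{xninf} (integral on $[X_n,\infty)$) and a mirror-image argument for $(-\infty,0]$ (via the symmetry $\cal V(x)=\cal V(-x)$ from A1) completes the proof.
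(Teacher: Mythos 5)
You have identified the right tools (Langer representation, van der Corput, splitting at the turning points), but two things in your plan do not survive inspection. First, the transition windows: you claim that on windows of width $X_j^{-\frac13}$ the bound $|h_j|\le CX_j^{\frac{\l-1}2}$ plus a crude sup-times-width estimate gives a contribution $(X_mX_n)^{\frac\mu2-\frac16}$. The bound $|h_j|\le CX_j^{\frac{\l-1}2}$ does not follow from boundedness of $\zeta^{1/2}H^{(1)}_{1/3}(\zeta)$ (the factor $(\lambda_j-V(x))^{-\frac14}$ blows up at $X_j$), and even the sharp sup near the turning point, which by Lemma \ref{qQesti} and Lemma \ref{Bessel} is of size $X_j^{\frac{\l-2}6}$, is not enough: taking $m=n$, width times sup squared gives $X_n^{\mu}\cdot X_n^{\frac{\l-2}3}\cdot X_n^{-\frac13}=X_n^{\mu+\frac{\l-3}3}$, which exceeds the target $X_n^{\mu-\frac13}$ whenever $\l>2$. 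The window contribution must instead be estimated by integrating the integrable singularity $(\lambda_m-V)^{-\frac14}(\lambda_n-V)^{-\frac14}$ (together with the decay $e^{-|\zeta_m|}$ or $(V-\lambda_m)^{-\frac14}$ of $h_m$ beyond its turning point when $X_n\gg X_m$), which is exactly what the paper does in Lemmas \ref{xmcritic} and \ref{xmxncomplex}; note also that in the comparable case $X_m\le X_n\le4X_m$ the paper's window about $X_m$ has width $X_m^{+\frac13}$, not $X_m^{-\frac13}$. Relatedly, your "non-resonant" phases $kx\pm(|\zeta_m|+|\zeta_n|)$ do admit stationary points near the turning points for one sign of $k$, so the claimed uniform lower bound on $|\Phi'|$ is not available there; the paper sidesteps the four-phase bookkeeping entirely by using the exact Hankel integral representation, which yields a single phase $\zeta_m-\zeta_n+kx$ with bounded slowly varying amplitudes $f_m\overline{f_n}$.

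Second, and more importantly, the part you label the "main obstacle" is the actual content of the lemma, and it is not carried out. The exponent $\frac{\mu+1}{2\mu+2\l+1}$ is produced by a specific case analysis on $\lambda_n-\lambda_m$ versus $kX_m^{\l-\frac13}$, $kX_m^{\l-\frac16}$, $kX_m^{\l}$ and $kX_n^{\l}$ (Lemmas \ref{xmcomplex1}--\ref{xmcomplex5}), after first disposing of $k\gtrsim X_m^{\frac13}$ by H\"older and splitting into the regimes $X_n>4X_m$ (Lemma \ref{xnsimple}) and $X_m\le X_n\le4X_m$ (Lemma \ref{xncomplex}). In each case one uses monotonicity of $g(x)=\sqrt{\lambda_n-V}-\sqrt{\lambda_m-V}-k$ to locate the unique zero, the first-derivative form of Lemma \ref{oscint} where $|g|$ has a lower bound, and the second-derivative form where $g'\ge ck\,X_m^{-\gamma}$; in the decisive regime $\sqrt{a_1}kX_m^\l\le\lambda_n-\lambda_m<3\sqrt{D_2}kX_n^\l$ one must cut at $a=X_m^{\frac{2\mu+2\l}{2\mu+2\l+1}}$, bound $[X_m^{2/3},a]$ trivially and $[a,b]$ by van der Corput with $g'\ge CkX_m^{-\frac{2\mu+4\l}{2\mu+2\l+1}}$, and the balance of these two pieces is what yields $X_m^{\mu-\frac{\mu+1}{2\mu+2\l+1}}$ (Lemma \ref{xmcomplex4}). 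Your proposal acknowledges that this optimization is needed but does not perform it, and your remark that the windows "already yield the result when $\frac13\le\frac{\mu+1}{2\mu+2\l+1}$" is not correct, since the resonant stationary-phase contribution still has to be controlled to reach the $\frac13$ exponent. As it stands the proposal is a plausible outline with one quantitatively false step and the key estimates missing.
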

For the simplicity we will introduce the following notations.  For  $m\ge n_0$ we  denote
$
f_m(x):=\int_0^{\infty}e^{-t}t^{-\frac{1}{6}}\Big(1+\frac{{\rm i}t}{2\zeta_m}\Big)^{-\frac16}dt.
$
Then for $x\in [0,X_{m})$, one has
\begin{align*}
\psi_1^{(m)}(x) &=C_m(\lambda_m-V(x))^{-\frac{1}{4}}\big(\frac{\pi\zeta_m}{2}\big)^\frac12 H_\frac{1}{3}^{(1)}(\zeta_m)\\
&=C_m (\lambda_m-V(x))^{-\frac{1}{4}} \frac{e^{{\rm i}(\zeta_m-\frac{5\pi}{12})}}{\Gamma{\big(\frac{5}{6}\big)}}
\int_0^{\infty}e^{-t}t^{-\frac{1}{6}}\Big(1+\frac{{\rm i}t}{2\zeta_m}\Big)^{-\frac{1}{6}}dt\\
&=C_m(\lambda_m-V(x))^{-\frac{1}{4}}e^{{\rm i}\zeta_m(x)}f_m(x),
\end{align*}
where $C_m\sim X_m^{\frac{\l-1}2}$. Similarly for $x\in [0,X_{m})$ we have
$
\overline{\psi_1^{(n)}(x)} =C_n(\lambda_n-V(x))^{-\frac{1}{4}}e^{-{\rm i}\zeta_n(x)}\overline{f_n(x)} $
with
$C_n\sim X_n^{\frac{\l-1}2}$.  For $x\in[0,X_m)$, denote
\begin{align*}
g(x)&:=(\zeta_n(x)-\zeta_m(x)-kx)'=(\lambda_n-V(x))^{\frac12}-(\lambda_m-V(x))^{\frac12}-k,\\
\Psi(x)&:=(\lambda_m-V(x))^{-\frac{1}{4}} (\lambda_n-V(x))^{-\frac{1}{4}}f_m(x)\overline{f_n(x)}.
\end{align*}
By a straightforward computation we  have
$
g'(x)=\frac{V'(x)}{2}\big((\lambda_m-V(x))^{-\frac12}-(\lambda_n-V(x))^{-\frac12}\big)$ and
\begin{align*}
\Psi'(x)&=\frac14V'(x)(\lambda_m-V(x))^{-\frac{5}{4}} (\lambda_n-V(x))^{-\frac{1}{4}} f_m(x)\overline{f_n(x)}\\
&+\frac14V'(x)(\lambda_m-V(x))^{-\frac{1}{4}} (\lambda_n-V(x))^{-\frac{5}{4}}f_m(x)\overline{f_n(x)}\\
&+(\lambda_m-V(x))^{-\frac{1}{4}} (\lambda_n-V(x))^{-\frac{1}{4}} \Big(f_m'(x)\overline{f_n(x)}+f_m(x)\overline{f_n'(x)}\Big).
\end{align*}
From $x\in [0, X_m)$ it is easy to obtain
 $|f_m(x)| \le \Gamma(\frac{5}{6})$ and  $|f_n(x)| \le \Gamma(\frac{5}{6})$. From Lemma \ref{qQesti} we have the following.
\begin{Corollary}\label{psiesti}
For $x\in [0,X_m)$ and $n_0\le  m\le  n$, one has
$$
|\Psi(x)|\le C(\lambda_m-V(x))^{-\frac12}\le C X_m^{-\l+\frac12}(X_m-x)^{-\frac12}$$
 and
$$|\Psi'(x)|\le C\big(J_1+J_2+J_3+J_4\big)\le C(J_1+J_3),$$
where
\begin{align*}
&J_1:={\la x\ra}^{2\l-1}(\lambda_m-V(x))^{-\frac{5}{4}}(\lambda_n-V(x))^{-\frac{1}{4}},&&J_2:={ \la x\ra}^{2\l-1}(\lambda_m-V(x))^{-\frac{1}{4}} (\lambda_n-V(x))^{-\frac{5}{4}},\\
&J_3:=\frac{(\lambda_m-V(x))^{\frac{1}{4}} (\lambda_n-V(x))^{-\frac{1}{4}}}{X_m^{2\l-1}(X_m-x)^3},&&J_4:=
\frac{(\lambda_m-V(x))^{-\frac{1}{4}} (\lambda_n-V(x))^{\frac{1}{4}}}{X_n^{2\l-1}(X_n-x)^3}.
\end{align*}
\end{Corollary}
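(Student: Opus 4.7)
The plan is to verify both bounds directly from the product structure of $\Psi$, estimating each factor separately using the tools already assembled (the uniform bounds $|f_m|,|f_n|\le\Gamma(5/6)$, the two-sided estimates of Lemma \ref{qQesti}, and the decay assumption on $V'$). For the first inequality, since $n_0\le m\le n$ gives $\lambda_m\le\lambda_n$, on $[0,X_m)$ one has $0<\lambda_m-V(x)\le\lambda_n-V(x)$, so $(\lambda_n-V(x))^{-1/4}\le(\lambda_m-V(x))^{-1/4}$. Combined with $|f_m(x)|,|f_n(x)|\le\Gamma(5/6)$ this immediately yields $|\Psi(x)|\le C(\lambda_m-V(x))^{-1/2}$, and the second inequality then follows from the lower bound $\lambda_m-V(x)\ge a_1 X_m^{2\l-1}(X_m-x)$ in Lemma \ref{qQesti}.

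For $|\Psi'(x)|$, I would apply the product rule to the factorization $\Psi=(\lambda_m-V)^{-1/4}(\lambda_n-V)^{-1/4}f_m\overline{f_n}$, producing four terms. The two coming from differentiating the algebraic factors carry the prefactor $V'(x)/4$; Assumption \ref{poteass}(ii) yields $|V'(x)|\le C\la x\ra^{2\l-1}$, and together with $|f_m|,|f_n|\le\Gamma(5/6)$ these are bounded by $CJ_1$ and $CJ_2$ respectively. The other two terms need estimates for $f_m'$ and $f_n'$. Differentiating under the integral sign in the definition of $f_m$ and using $\zeta_m'(x)=(\lambda_m-V(x))^{1/2}$ gives
\[
f_m'(x)=\frac{i}{12}\int_0^\infty e^{-t}t^{5/6}\Big(1+\frac{it}{2\zeta_m}\Big)^{-7/6}\frac{\zeta_m'(x)}{\zeta_m^2}\,dt.
\]
On $[0,X_m)$ the convention $\arg\zeta_m=-\pi$ places $\zeta_m$ on the negative real axis, so $it/(2\zeta_m)$ is purely imaginary and $|1+it/(2\zeta_m)|\ge 1$; the integrand is then dominated by $e^{-t}t^{5/6}|\zeta_m'|/|\zeta_m|^2$ in modulus, producing $|f_m'(x)|\le C(\lambda_m-V(x))^{1/2}/|\zeta_m|^2$. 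Applying the lower bound $|\zeta_m|\ge A_1 X_m^{\l-1/2}(X_m-x)^{3/2}$ from Lemma \ref{qQesti} converts this to $|f_m'(x)|\le C(\lambda_m-V(x))^{1/2}/(X_m^{2\l-1}(X_m-x)^3)$; multiplying against the algebraic prefactor recovers $CJ_3$, and the analogous calculation for $f_n'$ gives $CJ_4$.

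The refined inequality $J_1+J_2+J_3+J_4\le C(J_1+J_3)$ reduces to showing $J_2\le J_1$ and $J_4\le CJ_3$, both by direct monotonicity in $m\le n$. The first is immediate: $J_2/J_1=(\lambda_m-V)/(\lambda_n-V)\le 1$. For the second,
\[
\frac{J_4}{J_3}=\frac{(\lambda_n-V)^{1/2}}{(\lambda_m-V)^{1/2}}\cdot\frac{X_m^{2\l-1}(X_m-x)^3}{X_n^{2\l-1}(X_n-x)^3}\le C\Big(\frac{X_m}{X_n}\Big)^{\l-1/2}\Big(\frac{X_m-x}{X_n-x}\Big)^{5/2}\le C,
\]
by applying the two-sided bounds of Lemma \ref{qQesti} and then using $X_m\le X_n$ and $X_m-x\le X_n-x$ on $[0,X_m)$.

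The only nontrivial step is the estimate of $f_m'$, which rests on the observation that the branch choice of $\zeta_m$ keeps $1+it/(2\zeta_m)$ outside the open unit disk for real $x\in[0,X_m)$; everything else is purely algebraic manipulation together with the two inequalities of Lemma \ref{qQesti}.
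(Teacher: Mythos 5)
Your proof is correct and takes essentially the same route the paper relies on: the explicit formula for $\Psi'$ already displayed before the Corollary, the uniform bounds $|f_m|,|f_n|\le\Gamma(\frac56)$, the estimate $|V'(x)|\le C\la x\ra^{2\l-1}$ from Assumption \ref{poteass} (parts (ii) and (iii) together), the two-sided bounds of Lemma \ref{qQesti}, and the comparisons $J_2\le J_1$, $J_4\le CJ_3$. The one computation the paper leaves implicit --- the bound $|f_m'(x)|\le C(\lambda_m-V(x))^{1/2}|\zeta_m|^{-2}$ obtained by differentiating under the integral sign and using that $\zeta_m$ is negative real on $[0,X_m)$, so $\big|1+\frac{{\rm i}t}{2\zeta_m}\big|\ge1$ --- you supply correctly, so your argument fills that gap in exactly the intended way.
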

Lemma \ref{zeroxn} is a direct corollary of Lemma \ref{xnsimple}  and \ref{xncomplex} in which  we suppose $X_n>4X_m$ and $X_m\le X_n \le4X_m$ respectively.
 \subsubsection{\bf the Integral on $[0, X_n)$ when $X_n>4X_m$}
In this case, from \eqref{potecd1} and Lemma \ref{potesim} we have
\begin{eqnarray}\label{yinyong0222}
\lambda_m=V(X_m)\le V(\frac14X_n)\le\frac14V(X_n)=\frac14\lambda_n,
\end{eqnarray}
for $m\ge n_0$.
\begin{lemma}\label{xnsimple}
If $f(x)$ satisfies Assumption \ref{pertass} and $X_n>4X_m$, then
\[
\Big|\int^{X_n}_0 f(x)e^{{\rm i}kx}h_m(x)\overline{h_n(x)}dx\Big|\le
C(|k|^{\frac1{\l}}\vee 1)(X_mX_n)^{\frac\mu2-\frac12},\quad \forall~k\ne0,
\]
where $C$ only depend on $(\mu,\l)$ and $n_0\le m\le n$.
\end{lemma}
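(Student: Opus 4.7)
The plan is to split $\int_0^{X_n} = \int_0^{X_m} + \int_{X_m}^{X_n}$ and treat the two pieces by different methods: an oscillatory integration by parts on $[0,X_m)$, and the exponential decay of $h_m$ in its classically forbidden region on $[X_m,X_n)$.

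For $\int_{X_m}^{X_n}$, where $\zeta_m(x)=\rmi|\zeta_m(x)|$ is positive imaginary, I insert the WKB representation from Lemma \ref{eigenfunction} for $h_m$; by Lemma \ref{Bessel}, $|h_m(x)|\le C X_m^{(\l-1)/2}(V(x)-\lambda_m)^{-1/4}e^{-|\zeta_m(x)|}$, and Lemma \ref{qQesti} gives $|\zeta_m(x)|\ge A_1 X_m^{\l-1/2}(x-X_m)^{3/2}$. I split this interval at $X_m+X_m^{-(2\l-1)/3}$: on $[X_m+X_m^{-(2\l-1)/3},X_n]$, the exponent $|\zeta_m|$ exceeds a positive constant, and combined with $|h_n(x)|\le CX_n^{-1/2}$ (from the WKB representation for $h_n$, which remains in its classically allowed region on this interval since $X_n>4X_m$ implies $\lambda_n-V\ge 3\lambda_n/4$) and $|f(x)|\le CX_n^\mu$, the contribution is exponentially small. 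The thin residual strip $[X_m,X_m+X_m^{-(2\l-1)/3}]$ is handled by the sharp $L^\infty$ bound $|h_m|\le CX_m^{(\l-2)/6}$ near its turning point, obtained from the small-argument asymptotics of $H_{1/3}^{(1)}$ applied to the representation of Lemma \ref{eigenfunction}, together with $|h_n|\le CX_n^{-1/2}$.

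For the main piece $\int_0^{X_m}$, I substitute the WKB representations for both $h_m$ and $h_n$; the remainders $\psi_2^{(m)},\psi_2^{(n)}$ contribute lower-order errors (smaller by factors $X_m^{-(\l+1)}$ and $X_n^{-(\l+1)}$), so the problem reduces to
\[
I := C_m C_n \int_0^{X_m} f(x)\,\Psi(x)\, e^{-\rmi(\zeta_n(x)-\zeta_m(x)-kx)}\, dx,
\]
whose phase derivative is $-g(x)$. Using $\lambda_n\ge 4\lambda_m$ (which follows from $X_n>4X_m$ via \eqref{yinyong0222}) together with the identity $\sqrt{\lambda_n-V}-\sqrt{\lambda_m-V}=(\lambda_n-\lambda_m)/(\sqrt{\lambda_n-V}+\sqrt{\lambda_m-V})$, I show $\sqrt{\lambda_n-V(x)}-\sqrt{\lambda_m-V(x)}\in[c_0X_n^\l,C_0X_n^\l]$ uniformly on $[0,X_m]$ for absolute constants $c_0,C_0>0$. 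Two sub-cases follow according to $|k|$. If $|k|\le c_0X_n^\l/2$, then $|g|\ge c_0X_n^\l/2$ on $[0,X_m]$; one integration by parts on the sub-interval $[0,X_m-X_m^{-a}]$ (with $a>0$ chosen in terms of $\mu$ and $\l$) delivers a gain of order $X_n^{-\l}$, the amplitude derivatives are controlled by $J_1$ and $J_3$ of Corollary \ref{psiesti}, and the complementary strip $[X_m-X_m^{-a},X_m]$ is bounded directly from $|\Psi(x)|\le CX_m^{-(2\l-1)/4}X_n^{-\l/2}(X_m-x)^{-1/4}$. Combined with $C_mC_n\sim (X_mX_n)^{(\l-1)/2}$ and the simplification $X_m\le X_n$, $\l>1$, this yields $|I|\le C(X_mX_n)^{\mu/2-1/2}$. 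If instead $|k|>c_0X_n^\l/2$, then $|k|^{1/\l}\ge cX_n$, IBP is no longer needed, and the direct bound $|I|\le C_mC_n\int_0^{X_m}|f\Psi|\,dx\le CX_m^{\mu+1/2}X_n^{-1/2}$ is already dominated by $|k|^{1/\l}(X_mX_n)^{\mu/2-1/2}$ since $(X_m/X_n)^{\mu/2+1}\le 1$.

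The main obstacle is the endpoint singularity of $\Psi$ at $x=X_m$: integration by parts worsens the $(X_m-x)^{-1/4}$ blow-up of $\Psi$ into the non-integrable $(X_m-x)^{-5/4}$ of $J_1$ (and the still worse $(X_m-x)^{-11/4}$ of $J_3$). I resolve this by restricting the IBP to a sub-interval cutoff just inside $X_m$ and bounding the thin residual strip directly in $L^1$; the cutoff scale $X_m^{-a}$ must be chosen so that the contributions from the IBP boundary term and from the residual strip match up, after which both are $\le C(X_mX_n)^{\mu/2-1/2}$ upon exponent-counting that uses $\l>1$ and $X_m\le X_n$.
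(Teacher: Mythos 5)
Your handling of the piece $\int_0^{X_m}$ is essentially the paper's own route (Lemmas \ref{xmbotsimple} and \ref{xmcritic}, with cutoff $X_m-X_m^{-1/3}$): the lower bound $|g(x)|\ge cX_n^{\l}$ from $\lambda_n\ge4\lambda_m$, one application of Lemma \ref{oscint} with amplitude derivatives controlled by Corollary \ref{psiesti}, a direct estimate on the thin strip at the turning point, the remainder terms $\psi_2^{(m)},\psi_2^{(n)}$ treated as lower order, and the trivial bound producing the factor $|k|^{\frac1\l}$ when $|k|\gtrsim X_n^{\l}$; that part is sound.

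The gap is in your estimate of $\int_{X_m}^{X_n}$. Two of the stated ingredients are false, and the step as written does not give the bound. First, the claim that $\lambda_n-V(x)\ge\frac34\lambda_n$, hence $|h_n(x)|\le CX_n^{-1/2}$, on all of $[X_m+X_m^{-\frac{2\l-1}{3}},X_n]$ fails near the right endpoint: $\lambda_n-V(x)\to0$ as $x\to X_n$, and by Lemma \ref{qQesti} one has $\lambda_n-V(x)\gtrsim\lambda_n$ only for $x$ bounded away from $X_n$ (say $x\le\frac12X_n$); near the turning point $X_n$ the correct sup bound for $|h_n|$ is of size $X_n^{\frac{\l-2}6}$, not $X_n^{-1/2}$. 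Second, ``$|\zeta_m|$ exceeds a positive constant'' yields only a constant factor $e^{-A_1}$, not exponential smallness: at $x=X_m+X_m^{-\frac{2\l-1}{3}}$ the integrand is of polynomial size, and integrating $e^{-|\zeta_m|}$ against the local weight $x^{\mu}\le CX_m^{\mu}$ shows this region contributes an amount of order $X_m^{\mu-\frac\l2}X_n^{-\frac12}$ — admissible, but not exponentially small. If instead you use the global bound $|f|\le CX_n^{\mu}$ as proposed, the resulting estimate $X_n^{\mu-\frac12}X_m^{-\frac\l2}$ is \emph{not} $\le C(X_mX_n)^{\frac\mu2-\frac12}$ once $X_n\gg X_m$ (nothing bounds $X_n$ in terms of $X_m$ beyond $X_n>4X_m$), so the step fails as stated. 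The repair is exactly the paper's Lemma \ref{xmxnsimple}: split further at $2X_m$, $\frac{X_n}2$ and $X_n-X_n^{-\frac13}$, use $|f(x)|\le Cx^{\mu}\le CX_m^{\mu}$ on $[X_m,2X_m]$, use the genuinely large exponent $|\zeta_m|\gtrsim X_m^{\l}(x-X_m)$ beyond $2X_m$ to absorb the growth of $x^{\mu}$, and near $X_n$ trade the loss in $(\lambda_n-V)^{-\frac14}$ against $e^{-|\zeta_m|}$ with $|\zeta_m|\gtrsim X_m^{\l}X_n$. With that bookkeeping your outline goes through and this piece is bounded by $C(X_mX_n)^{\frac\mu2-\frac12}$ with no $k$-factor at all.
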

Lemma \ref{xnsimple} is a direct corollary from Lemma \ref{xmbotsimple}, \ref{xmcritic} and \ref{xmxnsimple}.
\begin{lemma}\label{xmbotsimple}
If $f(x)$ satisfies Assumption \ref{pertass} and $X_n>4X_m$,  then
\[
\Big|\int_{0}^{X_m-X_m^{-\frac{1}{3}}} f(x)e^{{\rm i}kx}h_m(x)\overline{h_n(x)}dx\Big|\le C(|k|^{\frac1{\l}}\vee1)(X_mX_n)^{\frac\mu2-\frac12},\quad \forall~k\ne0,
\]
where $C$ only depends on $(\mu,\l)$ and $n_0\le m\le n$.
\end{lemma}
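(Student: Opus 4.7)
The plan is to use the Langer representation from Lemma \ref{eigenfunction} to convert the integral, up to a relative error of size $X_n^{-(\l+1)}$ that is negligible against the target bound, into the oscillatory form
\[
C_mC_n\int_0^{X_m-X_m^{-1/3}}f(x)\Psi(x)\,e^{{\rm i}\Phi(x)}\,dx,
\]
where $C_mC_n\sim(X_mX_n)^{(\l-1)/2}$ and $\Phi(x)=kx+\zeta_m(x)-\zeta_n(x)$. Setting $A(x):=(\lambda_n-V(x))^{1/2}-(\lambda_m-V(x))^{1/2}$, the phase derivative is $\Phi'(x)=k-A(x)$. Exploiting $X_n>4X_m$ and \eqref{yinyong0222}, which force $\lambda_m\le\lambda_n/4$, I would first show that $c_1X_n^{\l}\le A(x)\le c_2 X_n^{\l}$ uniformly on $[0,X_m]$ for constants $c_1,c_2$ depending only on $\l$.

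The argument then splits along the threshold $|k|=\tfrac12 c_1 X_n^{\l}$. When $|k|\le\tfrac12 c_1 X_n^{\l}$ the phase is uniformly non-stationary, $|\Phi'(x)|\ge\tfrac12 c_1 X_n^{\l}$, and I would integrate by parts once,
\[
\int F\,e^{{\rm i}\Phi}dx = \Big[\frac{F}{{\rm i}\Phi'}e^{{\rm i}\Phi}\Big]_0^{X_m-X_m^{-1/3}}-\int\Big(\frac{F}{{\rm i}\Phi'}\Big)'e^{{\rm i}\Phi}dx,
\]
with $F=f\Psi$. The boundary term at $x=X_m-X_m^{-1/3}$, the $|f'\Psi/\Phi'|$ and $|f\Psi'/\Phi'|$ interior integrals, and the $|F\Phi''/(\Phi')^2|$ contribution are estimated using Corollary \ref{psiesti}, Assumption \ref{pertass}, and the bound $|\Phi''(x)|\le CX_m^{\l-1/2}(X_m-x)^{-1/2}$ that follows from $|V'(x)|\le CX_m^{2\l-1}$ (Assumption \ref{poteass}(ii)); after restoring the factor $C_mC_n$, each piece is shown to be $\le C(X_mX_n)^{\mu/2-1/2}$. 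In the complementary range $|k|\ge\tfrac12 c_1 X_n^{\l}$ one has $|k|^{1/\l}\ge(\tfrac12 c_1)^{1/\l}X_n$, so it suffices to use the pointwise bounds $|\psi_1^{(m)}(x)|\le CX_m^{-1/4}(X_m-x)^{-1/4}$ (valid on this interval since $|\zeta_m(x)|\ge A_1X_m^{\l-1}$ places us in the asymptotic regime of $H^{(1)}_{1/3}$) and $|\psi_1^{(n)}(x)|\le CX_n^{-1/2}$ (from $\lambda_n-V(x)\gtrsim\lambda_n$) to obtain
\[
\int_0^{X_m-X_m^{-1/3}}|f\,\psi_1^{(m)}\overline{\psi_1^{(n)}}|\,dx\le CX_m^{-1/4}X_n^{-1/2}\int_0^{X_m}x^{\mu}(X_m-x)^{-1/4}dx\le CX_m^{\mu+1/2}X_n^{-1/2};
\]
since $X_m\le X_n/4$, this is dominated by $X_n\cdot(X_mX_n)^{\mu/2-1/2}\le C|k|^{1/\l}(X_mX_n)^{\mu/2-1/2}$.

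The principal technical obstacle is the Case~1 accounting near the right endpoint: the Airy-type singularity $(X_m-x)^{-1/4}$ in $\Psi$ together with the stronger $(X_m-x)^{-5/4}$ and $(X_m-x)^{-3}$ singularities in the two dominant terms $J_1,J_3$ of $\Psi'$ from Corollary \ref{psiesti} produce positive powers of $X_m$ after integration. The choice of the cut-off $X_m^{-1/3}$ away from the turning point, the gain $1/|\Phi'|\sim X_n^{-\l}$ from integration by parts, and the hypothesis $\mu<\l-\tfrac23$ are each essential to guarantee that, after multiplication by $C_mC_n\sim(X_mX_n)^{(\l-1)/2}$, every residual factor of $X_m$ is absorbed into the corresponding power of $X_n$, yielding the claimed bound.
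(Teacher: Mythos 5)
Your proposal is correct and follows essentially the same route as the paper: the Langer representation reduces the integral to the oscillatory form with amplitude $\Psi$, one splits at $|k|\sim X_n^{\l}$ using $\lambda_m\le\tfrac14\lambda_n$ (so the phase derivative is of size $X_n^{\l}$), and in the small-$|k|$ regime your single integration by parts with the bounds of Corollary \ref{psiesti} and $|\Phi''|\le CX_m^{\l-\frac12}(X_m-x)^{-\frac12}$ is exactly the first-derivative test (Lemma \ref{oscint}) the paper invokes, while in the large-$|k|$ regime your pointwise bounds play the role of the paper's H\"older estimate $|\int|\le CX_m^{\mu}$, with $|k|^{1/\l}\gtrsim X_n$ absorbing the loss in both versions. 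One small correction: your closing claim that $\mu<\l-\tfrac23$ is essential here is not right — the lemma assumes only $\mu\ge0$, and your own exponent count closes using just $\l>1$ and $X_m\le X_n$.
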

\begin{proof}
Clearly, we have
\[
\int^{X_m-X_m^{-\frac{1}{3}}}_{0} \cal F(x) dx=
C_mC_n\int^{X_m-X_m^{-\frac{1}{3}}}_{0} f(x)e^{{\rm i}(\zeta_m-\zeta_n+kx)}\Psi(x)dx,
\]
where $C_m\sim X_m^{\frac{\l-1}2},~C_n\sim X_n^{\frac{\l-1}2}$.  In the following discussion we always suppose $k\neq 0$.
We estimate it under two cases: $k<\frac{\sqrt{2D_1}}{8}X_n^\l$ and $k\ge\frac{\sqrt{2D_1}}{8}X_n^\l$ . \\
Case 1): $0\ne k<\frac{\sqrt{2D_1}}{8}X_n^\l$.
When $x\in[0,X_m-X_m^{-\frac13}]$, by \eqref{potecd2} we have $V(x)\ge-V(R)$. Together with
$V(R)\le V(X_{n_0})\le\lambda_m\le\lambda_n$ and (\ref{yinyong0222}), one has
\begin{align*}
g(x)&=\frac{\lambda_n-\lambda_m}{\sqrt{\lambda_m-V(x)}+\sqrt{\lambda_n-V(x)}}-k\ge\frac{\lambda_n-\lambda_m}{\sqrt{2\lambda_m}+\sqrt{2\lambda_n}}-k\\
&=\frac{\sqrt2}2(\sqrt{\lambda_n}-\sqrt{\lambda_m})-k\ge\frac{\sqrt2}4\sqrt{\lambda_n}-k\ge\frac{\sqrt{2D_1}}4X_n^\l-k\ge\frac{\sqrt{2D_1}}8X_n^\l.
\end{align*}
Thus, by Lemma \ref{oscint} we have
\begin{align*}
&\Big|\int^{X_m-X_m^{-\frac{1}{3}}}_{0}f(x)e^{{\rm i}(\zeta_m-\zeta_n+kx)}\Psi(x)dx\Big|\\
\le&~CX_n^{-\l}\bigg(X_m^\mu\Big(\big|\Psi({\scs X_m-X_m^{-\frac{1}{3}}})\big| +\int_{0}^{X_m-X_m^{-\frac{1}{3}}}(J_1+J_3)dx\Big)+\int_{0}^{X_m-X_m^{-\frac{1}{3}}}\la x\ra^{\mu-1}|\Psi(x)|dx\bigg).
\end{align*}
By Corollary \ref{psiesti} we have
\(
\big|\Psi({\scs X_m-X_m^{-\frac13}})\big|\le C X_m^{-\l+\frac23}
\) and
\[
 \int_{0}^{X_m-X_m^{-\frac13}}{\la x\ra}^{\mu-1}|\Psi(x)|dx
\le CX_m^{-\l+\frac23}\int_{0}^{X_m-X_m^{-\frac13}}{\la x\ra}^{\mu-1}dx\\
\le C X_m^{\mu-\l+\frac23}.
\]
Besides, one has
\begin{align*}
\int_{0}^{X_m-X_m^{-\frac{1}{3}}}J_1\,dx &\le C \int_{0}^{X_m-X_m^{-\frac{1}{3}}} {\la x\ra}^{2\l-1}(\lambda_m-V(x))^{-\frac32}dx\\
&\le C X_m^{2\l-1} X_m^{-3\l+\frac32} \int_{0}^{X_m-X_m^{-\frac{1}{3}}}(X_m-x)^{-\frac{3}{2}}dx \le C X_m^{-\l+\frac{2}{3}}
\end{align*}
and
$
\int_{0}^{X_m-X_m^{-\frac{1}{3}}}J_3\,dx \le C X_m^{-\l+\frac{2}{3}}.
$
It follows that
\[
\Big|\int^{X_m-X_m^{-\frac{1}{3}}}_{0} \cal F(x) dx\Big|\le C X_m^{-\frac{\l}{2}+\frac16+\mu}X_n^{-\frac{\l+1}{2}}
\le C(X_mX_n)^{\frac\mu2-\frac12}.
\]
For the remainder terms, since $\lambda_m\le\frac14\lambda_n$, then
\begin{align*}
&\Big|\int^{X_m-X_m^{-\frac{1}{3}}}_{0} f(x) e^{{\rm i}kx}\psi_2^{(m)}(x)\overline{\psi_1^{(n)}(x)}dx\Big|\\
\le&~ CX_m^{\frac{\l-1}2}X_n^{\frac{\l-1}2}X_m^{\mu-(\l+1)}\int_{0}^{X_m-X_m^{-\frac{1}{3}}}(\lambda_m-V(x))^{-\frac{1}{4}}(\lambda_n-V(x))^{-\frac{1}{4}}dx
\le C(X_mX_n)^{\frac\mu2-\frac12}.
\end{align*}
In the same way we have
$\Big|\int^{X_m-X_m^{-\frac{1}{3}}}_{0} f(x) e^{{\rm i}kx}\psi_1^{(m)}(x)\overline{\psi_2^{(n)}(x)}dx\Big|\le C(X_mX_n)^{\frac\mu{2}-\frac1{2}}$
and
$$\Big|\int^{X_m-X_m^{-\frac{1}{3}}}_{0} f(x) e^{{\rm i}kx}\psi_2^{(m)}(x)\overline{\psi_2^{(n)}(x)}dx\Big|\le C(X_mX_n)^{\frac\mu{2}-\frac1{2}}.$$
Therefore, for the first case we have
\(\dss
\Big|\int^{X_m-X_m^{-\frac13}}_{0} f(x) e^{{\rm i}kx}h_m(x)\overline{h_n(x)}dx\Big|\le C(X_mX_n)^{\frac\mu{2}-\frac1{2}}.
\)\\
Case 2): $k\ge\frac{\sqrt{2D_1}}{8}X_n^\l $. By H\"older inequality we have
\[
\Big|\int_0^{X_m-X_m^{-\frac13}} f(x) e^{{\rm i}kx}h_m(x)\overline{h_n(x)}dx\Big| \le CX_m^\mu
\le C(k^{\frac1{\l}}\vee 1)(X_mX_n)^{\frac\mu{2}-\frac1{2}}.
\]
Combining above estimations, we obtain
\[
\Big|\int_0^{X_m-X_m^{-\frac13}} f(x) e^{{\rm i}kx}h_m(x)\overline{h_n(x)}dx\Big| \le C(|k|^{\frac1\l}\vee1)(X_mX_n)^{\frac\mu{2}-\frac12},\quad \forall~k\ne0.
\]
\end{proof}
\begin{lemma}\label{xmcritic}
 If $f(x)$ satisfies Assumption \ref{pertass} and $X_n>4X_m$, then
\[
\Big|\int_{X_m-X_m^{-\frac{1}{3}}}^{X_m} f(x) e^{{\rm i}kx}h_m(x) \overline{h_n(x)}dx\Big| \le
C(X_mX_n)^{ \frac\mu2-\frac12},\quad \forall~k\ne0,
\]
where $C$ only depend on $(\mu,\l)$ and $n_0\le m\le n$.
\end{lemma}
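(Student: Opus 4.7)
The plan is to estimate the short integral on $[X_m-X_m^{-1/3},X_m]$ directly by an $L^\infty$-times-$L^1$ H\"older bound, without any oscillatory integration by parts. At the turning point $X_m$ of $h_m$ the phase $e^{{\rm i}\zeta_m}$ stops oscillating, so one cannot integrate by parts through $h_m$; but the hypothesis $X_n>4X_m$ places the whole interval well below the turning point $X_n$ of $h_n$, and the gain will come from an $L^\infty$-bound on $h_n$ paired with an $L^1$-bound on $h_m$.

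For $h_n$, the estimate $\lambda_m\le \lambda_n/4$ from \eqref{yinyong0222}, together with $V(x)\le V(X_m)=\lambda_m$ for $x\le X_m$, gives $\lambda_n-V(x)\ge \frac{3}{4}\lambda_n\ge cX_n^{2\l}$ uniformly on the interval. Plugging into the representation of $\psi_1^{(n)}$ from Subsection~\ref{intzeroxn}, and using $|\overline{f_n}|\le\Gamma(5/6)$ and $|e^{-{\rm i}\zeta_n(x)}|=1$ (since $\zeta_n$ is real on $[0,X_n)$), yields $|\psi_1^{(n)}(x)|\le CX_n^{(\l-1)/2}\cdot X_n^{-\l/2}=CX_n^{-1/2}$; absorbing $\psi_2^{(n)}$ via Lemma \ref{eigenfunction} gives the uniform bound $|h_n(x)|\le CX_n^{-1/2}$. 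For $h_m$, the same representation with $|f_m|\le\Gamma(5/6)$, $|e^{{\rm i}\zeta_m(x)}|=1$, and Lemma \ref{qQesti} gives the pointwise bound $|\psi_1^{(m)}(x)|\le CX_m^{-1/4}(X_m-x)^{-1/4}$. This diverges at $X_m$, but the singularity is integrable, so
\[
\int_{X_m-X_m^{-1/3}}^{X_m}|h_m(x)|\,dx\le C\int_{X_m-X_m^{-1/3}}^{X_m}|\psi_1^{(m)}(x)|\,dx\le CX_m^{-1/4}\int_0^{X_m^{-1/3}}u^{-1/4}\,du\le CX_m^{-1/2}.
\]

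Combining with $|f(x)|\le CX_m^{\mu}$ from Assumption \ref{pertass} gives, uniformly in $k$,
\[
\Big|\int_{X_m-X_m^{-1/3}}^{X_m}f(x)e^{{\rm i}kx}h_m(x)\overline{h_n(x)}\,dx\Big|\le CX_m^{\mu-1/2}X_n^{-1/2}=C\Big(\frac{X_m}{X_n}\Big)^{\mu/2}(X_mX_n)^{\mu/2-1/2}\le C(X_mX_n)^{\mu/2-1/2},
\]
where the last inequality uses $\mu\ge 0$ and $X_n\ge X_m$. The main obstacle is reconciling the turning-point blow-up of Langer's representation of $\psi_1^{(m)}$ with the claimed power counting: the pointwise bound carries the formal singularity $(X_m-x)^{-1/4}$, and one must exploit both its $L^1$-integrability and the precise length $X_m^{-1/3}$ of the Langer transition window to arrive at an $L^1$-norm of size $X_m^{-1/2}$, which is exactly what is needed to match $(X_mX_n)^{\mu/2-1/2}$ after the crude inequality $X_n\ge X_m$ is invoked.
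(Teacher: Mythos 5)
Your proof is correct and is essentially the paper's own argument: the paper likewise bounds the integrand pointwise by $CX_m^{(\l-1)/2}X_n^{(\l-1)/2}x^\mu(\lambda_m-V(x))^{-1/4}(\lambda_n-V(x))^{-1/4}$, uses $\lambda_n-V(x)\ge\lambda_n-\lambda_m\ge\frac34\lambda_n$ for the $n$-factor and Lemma \ref{qQesti} plus the integrability of $(X_m-x)^{-1/4}$ over the window of length $X_m^{-1/3}$ for the $m$-factor, arriving at $CX_m^{\mu-1/2}X_n^{-1/2}\le C(X_mX_n)^{\mu/2-1/2}$ uniformly in $k$. Your reorganization as an $L^\infty(h_n)\times L^1(fh_m)$ pairing is only a cosmetic difference.
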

\begin{proof}
Recall that $\lambda_m\le\frac14\lambda_n$, then  we have
\begin{align*}
&\Big|\int_{X_m-X_m^{-\frac{1}{3}}}^{X_m} \cal F(x) dx\Big| \le CX_m^{\frac{\l-1}2}X_n^{\frac{\l-1}2}\int_{X_m-X_m^{-\frac{1}{3}}}^{X_m} x^\mu(\lambda_m-V(x))^{-\frac{1}{4}}(\lambda_n-V(x))^{-\frac{1}{4}}dx\\
\le&~CX_m^{-\frac{1}{4}+\mu}X_n^{\frac{\l-1}2} (\lambda_n-\lambda_m)^{-\frac{1}{4}}\int_{X_m-X_m^{-\frac{1}{3}}}^{X_m}(X_m-x)^{-\frac{1}{4}}dx\le C(X_mX_n)^{\frac\mu2-\frac{1}{2}}.
\end{align*}
Similarly, we have
\[
\Big|\int_{X_m-X_m^{-\frac{1}{3}}}^{X_m} f(x) e^{{\rm i}kx}\psi_{j_1}^{(m)}(x) \overline{\psi_{j_2}^{(n)}(x)}dx\Big| \le C (X_mX_n)^{\frac\mu2-\frac{1}{2}},~\text{for}~j_1,j_2\in\{1,2\}~\text{and}~j_1+j_2\ge3.
\]
Hence,
\(\dss
\Big|\int_{X_m-X_m^{-\frac{1}{3}}}^{X_m} f(x) e^{{\rm i}kx}h_m(x) \overline{h_n(x)}dx\Big| \le
C(X_mX_n)^{\frac\mu{2}-\frac{1}{{2}}}.
\)
\end{proof}

\begin{lemma}\label{xmxnsimple}
If  $f(x)$ satisfies Assumption \ref{pertass} and $X_n>4X_m$, then
\[
\Big|\int_{X_m}^{X_n}f(x)e^{{\rm i}kx}h_m(x)\overline{h_n(x)}dx\Big|
\le C(X_mX_n)^{\frac\mu2-\frac12},\quad \forall~k\ne0,
\]
where $C$ only depends on $(\mu,\l)$ and $n_0\le m\le n$.
\end{lemma}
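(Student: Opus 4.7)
The plan is to mirror the proof of Lemma \ref{xncxn}: on $[X_m, X_n]$ the eigenfunction $h_m$ is in its exponentially decaying regime (since $x \ge X_m$) while $h_n$ stays pointwise bounded by Corollary \ref{eigenbound}. Since the target bound carries no $k$-dependence, a pointwise majorant of the integrand suffices; no oscillatory-integral argument is needed. First I would decompose $h_j = \psi_1^{(j)} + \psi_2^{(j)}$ via Lemma \ref{eigenfunction}; the three cross-terms involving $\psi_2^{(j)}$ inherit the $\psi_1^{(j)}$ bound times the harmless factor $X_j^{-(\l+1)}$, so it is enough to estimate $\int_{X_m}^{X_n} f(x) e^{{\rm i}kx}\psi_1^{(m)}(x)\overline{\psi_1^{(n)}(x)}\,dx$.

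Next I would split the integration interval at the natural scale $x - X_m = X_m^{-1/3}$, the same scale used in Lemma \ref{xncxn}. On the inner piece $[X_m, X_m + X_m^{-1/3}]$, the gap $X_n > 4X_m$ forces $x \le X_n/2$, so by Lemma \ref{qQesti} both $\lambda_n - V(x) \ge c X_n^{2\l}$ and $|\zeta_n(x)| \ge c X_n^{\l+1}$ are large; the far-field Hankel asymptotic then gives $|\psi_1^{(n)}(x)| \le CX_n^{(\l-1)/2}(\lambda_n - V(x))^{-1/4} \le CX_n^{-1/2}$. Meanwhile the Bessel factor $(\pi \zeta_m/2)^{1/2} H^{(1)}_{1/3}(\zeta_m)$ is uniformly bounded on this inner piece (the small-argument asymptotic contributes $|\zeta_m|^{1/6}$ and the far-field asymptotic contributes exponential decay, both harmless), so $|\psi_1^{(m)}(x)| \le CX_m^{(\l-1)/2}(V(x)-\lambda_m)^{-1/4}$. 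Inserting $V(x) - \lambda_m \ge a_1 X_m^{2\l-1}(x-X_m)$ and integrating the resulting $(x-X_m)^{-1/4}$ singularity over a window of length $X_m^{-1/3}$, a routine exponent count yields this contribution $\le CX_m^{\mu - 1/2}X_n^{-1/2} \le C(X_mX_n)^{\mu/2 - 1/2}$, where the last step uses $X_m \le X_n$ and $\mu \ge 0$.

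On the outer piece $[X_m + X_m^{-1/3}, X_n]$ I would use the far-field Bessel asymptotic giving $|\psi_1^{(m)}(x)| \le C X_m^{(\l-1)/2}(V(x)-\lambda_m)^{-1/4} e^{-|\zeta_m(x)|}$, together with the crude bound $|h_n(x)| \le C$ from Corollary \ref{eigenbound}. Lemma \ref{qQesti} yields $|\zeta_m(x)| \ge A_1 X_m^{\l-1/2}(x-X_m)^{3/2}$, which is $\ge A_1 X_m^{\l-1}$ at the left endpoint and grows at least like $A_1 X_m^{\l - 2/3}(x-X_m)$ thereafter (using $(x-X_m)^{1/2}\ge X_m^{-1/6}$). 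Since $\l > 1$, the resulting contribution is $O(e^{-cX_m^{\l-1}})$ up to polynomial factors, dwarfed by the claimed bound. The main obstacle is really just the bookkeeping: choosing the split scale $X_m^{-1/3}$ so that the integrable turning-point singularity $(x-X_m)^{-1/4}$ is balanced against the onset of exponential Hankel decay, and applying the correct Bessel asymptotic in each sub-region. All analytic inputs — the Langer--Hankel asymptotics in both regimes and the two-sided bounds of Lemma \ref{qQesti} — are already at hand from the preceding lemmas.
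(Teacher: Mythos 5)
Your treatment of the inner window $[X_m,\,X_m+X_m^{-1/3}]$ is fine and coincides with the paper's first sub-estimate, but the outer region $[X_m+X_m^{-1/3},\,X_n]$ has a genuine gap: you replace $h_n$ by a constant via Corollary \ref{eigenbound} and then conclude that a bound of the form $O(e^{-cX_m^{\l-1}})$ ``up to polynomial factors'' is dwarfed by $C(X_mX_n)^{\frac\mu2-\frac12}$. Two things go wrong. First, Corollary \ref{eigenbound} is a statement for each fixed $n$; its constant is not claimed (and is not true) to be uniform in $n$ on a region that contains the turning point $X_n$: there $|h_n|\sim X_n^{\frac{\l-1}2}(\lambda_n-V(x))^{-\frac14}|\zeta_n|^{\frac16}$ peaks at size $\sim X_n^{\frac{\l-2}6}$, which grows with $n$ once $\l>2$ (the paper invokes the corollary only for the finitely many indices $m<n_0$, where uniformity is automatic). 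Second, and more decisively, even if one grants a uniform constant, your outer-region bound contains no decay in $X_n$ at all: the exponential gain $e^{-cX_m^{\l-1}}$ is only in $X_m$, so for fixed $m$ the bound is a fixed positive constant, while the target $(X_mX_n)^{\frac\mu2-\frac12}$ tends to $0$ as $n\to\infty$ whenever $\mu<1$ (e.g.\ $\mu=0$, which is allowed). So the step ``dwarfed by the claimed bound'' fails; the crude bound $|h_n|\le C$ has discarded exactly the $X_n$-decay that the lemma requires.

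The missing ingredient is to keep the precise amplitude of $\psi_1^{(n)}$ throughout the outer region, which is what the paper does by splitting it further into $[X_m+X_m^{-1/3},2X_m]$, $[2X_m,\frac{X_n}2]$, $[\frac{X_n}2,X_n-X_n^{-1/3}]$ and $[X_n-X_n^{-1/3},X_n]$. On the first two sub-intervals one has $x\le\frac{X_n}2$, hence by Lemma \ref{qQesti} $\lambda_n-V(x)\ge\frac{a_1}2X_n^{2\l}$ and therefore $|\psi_1^{(n)}(x)|\le CX_n^{\frac{\l-1}2}(\lambda_n-V(x))^{-\frac14}\le CX_n^{-\frac12}$, which supplies the factor $X_n^{\mu/2-1/2}$ after combining with the exponential decay of $\psi_1^{(m)}$. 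On the last two sub-intervals, where $h_n$ is near its turning point and may be large, one uses instead $x-X_m\ge\frac{X_n}4$, so $|\zeta_m(x)|\ge\frac{A_1}4X_m^{\l}X_n$ and $e^{-|\zeta_m|}$ provides superpolynomial decay in $X_n$ that absorbs both $x^\mu$ and the turning-point singularity, the latter being integrated via $(X_n-x)^{-\frac14}$ on the final window $[X_n-X_n^{-1/3},X_n]$. Without retaining either the $X_n^{-1/2}$ amplitude bound for $x\le\frac{X_n}2$ or the $e^{-cX_m^{\l}X_n}$ factor for $x\ge\frac{X_n}2$, the stated estimate cannot be reached, so as written your proof of the outer-region contribution does not close.
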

For the proof see section \ref{section4}.
\subsubsection{\bf the Integral on $[0, X_n)$ when $X_m\le X_n\le4X_m$}
We will prove  the following lemma in this part.
\begin{lemma}\label{xncomplex}
If  $f(x)$ satisfies Assumption \ref{pertass} and $X_m\le X_n\le4X_m$, then
\[
\Big|\int_{0}^{X_n} f(x) e^{{\rm i}kx}h_m(x)\overline{h_n(x)}dx\Big|\le C(|k|^{-1}\vee |k|)(X_mX_n)^{\frac\mu2-\frac12(\frac13\bigwedge\frac{\mu+1}{2\mu+2\l+1})},\quad \forall~k\ne0,
\]
where $C$ only depends on $(\mu,\l)$ and $n_0\le m\le n$.
\end{lemma}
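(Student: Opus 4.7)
The plan is to reduce the integral on $[0,X_n]$, via the decompositions $h_j=\psi_1^{(j)}+\psi_2^{(j)}$ of Lemma \ref{eigenfunction} (so that $|\psi_2^{(j)}|\lesssim X_j^{-(\l+1)}|\psi_1^{(j)}|$ makes the cross and $\psi_2^{(m)}\overline{\psi_2^{(n)}}$ terms subdominant), to the principal oscillatory integral
\[
C_mC_n\int_0^{X_n}f(x)\,e^{{\rm i}\Phi(x)}\Psi(x)\,dx,\qquad \Phi(x)=kx+\zeta_m(x)-\zeta_n(x),
\]
with $\Psi$ and the bounds on $|\Psi|,|\Psi'|$ from Corollary \ref{psiesti}. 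I then split $[0,X_n]$ into four subintervals by inserting short windows $[X_j-X_j^{-\alpha},X_j+X_j^{-\alpha}]$ around each turning point $X_m,X_n$, with an exponent $\alpha>0$ to be optimized. The free parameters in the analysis are $\alpha$ and, in the stationary phase regime described below, the radius $\rho$ of a window around the critical point of $\Phi$.

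On the two short windows around the turning points, the Bessel asymptotics degenerate, and I bound the integrand directly by H\"older and the boundedness of $h_m,h_n$ (Corollary \ref{eigenbound}), mirroring Lemma \ref{xmcritic}. On the middle region $[X_m+X_m^{-\alpha},X_n-X_n^{-\alpha}]$, the product $\psi_1^{(m)}\overline{\psi_1^{(n)}}$ carries the exponentially decaying factor $e^{-|\zeta_m(x)|}$ (since $x>X_m$), and the estimate is routine as in Lemma \ref{cxninf}. These two regions therefore contribute at worst $(X_mX_n)^{\mu/2-1/2+\text{error}(\alpha)}$, which is acceptable once $\alpha$ is fixed.

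The main work is on the oscillatory region $[0,X_m-X_m^{-\alpha}]$, where $\Phi'(x)=k-g(x)$ with $g(x)=\sqrt{\lambda_n-V(x)}-\sqrt{\lambda_m-V(x)}$. Since $X_m\le X_n\le 4X_m$, the function $g$ sweeps the band $[\sqrt{\lambda_n}-\sqrt{\lambda_m},\,\sqrt{\lambda_n-\lambda_m}]$, so whether $\Phi$ possesses a stationary point $x_0$ depends on $|k|$. If $k$ is outside this band, $|\Phi'|$ is bounded below on the whole interval and Lemma \ref{oscint} (one integration by parts) supplies the $|k|^{-1}\vee |k|$ prefactor together with a bound that absorbs the amplitude estimates from Corollary \ref{psiesti}, essentially as in Lemma \ref{xmbotsimple}. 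If $k$ is inside the band, I isolate the $\rho$-neighborhood of the unique $x_0$; inside, I apply the second-derivative (van der Corput) bound with $|\Phi''(x)|\sim V'(x)(\lambda_n-\lambda_m)/[(\lambda_m-V)(\lambda_n-V)(\sqrt{\lambda_m-V}+\sqrt{\lambda_n-V})]$, whose lower bound I read off from Lemma \ref{qQesti}; outside the window I integrate by parts, using the linear behavior $|\Phi'(x)|\gtrsim |g'(x_0)||x-x_0|$ to gain from the separation $\rho$.

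The main obstacle, and the source of the unusual exponent $\tfrac13\wedge\tfrac{\mu+1}{2\mu+2\l+1}$, is the optimization step. The van der Corput contribution from the stationary window grows with $\rho$ (through the $L^\infty$ norm of the amplitude $f\cdot\Psi$, which itself contains the $\la x\ra^\mu$ factor and the singular $(\lambda_m-V)^{-1/4}(\lambda_n-V)^{-1/4}$ factor near $X_m$), while the non-stationary remainder obtained by integration by parts shrinks as $\rho^{-1}$. Balancing these two pieces yields two competing exponents: the universal $\tfrac13$ coming from a pure van der Corput / amplitude balance (which dominates when $\mu$ is small, as in the quartic oscillator result of \cite{LiangLuo2019}), and the $\mu$-dependent exponent $\tfrac{\mu+1}{2\mu+2\l+1}$ coming from the refined balance that accounts for the algebraic growth of $f$ and of $\Psi$ along the interval (which dominates when $\mu$ is comparable to $\l-1$). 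Combining these with the contributions of the two short turning-point windows and the exponentially decaying middle region gives the claimed bound $(|k|^{-1}\vee|k|)(X_mX_n)^{\mu/2-\frac12(\frac13\wedge\frac{\mu+1}{2\mu+2\l+1})}$.
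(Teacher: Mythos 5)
Your overall architecture coincides with the paper's: reduce to the principal oscillatory integral through $h_j=\psi_1^{(j)}+\psi_2^{(j)}$, cut windows at the turning points, use the exponential factor $e^{-|\zeta_m|}$ between $X_m$ and $X_n$, and run a stationary/non-stationary phase analysis on the left region governed by the monotone function $g(x)=\sqrt{\lambda_n-V(x)}-\sqrt{\lambda_m-V(x)}-k$ (this is Lemmas \ref{xmbotcomplex}, \ref{xmcomplex}, \ref{xmxncomplex}). The gap is in the one step that actually produces the exponent $\frac12\big(\frac13\wedge\frac{\mu+1}{2\mu+2\ell+1}\big)$: you assert it as the outcome of a $\rho$-balance, and the mechanism you describe fails exactly in the decisive regime. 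When $\lambda_n-\lambda_m\sim kX_m^{\ell}$ (the paper's Lemma \ref{xmcomplex4}), the near-stationary set of the phase lies toward the origin, where $\Phi''=-g'$ with $g'(x)=\frac{V'(x)}{2}\big[(\lambda_m-V)^{-1/2}-(\lambda_n-V)^{-1/2}\big]$ degenerates like $V'(x)\sim x^{2\ell-1}$. Because $g'$ is increasing, your linear lower bound $|\Phi'(x)|\gtrsim|g'(x_0)|\,|x-x_0|$ is false on the side of $x_0$ toward the origin (there one only has $\int_x^{x_0}g'\ge g'(x)(x_0-x)$ with $g'(x)\ll g'(x_0)$), so integration by parts gives no gain there, and the van der Corput constant inside any window containing that region is likewise controlled by the tiny $g'$, not by the amplitude. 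The true competition is therefore not ``amplitude growth inside the window versus $\rho^{-1}$ outside'': the paper cuts at $a=X_m^{\frac{2\mu+2\ell}{2\mu+2\ell+1}}$, estimates $[X_m^{2/3},a]$ (and $[0,X_m^{2/3}]$) with no oscillation at all, which because of the weight $x^\mu$ costs $\sim a^{\mu+1}X_m^{-1}$, and applies the second-derivative bound only on $[a,b]$ where $g'\gtrsim k\,a^{2\ell-1}X_m^{-3\ell}$; equating these two contributions is precisely what generates $\frac{\mu+1}{2\mu+2\ell+1}$, while the $\frac13$ branch comes from the cut at $X_m-X_m^{1/3}$ and the regimes of Lemmas \ref{xmcomplex1}--\ref{xmcomplex5}. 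Without this (or an equivalent explicit computation) the stated exponent is not established.

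Two secondary points also need repair, though they are routine. Near the turning points you invoke H\"older together with Corollary \ref{eigenbound}; that corollary is for each fixed $n$ and is not uniform in $m,n$ (in the Airy region the Langer form gives $|h_n|\sim X_n^{\ell/6-1/3}$, which grows with $n$ once $\ell>2$), so one must instead integrate the explicit amplitude $(\lambda_m-V)^{-1/4}(\lambda_n-V)^{-1/4}$ with its integrable $(X_m-x)^{-1/2}$-type singularity, as in Lemma \ref{xmxncomplex}, including the subcase where $X_n-X_n^{-1/3}<X_m+X_m^{-1/3}$ and the two turning points nearly coincide. Finally, ``$k$ outside the band swept by $g$'' only yields a qualitative lower bound on $|\Phi'|$, which can be arbitrarily small near the band edges; the quantitative case analysis comparing $\lambda_n-\lambda_m$ with $kX_m^{\ell-\frac13}$, $kX_m^{\ell-\frac16}$, $kX_m^{\ell}$, $kX_n^{\ell}$ is what replaces it, and the $|k|$ half of the prefactor $|k|^{-1}\vee|k|$ comes from the trivial estimate in the regime $k>X_m^{1/3}$, not from an integration by parts.
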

\indent If $R\geq 8$, $X_m\ge8$ from $m\ge n_0$ and  $X_m-X_m^{-\frac13}\ge X_m^{\frac23}$. Hence, we can split the integral into three parts as:
\[
\int^{X_n}_0 f(x) e^{{\rm i}kx}h_m(x)\overline{h_n(x)}dx = \bigg(\int^{X_m^\frac23}_0+\int_{X_m^\frac23}^{X_m-X_m^\frac{1}{3}}+ \int_{X_m-X_m^\frac{1}{3}}^{X_n}\bigg)dx.
\]
Lemma \ref{xncomplex} comes from Lemma \ref{xmbotcomplex}, \ref{xmcomplex} and \ref{xmxncomplex}.  For the first part of the above integral, we have
\begin{lemma}\label{xmbotcomplex}
If  $f(x)$ satisfies Assumption \ref{pertass} and $X_m\le X_n\le4X_m$, then
\[
\Big|\int_{0}^{X_m^\frac23} f(x) e^{{\rm i}kx}h_m(x)\overline{h_n(x)}dx\Big|\le C(X_mX_n)^{\frac\mu2-\frac16},\quad \forall~k\ne0,
\]
where $C$ only depends on $(\mu,\l)$ and $n_0\le m\le n$.
\end{lemma}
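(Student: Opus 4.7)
The plan is to use the fact that on the interval $[0,X_m^{2/3}]$ one is uniformly bounded away from both turning points $X_m$ and $X_n$, so no oscillatory integral / stationary phase machinery is needed: a straightforward $L^\infty\times L^1$ estimate suffices, uniformly in $k\ne 0$.

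\textbf{Step 1 (pointwise bound on eigenfunctions).} For $x\in[0,X_m^{2/3}]$, since $R\ge 8$ forces $X_m\ge 8$ for $m\ge n_0$, one has $X_m-x\ge X_m-X_m^{2/3}\ge X_m/2$, and likewise $X_n-x\ge X_n/2$ since $X_n\ge X_m$. Lemma \ref{qQesti} then yields
$$
\tfrac{a_1}{2}X_m^{2\l}\le\lambda_m-V(x)\le a_2X_m^{2\l},\qquad
\tfrac{a_1}{2}X_n^{2\l}\le\lambda_n-V(x)\le a_2X_n^{2\l}.
$$
From the representation $\psi_1^{(m)}(x)=C_m(\lambda_m-V(x))^{-1/4}e^{\rmi\zeta_m(x)}f_m(x)$ (with $C_m\sim X_m^{(\l-1)/2}$, $|e^{\rmi\zeta_m}|=1$ because $\zeta_m\in\R_{<0}$ on $(0,X_m)$, and $|f_m|\le\Gamma(5/6)$ as observed after Corollary \ref{psiesti}), one obtains $|\psi_1^{(m)}(x)|\le CX_m^{-1/2}$ and analogously $|\psi_1^{(n)}(x)|\le CX_n^{-1/2}$. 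The correction $|\psi_2^{(j)}|\le CX_j^{-(\l+1)}|\psi_1^{(j)}|$ from Lemma \ref{eigenfunction} is strictly smaller, hence
$$
|h_m(x)\,\overline{h_n(x)}|\le C(X_mX_n)^{-1/2},\qquad x\in[0,X_m^{2/3}].
$$

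\textbf{Step 2 (integration in $x$).} By Assumption \ref{pertass} one has $|f(x)|\le C(1+|x|^\mu)$ on $[0,\infty)$, so $\int_0^{X_m^{2/3}}|f(x)|\,dx\le CX_m^{2(\mu+1)/3}$. Combining with Step 1 and discarding $|e^{\rmi kx}|=1$,
$$
\Big|\int_0^{X_m^{2/3}}f(x)e^{\rmi kx}h_m(x)\overline{h_n(x)}\,dx\Big|\le C(X_mX_n)^{-1/2}X_m^{2(\mu+1)/3}.
$$
Because $X_m\le X_n\le 4X_m$, the quantity $X_m$ is comparable to $(X_mX_n)^{1/2}$, so the right–hand side is majorized by $C(X_mX_n)^{(\mu+1)/3-1/2}=C(X_mX_n)^{\mu/3-1/6}$. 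Finally, since $\mu\ge 0$ and $X_mX_n\ge 1$, this is $\le C(X_mX_n)^{\mu/2-1/6}$, giving the asserted estimate with constant $C=C(\mu,\l)$.

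There is no real obstacle: the absence of any turning point on $[0,X_m^{2/3}]$ makes every factor polynomially controlled by $X_m$ (equivalently $X_n$), and the phase $e^{\rmi kx}$ plays no role — this is precisely why the bound is $k$-independent. The genuinely hard parts of Lemma \ref{zeroxn}, where the $k$-oscillation must be exploited and the delicate exponent $(1/3)\wedge\tfrac{\mu+1}{2\mu+2\l+1}$ actually appears, are in the complementary sub-intervals $[X_m^{2/3},X_m-X_m^{-1/3}]$ and $[X_m-X_m^{-1/3},X_n]$ treated by Lemmas \ref{xmcomplex} and \ref{xmxncomplex}.
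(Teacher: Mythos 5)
Your proof is correct and follows essentially the same route as the paper: on $[0,X_m^{2/3}]$ both $\lambda_m-V$ and $\lambda_n-V$ are of size $X_m^{2\l}$, $X_n^{2\l}$, so a crude absolute-value ($L^\infty\times L^1$) bound with $|C_j(\lambda_j-V)^{-1/4}|\lesssim X_j^{-1/2}$ gives the $k$-independent estimate, the $\psi_2$ corrections being strictly smaller. The paper's proof is the same computation, just written as a direct bound on $\int_0^{X_m^{2/3}}(\lambda_m-V)^{-1/4}(\lambda_n-V)^{-1/4}dx$ rather than via a pointwise bound first.
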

\begin{proof}
By Lemma \ref{qQesti} we have
\begin{align*}
&\Big|\int_{0}^{X_m^\frac23} \cal F(x) dx\Big|\le CX_m^{\frac{\l-1}2}X_n^{\frac{\l-1}2}X_m^{\frac{2\mu}3}\int_{0}^{X_m^\frac23} (\lambda_m-V(x))^{-\frac{1}{4}}(\lambda_n-V(x))^{-\frac{1}{4}}dx\le C X_m^{\mu-\frac13}.
\end{align*}
Similarly, we have
\(\dss
\Big|\int_{0}^{X_m^\frac23}f(x) e^{{\rm i}kx} \psi_{j_1}^{(m)}(x) \overline{\psi_{j_2}^{(n)}(x)}dx\Big| \le C X_m^{\mu-\frac13},~\text{for}~j_1,j_2\in\{1,2\}~\text{and}~j_1+j_2\ge3.
\)\\
Thus, we obtain
\(\dss
\Big|\int_0^{X_m^\frac23} f(x) e^{{\rm i}kx}h_m(x)\overline{h_n(x)}dx\Big|\le C(X_mX_n)^{\frac\mu{2}-\frac{1}{{6}}}.
\)
\end{proof}
Next we estimate the integral on $[X_m^\frac23,X_m-X_m^{\frac{1}{3}}]$, and obtain the following lemma.
\begin{lemma}\label{xmcomplex}
If $f(x)$ satisfies Assumption \ref{pertass} and $X_m\le X_n\le4X_m$, then one has
\[
\Big|\int_{X_m^\frac23}^{X_m-X_m^\frac{1}{3}} f(x) e^{{\rm i}kx}h_m(x)\overline{h_n(x)}dx\Big|\le C(|k|^{-1}\vee|k|)(X_m X_n)^{\frac\mu2-\frac12(\frac13\bigwedge\frac{\mu+1}{2\mu+2\l+1})},\quad \forall~k\ne0,
\]
where $C$ only depends on $(\mu,\l)$ and $n_0\le m\le n$.
\end{lemma}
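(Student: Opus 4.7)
The plan is to reduce the integral to its main oscillatory piece and then to estimate it by splitting the interval according to the size of the phase derivative
\[
g(x) = \sqrt{\lambda_n - V(x)} - \sqrt{\lambda_m - V(x)} - k,
\]
followed by an optimization in a threshold parameter.

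Using $h_m = \psi_1^{(m)} + \psi_2^{(m)}$ with $|\psi_2^{(m)}| \le C X_m^{-(\ell+1)} |\psi_1^{(m)}|$ from Lemma \ref{eigenfunction}, and analogously for $h_n$, the cross terms carrying a $\psi_2$ factor are controlled by $X_m^{-(\ell+1)}$ times the leading contribution, and a H\"older argument as in Lemma \ref{xmbotcomplex} shows that they are dominated by $C(X_m X_n)^{\mu/2 - 1/2}$, which is well within the target. Hence it suffices to bound
\[
I := C_m C_n \int_{X_m^{2/3}}^{X_m - X_m^{-1/3}} f(x)\, e^{-i(\zeta_n(x) - \zeta_m(x) - kx)} \Psi(x)\, dx.
\]
Rewriting $g(x) = (\lambda_n - \lambda_m)/(\sqrt{\lambda_n - V} + \sqrt{\lambda_m - V}) - k$ and using Lemma \ref{qQesti} in the regime $X_m \le X_n \le 4 X_m$, one sees that $\sqrt{\lambda_m - V} + \sqrt{\lambda_n - V}$ is monotone decreasing in $x$, ranging between comparable powers of $X_m$ at the two endpoints; in particular $g$ has at most one zero $x_k$.

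Next, fix a threshold $\rho > 0$ and split the domain into $E_< = \{x : |g(x)| \le \rho\}$ and $E_> = \{x : |g(x)| > \rho\}$. On $E_>$, integration by parts via Lemma \ref{oscint}, combined with the bounds of Corollary \ref{psiesti} for $\Psi, \Psi'$ and the bounds on $f, f'$ from Assumption \ref{pertass}, produces
\[
|I_{E_>}| \le C \rho^{-1} X_m^{\mu + \alpha}
\]
for an explicit $\alpha = \alpha(\ell)$ extracted from $|\Psi| \lesssim X_m^{-\ell + 1/2}(X_m - x)^{-1/2}$ and the $J_1, J_3$ estimates. On $E_<$, the elementary bound $|E_<| \le \rho / \inf_{E_<} |g'|$, with $g'(x) = (V'(x)/2)\bigl((\lambda_m - V)^{-1/2} - (\lambda_n - V)^{-1/2}\bigr)$ estimated from below using Lemma \ref{qQesti} and $V'(x) \sim x^{2\ell - 1}$, yields a complementary bound
\[
|I_{E_<}| \le C \rho \, X_m^{\mu + \beta}.
\]

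Optimizing $\rho \sim X_m^{(\alpha - \beta)/2}$ produces the final estimate. A careful tracking of exponents reveals that the two alternatives $\frac{1}{3}$ and $\frac{\mu+1}{2\mu + 2\ell + 1}$ in the statement correspond respectively to whether the stationary point $x_k$ lies in the deep interior of $[X_m^{2/3}, X_m - X_m^{-1/3}]$ or close to the right endpoint, where $\lambda_m - V$ is smallest and hence $g'$ is largest. The prefactor $(|k|^{-1} \vee |k|)$ emerges from a subsidiary case analysis: for small $|k|$, the minimum of $|g|$ away from $x_k$ is of order $|k|$, producing the $|k|^{-1}$ factor after one integration by parts, while for large $|k|$ the factor $|k|$ appears as a convenient upper bound absorbing derivative terms. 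The chief obstacle is the precise bookkeeping of these competing exponents across the two subintervals of $[X_m^{2/3}, X_m - X_m^{-1/3}]$ and the verification that the boundary contributions at $x = X_m^{2/3}$ and $x = X_m - X_m^{-1/3}$ are dominated by the optimized bound rather than degrading it.
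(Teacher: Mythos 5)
Your overall frame (reduce to the $\psi_1^{(m)}\overline{\psi_1^{(n)}}$ term, study the phase derivative $g(x)=\sqrt{\lambda_n-V(x)}-\sqrt{\lambda_m-V(x)}-k$, and play nonstationary phase against a smallness estimate near the stationary set) matches the paper's, and the large-$k$ and $k<0$ remarks are fine. But the central device you propose — split into $E_<=\{|g|\le\rho\}$, $E_>=\{|g|>\rho\}$, bound $|E_<|\le\rho/\inf_{E_<}|g'|$, and optimize one parameter $\rho$ with fixed exponents $\alpha,\beta$ — does not deliver the stated exponents in the decisive regime, which in the paper is the case $\sqrt{a_1}kX_m^{\l}\le\lambda_n-\lambda_m<3\sqrt{D_2}kX_n^{\l}$ (Lemma \ref{xmcomplex4}). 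There the near-stationary region of $g$ sits at small $x$, where $g'(x)=\frac{V'(x)}{2}\big((\lambda_m-V)^{-1/2}-(\lambda_n-V)^{-1/2}\big)\sim k\,x^{2\l-1}X_m^{-2\l}$ is tiny because $V'(x)$ is tiny; your measure bound then degenerates (at $x\sim X_m^{2/3}$ one gets $\inf|g'|\sim kX_m^{-\frac{2\l+2}{3}}$), and carrying out your optimization in $\rho$ yields a bound of size roughly $k^{-1/2}X_m^{\mu-\frac13}(X_m/x_k)^{\l-\frac12}$, which for a stationary point $x_k\sim X_m^{2/3}$ and $\l>\frac32$ is worse than the trivial estimate, let alone $X_m^{\mu-\frac{\mu+1}{2\mu+2\l+1}}$. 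The exponent $\frac{\mu+1}{2\mu+2\l+1}$ is not produced by a sublevel-set measure count at all: in the paper it comes from splitting at $a=X_m^{\frac{2\mu+2\l}{2\mu+2\l+1}}$, using the purely non-oscillatory bound $\int_{X_m^{2/3}}^{a}|{\cal F}|\,dx\lesssim X_m^{-1}a^{\mu+1}$ on the left segment (where oscillation is abandoned because $V'$ is too small to help), and on $[a,b]$ a van der Corput estimate with the lower bound $g'\gtrsim kX_m^{-\frac{2\mu+4\l}{2\mu+2\l+1}}$ obtained from $V'(x)\ge V'(a)$; the choice of $a$ balances these two. Your proposal contains neither this splitting point nor the mechanism behind it, so there is a genuine gap, not just unfinished bookkeeping.

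A symptom of the same problem is that your attribution of the two alternatives is backwards: you claim $\frac{\mu+1}{2\mu+2\l+1}$ corresponds to $x_k$ near the right endpoint ``where $g'$ is largest,'' but a large $g'$ at the stationary point makes stationary phase more effective and indeed the paper's cases with $x_k$ near $X_m$ (Lemmas \ref{xmcomplex2}, \ref{xmcomplex3}) give the better loss $\frac16=\frac12\cdot\frac13$; the loss $\frac{\mu+1}{2(2\mu+2\l+1)}$ per factor occurs precisely when the phase is nearly stationary over a long region at small $x$ where $g'$ is small. Finally, note that the paper's proof is really a five-way case analysis on the size of $\lambda_n-\lambda_m$ relative to $kX_m^{\l-\frac13},kX_m^{\l-\frac16},kX_m^{\l},kX_n^{\l}$ (plus the trivial regime $k>X_m^{1/3}$ and $k<0$), with interval splittings adapted to each case and convexity of $g$ used to propagate lower bounds on $g'$; a single $\rho$-optimization cannot substitute for this, because $\inf_{E_<}|g'|$ and $\sup_{E_<}|\Psi|$ move in opposite directions as the location of $E_<$ varies, so no uniform pair $(\alpha,\beta)$ exists.
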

If $k>X_m^\frac13$, then $X_n^{\frac13}\le Ck$. By H\"older inequality we have
\begin{equation}\label{kxmcomplex}
\Big|\int_{X_m^{\frac23}}^{X_m-X_m^{\frac13}}f(x) e^{{\rm i}kx}h_m(x)\overline{h_n(x)}dx\Big| \le C X_m^\mu\le Ck(X_mX_n)^{\frac\mu2-\frac16}.
\end{equation}
Thus we prove Lemma \ref{xmcomplex} when $ k>X_m^\frac13$. In the following we turn to the case when $0<k \le X_m^\frac13$.
From Lemma \ref{xmcomplex1}  to Lemma \ref{xmcomplex5}
we always suppose the following assumptions:
1.  $f(x)$ satisfies Assumption \ref{pertass}; 2.  $X_m\le X_n\le4X_m$; 3. $0<k \le X_m^\frac13$.
\begin{lemma}\label{xmcomplex1}
If $0\le \lambda_n-\lambda_m<\sqrt{a_1}kX_m^{\l-\frac{1}{3}}$, then
\[
\Big|\int_{X_m^\frac23}^{X_m-X_m^\frac{1}{3}} f(x)e^{{\rm i}kx}h_m(x)\overline{h_n(x)}dx\Big|\le C(k^{-1}\vee 1)(X_mX_n)^{\frac\mu2-\frac1{3}},\quad\forall~k\in(0,X_m^\frac13],
\]
where $C$ only depends on $(\mu,\l)$ and $n_0\le m\le n$.
\end{lemma}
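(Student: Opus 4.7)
The plan is to mimic the structure of Lemma \ref{xmbotsimple}: write the integrand in oscillatory form and then apply the integration-by-parts estimate Lemma \ref{oscint} once the phase derivative is bounded below. After decomposing $h_j=\psi_1^{(j)}+\psi_2^{(j)}$ via Lemma \ref{eigenfunction}, the leading contribution is
\[
C_m C_n\int_{X_m^{\frac23}}^{X_m-X_m^{\frac13}} f(x)\,e^{\rmi(\zeta_m(x)-\zeta_n(x)+kx)}\,\Psi(x)\,dx,
\]
whose phase has derivative $-g(x)$ with $g(x)=(\lambda_n-V)^{\frac12}-(\lambda_m-V)^{\frac12}-k$; the three remainder pieces carrying one or two factors of $\psi_2^{(\cdot)}$ will be controlled by the small prefactor $X_m^{-(\l+1)}$ from \eqref{eigenfunexp} together with H\"older's inequality, exactly as in Lemma \ref{xmbotsimple}.

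The critical step is to establish the uniform bound $|g(x)|\ge k/2$ on $[X_m^{\frac23},X_m-X_m^{\frac13}]$. By Lemma \ref{qQesti} one has $\lambda_m-V(x)\ge a_1 X_m^{2\l-1}(X_m-x)\ge a_1 X_m^{2\l-\frac23}$, and since $X_n\ge X_m$ and $X_n-x\ge X_m-x\ge X_m^{\frac13}$ one likewise obtains $\lambda_n-V(x)\ge a_1 X_m^{2\l-\frac23}$. Writing
\[
g(x)=\frac{\lambda_n-\lambda_m}{(\lambda_n-V)^{\frac12}+(\lambda_m-V)^{\frac12}}-k
\]
and substituting the hypothesis $\lambda_n-\lambda_m<\sqrt{a_1}\,k X_m^{\l-\frac13}$ yields $g(x)\le\frac{k}{2}-k=-\frac{k}{2}$, so $|g(x)|\ge k/2$ throughout the interval and $g$ does not change sign there.

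With this lower bound in hand, Lemma \ref{oscint} furnishes the prefactor $k^{-1}$ in front of the usual sum of boundary values of $|\Psi|$ together with integrals of $J_1,J_3$ and $\la x\ra^{\mu-1}|\Psi|$. The remaining estimates follow routinely from Corollary \ref{psiesti} and Lemma \ref{qQesti}; the computation is simplified by $X_n\sim X_m$, which trivialises the ratio $(\lambda_m-V)/(\lambda_n-V)$. Combining these with the prefactors $C_m C_n\sim X_m^{\l-1}$ gives the bound $Ck^{-1}X_m^{\mu-\frac23}\sim Ck^{-1}(X_m X_n)^{\frac\mu2-\frac13}$; for $k\ge 1$ the same estimate is even stronger than $(X_m X_n)^{\frac\mu2-\frac13}$, which accounts for the factor $k^{-1}\vee 1$ in the conclusion. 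The main obstacle is precisely the uniform verification $|g|\ge k/2$ on the full integration interval, and this is exactly where the narrow spectral-gap hypothesis $\lambda_n-\lambda_m<\sqrt{a_1}k X_m^{\l-\frac13}$ is used; the rest is a routine adaptation of Lemma \ref{xmbotsimple}.
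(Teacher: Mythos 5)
Your proposal is correct and follows essentially the same route as the paper: the paper likewise reduces the leading term to the oscillatory integral with amplitude $f\Psi$ and phase $\zeta_m-\zeta_n+kx$, derives $g\le-\frac k2$ from $\lambda_m-V\ge a_1X_m^{2\ell-\frac23}$ together with the hypothesis $\lambda_n-\lambda_m<\sqrt{a_1}\,kX_m^{\ell-\frac13}$ (the paper does this at the endpoint $X_m-X_m^{1/3}$ and extends by $g'\ge0$, while you obtain the same bound pointwise, which is an equally valid minor variation), then applies Lemma \ref{oscint} with Corollary \ref{psiesti} and bounds the $\psi_2$ remainders by the $O(X_m^{-(\ell+1)})$ factor, arriving at $Ck^{-1}X_m^{\mu-\frac23}$. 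The only point to make explicit is that the $k=1$ case of Lemma \ref{oscint} requires the phase derivative to be monotonic, not merely of constant sign; this is supplied by $g'(x)=\tfrac{V'(x)}2\big((\lambda_m-V(x))^{-\frac12}-(\lambda_n-V(x))^{-\frac12}\big)\ge0$ on the interval, which is exactly the fact the paper invokes.
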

\begin{proof}
\indent Write $\dss I:=\int_{X_m^\frac23}^{X_m-X_m^\frac{1}{3}} \cal F(x) dx$ and
$I=C_mC_n\int_{X_m^\frac23}^{X_m-X_m^\frac{1}{3}}f(x) e^{{\rm i}(\zeta_m-\zeta_n+kx)}\Psi(x)dx$ with $C_m\sim X_m^{\frac{\l-1}2}$ and $C_n\sim X_n^{\frac{\l-1}2}$.
Since $\lambda_m-V(X_m-X_m^{\frac13})\ge a_1X_m^{2\l-\frac23}$, then
\begin{align*}
g({\scs X_m-X_m^\frac{1}{3}})&=
\frac{\lambda_n-\lambda_m}{\sqrt{\lambda_n-V(X_m-X_m^{\frac{1}{3}})}+\sqrt{\lambda_m-V(X_m-X_m^{\frac13})}}-k\\
&\le\frac{\sqrt{a_1}k X_m^{\l-\frac13}}{2\sqrt{\lambda_m-V(X_m-X_m^{\frac13})}}-k
\le\frac{\sqrt{a_1}k X_m^{\l-\frac13}}{2\sqrt{a_1}X_m^{\l-\frac{1}{3}}}-k= -\frac{k}{2}.
\end{align*}
By $g'(x)\ge0$  one obtains
$|g(x)|\ge\frac{k}{2}$ for $x\in[X_m^\frac23,X_m-X_m^\frac{1}{3}]$ as the figure \ref{case1} below.
\begin{figure}[H]
\centering
\epsfig{file=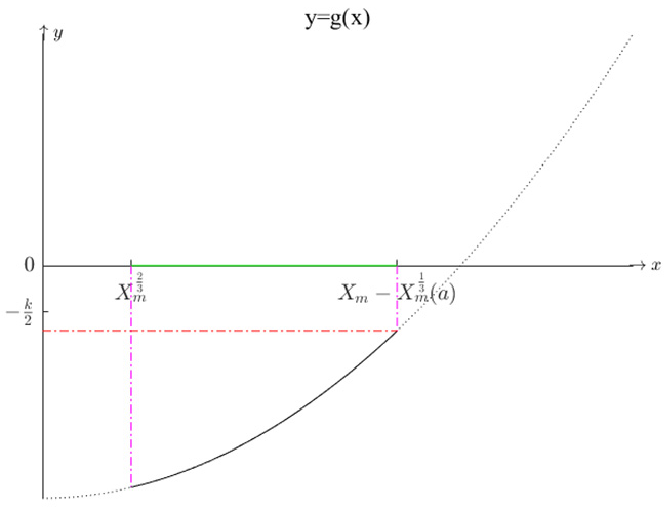,width=0.45\textwidth}
\caption{Phase in Lemma \ref{xmcomplex1}}\label{case1}
\end{figure}
Then by Lemma \ref{oscint} one obtains
\begin{align*}
&\Big| \int_{X_m^\frac23}^{X_m-X_m^\frac{1}{3}}f(x) e^{{\rm i}(\zeta_m-\zeta_n+kx)}\Psi(x)dx\notag\\
\le&~\frac{C}{k}\bigg(X_m^{\mu}\Big(\big|\Psi({\scs X_m-X_m^{\frac{1}{3}}})\big|
+\int_{X_m^\frac23}^{X_m-X_m^\frac{1}{3}}|\Psi'(x)|dx\Big)
+\int_{X_m^\frac23}^{X_m-X_m^\frac{1}{3}}x^{\mu-1}|\Psi(x)|dx\bigg).
\end{align*}
By corollary \ref{psiesti} we have
\(
\big|\Psi({\scs X_m-X_m^{\frac{1}{3}}})\big|\le C X_m^{-\l+\frac{1}{3}}
\) and
$
\int_{X_m^\frac23}^{X_m-X_m^\frac{1}{3}}x^{\mu-1}|\Psi(x)|dx\le  CX_m^{\mu-\l+\frac13}.
$
Besides, one has
\begin{align*}
\int_{X_m^\frac23}^{X_m-X_m^\frac{1}{3}}J_1dx &\le  C \int_{X_m^\frac23}^{X_m-X_m^\frac{1}{3}} x^{2\l-1}(\lambda_m-V(x))^{-\frac{5}{4}}(\lambda_n-V(x))^{-\frac{1}{4}}dx\\
&\le C \int_{X_m^\frac23}^{X_m-X_m^\frac{1}{3}} x^{2\l-1}(\lambda_m-V(x))^{-\frac{3}{2}}dx \le C X_m^{-\l+\frac{1}{3}},
\end{align*}
and
$$
\int_{X_m^\frac23}^{X_m-X_m^\frac{1}{3}}J_3dx\le C\int_{X_m^\frac23}^{X_m-X_m^\frac{1}{3}}
\frac{(\lambda_m-V(x))^{\frac{1}{4}} (\lambda_n-V(x))^{-\frac{1}{4}}}{X_m^{2\l-1}(X_m-x)^3}dx\le C X_m^{-\l+\frac13}.
$$
It follows that
$
\Big|\int_{X_m^\frac23}^{X_m-X_m^\frac{1}{3}}\cal F(x) dx\Big| \le \frac{C}{k} X_m^{\mu-\frac{2}{3}}.
$
Similarly,
$$
\Big|\int_{X_m^\frac23}^{X_m-X_m^\frac{1}{3}}f(x) e^{{\rm i}kx}\psi^{(m)}_2(x)\overline{\psi^{(n)}_1(x)}dx\Big| \le CX_m^{\mu-2}\int_{X_m^\frac23}^{X_m-X_m^\frac{1}{3}}(\lambda_m-V(x))^{-\frac{1}{2}}dx \le C (X_mX_n)^{\frac\mu2-\frac\l2-\frac12}.$$
The other two terms have same estimates. Therefore,
\(\dss
\Big|\int_{X_m^\frac23}^{X_m-X_m^\frac{1}{3}} f(x)e^{{\rm i}kx}h_m(x)\overline{h_n(x)}dx\Big|\le C(k^{-1}\vee 1)(X_mX_n)^{\frac\mu2-\frac13}.
\)
\end{proof}

\begin{lemma}\label{xmcomplex2}
If $\sqrt{a_1}kX_m^{\l-\frac13}\le \lambda_n-\lambda_m<\sqrt{a_1}kX_m^{\l-\frac16}$, then
\[
\Big|\int_{X_m^\frac23}^{X_m-X_m^\frac{1}{3}}f(x) e^{{\rm i}kx}h_m(x)\overline{h_n(x)}dx\Big|\le C(k^{-1}\vee 1)(X_mX_n)^{\frac\mu2-\frac1{6}},\quad\forall~k\in(0,X_m^\frac13],
\]
where $C$ only depends on $(\mu,\l)$ and $n_0\le m\le n$.
\end{lemma}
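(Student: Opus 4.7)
The approach follows the template of Lemma \ref{xmcomplex1}: the goal is to bound
\[
\int_{X_m^{2/3}}^{X_m-X_m^{1/3}} f(x) e^{{\rm i}(\zeta_m-\zeta_n+kx)} \Psi(x)\,dx
\]
via Lemma \ref{oscint} together with the pointwise estimates of Corollary \ref{psiesti}, and then to absorb the $\psi_2^{(m)},\psi_2^{(n)}$ cross terms. The range $k > X_m^{1/3}$ is disposed of immediately by the H\"older estimate \eqref{kxmcomplex}, so I henceforth assume $0 < k \le X_m^{1/3}$. The key new difficulty compared with Lemma \ref{xmcomplex1} is that the phase derivative $-g(x)$ can vanish inside the interval. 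Setting $M := \lambda_n - \lambda_m$, the (unique, since $g'(x) > 0$) candidate stationary point $x^*$ solves $\sqrt{\lambda_m - V(x^*)} = (M-k^2)/(2k)$, which by Lemma \ref{qQesti} places $X_m - x^* \sim M^2/(4k^2 X_m^{2\l-1})$. Under the hypothesis $\sqrt{a_1}kX_m^{\l-1/3}\le M<\sqrt{a_1}kX_m^{\l-1/6}$ this quantity ranges over $[X_m^{1/3}/4,\,X_m^{2/3}/4]$, so $x^*$ may or may not lie in the integration interval.

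I will split into two subcases. When $M$ is at the low end of its range so that $x^* \ge X_m - X_m^{1/3}$, the monotonicity of $g$ forces $g < 0$ with $|g(x)| \gtrsim k$ throughout, and Lemma \ref{oscint} applies verbatim as in Lemma \ref{xmcomplex1}, producing a bound sharper than the one claimed here. When instead $x^* \in (X_m^{2/3},X_m-X_m^{1/3})$, I decompose
\[
[X_m^{2/3},\,X_m-X_m^{1/3}] = [X_m^{2/3},\,x^*-\delta] \cup [x^*-\delta,\,x^*+\delta] \cup [x^*+\delta,\,X_m-X_m^{1/3}]
\]
for a parameter $\delta$ to be optimized. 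On the central window I apply the trivial bound furnished by $|\Psi(x)| \le C(\lambda_m - V(x^*))^{-1/2} \sim k/M$ from Corollary \ref{psiesti}. On each outer interval, the monotonicity $g'(x)>0$ gives $|g(x)| \ge |g(x^* \pm \delta)| \gtrsim g'(x^*)\delta$, and an explicit computation using $V'(x^*) \sim X_m^{2\l-1}$ together with the identity $M^2 \approx 4k^2 X_m^{2\l-1}(X_m-x^*)$ yields $g'(x^*) \sim k/(X_m-x^*)$. Lemma \ref{oscint} is then applied with this lower bound on $|g|$, the boundary term $X_m^\mu|\Psi(x^*\pm\delta)|$, and the derivative integrals $\int J_1\,dx$, $\int J_3\,dx$, $\int \la x\ra^{\mu-1}|\Psi|\,dx$ from Corollary \ref{psiesti}. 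Balancing the trivial and IBP contributions by choosing $\delta \sim \sqrt{(X_m-x^*)/k}$ (clipped to $(X_m-x^*)/2$ when $k$ is exceptionally small) delivers the claimed bound $(k^{-1}\vee 1)(X_mX_n)^{\mu/2-1/6}$ for the principal integral.

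The cross terms involving $\psi_2^{(m)}$ and $\psi_2^{(n)}$ are handled exactly as in the last paragraph of the proof of Lemma \ref{xmcomplex1}, using the factor $O(X_m^{-(\l+1)})$ from \eqref{eigenfunexp}; these contributions are strictly subordinate to the principal one. The main technical obstacle is the juggling of competing scales in the split subcase: the size $\delta$ of the singular window, the distance $X_m - x^*$ of the stationary point from the turning point (which itself varies by a factor of $X_m^{1/3}$ across the allowed $M$-range), the lower bound $g'(x^*)\delta$ on $|g|$ required for the integration by parts, and the integrals $\int J_1\,dx$ and $\int \la x\ra^{\mu-1}|\Psi|\,dx$ whose dominant contribution comes from a neighborhood of $x^*$. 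The degradation of the $X$-exponent from $-1/3$ (Lemma \ref{xmcomplex1}) to $-1/6$ is precisely the cost of inserting this cutoff around the stationary point.
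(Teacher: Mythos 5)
Your overall strategy (treat the possibly degenerate phase $\zeta_m-\zeta_n+kx$ by isolating the zero $x^*$ of $g$, a trivial bound on a window around it, first-order van der Corput outside, and crude bounds on the $\psi_2$ cross terms) is sound and lands on the same exponent; it differs from the paper mainly in bookkeeping. The paper never locates $x^*$: it picks $a$ by the level condition $g(a)=-kX_m^{-1/3}$ and $b=X_m-\frac{a_1}{16a_2}X_m^{1/3}$ (beyond the right endpoint, where $g(b)\ge k/3$), uses convexity $g''>0$ to get $g'(x)\ge g'(a)\ge CkX_m^{-2/3}$ on $[a,b]$ and applies Lemma \ref{oscint} with the \emph{second} derivative there (cost $k^{-1/2}X_m^{1/3}$), applies the first-derivative case on $[X_m^{2/3},a]$ where $|g|\ge kX_m^{-1/3}$, and removes the overshoot $[X_m-X_m^{1/3},b]$ by a trivial estimate; this bypasses both the optimization in $\delta$ and any case distinction on where $x^*$ sits. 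Your route, if completed, requires in addition the comparability $g'(x^*-\delta)\sim g'(x^*)$ to get the left-hand lower bound $|g(x^*-\delta)|\gtrsim g'(x^*)\delta$ (convexity alone gives this only to the right of $x^*$); this is fine for $\delta\le(X_m-x^*)/2$ via Lemma \ref{qQesti}, but it should be stated as a requirement of the argument, not merely as a clipping for exceptionally small $k$.

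The genuine gap is your first subcase. From $\lambda_m-V(x^*)=\big(\frac{M-k^2}{2k}\big)^2$ and Lemma \ref{qQesti}, the hypothesis $\sqrt{a_1}kX_m^{\l-\frac13}\le M<\sqrt{a_1}kX_m^{\l-\frac16}$ only pins $X_m-x^*$ between about $\frac{a_1}{4a_2}X_m^{1/3}$ and $\frac14X_m^{2/3}$, so $x^*$ can coincide with, or lie just outside, the right endpoint $X_m-X_m^{1/3}$. In that configuration $g(X_m-X_m^{1/3})$ is negative but arbitrarily close to $0$, and since $g'(x^*)\sim k/(X_m-x^*)\sim kX_m^{-1/3}$, one has $|g(x)|\ll k$ on a subinterval of length comparable to $X_m^{1/3}$ adjacent to the endpoint. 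Hence the claim ``$g<0$ with $|g(x)|\gtrsim k$ throughout, and Lemma \ref{oscint} applies verbatim as in Lemma \ref{xmcomplex1}'' is false precisely in this regime; the uniform bound $|g|\ge k/2$ is exactly what the hypothesis of Lemma \ref{xmcomplex1} ($M<\sqrt{a_1}kX_m^{\l-\frac13}$) buys and what is lost here. The repair is available inside your own scheme: run the windowed analysis of your second subcase with the window truncated at $X_m-X_m^{1/3}$ (or, as the paper does, work on an interval extending past the endpoint and bracket the zero of $g$ there, subtracting the overshoot by a trivial estimate). As written, though, that branch of the argument does not go through.
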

We put the proof of Lemma \ref{xmcomplex2} to Lemma \ref{xmcomplex5} into section \ref{section4}.
\begin{lemma}\label{xmcomplex3}
If $\sqrt{a_1}kX_m^{\l-\frac16}\le \lambda_n-\lambda_m<\sqrt{a_1}kX_m^\l$, then
\[
\Big|\int_{X_m^\frac23}^{X_m-X_m^\frac{1}{3}}f(x)e^{{\rm i}kx}h_m(x)\overline{h_n(x)}dx\Big|\le C(k^{-1}\vee1)(X_mX_n)^{\frac\mu2-\frac1{6}},\quad\forall~k\in(0,X_m^\frac13],
\]
where $C$ only depends on $(\mu,\l)$ and $n_0\le m\le n$.
\end{lemma}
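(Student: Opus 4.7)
\emph{Setup.} I would mirror the template of Lemma \ref{xmcomplex1} and \ref{xmcomplex2}. First apply Lemma \ref{eigenfunction} to split $h_m\overline{h_n}=\psi_1^{(m)}\overline{\psi_1^{(n)}}+(\text{Langer remainders})$; the remainders carry an extra factor $X_m^{-(\l+1)}$ and are absorbed by a direct H\"older estimate exactly as in the preceding lemmas. The task thus reduces to the oscillatory integral
\[
C_mC_n\int_{X_m^{2/3}}^{X_m-X_m^{1/3}} f(x)\,e^{\rmi(\zeta_m-\zeta_n+kx)}\,\Psi(x)\,dx,
\]
whose phase has derivative $-g(x)$ and whose amplitude is controlled via Corollary \ref{psiesti}. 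As in Lemma \ref{xmcomplex1}, the range $k>X_m^{1/3}$ is disposed of at once by the trivial bound \eqref{kxmcomplex}, so it remains to treat $0<k\le X_m^{1/3}$.

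\emph{Stationary point.} The qualitative feature that distinguishes Lemma \ref{xmcomplex3} from Lemma \ref{xmcomplex1} is that, in the present window $\sqrt{a_1}kX_m^{\l-1/6}\le\lambda_n-\lambda_m<\sqrt{a_1}kX_m^\l$, the phase derivative $g$ has a unique zero $x_\ast$ strictly inside $[X_m^{2/3},X_m-X_m^{1/3}]$. Indeed $g$ is strictly increasing, since $g'(x)=\tfrac{V'(x)}2\big((\lambda_m-V)^{-1/2}-(\lambda_n-V)^{-1/2}\big)>0$; evaluating at the endpoints with the help of Lemma \ref{qQesti} yields $g(X_m^{2/3})<0$ and $g(X_m-X_m^{1/3})>0$. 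Solving $g(x_\ast)=0$ gives $\sqrt{\lambda_m-V(x_\ast)}+\sqrt{\lambda_n-V(x_\ast)}=(\lambda_n-\lambda_m)/k$, whence $X_m-x_\ast\sim\big((\lambda_n-\lambda_m)/(kX_m^{\l-1/2})\big)^2\in[X_m^{1/3},X_m-X_m^{2/3}]$.

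\emph{Splitting and estimation.} I would decompose the interval as $[x_\ast-\eta,x_\ast+\eta]$ together with its complement, for a width $\eta$ to be chosen. On the central piece, bound the integrand directly via $|\Psi(x)|\le C(\lambda_m-V(x))^{-1/2}$ from Corollary \ref{psiesti}, producing a contribution of size $C_mC_nX_m^\mu\,\eta\,(\lambda_m-V(x_\ast))^{-1/2}$. On the complement, monotonicity of $g$ together with the mean value theorem and the formula $g'(x_\ast)=V'(x_\ast)k/[2\sqrt{(\lambda_m-V)(\lambda_n-V)}(x_\ast)]$ yields the lower bound $|g(x)|\ge\tfrac12 g'(x_\ast)\,\eta$, so Lemma \ref{oscint} combined with the bounds on $|\Psi|$, $|\Psi'|$ in Corollary \ref{psiesti} handles that piece. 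Optimizing $\eta$ to balance the two contributions, and eliminating $\lambda_n-\lambda_m$ via the defining identity for $x_\ast$, should lead to the target bound $k^{-1}X_m^{\mu-1/3}\sim k^{-1}(X_mX_n)^{\mu/2-1/6}$.

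\emph{Main obstacle.} The principal difficulty is the careful bookkeeping of the exponents of $X_m$ and $k$: both $|\Psi(x_\ast)|$ and $|g'(x_\ast)|$ depend on $\lambda_n-\lambda_m$ through the location of $x_\ast$, and a crude global application of Lemma \ref{oscint} degrades the final exponent. In particular, when $x_\ast$ is near one of the endpoints $X_m^{2/3}$ or $X_m-X_m^{1/3}$ the two sub-intervals of the complement are highly asymmetric and must be treated separately; it may be convenient, as the author does in splitting Lemma \ref{xmcomplex2} from Lemma \ref{xmcomplex3}, to further subdivide the range of $\lambda_n-\lambda_m$ into narrower bands so that $X_m-x_\ast$ is comparable to a fixed power of $X_m$ in each sub-case, making the choice of $\eta$ and the verification of the claimed exponent transparent.
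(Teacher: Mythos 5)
Your strategy is correct and is in essence the paper's argument; the difference is one of packaging. The paper does not locate the stationary point $x_*$ or optimize a window width by hand. It sets $b=X_m-\frac{a_1}{16a_2}X_m^{2/3}$, checks $g(b)\ge k/3$ and $g(\frac{X_m}2)\le-\frac k4$, defines $a$ by $g(a)=-\frac k4$, and uses $g''>0$ together with the explicit value
\[
g'(a)=\frac{V'(a)\,(g(a)+k)}{2\sqrt{\lambda_n-V(a)}\sqrt{\lambda_m-V(a)}}\ \ge\ CkX_m^{-1}
\]
to get a \emph{uniform} lower bound $g'\ge CkX_m^{-1}$ on all of $[a,b]$. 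Then the $k=2$ case of Lemma \ref{oscint} on $[a,b]$ produces the factor $Ck^{-1/2}X_m^{1/2}$, while on $[X_m^{2/3},a]$ and $[b,X_m-X_m^{1/3}]$ one has $|g|\ge \frac k4$ and the $k=1$ case applies; combined with Corollary \ref{psiesti} this gives $Ck^{-1/2}X_m^{\mu-\frac13}$, exactly what your optimization at $\eta\sim k^{-1/2}X_m^{1/2}$ reproduces. Your splitting around $x_*$ is precisely the textbook proof of the second-derivative van der Corput estimate, so the two routes are equivalent; what the paper's packaging buys is that both obstacles you list evaporate. The second-derivative bound is insensitive to where the stationary point sits inside $[a,b]$, so no further subdivision of the band of $\lambda_n-\lambda_m$ and no endpoint case analysis is needed; the choice of $b$ already encodes the information $X_m-x_*\gtrsim X_m^{2/3}$ coming from the lower bound $\lambda_n-\lambda_m\ge\sqrt{a_1}kX_m^{\l-\frac16}$, which is what keeps $\sup|\Psi|\le CX_m^{-\l+\frac16}$ on the relevant region.

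One step of your sketch should be tightened. The lower bound $|g(x)|\ge\frac12 g'(x_*)\eta$ on the portion of the complement to the \emph{left} of $x_*$ does not follow from monotonicity of $g$ plus the mean value theorem alone: since $g''>0$, $g'$ is smaller to the left of $x_*$, so $g(x_*)-g(x_*-\eta)=g'(\xi)\eta$ with $\xi<x_*$ only gives an upper bound by $g'(x_*)\eta$. You need a lower bound on $g'$ valid on a whole neighbourhood of $x_*$ (equivalently, at the left edge of the region containing $x_*$), which is exactly the role of the point $a$ and the convexity argument in the paper. This is a fixable omission, not a wrong turn, but as written the mean value theorem step does not close.
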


\begin{lemma}\label{xmcomplex4}
If $\sqrt{a_1}kX_m^\l\le\lambda_n-\lambda_m <3\sqrt{D_2}kX_n^\l$, then
\[
\Big|\int_{X_m^\frac23}^{X_m-X_m^\frac{1}{3}}f(x)e^{{\rm i}kx}h_m(x)\overline{h_n(x)}dx\Big|\le C(k^{-1}\vee 1)(X_mX_n)^{\frac\mu2-\frac{\mu+1}{2(2\mu+2\l+1)}},\quad\forall~k\in(0,X_m^\frac13],
\]
where $C$ only depends on $(\mu,\l)$ and $n_0\le m\le n$.
\end{lemma}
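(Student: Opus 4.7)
My plan is to follow the general stationary-phase strategy used in Lemmas \ref{xmcomplex1}-\ref{xmcomplex3}, but with the crucial difference that in the present regime $\sqrt{a_1}kX_m^{\l}\le\lambda_n-\lambda_m<3\sqrt{D_2}kX_n^{\l}$ the phase $\phi(x)=\zeta_m(x)-\zeta_n(x)+kx$ actually has a stationary point $x_0$ sitting well inside $[X_m^{2/3},X_m-X_m^{1/3}]$, forcing a split of the integral around $x_0$.  First I would dispose of $k>X_m^{1/3}$ by the trivial H\"older bound \eqref{kxmcomplex}, then use Lemma \ref{eigenfunction} to replace $h_m\overline{h_n}$ by $\psi_1^{(m)}\overline{\psi_1^{(n)}}$ modulo lower-order terms controlled by $X_m^{-(\l+1)}$, reducing matters to estimating $C_m C_n\int f(x)e^{i\phi(x)}\Psi(x)\,dx$ with $\phi'(x)=-g(x)$.

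Next, from the stationary condition $\sqrt{\lambda_n-V(x_0)}+\sqrt{\lambda_m-V(x_0)}=(\lambda_n-\lambda_m)/k$ combined with the assumed range of $\lambda_n-\lambda_m$, I would show that $\sqrt{\lambda_m-V(x_0)}\sim X_m^{\l}$, so $x_0$ is genuinely interior and
\[
g'(x_0)=\frac{V'(x_0)\,k}{2\sqrt{(\lambda_m-V(x_0))(\lambda_n-V(x_0))}}\sim\frac{k}{X_m}.
\]
Combined with the monotonicity of $g$, this gives $|g(x)|\gtrsim k|x-x_0|/X_m$ near $x_0$.  I then split $[X_m^{2/3},X_m-X_m^{1/3}]=I_1\cup I_2\cup I_3$ with $I_2=(x_0-\delta,x_0+\delta)$, estimate $I_2$ trivially using $|\Psi(x_0)|\sim X_m^{-\l}$ and $C_mC_n\sim X_m^{\l-1}$ to obtain $|\int_{I_2}|\lesssim X_m^{\mu-1}\delta$, and apply Lemma \ref{oscint} on $I_1$ and $I_3$ with $|g|_{\min}\gtrsim k\delta/X_m$.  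Using Corollary \ref{psiesti} and the same substitutions $u=\lambda_m-V(x)$ as in the previous lemmas, the boundary terms and $\int(J_1+J_3)\,dx$ are both of order $X_m^{\mu-\l+1/3}$ (dominated by the endpoint $X_m-X_m^{1/3}$), giving an off-window bound of order $X_m^{\mu+1/3}/(k\delta)$ after multiplication by $C_mC_n$.

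I would then choose $\delta=X_m^{1-\alpha}$ with $\alpha=\tfrac{\mu+1}{2\mu+2\l+1}$, so that the on-window contribution becomes exactly $X_m^{\mu-\alpha}$ and the off-window contribution $X_m^{\mu+\alpha-2/3}/k$ is absorbed by $X_m^{\mu-\alpha}/k$.  The remainder terms $\psi_2^{(m)}\overline{\psi_j^{(n)}}$ and $\psi_1^{(m)}\overline{\psi_2^{(n)}}$ are handled by H\"older as in Lemma \ref{xmcomplex1}, producing a contribution of lower order in $X_m$.  Rephrasing in terms of $(X_mX_n)^{\mu/2-\alpha/2}$ via $X_m\sim X_n$ completes the estimate.

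The main obstacle is that this simple two-scale balance only closes when $\alpha\le 1/3$, i.e.\ $\mu\le 2\l-2$; for $\l\in(1,4/3)$ and $\mu>2\l-2$ the off-window bound $X_m^{\mu+1/3}/(k\delta)$ is too crude because the amplitude $\Psi$ grows and $|g|$ itself becomes much larger than $k\delta/X_m$ as $x$ approaches $X_m$.  To cover this case I expect to split $I_3$ further as $[x_0+\delta,X_m-\Delta]\cup[X_m-\Delta,X_m-X_m^{1/3}]$ and exploit the sharper lower bound $|g(x)|\sim \sqrt{k}\,X_m^{\l/2}$ (in the regime $\Delta\ll k X_m^{1-\l}$) or $|g(x)|\sim kX_m^{1/2}/\sqrt{\Delta}$ (otherwise) on the near part; optimizing both $\delta$ and $\Delta$ should yield the same exponent $\alpha$.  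Verifying that all boundary contributions truly assemble into the exponent $\frac{\mu+1}{2\mu+2\l+1}$, rather than the naive $1/3$ or the stationary-phase value $1/2$, will be the delicate bookkeeping step.
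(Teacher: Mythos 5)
Your route (localize the stationary point $x_0$ of $\zeta_m-\zeta_n+kx$, put a window of width $\delta$ around it, and use the first\mbox{-}derivative case of Lemma \ref{oscint} off the window) is genuinely different from the paper's, but as written it has a real gap, and in fact two. First, the localization is not justified: in the regime $\sqrt{a_1}kX_m^{\l}\le\lambda_n-\lambda_m<3\sqrt{D_2}kX_n^{\l}$ the equation $\sqrt{\lambda_n-V(x_0)}+\sqrt{\lambda_m-V(x_0)}=(\lambda_n-\lambda_m)/k$ may have no solution at all (the hypothesis allows $(\lambda_n-\lambda_m)/k$ up to $3\sqrt{D_2}X_n^{\l}$, which can exceed $\sup_x\big(\sqrt{\lambda_n-V}+\sqrt{\lambda_m-V}\big)$), and when a solution exists it need not satisfy $x_0\sim X_m$: since $\sqrt{\lambda_n-V}+\sqrt{\lambda_m-V}$ varies very slowly for $x\ll X_m$, admissible values of $\lambda_n-\lambda_m$ put $x_0$ near or below the left endpoint $X_m^{2/3}$. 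There $V'(x_0)\sim x_0^{2\l-1}\ll X_m^{2\l-1}$, hence $g'(x_0)\sim kV'(x_0)X_m^{-2\l}\ll k/X_m$, the model $|g(x)|\gtrsim k|x-x_0|/X_m$ fails, $|g|$ stays far below $k$ over a long stretch, and your off-window bound $X_m^{\mu+1/3}/(k\delta)$ is not available. Second, as you note yourself, even in the favourable configuration the balance $\delta=X_m^{1-\alpha}$ with $\alpha=\frac{\mu+1}{2\mu+2\l+1}$ closes only when $\alpha\le\frac13$, i.e. $\mu\le2\l-2$; the lemma is asserted for every $\mu\ge0$ (and in the application $\mu$ may exceed $2\l-2$, e.g. $\mu$ up to $\l-\frac23>2\l-2$ when $\l<\frac43$), and the proposed repair by further splitting $I_3$ is only sketched, so the decisive case is not actually proved.

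The paper's proof avoids both issues precisely by never locating $x_0$. It splits $[X_m^{2/3},X_m-X_m^{1/3}]$ at $a=X_m^{\frac{2\mu+2\l}{2\mu+2\l+1}}$ and $b=\big(1-\frac{a_1}{16a_2}\big)X_m$: on $[b,X_m-X_m^{1/3}]$ one has $g\ge\frac k3$ and the nonstationary (first\mbox{-}derivative) estimate applies; on $[a,b]$ one uses $V'(x)\gtrsim a^{2\l-1}$ and $\lambda_n-\lambda_m\ge\sqrt{a_1}kX_m^{\l}$ to get the uniform second\mbox{-}derivative bound $g'(x)\ge CkX_m^{-\frac{2\mu+4\l}{2\mu+2\l+1}}$, and then the van der Corput case of Lemma \ref{oscint} (which is insensitive to where, or whether, $g$ vanishes) yields $Ck^{-\frac12}X_m^{\mu-\frac{\mu+1}{2\mu+2\l+1}}$; on $[X_m^{2/3},a]$ the trivial size estimate already gives $CX_m^{\mu-\frac{\mu+1}{2\mu+2\l+1}}$. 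The exponent $\frac{\mu+1}{2\mu+2\l+1}$ thus comes from balancing the trivial estimate on $[X_m^{2/3},a]$ against the second\mbox{-}derivative estimate on $[a,b]$, not from a window around a stationary point. If you wish to keep your scheme, you must replace the single window and the linear model for $g$ by this kind of second\mbox{-}derivative input (or a decomposition adapted to the true size of $g'$, which degenerates as $x$ decreases), since $g'(x_0)\sim k/X_m$ is simply not uniform over the stated range of $\lambda_n-\lambda_m$.
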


\begin{lemma}\label{xmcomplex5}
If $\lambda_n-\lambda_m \ge3\sqrt{D_2}kX_n^\l$, then
\[
\Big|\int_{X_m^\frac23}^{X_m-X_m^\frac{1}{3}}f(x)e^{{\rm i}kx}h_m(x)\overline{h_n(x)}dx\Big|\le  C(k^{-1}\vee 1)(X_mX_n)^{\frac\mu2-\frac13},\quad\forall~k\in(0,X_m^\frac13],
\]
where $C$ only depends on $(\mu,\l)$ and $n_0\le m\le n$.
\end{lemma}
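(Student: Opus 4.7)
\textbf{Proof plan for Lemma \ref{xmcomplex5}.} The strategy is to mimic the oscillatory integration-by-parts argument of Lemma \ref{xmcomplex1} almost verbatim; the only conceptually new ingredient is a uniform lower bound on the phase derivative $g$ derived directly from the hypothesis $\lambda_n-\lambda_m\ge 3\sqrt{D_2}k X_n^{\l}$, rather than from the smallness of $\lambda_n-\lambda_m$ that was exploited in that earlier case. First I will replace $h_m(x)\overline{h_n(x)}$ by $\psi_1^{(m)}(x)\overline{\psi_1^{(n)}(x)}$ in the integrand. The three remainder contributions involving $\psi_2^{(m)}$ or $\psi_2^{(n)}$ acquire the prefactor $X_m^{-(\l+1)}$ from \eqref{eigenfunexp}, and a direct H\"older estimate, exactly as in the final lines of Lemma \ref{xmcomplex1}, shows that each of them is bounded by $C(X_mX_n)^{\mu/2-(\l+1)/2}$, which is negligible compared with the target.

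It thus remains to control $I := C_mC_n\int_{X_m^{2/3}}^{X_m-X_m^{1/3}} f(x)\,e^{\rmi(\zeta_m-\zeta_n+kx)}\Psi(x)\,dx$ with $C_mC_n\sim (X_mX_n)^{(\l-1)/2}$. The key observation is that, since $V(x)\ge 0$ on the interval $[X_m^{2/3},X_m-X_m^{1/3}]$ and $x\le X_m\le X_n$, Assumption \ref{poteass} gives
\[
\sqrt{\lambda_n-V(x)}+\sqrt{\lambda_m-V(x)}\ \le\ 2\sqrt{\lambda_n}\ \le\ 2\sqrt{D_2}\,X_n^{\l},
\]
so that the hypothesis of the lemma yields
\[
g(x)=\frac{\lambda_n-\lambda_m}{\sqrt{\lambda_n-V(x)}+\sqrt{\lambda_m-V(x)}}-k\ \ge\ \frac{3\sqrt{D_2}k X_n^{\l}}{2\sqrt{D_2}X_n^{\l}}-k\ =\ \frac{k}{2}
\]
uniformly on the whole interval.

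With this phase lower bound in hand, Lemma \ref{oscint} gives
\[
|I|\le \frac{C(X_mX_n)^{(\l-1)/2}}{k}\Bigl(X_m^{\mu}|\Psi(X_m-X_m^{1/3})|+X_m^{\mu}\!\!\int_{X_m^{2/3}}^{X_m-X_m^{1/3}}\!|\Psi'(x)|\,dx+\!\!\int_{X_m^{2/3}}^{X_m-X_m^{1/3}}\!x^{\mu-1}|\Psi(x)|\,dx\Bigr),
\]
and Corollary \ref{psiesti} together with Lemma \ref{qQesti} supplies the bounds $|\Psi(X_m-X_m^{1/3})|\le C X_m^{-\l+1/3}$ and $\int (J_1+J_3)\,dx,\ \int x^{\mu-1}|\Psi(x)|\,dx\le C X_m^{\mu-\l+1/3}$, which were already carried out in the proof of Lemma \ref{xmcomplex1}. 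Collecting these and using $X_m\le X_n\le 4X_m$ (so that $X_m^{\mu-2/3}\sim (X_mX_n)^{\mu/2-1/3}$), together with the trivial inequality $k^{-1}\le k^{-1}\vee 1$, gives the claimed estimate. There is no genuine obstacle here: the entire analytical content of the lemma is packaged into the uniform lower bound on $g(x)$ displayed above, and once it is established the rest is a straightforward repetition of the computations in Lemma \ref{xmcomplex1}.
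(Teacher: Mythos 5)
Your proposal is correct and follows the paper's own argument essentially verbatim: the key step is the same uniform lower bound $g(x)\ge \frac k2$ on $[X_m^{2/3},X_m-X_m^{1/3}]$, obtained from $\lambda_n-\lambda_m\ge 3\sqrt{D_2}kX_n^{\l}$ and $\sqrt{\lambda_n-V(x)}+\sqrt{\lambda_m-V(x)}\le 2\sqrt{D_2}X_n^{\l}$, after which Lemma \ref{oscint} and the $\Psi$, $J_1$, $J_3$ estimates are reused exactly as in Lemma \ref{xmcomplex1}, including the treatment of the $\psi_2$ remainder terms. No substantive difference from the paper's proof.
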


\begin{lemma}\label{xmposi}
If  $f(x)$ satisfies Assumption \ref{pertass} and $X_m\le X_n\le4X_m$, then one has
\[
\Big|\int_{X_m^\frac23}^{X_m-X_m^\frac{1}{3}}f(x) e^{{\rm i}kx}h_m(x)\overline{h_n(x)}dx\Big|\le  C(|k|^{-1}\vee|k|)(X_mX_n)^{\frac\mu2-\frac13},\quad\forall~k<0,
\]
where $C$ only depends on $(\mu,\l)$ and $n_0\le m\le n$.
\end{lemma}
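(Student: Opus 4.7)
The essential simplification for $k<0$ compared with the positive case is that the phase derivative
$$g(x)=(\lambda_n-V(x))^{1/2}-(\lambda_m-V(x))^{1/2}-k$$
has no stationary point in the interval. Indeed, $\lambda_n\ge\lambda_m$ gives $(\lambda_n-V(x))^{1/2}-(\lambda_m-V(x))^{1/2}\ge0$, so $g(x)\ge-k=|k|>0$ uniformly on $[X_m^{2/3},X_m-X_m^{1/3}]$. This lets me apply the oscillatory integral estimate of Lemma \ref{oscint} with a single uniform lower bound on $|g|$ over the whole interval, and hence the delicate five-way subdivision by the relative size of $\lambda_n-\lambda_m$ carried out in Lemmas \ref{xmcomplex1}--\ref{xmcomplex5} collapses into one argument.

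The plan is to split according to $|k|$, with the transition at $|k|\sim X_m^{2/3}$ (rather than $X_m^{1/3}$ as in the positive case) so that the stronger exponent $\frac{\mu}{2}-\frac13$ is reached. For $|k|>X_m^{2/3}$ I would apply H\"older together with $\|h_m\|_{L^2}=\|h_n\|_{L^2}=1$ and the pointwise bound $|f(x)|\le C X_m^\mu$ to obtain $|I|\le C X_m^\mu\le C(X_mX_n)^{\mu/2}$; combined with $|k|\ge X_m^{2/3}\sim(X_mX_n)^{1/3}$ this is $\le C|k|(X_mX_n)^{\mu/2-1/3}$. For $0<|k|\le X_m^{2/3}$ I rewrite the integral using \eqref{eigenfunexp} as
$$C_mC_n\int_{X_m^{2/3}}^{X_m-X_m^{1/3}}f(x)e^{{\rm i}(\zeta_m-\zeta_n+kx)}\Psi(x)\,dx+(\text{lower-order terms from }\psi_2^{(m)},\psi_2^{(n)}),$$
the error terms being estimated exactly as in Lemma \ref{xmcomplex1}.

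Then I would apply Lemma \ref{oscint} with $|g(x)|\ge|k|$ to dominate the main integral by
$$\frac{C}{|k|}\Big(X_m^{\mu}\big|\Psi(X_m-X_m^{1/3})\big|+X_m^{\mu}\!\int_{X_m^{2/3}}^{X_m-X_m^{1/3}}\!|\Psi'(x)|\,dx+\!\int_{X_m^{2/3}}^{X_m-X_m^{1/3}}\!x^{\mu-1}|\Psi(x)|\,dx\Big).$$
Each piece is already controlled by $C X_m^{-\l+1/3}$ (or better) via Corollary \ref{psiesti} and the $J_1+J_3$ bounds computed in the proof of Lemma \ref{xmcomplex1}; multiplying by $X_m^\mu$ produces the desired $C|k|^{-1}X_m^{\mu-2/3}\sim C|k|^{-1}(X_mX_n)^{\mu/2-1/3}$.

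The main obstacle is essentially bookkeeping: matching the transition threshold $|k|\sim X_m^{2/3}$ so that both cases yield the stronger exponent $-\frac13$ (rather than the $-\frac16$ that would follow from naively copying the threshold of the positive case), and verifying that the endpoint and derivative estimates borrowed from Lemma \ref{xmcomplex1} really do survive without needing the stationary-phase dichotomy. Since the non-vanishing of $g$ is uniform on the whole interval, no additional splitting into subintervals is required, and the estimates on $\psi_2^{(\cdot)}$ and on $J_1,J_3$ carry over verbatim from the earlier lemmas.
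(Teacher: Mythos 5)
Your proposal is correct and follows essentially the paper's own route: the paper likewise observes that $g(x)\ge -k=|k|$ uniformly on $[X_m^{2/3},X_m-X_m^{1/3}]$ for $k<0$ and then repeats the Lemma \ref{xmcomplex1} argument (Lemma \ref{oscint} plus the $\Psi$, $J_1$, $J_3$ bounds of Corollary \ref{psiesti}) verbatim. Your extra case split at $|k|\sim X_m^{2/3}$ is harmless but unnecessary, since the oscillatory estimate already yields $C|k|^{-1}(X_mX_n)^{\mu/2-1/3}$ for every $k<0$, which sits inside the stated $(|k|^{-1}\vee|k|)$ envelope.
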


\begin{proof}
Clearly, if $x\in[X_m^\frac23,~X_m-X_m^\frac13]$, then
\[
g(x)=\sqrt{\lambda_n-V(x)}-\sqrt{\lambda_m-V(x)}-k\ge-k=|k|.
\]
The proof is similar as Lemma \ref{xmcomplex1}.
\end{proof}
Combining Lemma \ref{xmcomplex1}-\ref{xmposi} with \eqref{kxmcomplex}, one completes the proof of Lemma \ref{xmcomplex}.
Then we turn to the last term and obtain the following lemma.
\begin{lemma}\label{xmxncomplex}
If  $f(x)$ satisfies Assumption \ref{pertass} and $X_m\le X_n\le4X_m$, then
\[
\Big|\int_{X_m-X_m^\frac{1}{3}}^{X_n}f(x) e^{{\rm i}kx}h_m(x)\overline{h_n(x)}dx\Big|\le C(X_mX_n)^{\frac\mu{2}-\frac{1}{6}},\quad \forall~k\ne0,
\]
where $C$ only depends on $(\mu,\l)$ and $n_0\le m\le n$.
\end{lemma}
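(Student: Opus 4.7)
The interval $[X_m - X_m^{1/3}, X_n]$ crosses the turning point $X_m$ of $h_m$ in its interior, and since $X_n \sim X_m$ the target bound $(X_m X_n)^{\mu/2 - 1/6}$ is of order $X_m^{\mu - 1/3}$. I would first truncate at the threshold $X_m + X_m^{-1/3}$: Lemma \ref{qQesti} gives $|\zeta_m(x)| \ge A_1 X_m^{\l - 1}$ there, and since $\l > 1$ this blows up, so the imaginary-axis asymptotics of $H^{(1)}_{1/3}$ combined with Lemma \ref{eigenfunction} supply $|h_m(x)| \le C X_m^{(\l-1)/2}(V - \lambda_m)^{-1/4} e^{-|\zeta_m(x)|}$. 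Explicit integration (via the substitution $v = A_1 X_m^{\l-1/2}(x - X_m)^{3/2}$) then yields $\|h_m\|_{L^2([X_m + X_m^{-1/3}, X_n])} \le C e^{-c X_m^{\l-1}}$, so Cauchy--Schwarz together with $\|h_n\|_{L^2} \le 1$ shows that the contribution of $[X_m + X_m^{-1/3}, X_n]$ to the integral is $O(X_n^\mu e^{-c X_m^{\l-1}})$, exponentially smaller than the target.

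On the remaining interval $I_0 := [X_m - X_m^{1/3}, X_m + X_m^{-1/3}]$, of length $\sim X_m^{1/3}$, I would apply Cauchy--Schwarz
\[
\Big|\int_{I_0} f(x)\, e^{\rmi k x} h_m(x)\, \overline{h_n(x)}\, dx\Big| \le C X_n^{\mu}\, \|h_m\|_{L^2(I_0)}\, \|h_n\|_{L^2(I_0)},
\]
and reduce the lemma to two local $L^2$ bounds $\|h_m\|_{L^2(I_0)} \le C X_m^{-1/6}$ and $\|h_n\|_{L^2(I_0)} \le C X_n^{-1/6}$. Both come out of Lemma \ref{eigenfunction}: since $|\sqrt{\pi\zeta/2}\, H^{(1)}_{1/3}(\zeta)|$ is uniformly bounded on the real and the positive imaginary axes, one has the pointwise estimate $|\psi_1^{(j)}(x)|^2 \le C X_j^{\l-1}|\lambda_j - V(x)|^{-1/2}$ outside a turning-point window of radius $\sim X_j^{-(2\l-1)/3}$, inside which the local asymptotic $H^{(1)}_{1/3}(\zeta) \sim c\zeta^{-1/3}$ at $\zeta = 0$ gives the Airy-type ceiling $|\psi_1^{(j)}(x)|^2 \le C X_j^{(\l-2)/3}$. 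Together with Lemma \ref{qQesti} this produces, via the elementary integral $\int_{X_m - X_m^{1/3}}^{X_m} X_m^{\l-1}(X_m^{2\l-1}(X_m-x))^{-1/2}\, dx \sim X_m^{-1/3}$ and a smaller contribution of order $X_m^{-(\l+1)/3}$ from the exponentially-decaying side of $h_m$ on $[X_m, X_m + X_m^{-1/3}]$, the required bound for $\|h_m\|_{L^2(I_0)}$; the bound for $\|h_n\|_{L^2(I_0)}$ follows by the same computation after the substitution $y = X_n - x$ combined with the elementary inequality $\int_d^{d+L} y^{-1/2}\, dy \le 2\sqrt{L}$.

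The hard part will be making the estimate $\|h_n\|_{L^2(I_0)} \le C X_n^{-1/6}$ genuinely uniform in the geometric position of $X_n$ relative to $I_0$, and I would handle it by distinguishing three cases: $X_n$ inside $I_0$, just to its right, or far to its right. In the first case $I_0$ must be further split at $X_n$, and the classically forbidden piece $[X_n, X_m + X_m^{-1/3}]$ is controlled by combining the Airy-type bound $|h_n|^2 \le C X_n^{(\l-2)/3}$ on the window of size $X_n^{-(2\l-1)/3}$ with the exponential decay beyond it, yielding an $L^2$-mass of order $X_n^{-(\l+1)/3} = o(X_n^{-1/3})$ since $\l > 1$. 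Summing the three contributions then completes the proof.
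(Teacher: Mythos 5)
Your argument is correct, but it takes a genuinely different route from the paper. The paper never uses Cauchy--Schwarz here: it keeps everything pointwise, bounding $|h_j(x)|\le C X_j^{\frac{\l-1}2}|\lambda_j-V(x)|^{-\frac14}$ (with the extra factor $e^{-|\zeta_m|}$ in the forbidden region of $h_m$), estimating $\int_{X_m-X_m^{1/3}}^{X_m}\cal F\,dx$ directly via $\lambda_n-V\ge\lambda_m-V\ge a_1X_m^{2\l-1}(X_m-x)$, and then splitting $[X_m,X_n]$ at \emph{both} turning-point scales $X_m+X_m^{-1/3}$ and $X_n-X_n^{-1/3}$, with a separate easy case when these thresholds cross (then $X_n-X_m\le1$ and a crude bound gives $(X_mX_n)^{\mu/2-1/4}$); the exponential factor $e^{-|\zeta_m|}$ is what rescues the middle region. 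You instead truncate once at $X_m+X_m^{-1/3}$, dispose of the tail by $\|h_m\|_{L^2}$-smallness together with $\|h_n\|_{L^2(\R)}=1$, and on the window $I_0$ reduce everything to the local $L^2$ bounds $\|h_j\|_{L^2(I_0)}\le CX_j^{-1/6}$, proved by combining the oscillatory-region bound $X_j^{-1/2}(X_j-x)^{-1/2}$ with the Airy ceiling $|h_j|^2\le CX_j^{(\l-2)/3}$ on the window of radius $X_j^{-(2\l-1)/3}$; your three-case discussion of where $X_n$ sits relative to $I_0$ plays the role of the paper's case distinction. Your route buys a cleaner and more reusable statement (a local $L^2$ turning-point estimate uniform in the window length, with the exponential smallness isolated in one step), at the cost of invoking the finer $|\zeta|^{1/6}$ behaviour of $H^{(1)}_{1/3}$ near $\zeta=0$, which the paper's proof of this lemma never needs; the paper's route stays within the same pointwise toolkit (Lemma \ref{qQesti}, Corollary \ref{psiesti}) used in all the neighbouring lemmas. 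Two harmless slips: on $[X_m,X_m+X_m^{-1/3}]$ the part of $\|h_m\|_{L^2}^2$ outside the Airy window is $O(X_m^{-2/3})$, not $O(X_m^{-(\l+1)/3})$ (the window itself gives the latter), and you should say explicitly that the truncation step uses the normalization $\|h_n\|_{L^2(\R)}=1$; neither affects the conclusion, since all these terms are $o(X_m^{-1/3})$ and $X_n^\mu e^{-cX_m^{\l-1}}\le C(X_mX_n)^{\mu/2-1/6}$ for $X_m\ge R$.
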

\begin{proof}
By Lemma \ref{qQesti} we have $\lambda_m-V(x)\ge a_1X_m^{2\l-1}(X_m-x)$ for $x\in[X_m-X_m^{\frac13},X_m]$, then
\begin{align*}
\Big|\int_{X_m-X_m^\frac{1}{3}}^{X_m} \cal F(x) dx\Big| & \le CX_m^{\frac{\l-1}2}X_n^{\frac{\l-1}2}
\int_{X_m-X_m^\frac{1}{3}}^{X_m}x^\mu(\lambda_m-V(x))^{-\frac{1}{4}}(\lambda_n-V(x))^{-\frac{1}{4}}dx\\
& \le C(X_mX_n)^{\frac\mu2-\frac16}.
\end{align*}
For the integral on $[X_m,X_n]$, we  discuss it under two cases.\\
1). $X_n - X_n^{-\frac{1}{3}} \ge X_m + X_m^{-\frac{1}{3}}$. We split the integral into three parts as:
\[
\int^{X_n}_{X_m} f(x) e^{{\rm i}kx}\psi_1^{(m)}(x)\overline{\psi_1^{(n)}(x)}dx=
\big(\int^{X_m+X_m^{-\frac{1}{3}}}_{X_m}+\int_{X_m+X_m^{-\frac{1}{3}}}^{X_n-X_n^{-\frac13}}+\int_{X_n-X_n^{-\frac13}}^{X_n}\big)dx.
\]
For the first part, since $\lambda_n-V(X_m+X_m^{-\frac13})\ge\lambda_n-V(X_n-X_n^{-\frac13})\ge a_1X_n^{2\l-\frac43}$, then
\begin{align*}
&\Big|\int^{X_m+X_m^{-\frac{1}{3}}}_{X_m}  \cal F(x) dx\Big| \le~CX_m^{\frac{\l-1}2+\mu}X_n^{\frac{\l-1}2}\int^{X_m+X_m^{-\frac{1}{3}}}_{X_m} (V(x)-\lambda_m)^{-\frac{1}{4}}(\lambda_n-V(x))^{-\frac{1}{4}}dx\\
\le~&C X_m^{-\frac{1}{4}+\mu} X_n^{\frac{\l-1}{2}} (\lambda_n-V(X_m+X_m^{-\frac{1}{3}}))^{-\frac{1}{4}}\int^{X_m+X_m^{-\frac{1}{3}}}_{X_m}(X_m-x)^{-\frac{1}{4}}dx \le C (X_mX_n)^{\frac\mu2-\frac{1}{3}}.
\end{align*}
By lemma \ref{potesim} we have $|\zeta_m(x)|\ge A_1X_m^{\l-\frac12}(x-X_m)^\frac32\ge A_1X_m^{\l-\frac23}(x-X_m)$ for $x\ge X_m+X_m^{-\frac13}$. Thus,
\begin{align*}
&\Big|\int_{X_m+X_m^{-\frac{1}{3}}}^{X_n-X_n^{-\frac{1}{3}}} \cal F(x) dx\Big| \le CX_m^{\frac{\l-1}2}X_n^{\frac{\l-1}2+\mu}\int_{X_m+X_m^{-\frac{1}{3}}}^{X_n-X_n^{-\frac{1}{3}}} (V(x)-\lambda_m)^{-\frac{1}{4}} (\lambda_n-V(x))^{-\frac{1}{4}}e^{-|\zeta_m|}dx\\
\le&~C X_m^{-\frac{1}{4}}X_n^{\frac{\l-1}2+\mu}\Big(\lambda_n-V\big(X_n-X_n^{-\frac{1}{3}}\big)\Big)^{-\frac{1}{4}} \int_{X_m+X_m^{-\frac{1}{3}}}^{X_n-X_n^{-\frac{1}{3}}} (x-X_m)^{-\frac{1}{4}} e^{-C_0(x-X_m)}dx\\
\le&~C X_m^{-\frac{1}{4}}X_n^{-\frac{1}{6}+\mu}\int_0^\infty t^{-\frac{1}{4}}e^{-t}dt\le C (X_mX_n)^{\frac\mu2-\frac{1}{6}}.
\end{align*}
For the last part, since
\(\dss
V(X_n-X_n^{-\frac13})-\lambda_m\ge V(X_m+X_m^{-\frac13})-\lambda_m\ge a_1X_m^{2\l-\frac43}
\), then
\begin{align*}
&\Big|\int_{X_n-X_n^{-\frac{1}{3}}}^{X_n}\cal F(x) dx\Big| \le CX_m^{\frac{\l-1}2}X_n^{\frac{\l-1}2+\mu}\int_{X_n-X_n^{-\frac{1}{3}}}^{X_n} (V(x)-\lambda_m)^{-\frac{1}{4}}(\lambda_n-V(x))^{-\frac{1}{4}}dx\\
\le&~C X_m^{\frac{\l-1}2}\Big(V\big(X_n-X_n^{-\frac{1}{3}}\big)-\lambda_m\Big)^{-\frac{1}{4}}X_n^{-\frac{1}{4}+\mu} \int_{X_n-X_n^{-\frac{1}{3}}}^{X_n} (X_n-x)^{-\frac{1}{4}}dx\\
\le&~C X_m^{-\frac16}X_n^{-\frac12+\mu}\le C (X_mX_n)^{\frac\mu2-\frac{1}{3}}.
\end{align*}
Thus, for the first case we have
$
\Big|\int_{X_m}^{X_n} \cal F(x) dx\Big|\le C(X_mX_n)^{\frac\mu2-\frac16}.
$\\
2). $X_n - X_n^{-\frac{1}{3}} < X_m + X_m^{-\frac{1}{3}}$. Note $X_m\ge8$, it follows $X_n-X_m\le X_n^{-\frac13}+X_m^{-\frac13}\le1$. Hence,
\begin{align*}
&\Big|\int_{X_m}^{X_n} \cal {F}(x) dx\Big|\le C X_m^{\frac{\l-1}2}X_n^{\frac{\l-1}2+\mu}\int_{X_m}^{X_n} (V(x)-\lambda_m)^{-\frac{1}{4}}(\lambda_n-V(x))^{-\frac{1}{4}}dx\\
&\le C X_m^{-\frac{1}{4}}X_n^{-\frac{1}{4}+\mu}\int_{X_m}^{X_n} (x-X_m)^{-\frac{1}{4}}(X_n-x)^{-\frac{1}{4}}dx  \le C (X_mX_n)^{\frac\mu2-\frac{1}{4}}.
\end{align*}
 Thus, in the second case we have
$
\Big|\int_{X_m}^{X_n} \cal F(x) dx\Big|\le C(X_mX_n)^{\frac\mu2-\frac{1}{4}}$.
Since the other three integrals have better estimates, we finish the proof.
\end{proof}
\subsection{Proof of Lemma \ref{mainlem} and Lemma \ref{mainlem2}}
 Proof of Lemma \ref{mainlem}.
If denote $\wt V(x)=V(-x),~\wt{h_n}(x)=h_n(-x)$ and $\wt f(x)=f(-x)$, we have
\[
\Big(-\frac{d^2}{dx^2}+\widetilde V(x)\Big)\widetilde {h_n}(x)=\lambda_n\widetilde{h_n}(x),\quad x\in\R.
\]
Clearly, $\wt V(x)$ satisfies Assumption \ref{poteass} and $\wt f(x)$ satisfies Assumption \ref{pertass}.
Applying Lemma \ref{realplus}, we obtain
\[
\Big|\int_{0}^{+\infty} \wt f(y) e^{-{\rmi}ky}\wt{h_m}(y)\overline{\wt{h_n}(y)}dx\Big|\le C(|k|^{-1}\vee|k|)(\lambda_m\lambda_n)^{\frac\mu{4\l}-\frac1{4\l}(\frac13\bigwedge\frac{\mu+1}{2\mu+2\l+1})},\quad \forall~k\ne 0.
\]
Note that
$
\int_{0}^{+\infty} \wt f(y) e^{{\rmi}(-k)y}\wt{h_m}(y)\overline{\wt{h_n}(y)}dy= \int^{0}_{-\infty} f(x) e^{{\rmi}kx}h_m(x)\overline{h_n(x)}dx,
$
it follows
\begin{eqnarray}\label{knegative}
\Big|\int^{0}_{-\infty} f(x) e^{{\rm i}kx}h_m(x)\overline{h_n(x)}dx\Big|
\le C(|k|^{-1}\vee|k|)(\lambda_m\lambda_n)^{\frac\mu{4\l}-\frac1{4\l}(\frac13\bigwedge\frac{\mu+1}{2\mu+2\l+1})},\quad\forall~k\ne0.
\end{eqnarray}
Combining Lemma \ref{realplus} with (\ref{knegative}), we finish the proof. \qed\\
\noindent Proof of  Lemma \ref{mainlem2}.  Similar as the above proof, we only need to estimate the integral on $[0,+\infty)$.
As Lemma \ref{mnsmall},~\ref{msmall} and \ref{xninf} one has
\begin{align*}
&\Big|\int_{0}^{+\infty} f(x)h_m(x)\overline{h_n(x)}dx\Big|\le C(X_mX_n)^{\frac\mu2-\frac12}
\le C(X_mX_n)^{\frac\mu2},\quad\text{for}~m\le n<n_1;\\
&\Big|\int_{0}^{+\infty} f(x)h_m(x)\overline{h_n(x)}dx\Big|\le C(X_mX_n)^{\frac\mu2-\frac16}
\le C(X_mX_n)^{\frac\mu2},\quad\text{for}~m<n_0,~n\ge n_1;\\
&\Big|\int_{X_n}^{+\infty} f(x)h_m(x)\overline{h_n(x)}dx\Big|\le C(X_mX_n)^{\frac\mu2-\frac13}
\le C(X_mX_n)^{\frac\mu2},\quad\text{for}~n\ge m\ge n_0.
\end{align*}
Next we estimate the integral on $[0,~X_n]$ for $n\ge m\ge n_0$. If $X_m\le X_n\le 4X_m$, by H\"older inequality we have
\[
\Big|\int_0^{X_n}f(x)h_m(x)\overline{h_n(x)}dx\Big|\le CX_n^{\mu}\le C(X_mX_n)^{\frac\mu2}.
\]
When $X_n>4X_m$ we have $X_n-X_n^{-\frac13}\ge\frac12X_n\ge2X_m$, and thus we split the integral into three parts as:
\[
\int_{0}^{X_n}f(x)h_m(x)\overline{h_n(x)}dx = \Big(\int_{0}^{2X_m}+\int_{2X_m}^{X_n-X_n^{-\frac13}}+\int_{X_n-X_n^{-\frac13}}^{X_n}\Big)f(x)h_m(x)\overline{h_n(x)}dx.
\]
By H\"older inequality one has
\(\dss
\Big|\int_{0}^{2X_m}f(x)h_m(x)\overline{h_n(x)}dx\Big|\le C X_m^{\mu}\le C(X_mX_n)^{\frac\mu2}.
\)\\
By Lemma \ref{qQesti} we have
$
\lambda_n-V(X_n-X_n^{-\frac13})\ge a_1X_n^{2\l-\frac43}$ and
$|\zeta_m(x)|\ge A_1X_m^\l(x-X_m)$  for $x\ge2X_m$.
It follows
\begin{align*}
&\Big|\int_{2X_m}^{X_n-X_n^{-\frac13}}\cal F(x) dx\Big| \leq CX_m^{\frac{\l-1}{2}}X_n^{\frac{\l-1}{2}} \int_{2X_m}^{X_n-X_n^{-\frac13}}x^\mu(V(x)-\lambda_m)^{-\frac14}(\lambda_n-V(x))^{-\frac14}e^{-|\zeta_m|}dx\\
\leq ~& C  X_m^{-\frac14}X_n^{\frac{\l-1}{2}}(\lambda_n-V(X_n-X_n^{-\frac13}))^{-\frac14} \int_{2X_m}^{X_n-X_n^{-\frac13}}(x-X_m)^{-\frac14+\mu} e^{-C_0(x-X_m)}dx\\
\leq ~& C X_m^{-\frac14}X_n^{-\frac16}\int_0^\infty t^{-\frac{1}{4}+\mu}e^{-t}dt\le CX_m^{-\frac14}X_n^{-\frac16}.
\end{align*}
Similarly, from  Lemma \ref{qQesti} we have
$
V(X_n-X_n^{-\frac13})-\lambda_m\ge a_1X_m^{2\l}$ and
$|\zeta_m(x)|\ge\frac{A_1}4X_m^\l X_n$ for $x\ge X_n-X_n^{-\frac13}$.
Therefore,
\begin{align*}
&\Big|\int_{X_n-X_n^{-\frac13}}^{X_n}\cal F(x) dx\Big| \leq CX_m^{\frac{\l-1}{2}}X_n^{\frac{\l-1}{2}} \int_{X_n-X_n^{-\frac13}}^{X_n}x^\mu(V(x)-\lambda_m)^{-\frac14}(\lambda_n-V(x))^{-\frac14}e^{-|\zeta_m|}dx \\
\leq ~& C X_n^{\frac \l2-\frac34}(V(X_n-X_n^{-\frac13})-\lambda_m)^{-\frac14}{e^{-C_0X_n}}\int_{X_n-X_n^{-\frac13}}^{X_n}(x-X_m)^\mu e^{-C_0(x-X_m)}(X_n-x)^{-\frac14}dx \\
\leq ~&  C X_n^{-\frac12}X_m^{-\frac12}.
\end{align*}
Since the other integrals have better estimates, we obtain
$
\Big|\int_0^{X_n}f(x)h_m(x)\overline{h_n(x)}dx\Big|\le C(X_mX_n)^{\frac\mu2}.
$
Thus we finish the proof. \qed

\section{Appendix}\label{section4}
\subsection{Lemma  \ref{algebra} (iv) and (v) }\noindent Proof of Lemma \ref{algebra} (iv). Since $A\in\M_\beta^+$, then
$|A_i^j|\le|A|_\beta^+(ij)^\beta(1+|i-j|)^{-1}(i^{\iota-1}+j^{\iota-1})^{-1}$ for any $i,j\in\Z_+$.
For a conjugated pair $(p,q)$, i.e. $\frac{1}{p}+\frac{1}{q}=1$ with $p,q\geq 1$,  given $i\in\Z_+$,
$\sum_{j\ge1}|A_i^j|(\fij)^{\frac sp}\le|A|_\beta^+\sum_{j\ge1}\frac{(ij)^\beta}{(1+|i-j|)(i^{\iota-1}+j^{\iota-1})}(\fij)^{\frac sp}$.
We first consider the sum
\[
I:=\sum_{j\ge1}\frac{(ij)^\beta}{(1+|i-j|)(i^{\iota-1}+j^{\iota-1})}(\fij)^{\frac sp}=\Big(\sum_{j\ge i}+\sum_{\frac i2\le j<i}+\sum_{j<\frac i2}\Big):=I_1+I_2+I_3.
\]
For $I_1$, by $\frac sp\ge0$, we have
\[
I_1\le\sum_{j\ge i}\frac{j^{2\beta}}{(1+|i-j|)j^{\iota-1}}
\le\sum_{j\ge1}\frac1{(1+|i-j|)j^{\iota-1-2\beta}}\le C(\beta,\iota).
\]
For $I_2$, we have
\[
I_2\le C(\beta,s)\sum_{\frac i2\le j<i}\frac{j^{2\beta}}{(1+|i-j|)j^{\iota-1}}\le C(\beta,s)
\sum_{j\ge1}\frac1{(1+|i-j|)j^{\iota-1-2\beta}}\le C(\beta,\iota,s).
\]
For $I_3$, if $\iota-\beta-\frac sp\ge0$, then
\begin{align*}
I_3&\le\sum_{j<\frac i2}\frac{2(ij)^\beta}{i^\iota}(\fij)^{\frac sp}
=\sum_{j<\frac i2}\frac2{i^{\iota-\beta-\frac sp}j^{\frac sp-\beta}}\\
\quad&\le\sum_{j<\frac i2}\frac2{j^{\iota-\beta-\frac sp}j^{\frac sp-\beta}}
\le\sum_{j\ge1}\frac2{j^{\iota-2\beta}}\le C(\beta,\iota).
\end{align*}
Hence, we obtain $I\le C(\beta,\iota,s)$ provided
\begin{equation}\label{seriesjCD}
s, \iota-\beta-\frac sp\ge0.
\end{equation}
For given $j\in\Z_+$,
since $\sum_{i\ge1}|A_i^j|(\fij)^{\frac sq}\le|A|_\beta^+\sum_{i\ge1}\frac{(ij)^\beta}{(1+|i-j|)(i^{\iota-1}+j^{\iota-1})}(\fij)^{\frac sq}$, as above we consider the sum
\[
J:=\sum_{i\ge1}\frac{(ij)^\beta}{(1+|i-j|)(i^{\iota-1}+j^{\iota-1})}(\fij)^{\frac sq}=\Big(\sum_{i<j}+\sum_{j\le i<2j}+\sum_{i\ge2j}\Big):=J_1+J_2+J_3.
\]
We discuss two cases in order to obtain the estimates of $I $ and $J$. \\
Case 1: $s\in [\beta,2\iota-2\beta-1)$.
For $J_1$, we choose $q\geq 1$ such that
$
\frac sq\ge \beta
$
and then \begin{align*}
J_1&\le\sum_{i<j}\frac{(ij)^\beta}{(1+|i-j|)i^{\iota-1}}(\fij)^\frac sq=\sum_{i<j}\frac1{(1+|i-j|)i^{\iota-1-\beta-\frac sq}j^{\frac sq-\beta}}\\
   &\le\sum_{i\ge1}\frac1{(1+|i-j|)i^{\iota-1-2\beta}}\le C(\beta,\iota).
\end{align*}
For $J_2$, we have
\[
J_2\le C(\beta,s)\sum_{j\le i<2j}\frac{i^{2\beta}}{(1+|i-j|)i^{\iota-1}}\le C(\beta,s)\sum_{i\ge1}\frac1{(1+|i-j|)i^{\iota-1-2\beta}}\le C(\beta,\iota,s).
\]
For $J_3$,
if we choose $q$ such that $\iota-\beta-\frac sq>1,~\frac sq-\beta\ge0$,  then
\[
J_3\le\sum_{i\ge2j}\frac{2(ij)^\beta}{i^\iota}(\fij)^\frac sq\le\sum_{i\ge1}\frac2{i^{\iota-\beta-\frac sq}j^{\frac sq-\beta}}\le C(\beta,\iota,s).
\]
Hence, when $s\in [\beta,2\iota-2\beta-1)$,  we obtain $I, J\le C(\beta,\iota,s)$ provided
\begin{equation}\label{seriesiCD}
\iota-\beta-\frac sp\ge0, \qquad \iota-\beta-\frac sq>1,\quad\frac sq-\beta\ge0.
\end{equation}
More precisely, one needs to choose $q\geq 1$ such that
$
\frac1q\in
\begin{cases}
[\frac\beta s,\frac{\iota-\beta-1}s),&\beta\le s< \iota,\\
[\frac{s-\iota+\beta}s,\frac{\iota-\beta-1}s),&\iota\le s<2\iota-2\beta-1.
\end{cases}
$
In fact one can choose
$
q(s)=
\begin{cases}
\frac s\beta,&\beta\le s<\iota,\\
\frac s{s-\iota+\beta},&\iota\le s<2\iota-2\beta-1
\end{cases}
$
such that (\ref{seriesiCD}) holds true.\\
Case 2: $s\in [0,\beta)$. In this case we estimate $J_1$ to $J_3$ again.  In fact,
For $J_1$, we choose $q=\infty$ and then
\begin{align*}
J_1&\le\sum_{i<j}\frac{(ij)^\beta}{(1+|i-j|)(i^{\iota-1}+j^{\iota-1})} \le\sum_{i\ge1}\frac1{(1+|i-j|)j^{\iota-1-2\beta}}\\
 &\le\sum_{i\ge1}\frac1{(1+|i-j|)i^{\iota-1-2\beta}}\le C(\beta,\iota).
\end{align*}
For $J_2$, the proof is the same.
For $J_3$,
if  $q=\infty$,  then
$
J_3\le C\sum_{i\ge1}\frac{1}{i^{\iota-2\beta}} \le C(\beta,\iota).
$
We remark in this case (\ref{seriesjCD}) also holds since  $\iota>2\beta+1$ when $q=\infty$.
Hence, when $s\in [0,\beta)$,  we still obtain $I, J\le C(\beta,\iota,s)$.
Thus, for any $s\in [0,2\iota-2\beta-1)$,  one can choose $p,q\geq 1$ such that   $1/p+1/q=1$ and
$
\sum_{j}|A_i^j|(\fij)^{\frac sp},~\sum_{i}|A_i^j|(\fij)^{\frac sq}\le C(\beta,\iota,s) |A|_{\beta}^{+}.
$
By H\"older inequality, we have
\begin{align*}
\|A\xi\|_s^2&=\sum_{i\ge1}i^s |\sum_{j\ge1}A_i^j\xi_j |^2
\le\sum_{i\ge1}\Big(\sum_{j\ge1}|A_i^j|(\fij)^{\frac s2}|\xi_j|j^{\frac s2}\Big)^2\\
&=\sum_{i\ge1}\Big(\sum_{j\ge1}|A_i^j|^\frac12(\fij)^{\frac s{2p}}|A_i^j|^\frac12(\fij)^{\frac s{2q}}|\xi_j|j^{\frac s2}\Big)^2\\
&\le\sum_{i\ge1}\Big(\sum_{j\ge1}|A_i^j|(\fij)^{\frac s{p}}\Big)
\Big(\sum_{j\ge1}|A_i^j|(\fij)^{\frac s{q}}|\xi_j|^2j^s\Big)\\
&\le C(\beta,\iota,s)|A|_{\beta}^{+}\sum_{i\ge1}\sum_{j\ge1}|A_i^j|(\fij)^{\frac s{q}}|\xi_j|^2j^s\\
&=C(\beta,\iota,s) |A|_{\beta}^{+}\sum_{j\ge1}|\xi_j|^2j^s\sum_{i\ge1}|A_i^j|(\fij)^{\frac s{q}}\le C^2(\beta,\iota,s)( |A|_{\beta}^{+})^2\|\xi\|_s^2.
\end{align*}
It follows $\|A\|_{\lst}\le C(\beta,\iota,s)|A|_{\beta}^{+}$.\qed \\
\noindent Proof of Lemma \ref{algebra} (v). As above  we define conjugated pair $(p,q)$, where $p,q$ is chosen as above.  From the proof in (iv) we have
for any $s\in[0,2\iota-2\beta-1)$,
$
\sum_{j\geq 1}|A_j^i|(\fij)^{\frac sp},~\sum_{i\ge 1}|A_j^i|(\fij)^{\frac sq}\le C(\beta,\iota,s) |A|_{\beta}^{+},
$
i.e.
\begin{eqnarray}\label{sxiaoyuling}
\sum_{i\ge 1}|A_i^j|(\fji)^{\frac sp}, \sum_{j\ge 1}|A_i^j|(\fji)^{\frac sq}\le C(\beta,\iota,s)|A|_{\beta}^{+}.
\end{eqnarray}
By  H\"older inequality, (\ref{sxiaoyuling}) and a similar method as (iv),  we obtain
\begin{align*}
\|A\xi\|_{-s}^2&\le \sum_{i\ge1}\Big(\sum_{j\ge1}|A_i^j|^\frac12(\fji)^{\frac s{2q}}|A_i^j|^\frac12(\fji)^{\frac s{2p}}|\xi_j|j^{-\frac s2}\Big)^2\\
&\le\sum_{i\ge1}\Big(\sum_{j\ge1}|A_i^j|(\fji)^{\frac sq}\Big)
\Big(\sum_{j\ge1}|A_i^j|(\fji)^{\frac sp}|\xi_j|^2j^{-s}\Big)\le (C(\beta,\iota,s)|A|_{\beta}^{+})^2\|\xi\|_{-s}^2.
\end{align*}
It results in $\|A\|_{\lms}\le C(\beta,\iota,s)|A|_\beta^+$.\qed
\subsection{some lemmas for section \ref{section2}}
\begin{lemma}\label{normmatope}
For $s\ge0$, if $A\in\M_\frac s2$ is a diagonal matrix, then $A\in\B(\l_0^2,\l_{-2s}^2)$ and satisfies
\[
\|A\|_{\B(\l_0^2,\l_{-2s}^2)}\le|A|_\frac s2.
\]
\end{lemma}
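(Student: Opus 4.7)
The plan is to exploit the diagonal structure of $A$ directly: once $A=\mathrm{diag}\{a_j\}_{j\ge 1}$, the matrix norm $|A|_{s/2}$ reduces to a pointwise growth bound on the entries, and the mapping property into $\l_{-2s}^2$ follows from a one-line Parseval-type computation, with no need for H\"older inequalities or the more delicate summations that appear in the proof of Lemma \ref{algebra}(iv),(v).

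More precisely, I would first observe that if $A$ is diagonal, then $A_i^j=0$ for $i\ne j$, and hence
\[
|A|_{s/2}=\sup_{i,j\ge 1}|A_i^j|(ij)^{-s/2}=\sup_{j\ge 1}|a_j|\,j^{-s},
\]
which yields the pointwise bound $|a_j|\le |A|_{s/2}\,j^{s}$ for every $j\ge 1$. This is the only algebraic input needed.

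Next, for an arbitrary $\xi=(\xi_j)\in\l_0^2$, the vector $A\xi$ has components $a_j\xi_j$, so by the definition of the weighted norm
\[
\|A\xi\|_{-2s}^{2}=\sum_{j\ge 1}j^{-2s}|a_j\xi_j|^{2}\le |A|_{s/2}^{2}\sum_{j\ge 1}j^{-2s}\,j^{2s}|\xi_j|^{2}=|A|_{s/2}^{2}\,\|\xi\|_{0}^{2}.
\]
Taking square roots and passing to the supremum over unit vectors gives the claimed inequality $\|A\|_{\cal B(\l_0^2,\l_{-2s}^2)}\le |A|_{s/2}$, and in particular $A\in\cal B(\l_0^2,\l_{-2s}^2)$.

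There is really no principal obstacle here; the only thing to keep track of is the exponent bookkeeping between the matrix norm index $s/2$ (which counts $(ij)^{-s/2}=j^{-s}$ on the diagonal) and the target-space index $-2s$, so that the factors $j^{2s}$ coming from $|a_j|^2$ and $j^{-2s}$ coming from the weight cancel precisely, leaving the plain $\l_0^2$-norm of $\xi$. This is the reason the hypothesis is $A\in\M_{s/2}$ rather than $\M_{s}$.
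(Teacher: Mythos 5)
Your proposal is correct and coincides with the paper's own argument: both reduce the diagonal hypothesis to the pointwise bound $|A_i^i|\le|A|_{\frac s2}i^{s}$ and then compute $\|A\xi\|_{-2s}^2\le|A|_{\frac s2}^2\|\xi\|_0^2$ directly from the definition of the weighted norm. No differences worth noting.
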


\begin{proof}
Since $A\in\M_\frac s2$ is a diagonal matrix, then $|A_i^i|\le|A|_\frac s2(ii)^\frac s2=|A|_\frac s2i^s$.
Given $\xi\in\l_0^2$, one has
\[
\|A\xi\|_{-2s}^2=\sum_{i\ge1}i^{-2s}\Big|\sum_{j\ge1}A_i^j\xi_j\Big|^2
\le\sum_{i\ge1}i^{-2s}|A_i^i|^2|\xi_i|^2
\le|A|_\frac s2^2\sum_{i\ge1}|\xi_i|^2=|A|_\frac s2^2\|\xi\|_0^2.
\]
It follows that $\|A\|_{\B(\l_0^2,\l_{-2s}^2)}\le|A|_\frac s2$.
\end{proof}

We present the following lemmas to prove  Lemma \ref{VlA0converge} and complete the proof of Theorem \ref{Theorem2.10}. The following proof is similar as   Lemma \ref{UA1} to Lemma \ref{UlAlconverge} and we don't give the details here.
\begin{lemma}\label{Psubl}
For $l\ge1$, if $\e_0\ll1$, then $V^lA^0=A^lV^l+Q_l$, where $Q_l\in\lzd$ and satisfies
$||Q_l||_{\lzd,s_l}\le\prod_{i=0}^{l-1}(1+\e_i^\frac12)+\sum_{i=0}^{l-1}\e_i^\frac12$.
\end{lemma}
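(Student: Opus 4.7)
The plan is to prove the identity by induction on $l$, running parallel to Lemmas \ref{UA1}--\ref{UlAlconverge} but conjugating from the right of $A^0$ rather than from the left of $A^l$. The base case $l=0$ is trivial with $Q_0 := 0$ since $B^0 = 0$ gives $V^0 = \id$ and $V^0 A^0 = A^0 V^0$. For the inductive step, starting from $V^{l+1} = e^{-B^{l+1}} V^l$ and the induction hypothesis $V^l A^0 = A^l V^l + Q_l$, I will left-multiply by $e^{-B^{l+1}}$ and push $A^l$ through using the conjugation identity
\[
\wt Q^l := e^{-B^{l+1}} A^l e^{B^{l+1}} - A^l = [A^l, B^{l+1}] + \int_0^1\!\!\int_0^{s_1}\! e^{-s_2 B^{l+1}}\bigl[B^{l+1},[B^{l+1},A^l]\bigr] e^{s_2 B^{l+1}}\,ds_2\,ds_1,
\]
rewritten as $e^{-B^{l+1}} A^l = (A^l + \wt Q^l) e^{-B^{l+1}}$. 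Combined with $A^{l+1} = A^l + \text{diag}(P^l)$, this yields the explicit recursion
\[
Q_{l+1} = \wt Q^l V^{l+1} - \text{diag}(P^l) V^{l+1} + e^{-B^{l+1}} Q_l.
\]

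Next I will verify that each term on the right lies in $\lzd$ and quantify its norm. For $\wt Q^l$, the commutator $[A^l, B^{l+1}]$ is rewritten via the homological equation \eqref{homoeqn} at step $l$ as $\rmi \dot B^{l+1} - (P^l - \text{diag}(P^l))$: $P^l \in \lzd$ by B3 (propagated through the iteration), and $\dot B^{l+1}$ belongs to $\lzd$ by Cauchy's estimate applied to \eqref{Bestimate} together with $B^{l+1} \in \M_\beta^+$ and Lemma \ref{algebra}(iv). The double-commutator integral is bounded in $\lzd$ exactly as in Step (I) of Lemma \ref{iterule} (compare the derivation of \eqref{Psupsub}), yielding $\|\wt Q^l\|_{\lzd, s_{l+1}} \le C\ve_l^{2/3}$. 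The remaining factors are controlled by $\|V^{l+1}\|_{\lzt}, \|V^{l+1}\|_{\ldt} \le 1 + C\ve_0^{2/3}$ from Lemma \ref{Vlconverge}, $\|e^{-B^{l+1}}\|_{\ldt} \le 1 + C\ve_l^{2/3}$ from Lemma \ref{kaojing1} combined with Lemma \ref{algebrapara}(ii), and $\|\text{diag}(P^l)\|_{\lzd, s_l} \le \|P^l\|_{\lzd, s_l} \le \ve_l$.

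Assembling these estimates inside the recursion produces an inequality of the form $\|Q_{l+1}\|_{\lzd, s_{l+1}} \le (1+C\ve_l^{1/2})\|Q_l\|_{\lzd, s_l} + C\ve_l^{1/2}$, and iterating from $Q_0 = 0$ gives the product-plus-sum bound in the statement; the structure is identical to the bound on $\|U^l A^l\|_{\lzi, s_l}$ proved in Lemma \ref{UA2}.

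The main obstacle is the unboundedness of $A^l$: a naive estimate would only place $\wt Q^l$ in $\M_\beta$ with a loss of a factor $j^\iota$, which would not close in $\lzd$. Controlling $\wt Q^l$ in the $\lzd$-norm is possible precisely because the homological equation \eqref{homoeqn} trades the commutator $[A^l, B^{l+1}]$ for $\dot B^{l+1}$ and $P^l - \text{diag}(P^l)$, both of which are handled via assumption B3 and the already-established estimates for $B^{l+1} \in \M_\beta^+$. The extra regularity $P^l \in \lzd$ built into B3 is exactly what absorbs the would-be $j^\iota$-loss and makes the iteration closable; once this step is in place the rest is a direct transcription of the $\wt P^l$ calculations carried out in Section \ref{section2}.
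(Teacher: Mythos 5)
Your proposal is correct and follows exactly the route the paper intends: the paper omits the details, stating only that the proof is ``similar as Lemma \ref{UA1} to Lemma \ref{UlAlconverge}'', and your induction via $V^{l+1}=e^{-B^{l+1}}V^l$, the conjugation term $\wt Q^l=e^{-B^{l+1}}A^le^{B^{l+1}}-A^l$ handled through the homological equation \eqref{homoeqn} and the \eqref{Psupsub}-type bound, and the recursion $Q_{l+1}=\wt Q^lV^{l+1}-\text{diag}(P^l)V^{l+1}+e^{-B^{l+1}}Q_l$ is precisely the mirrored version of that argument. The only cosmetic point is that the bounds $\|V^{l+1}\|_{\lzt},\|V^{l+1}\|_{\ldt}\le1+C\e_0^{2/3}$ should be cited from Lemma \ref{VlId} together with Lemma \ref{algebra} (iv), (v) (Lemma \ref{Vlconverge} concerns $V^\infty$), which does not affect the argument.
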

From a straightforward computation, we have
$
\|Q_{l+1}-Q_l\|_{\lzi,\frac s2}\le 3C\e_l^\frac23.
$
It follows  $\{Q_l\}$ is a Cauchy sequence in $\lzd(\Pi_*,\frac s2)$ and $\lzi(\Pi_*,\frac s2)$.\\
 Define $Q_\infty:=Q_1+\sum_{l=1}^{\infty}Q_{l+1}-Q_l$. We have  the following
\begin{lemma}\label{Psublconverge}
 For $l\to \infty$, $Q_l(\phi)\to Q_\infty(\phi)$ in $\lzd(\Pi_*,\frac s2)$ and $\lzi(\Pi_*,\frac s2)$,
 where
\begin{align*}
 &\|Q_\infty\|_{\lzi,\frac s2}\le\|Q_\infty\|_{\lzd,\frac s2}\le C\e_0^\frac23,\\
 &\|Q_\infty-P^0\|_{\lzi,\frac s2}\le\|Q_\infty-P^0\|_{\lzd,\frac s2}\le C\e_0^\frac23
\end{align*}
for some positive constant $C$.
\end{lemma}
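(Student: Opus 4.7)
The plan is to combine the Cauchy property stated in the text with a closed-form identification of the limit $Q_\infty$, and then estimate the resulting expression directly.

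First, convergence: the bound $\|Q_{l+1}-Q_l\|_{\lzi,s/2}\le 3C\e_l^{2/3}$ quoted just above the lemma, together with the parallel $\lzd$ estimate obtained by the same block decomposition (using the $\ldt$- and $\lzt$-boundedness of $e^{\pm B^{l+1}}$ and $V^l$ supplied by Lemma~\ref{algebra} and Lemma~\ref{algebrapara}), shows that $\{Q_l\}$ is Cauchy in both $\lzd(\Pi_*,s/2)$ and $\lzi(\Pi_*,s/2)$. Since $\sum_l\e_l^{2/3}\le 2\e_0^{2/3}$ and both spaces are Banach, the telescoping series converges and $Q_l\to Q_\infty$ in both norms.

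Second, I would identify $Q_\infty$ in closed form. Multiplying the finite-level reducibility identity $U^l(A^l+P^l)=-\rmi\dot U^l+(A^0+P^0)U^l$ of \eqref{yuehuahengdeng} on the left and right by $V^l=(U^l)^{-1}$, and using $\dot V^l=-V^l\dot U^lV^l$, one obtains
\begin{equation*}
Q_l\;=\;V^lA^0-A^lV^l\;=\;P^lV^l-V^lP^0-\rmi\dot V^l.
\end{equation*}
Letting $l\to\infty$: Lemma~\ref{Plconverge} gives $\|P^lV^l\|_{\lzd}\le\|P^l\|_{\lzd}\|V^l\|_{\lzt}\to 0$; Lemma~\ref{Vlconverge} supplies $V^l\to V^\infty$ in both $\lzt$ and $\ldt$, hence $(V^l-V^\infty)P^0\to 0$ in $\lzd$ via $\|(V^l-V^\infty)P^0\|_{\lzd}\le\|V^l-V^\infty\|_{\ldt}\|P^0\|_{\lzd}$; and the Cauchy estimate applied to the $\M_\beta^+$-convergence of $V^l$ from Lemma~\ref{VlCauchy}, followed by Lemma~\ref{algebra}(iv), yields $\dot V^l\to\dot V^\infty$ in $\lzt$ and therefore in $\lzd$. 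Thus $Q_\infty=-V^\infty P^0-\rmi\dot V^\infty$ in $\lzd(\Pi_*,s/2)$.

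Third, the estimates. By Lemma~\ref{Vlconverge}, $\|V^\infty P^0\|_{\lzd,s/2}\le\|V^\infty\|_{\ldt,s/2}\|P^0\|_{\lzd,s/2}\le(1+C\e_0^{2/3})\e_0\le 2\e_0$. For the derivative, Cauchy's estimate combined with $\|V^\infty-\id\|_{\beta,s/2}^{+}\le C\e_0^{2/3}$ (from Lemma~\ref{VlCauchy} passed to the limit) and Lemma~\ref{algebra}(iv) gives $\|\dot V^\infty\|_{\lzt,s/2}\le C\e_0^{2/3}$, which bounds the $\lzd$-norm. Summing, $\|Q_\infty\|_{\lzd,s/2}\le C\e_0^{2/3}$, and the triangle inequality with $\|P^0\|_{\lzd,s/2}\le\e_0$ yields $\|Q_\infty-P^0\|_{\lzd,s/2}\le C\e_0^{2/3}$. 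Both $\lzi$ bounds follow because $\delta<\iota$ forces $\l_{-2\delta}^2\subset\l_{-2\iota}^2$, so $\|\cdot\|_{\lzi}\le\|\cdot\|_{\lzd}$. The main technical obstacle is justifying the passage to the limit in the identity $Q_l=P^lV^l-V^lP^0-\rmi\dot V^l$: the composition $V^lP^0$ has to be read as $\ell_0^2\xrightarrow{P^0}\ell_{-2\delta}^2\xrightarrow{V^l}\ell_{-2\delta}^2$, which rests on the $\ldt$-boundedness of $V^l$ from Lemma~\ref{algebra}(v)---this is precisely why B3 assumes $2\beta\le\delta<\iota-\beta-\tfrac12$---and the $\dot V^l$ term requires invoking Cauchy in the stronger $\M_\beta^+$-topology so that the smoothing of the $+$-class survives the time derivative and can then be transferred to $\lzd$ via Lemma~\ref{algebra}(iv).
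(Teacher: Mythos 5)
Your closed-form identity $Q_l=V^lA^0-A^lV^l=P^lV^l-V^lP^0-\rmi\dot V^l$ is algebraically correct, and your first paragraph (telescoping the quoted bound $\|Q_{l+1}-Q_l\|_{\lzi,\frac s2}\le 3C\e_l^{\frac23}$, plus its $\lzd$ analogue) is exactly the convergence argument the paper intends, since the paper's own proof is omitted with the remark that it is ``similar as Lemma \ref{UA1} to Lemma \ref{UlAlconverge}''. The genuine gap is in how you produce the quantitative bounds. The key term is $\dot V^\infty=\omega\cdot\partial_\phi V^\infty$, and you bound it by ``Cauchy's estimate combined with $\|V^\infty-\id\|_{\beta,\frac s2}^{+}\le C\e_0^{\frac23}$''. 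A Cauchy estimate applied to a bound on the strip $|\Im\phi|<\frac s2$ controls derivatives only on strictly smaller strips; this is precisely why the paper states Lemma \ref{dtUlj} on $\T_{\frac{3s}8}^n$ and Lemma \ref{dUl}, Corollary \ref{dtUlconverge} only at real phases, and why \eqref{yuehuahengdeng} and \eqref{transformid} are asserted in $\lzi(\Pi_*)$ rather than on the strip. Since the strips $s_{l-1}-\sigma_l$ on which the $V^l$ carry uniform bounds shrink toward (possibly exactly) $\frac s2$, there is no fixed strip strictly larger than $\frac s2$ available, so the claimed $\|\dot V^\infty\|_{\lzt,\frac s2}\le C\e_0^{\frac23}$ and the convergence $\dot V^l\to\dot V^\infty$ in $\lzt(\Pi_*,\frac s2)$ do not follow as stated. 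Repairing this forces you back to level-by-level estimates, e.g. controlling $\dot V^l-\dot V^{l-1}$ through Lemma \ref{gujiforPlusIII} on the strips $s_{l-1}-3\sigma_l\supset\T_{\frac s2}^n$ and summing the resulting $\sigma_l^{-1}\e_{l-1}^{\frac23}$ factors --- which is exactly the iterative bookkeeping your closed-form shortcut was meant to avoid, and is how the paper gets the $C\e_0^{\frac23}$ bounds: no derivative of a limit object is ever taken, $Q_1=(e^{-B^1}A^0e^{B^1}-A^0)e^{-B^1}-\text{diag}(P^0)e^{-B^1}$ is estimated via the homological equation \eqref{homoeqn} exactly like $\wt P^0$ in Lemma \ref{UA2}, and the telescoping series is summed, each $Q_l$ being analytic with bounds on its own strip $s_l\ge\frac s2$.

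A second, smaller issue: you obtain the identity by multiplying \eqref{yuehuahengdeng} on the left by $V^l$. Both sides of \eqref{yuehuahengdeng} take values in $\l_{-2\iota}^2$, and Lemma \ref{algebra} (v) gives boundedness of $V^l$ only on $\l_{-s}^2$ for $s<2\iota-2\beta-1<2\iota$; this is exactly the composition problem the paper flags (``it usually makes no sense for $U^\infty(\ot)A^\infty(\ot)x_0\in\l_{-2\iota}^2$ when $x_0\in\l_0^2$''). The step can be justified --- either by verifying absolute convergence of the entrywise products, or more cleanly by proving the $V$-side identity $(A^l+P^l)V^l=\rmi\dot V^l+V^l(A^0+P^0)$ directly from the construction by induction, in parallel with \eqref{yuehuahengdeng} --- but as written it is asserted without addressing the very obstruction the paper's argument is organized around. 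Note also that your final bound needs, in effect, $\|Q_1\|_{\lzd,\frac s2}\le C\e_0^{\frac23}$, which neither Lemma \ref{Psubl} (whose bound is of order $1$) nor the quoted Cauchy property supplies; in your scheme it is carried entirely by the unjustified $\dot V^\infty$ estimate, so the gap affects the conclusion of the lemma and not merely its presentation.
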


From
$
A^{l+1}V^{l+1}-A^lV^l=A^l\big(e^{-B^{l+1}}-\id\big)V^l+\text{diag}(P^l) V^{l+1}
$
and Lemma \ref{Alconverge}, \ref{VlId},  one obtains
$
\|A^{l+1}V^{l+1}-A^lV^l\|_{\lzi,\frac s2}\le2C\e_l^\frac23.
$
 Thus, $\{A^lV^l\}$ is a Cauchy sequence in $\lzi$. We now define
$
A^\infty V^\infty=A_0+\sum_{l=0}^\infty A^{l+1}V^{l+1}-A^lV^l,
$
it follows
$
\|A^\infty V^\infty-A^0\|_{\lzi,\frac s2}\le4C\e_0^\frac23.
$
Therefore, we have
\begin{lemma}\label{AlVlconverge}
When $l\to \infty$, $A^lV^l(\phi)\to A^\infty V^\infty(\phi)$ in $\lzi(\Pi_*,\frac s2)$, where
$$
\|A^\infty V^\infty-A^0\|_{\lzi,\frac s2}\le C\e_0^\frac23, \qquad C>0.
$$
\end{lemma}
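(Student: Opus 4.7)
The plan is to establish that $\{A^l V^l\}$ is Cauchy in $\lzi(\Pi_*,\frac{s}{2})$ and then realize the limit through a telescoping sum, exactly paralleling the treatment of $U^l A^l$ in Lemmas \ref{UA1}--\ref{UlAlconverge}. This is essentially the ``dual'' construction: whereas the $U^l A^l$ analysis used $A^{l+1} = A^l + \text{diag}(P^l)$ on the right, here I will push $A^0$ to the left using $V^l = e^{-B^l}\cdots e^{-B^0}$ and exploit the homological equation \eqref{homoeqn} at each step.

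First I would derive the key difference identity
\[
A^{l+1}V^{l+1}-A^lV^l = A^l\bigl(e^{-B^{l+1}}-\id\bigr)V^l + \text{diag}(P^l)\,V^{l+1},
\]
which follows from $V^{l+1} = e^{-B^{l+1}} V^l$ together with $A^{l+1} = A^l + \text{diag}(P^l)$. Next I would bound each summand in $\|\cdot\|_{\lzi,\frac{s}{2}}$. For the first term, I use $\|A^l\|_{\lzi,\frac{s}{2}} \le C$ (from Lemma \ref{Alconverge} plus Lemma \ref{normmatope}), together with $\|e^{-B^{l+1}}-\id\|_{\lzt,\frac{s}{2}} \le C\|B^{l+1}\|_{\beta,s_l-\sigma_{l+1}}^{+} \le C\varepsilon_l^{2/3}$ (by (iv) of Lemma \ref{algebra} and (ii) of Lemma \ref{algebrapara}), and $\|V^l\|_{\lzt,\frac{s}{2}} \le 1+C\varepsilon_0^{2/3}$ (Lemma \ref{Vlconverge}). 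For the second term I combine $\|\text{diag}(P^l)\|_{\lzd,s_{l+1}} \le \|P^l\|_{\lzd,s_l} \le \varepsilon_l$ with $\|V^{l+1}\|_{\ldt,\frac{s}{2}} \le 1+C\varepsilon_0^{2/3}$. These yield $\|A^{l+1}V^{l+1}-A^lV^l\|_{\lzi,\frac{s}{2}} \le 2C\varepsilon_l^{2/3}$.

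Since $\sum_l \varepsilon_l^{2/3}$ converges (the sequence is super-exponentially fast by $\varepsilon_l = \varepsilon_{l-1}^{4/3}$), this proves the Cauchy property and allows me to define
\[
A^\infty V^\infty := A^0 + \sum_{l\ge 0}\bigl(A^{l+1}V^{l+1}-A^lV^l\bigr)
\]
in $\lzi(\Pi_*,\frac{s}{2})$. Telescoping from $l=0$ together with the geometric decay of $\varepsilon_l$ gives $\|A^\infty V^\infty - A^0\|_{\lzi,\frac{s}{2}} \le 4C\varepsilon_0^{2/3}$, which after renaming the constant is the desired bound.

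The main subtlety (analogous to Lemma \ref{UA1}) is verifying at each step that the individual pieces actually lie in $\lzi$ rather than only in some larger dual space; the point is that one factor $V^l$ or $V^{l+1}$ is bounded on $\l_{-2\delta}^2$ (via (v) of Lemma \ref{algebra}), while the other factor $A^l$ or $\text{diag}(P^l)$ maps $\l_0^2$ into $\l_{-2\delta}^2$, and the composition lands in $\l_{-2\iota}^2$ because $\delta < \iota - \beta - \frac12$ and $A^l \in \M_{\iota/2}$. This compatibility between the ``bounded'' space $\l_{-2\delta}^2$ on the inside and the ``weighted'' space $\l_{-2\iota}^2$ on the outside is exactly what makes the assumption $0 \le \delta < \iota - \beta - \frac12$ (and the fact that $A^l$ is diagonal of class $\M_{\iota/2}$) indispensable; without it the decomposition above would not even be a priori meaningful as a sequence in $\lzi$. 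I expect this compatibility check to be the only nontrivial point—the rest of the argument is a mechanical Cauchy-in-Banach-space summation of the bounds already established in Lemmas \ref{Alconverge}, \ref{VlId}, \ref{Vlconverge}, \ref{algebra}, and \ref{algebrapara}.
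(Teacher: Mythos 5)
Your proposal is correct and follows essentially the same route as the paper: the identity $A^{l+1}V^{l+1}-A^lV^l=A^l\big(e^{-B^{l+1}}-\id\big)V^l+\text{diag}(P^l)V^{l+1}$, the bound $\le 2C\e_l^{\frac23}$ via Lemmas \ref{Alconverge}, \ref{VlId} (and \ref{Vlconverge}, \ref{algebra}, \ref{algebrapara}), and the telescoping definition of $A^\infty V^\infty$ with the resulting estimate $\|A^\infty V^\infty-A^0\|_{\lzi,\frac s2}\le C\e_0^{\frac23}$ are exactly the paper's argument. The only cosmetic slip is that for the term $\text{diag}(P^l)V^{l+1}$ the natural pairing is $\|\text{diag}(P^l)\|_{\lzd}\,\|V^{l+1}\|_{\lzt}$ rather than $\|V^{l+1}\|_{\ldt}$ (the latter would suit $V^{l+1}\text{diag}(P^l)$), but both norms of $V^{l+1}$ are controlled by Lemma \ref{Vlconverge}, so the estimate is unaffected.
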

Combining Lemma \ref{Psubl}, \ref{Psublconverge} with \ref{AlVlconverge}  one has
$
V^lA^0=A^lV^l+Q_l,
$
where $\{A^lV^l\},~\{Q_l\}$ are Cauchy sequences in $\lzi(\Pi_*,\frac s2)$ and
$
\|A^\infty V^\infty-A^0\|_{\lzi,\frac s2}\le C\e_0^\frac23
$
and
$
\|Q_\infty\|_{\lzi,\frac s2}\le C\e_0^\frac23
$
with some $C>0$. Thus,  $\{V^lA^0\}$ is a Cauchy sequence in $\lzi(\Pi_*,\frac s2)$. Define
$
\lim\limits_{l\to \infty}V^l A^0:=V^{\infty} A^0 = A^\infty V^\infty+Q_\infty,
$
then one has
\[
\|V^\infty A^0-A^0\|_{\lzi,\frac s2}\le\|A^\infty V^\infty-A^0\|_{\lzi,\frac s2}+\|Q_\infty\|_{\lzi,\frac s2}\le 2C\e_0^\frac23.
\]
It can be written as the following.
\begin{lemma}\label{VlA0converge}
When $l\to \infty$, $V^lA^0(\phi)\to V^\infty A^0(\phi)$ in $\lzi(\Pi_*,\frac s2)$, where
\[
\|V^\infty A^0-A^0\|_{\lzi,\frac s2}\le C\e_0^\frac23, \qquad C>0.
\]
\end{lemma}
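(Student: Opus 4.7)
The plan is to reduce the statement to the three already established convergence results for the sequences $\{A^lV^l\}$, $\{Q_l\}$, and to the decomposition from Lemma \ref{Psubl}. The subtlety, as explained in Remark \ref{jieshao2.29}, is that $A^0$ is unbounded from $\ell_0^2$ into itself, so we cannot simply multiply $V^\infty\in\cal B(\ell_0^2)$ by $A^0$ on the right. Instead, we must read $V^\infty A^0$ as a genuine object in $\lzi$, and the only way to access it is via the splitting furnished by the homological equation.

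First I would invoke Lemma \ref{Psubl}, which gives the identity
\[
V^l A^0 = A^l V^l + Q_l \qquad \text{in }\lzi(\Pi_*,s_l),
\]
valid for every $l\ge 1$. This identity is the analogue on the right of the identity used in Lemma \ref{UA1} and is built by induction using the homological equation \eqref{homologicalequations}, which is precisely what allows the operator $[A^l,B^{l+1}]$ to be reinterpreted in $\B(\ell_0^2,\ell_{-2\delta}^2)$ (rather than in a worse space). Next, I would combine the Cauchy estimates of Lemmas \ref{AlVlconverge} and \ref{Psublconverge}: since $\{A^lV^l\}$ and $\{Q_l\}$ are both Cauchy in $\lzi(\Pi_*,\frac s2)$ with superexponentially small tails, their sum $V^lA^0$ is Cauchy in the same space, and this defines a unique limit which I call $V^\infty A^0$. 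Passing to the limit in the identity above yields
\[
V^\infty A^0 = A^\infty V^\infty + Q_\infty \qquad \text{in }\lzi(\Pi_*,\tfrac s2).
\]

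Finally, the estimate follows by the triangle inequality: Lemma \ref{AlVlconverge} gives $\|A^\infty V^\infty - A^0\|_{\lzi,\frac s2}\le C\e_0^{2/3}$, and Lemma \ref{Psublconverge} gives $\|Q_\infty\|_{\lzi,\frac s2}\le C\e_0^{2/3}$, so
\[
\|V^\infty A^0 - A^0\|_{\lzi,\frac s2} \le \|A^\infty V^\infty - A^0\|_{\lzi,\frac s2} + \|Q_\infty\|_{\lzi,\frac s2} \le 2C\e_0^{2/3},
\]
which is the claimed bound after absorbing the factor $2$ into $C$.

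The main obstacle, which is done inside Lemma \ref{Psubl} rather than here, is establishing the decomposition in such a way that both remainders $A^lV^l$ and $Q_l$ live in $\lzi$ uniformly and inherit the super-exponential smallness from $\|P^l\|_{\beta,s_l}^\L+\|P^l\|_{\lzd,s_l}^\L\le\e_l$. Once that inductive bookkeeping is done in parallel with the proof of Lemma \ref{UA1}–\ref{UlAlconverge}, the present lemma is a short consequence and does not require any new smallness hypothesis beyond $\e_0\ll 1$.
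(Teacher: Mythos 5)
Your proposal is correct and follows essentially the same route as the paper: decompose $V^lA^0=A^lV^l+Q_l$ via Lemma \ref{Psubl}, use the Cauchy property of $\{A^lV^l\}$ and $\{Q_l\}$ in $\lzi(\Pi_*,\frac s2)$ (Lemmas \ref{AlVlconverge} and \ref{Psublconverge}) to define $V^\infty A^0=A^\infty V^\infty+Q_\infty$, and conclude by the triangle inequality. This matches the paper's argument verbatim, including the observation that the unboundedness of $A^0$ forces the limit to be defined through this decomposition rather than by direct multiplication.
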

\subsection{some lemmas for section \ref{section3}}

\noindent Proof of Lemma \ref{eigenfunction}. The proof is almost from \cite{T2}(see \cite{YajimaZhang}, Lemma 2.1). As in \cite{T2} we set
$\eta(x)=(\lambda-V(x))^\frac14h(x)$ and $\dss\zeta(x)=\int_X^x (\lambda-V(t))^\frac12 dt$,
where
\(
\arg\zeta(x)=
\begin{cases}
\frac\pi2,&x>X,\\
-\pi,&x<X.
\end{cases}
\)
Then the equation (\ref{tezhengfangcheng}) is transformed into
$\frac{d^2\eta}{d\zeta^2}+\eta+ [\frac{V^{\prime\prime}(x)}{4(\lambda-V(x))^2}+\frac{5V^\prime(x)^2}{16(\lambda-V(x))^3}]\eta=0$
and it can be rewritten as
\begin{eqnarray}\label{xinfangcheng}
\frac{d^2\eta}{d\zeta^2}+(1+\frac{5}{36\zeta^2})\eta=f(x)\eta,
\end{eqnarray}
where
$f(x)=\frac{5}{36\zeta^2}-\frac{V^{\prime\prime}(x)}{4(\lambda-V(x))^2}-\frac{5V^\prime(x)^2}{16(\lambda-V(x))^3}$.
As we know, Bessel equation
$\frac{d^2 G}{d\zeta^2}+(1+\frac{5}{36\zeta^2})G=0$
has two linearly independent solutions $(\frac{\pi\zeta}{2})^\frac12 J_\frac13(\zeta)$ and $(\frac{\pi\zeta}{2})^\frac12 H_\frac13^{(1)}(\zeta)$, where $J_\nu(x)$ and $H_\nu^{(1)}(x)$ are the first kind Bessel function and one of the third kind Bessel function, respectively. By the property of Bessel function that
$x(J_\nu(x)H_\nu^{(1)\prime}(x)-J_\nu^\prime(x)H_\nu^{(1)}(x))=\frac{2{\rm i}}{\pi}$,
then (\ref{xinfangcheng}) is formally equivalent to the integral equation
\begin{align*}
\eta = (\frac{\pi\zeta}{2})^\frac12H_{\frac13}^{(1)}(\zeta)+\frac{\pi{\rm i}}{2}\int_x^\infty \bigg(H_\frac13^{(1)}(\zeta)J_\frac13(\theta)-J_\frac13(\zeta)H_\frac13^{(1)}(\theta)\bigg)\zeta^\frac12\theta^\frac12 f(t)(\lambda-V(t))^\frac12\eta(t)dt,
\end{align*}
where we write $\zeta=\zeta(x)$ and $\theta=\zeta(t)$ for convenience. Set
$$\alpha(x)=e^{-{\rm i}\zeta}(\frac{\pi\zeta}{2})^\frac12 H_\frac13^{(1)}(\zeta),\quad \beta(x)=e^{{\rm i}\zeta}(\frac{\pi\zeta}{2})^\frac12 J_\frac13(\zeta),\quad \chi(x)=e^{-{\rm i}\zeta}\eta(x).$$
then
$$\chi(x)=\alpha(x)+{\rm i}\int_x^\infty \bigg(\alpha(x)\beta(t)-e^{2{\rm i}(\theta-\zeta)}\beta(x)\alpha(t)\bigg) f(t)(\lambda-V(t))^\frac12\chi(t)dt.$$
From \cite{T2}, $\alpha(x)$, $\beta(x)$ are bounded, and
$\Im(\theta-\zeta)=\Im(\int_x^t(\lambda-q(u))^\frac12du)\geq0$ together with Lemma \ref{6.5} and \ref{2.2sub}, we can prove that the iteration converges. In fact, if we denote
$\int_0^\infty |f(t)||\lambda-V(t)|^\frac12dt= M_0=O\left(\frac1{X\lambda^\frac12}\right)$,
and
$\left|\alpha(x)\beta(t)-e^{2{\rm i}(\theta-\zeta)}\beta(x)\alpha(t)\right|\le M$
uniformly, then
$|\chi_0(x)|=|\alpha(x)|\le C$, $|\chi_1(x)-\chi_0(x)|\le C M M_0$,
and generally, if
$|\chi_n(x)-\chi_{n-1}(x)|\le CM^n M_0^n$,
then
\begin{align*}
&|\chi_{n+1}(x)-\chi_n(x)|\\
=&\left|\int_x^\infty \bigg(\alpha(x)\beta(t)-e^{2{\rm i}(\theta-\zeta)}\beta(x)\alpha(t)\bigg) f(t)(\lambda-V(t))^\frac12(\chi_n(t)-\chi_{n-1}(t))dt\right|\\
\le& CM^{n+1}M_0^n\int_x^\infty \left|f(t)(\lambda-V(t))^\frac12\right|dt\le CM^{n+1}M_0^{n+1}.
\end{align*}
Thus,
\begin{align*}
|\chi_n(x)|&\le|\chi_0(x)|+|\chi_1(x)-\chi_0(x)|+\cdots+|\chi_n(x)-\chi_{n-1}(x)|\\
&\le C(1+MM_0+\cdots+M^n M_0^n)\le\frac{C}{1-MM_0}.
\end{align*}
If $n\geq n_0$, then $X_n\geq R$ and $X_n\sim \lambda_n^{\frac{1}{2\l}}$. Thus, choose $ n\geq n_*\geq n_0$ large enough such that $MM_0<1$, and by the theorem of dominated convergence, when $n\to\infty$, $\chi_n(x)\to\chi(x)=\alpha(x)+O(\frac{1}{X\lambda^\frac12})$ uniformly w.r.t $x$, which means that $\chi(x)$ is bounded.\\
Next we show that
\begin{eqnarray}\label{biaodaforchi}
\chi(x)=\alpha(x)\left(1+O\left(\frac{1}{X\lambda^\frac12}\right)\right).
\end{eqnarray}
In fact, similar as Lemma \ref{Bessel}, if $\zeta(x)<-c_0$ or ${\rm i}\zeta(x)<-c_0$, where $c_0$ are arbitrary two positive constants, then we can prove that $|\alpha(x)|>C$ and (\ref{biaodaforchi}) holds. While for $0<|\zeta(x)|\leq c_0$ we have
$|\beta(x)|\le C|\alpha(x)|$.
Thus,
$$|\chi(x)-\alpha(x)|=\left|\int_x^\infty \bigg(\alpha(x)\beta(t)-e^{2{\rm i}(\theta-\zeta)}\beta(x)\alpha(t)\bigg) f(t)(\lambda-V(t))^\frac12\chi(t)dt\right|\le \frac{C|\alpha(x)|}{X\lambda^\frac12}.$$
From the above proof when $\lambda>c_*>0$ large enough such that
\begin{eqnarray}\label{forremark1}
M\int_0^{\infty}|f(t)||\lambda-V(t)|^\frac12dt\le\frac{CM}{X\lambda^{\frac12}}<\frac{C_1}{\lambda^{\frac12+\frac{1}{2\l}}}<1,
\end{eqnarray}
the solution of (\ref{tezhengfangcheng}) can be written as
$h(x)=(\lambda-V(x))^{-\frac14}(\frac{\pi\zeta}{2})^\frac12 H_\frac13^{(1)}(\zeta)(1+O(\frac{1}{X\lambda^\frac12}))$. It follows that
$h_n(x)=(\lambda_n-V(x))^{-\frac14}(\frac{\pi\zeta_n}{2})^\frac12 H_\frac13^{(1)}(\zeta_n)(1+O(\frac{1}{X_n\lambda_n^\frac12}))$ when $n\geq n_0$ large enough.
 Note $C_nh_n(x)$ is also the solution of (\ref{tezhengfangcheng}),
Titchmarsh(\cite{T1,T2}) shows  $ C_n\sim X_n^{\frac{\l-1}2}$ and thus finish the proof.\qed
\begin{remark}\label{n0defi}
For (\ref{forremark1}) we let  $R$ large enough.   It follows when $n\geq n_0$,
$\lambda_n\geq \lambda_{n_0}\geq V(R)\geq c_*>0$.
\end{remark}
\noindent Proof of Lemma \ref{qQesti}.   From Remark \ref{n0defi} we obtain $X_n\ge R$, then we discuss it under 3 cases.\\ \indent
Case 1. When $x\ge X_n$, by  \eqref{potecd1} one obtains
\[
V(x)-\lambda_n=V'(\xi)(x-X_n)
\ge V'(X_n)(x-X_n)\ge\frac{V(X_n)}{X_n}(x-X_n)\ge D_1X_n^{2\l-1}(x-X_n),\quad\xi\in(X_n,x).
\]
From a straightforward integral estimation, we obtain the second one  in \eqref{Qesti}.\\ \indent
Case 2. When $\frac{X_n}2\le x<X_n$, we have
\[
\lambda_n-V(x)=V'(\xi)(X_n-x)\le V'(X_n)(X_n-x)\le\frac{C_1V(X_n)}{X_n}(X_n-x)\le C_1D_2X_n^{2\l-1}(X_n-x),\quad\xi\in(x,X_n).
\]
Similarly, combining \eqref{potecd1} with Lemma \ref{potesim}  we have $V(\frac{X_n}2)\ge 2^{-C_1}V(X_n)$, then
\[
\lambda_n-V(x)\ge V'({\scs\frac{X_n}2})(X_n-x)\ge\frac{2V({\scs\frac{X_n}2})}{X_n}(X_n-x)\ge2^{1-C_1}D_1X_n^{2\l-1}(X_n-x).
\]
\indent
Case 3. When $0\le x<\frac{X_n}2$, then $\frac{X_n}2\le X_n-x\le X_n$. We discuss it under 2 subcases as follows.\\
Subcase 3.1: $X_n\ge2R$.  By \eqref{potecd2} and Lemma \ref{potesim} we have $|V(x)|\le V({\scs\frac{X_n}2})\le\frac12\lambda_n$, then
\begin{align*}
&\lambda_n-V(x)\ge\frac12\lambda_n\ge\frac{D_1}2X_n^{2\l}\ge\frac{D_1}2X_n^{2\l-1}(X_n-x),&&\forall~x\in[0,{\scs\frac{X_n}2}),\\
&\lambda_n-V(x)\le\frac32\lambda_n\le\frac{3D_2}2X_n^{2\l}\le3D_2X_n^{2\l-1}(X_n-x),&&\forall~x\in[0,{\scs\frac{X_n}2}).
\end{align*}
Subcase 3.2: $R\le X_n<2R$. Denote $\cal N=\{n\in\Z_+:~R\le X_n<2R\}$, then $|\cal N|<\infty$. From \eqref{potecd2} one has $|V(x)|\le\lambda_n$, then
$
\lambda_n-V(x)\le2\lambda_n\le4D_2X_n^{2\l-1}(X_n-x),\quad\forall~x\in[0,{\scs\frac{X_n}2}).
$
On the other side, consider the function $\lambda_n-V(x)$ on $x\in [0,{\scs\frac{X_n}2}]$. Since $\frac{X_n}2<R$, then from \eqref{potecd2} one has $\lambda_n-V(x)>\lambda_n-V(R)\ge0$. By continuity of $V(x)$ there exists a
positive $c_n>0$ such that
$
\lambda_n-V(x)\ge c_n$  for any $x\in[0,~{\scs\frac{X_n}2}]$.
Define $c_0:=\min\limits_{n\in\cal N}\{c_n\}$, then for any $n\in\cal N$ one obtains
\[
\lambda_n-V(x)\ge\frac{c_0}{X_n^{2\l-1}(X_n-x)}X_n^{2\l-1}(X_n-x)\ge\frac{c_0}{(2R)^{2\l}}X_n^{2\l-1}(X_n-x),
\quad\forall~x\in[0,{\scs\frac{X_n}2}).
\]
Combining all the above cases we obtain the first estimate  in \eqref{qesti}. The rest is clear. \qed\\
Proof of Lemma \ref{xmxnsimple}. Since $X_m+X_m^{-\frac13}\le2X_m<\frac12X_n\le X_n-X_n^{-\frac13}$, then we split the integral into five parts as:
\[
\int_{X_m}^{X_n}f(x)e^{{\rm i}kx}h_m(x)\overline{h_n(x)}dx=
\Big(\int_{X_m}^{X_m+X_m^{-\frac13}}+\int_{X_m+X_m^{-\frac13}}^{2X_m}+\int_{2 X_m}^{\frac{X_n}2}+\int_{\frac{X_n}2}^{X_n-X_n^{-\frac13}}+\int_{X_n-X_n^{-\frac13}}^{X_n}\Big)dx.
\]
By Lemma \ref{qQesti} we have $\lambda_n-V({\scs X_m+X_m^{-\frac13}})\ge\lambda_n-V({\scs\frac{X_n}2})\ge\frac{a_1}2X_n^{2\l}$, then
\begin{align*}
&\Big|\int^{X_m+X_m^{-\frac{1}{3}}}_{X_m} \cal F(x) dx\Big| \le CX_m^{\frac{\l-1}2}X_n^{\frac{\l-1}2}X_m^\mu\int^{X_m+X_m^{-\frac{1}{3}}}_{X_m}(V(x)-\lambda_m)^{-\frac{1}{4}} (\lambda_n-V(x))^{-\frac{1}{4}}dx\\
\le~& CX_m^{-\frac{1}{4}+\mu}X_n^{\frac{\l-1}2}(\lambda_n-V(X_m+X_m^{-\frac{1}{3}}))^{-\frac{1}{4}} \int_{X_m}^{X_m+X_m^{-\frac{1}{3}}}(x-X_m)^{-\frac{1}{4}}dx\le C(X_mX_n)^{\frac\mu2-\frac12}.
\end{align*}
Then we turn to the second part.
By Lemma \ref{qQesti} we have
$
\lambda_n-V(2X_m)\ge\lambda_n-V({\scs\frac{X_n}2})\ge\frac{a_1}2X_n^{2\l}$ and
$|\zeta_m(x)|\ge A_1X_m^{\l-\frac23}(x-X_m)\ge A_1X_m^{\l-1}$ for $x\ge X_m+X_m^{-\frac13}$.
Therefore,
\begin{align*}
&\Big|\int_{X_m+X_m^{-\frac{1}{3}}}^{2X_m} \cal F(x) dx\Big| \le  CX_m^{\frac{\l-1}2}X_n^{\frac{\l-1}2}X_m^\mu\int_{X_m+X_m^{-\frac{1}{3}}}^{2X_m} (V(x)-\lambda_m)^{-\frac{1}{4}} (\lambda_n-V(x))^{-\frac{1}{4}}e^{{\rm i}\zeta_m}dx\\
\le&~CX_m^{-\frac{1}{4}+\mu}X_n^{\frac{\l-1}2}\Big(\lambda_n-V(2X_m)\Big)^{-\frac{1}{4}} e^{-C_0X_m^{\l-1}}\int_{X_m+X_m^{-\frac{1}{3}}}^{2X_m} (x-X_m)^{-\frac{1}{4}} e^{-C_0(x-X_m)}dx\\
\le&~CX_m^{-\frac12+\mu}X_n^{-\frac12}\int_0^\infty t^{-\frac{1}{4}}e^{-t}dt\le C (X_mX_n)^{\frac\mu2-\frac12}.
\end{align*}
For the third part, from $x\ge2X_m$ we have $\frac x2\le x-X_m\le x$ and
$
|\zeta_m(x)|\ge A_1X_m^{\l}(x-X_m)\ge A_1X_m^{\l+1}.
$
It follows
\begin{align*}
&\Big|\int_{2X_m}^{\frac{X_n}2} \cal F(x) dx\Big| \le CX_m^{\frac{\l-1}2}X_n^{\frac{\l-1}2}\int_{2X_m}^{\frac{X_n}2} x^\mu(V(x)-\lambda_m)^{-\frac{1}{4}} (\lambda_n-V(x))^{-\frac{1}{4}}e^{{\rm i}\zeta_m}dx\\
\le&~CX_m^{-\frac{1}{4}}X_n^{\frac{\l-1}2}\big(\lambda_n-V({\scs\frac{X_n}2})\big)^{-\frac{1}{4}}e^{-C_0X_m^{\l+1}} \int_{2X_m}^{\frac{X_n}2} (x-X_m)^{-\frac14+\mu} e^{-C_0(x-X_m)}dx\le C(X_mX_n)^{\frac\mu2-\frac12}.
\end{align*} Then we turn to the fourth part, from $x\ge\frac{X_n}2$ we have $x-X_m\ge\frac{X_n}2-\frac{X_n}4=\frac{X_n}4$ and
$
\lambda_n-V({\scs X_n-X_n^{-\frac13}})\ge a_1X_n^{2\l-\frac43}$ and
$|\zeta_m(x)|\ge A_1X_m^\l(x-X_m)\ge\frac{A_1}4X_m^{\l}X_n$. It follows
\begin{align*}
&\Big|\int_{\frac{X_n}2}^{X_n-X_n^{-\frac13}}  \cal F(x) dx\Big| \le CX_m^{\frac{\l-1}2}X_n^{\frac{\l-1}2+\mu}\int_{\frac{X_n}2}^{X_n-X_n^{-\frac13}}(V(x)-\lambda_m)^{-\frac14} (\lambda_n-V(x))^{-\frac{1}{4}}e^{{\rm i}\zeta_m}dx\\
\le&~C X_m^{-\frac{1}{4}}X_n^{\frac{\l-1}2+\mu}\big(\lambda_n-V({\scs X_n-X_n^{-\frac13}})\big)^{-\frac{1}{4}}e^{-C_0X_n} \int_{\frac{X_n}2}^{X_n-X_n^{-\frac13}} (x-X_m)^{-\frac14} e^{-C_0(x-X_m)}dx\\
\le&~CX_m^{-\frac12}X_n^{-\frac12}\int_0^\infty t^{-\frac14}e^{-t}dt\le C(X_mX_n)^{\frac\mu2-\frac12}.
\end{align*}
For the last part, from $X_n-X_n^{-\frac13}\ge \frac12X_n\ge2X_m$, we have
$
V(X_n-X_n^{-\frac13})-\lambda_m\ge V(2X_m)-V(X_m)\ge a_1X_m^{2\l}$ and
$|\zeta_m(x)|\ge\frac{A_1}4X_m^\l X_n$. It follows
\begin{align*}
&\Big|\int_{X_n-X_n^{-\frac{1}{3}}}^{X_n} \cal F(x) dx\Big| \le CX_m^{\frac{\l-1}2}X_n^{\frac{\l-1}2}\int_{X_n-X_n^{-\frac{1}{3}}}^{X_n}x^\mu (V(x)-\lambda_m)^{-\frac{1}{4}}(\lambda_n-V(x))^{-\frac{1}{4}}e^{-|\zeta_m(x)|}dx\\
\le&~CX_m^{\frac{\l-1}2}X_n^{-\frac14+\mu}e^{-C_0X_n}\Big(V\big(X_n-X_n^{-\frac{1}{3}}\big)-\lambda_m\Big)^{-\frac{1}{4}} \int_{X_n-X_n^{-\frac{1}{3}}}^{X_n} (X_n-x)^{-\frac{1}{4}}dx\\
\le&~CX_m^{-\frac12}X_n^{-\frac14}\int_{X_n-X_n^{-\frac{1}{3}}}^{X_n} (X_n-x)^{-\frac{1}{4}}dx
\le C(X_mX_n)^{\frac\mu2-\frac12}.
\end{align*}
Similarly, we obtain
\[
\Big|\int_{X_m}^{X_n}f(x) e^{{\rm i}kx}\psi_{j_1}^{(m)}(x) \overline{\psi_{j_2}^{(n)}(x)} dx \Big|
\le C(X_mX_n)^{\frac\mu{2}-\frac12},~\text{for}~j_1,j_2\in\{1,2\}~\text{and}~j_1+j_2\ge3.
\]
Hence, we have
\(\dss
\Big|\int_{X_m}^{X_n}f(x)e^{{\rm i}kx}h_m(x)\overline{h_n(x)}dx\Big|\le C(X_mX_n)^{\frac\mu2-\frac12}.
\)\qed \\
Proof of Lemma \ref{xmcomplex2}.  Let $b=X_m-{\txs\frac{a_1}{16a_2}}X_m^{\frac13}$, then
$$
\Big|\int_{X_m-X_m^{\frac13}}^{b}\cal F(x) dx\Big|\le C X_m^{\mu+\l-1} \int_{X_m-X_m^{\frac13}}^{b}(\lambda_m-V(x))^{-\frac{1}{2}}dx
\le C X_m^{\mu-\frac13}.
$$
For the remainder integral on $[X_m^\frac23, b]$, since
\[
\lambda_m-V(b)\le a_2X_m^{2\l-1}(X_m-b)\le\frac{a_1}{16}X_m^{2\l-\frac23},\quad
\lambda_m-V(X_m-X_m^{\frac23})\ge a_1X_m^{2\l-\frac13},
\]
then
\begin{align*}
g(b)&\ge \frac{\sqrt{a_1}kX_m^{\l-\frac13}}{\sqrt{\lambda_n-V(b)}+\sqrt{\lambda_m-V(b)}}-k\\
&\ge \frac{\sqrt{a_1}kX_m^{\l-\frac13}}{\sqrt{\lambda_n-\lambda_m+\frac{a_1}{16}X_m^{2\l-\frac23}}+\sqrt{\frac{a_1}{16}X_m^{2\l-\frac23}}}-k \ge\frac{kX_m^{\l-\frac13}}{\frac{X_m^{\l-\frac13}}{2}+\frac{X_m^{\l-\frac13}}{4}}-k=\frac k3,\\
g({\scs X_m-X_m^\frac23})
&\le\frac{\sqrt{a_1}kX_m^{\l-\frac16}}{2\sqrt{\lambda_m-V(X_m-X_m^\frac23)}}-k
\le\frac{k\sqrt{a_1}X_m^{\l-\frac16}}{2\sqrt{a_1}X_m^{\l-\frac16}}-k=-\frac k2.
\end{align*}
Let $g(a)=-kX_m^{-\frac13}$, by the monotonicity of $g(x)$ we have $X_m-X_m^\frac23<a<b$ as the figure \ref{case2} below.
In the following we first estimate the integral on $[a,b]$. Note that $X_m\ge8$, then $g(a)\ge-\frac k2$. Since
\[
\lambda_m-V(a)\le a_2X_m^{2\l-1}(X_m-a)\le a_2X_m^{2\l-\frac13}
\]
and $g''(x)>0$, then
\[
g'(x)\ge g'(a)=\frac{V'(a)(g(a)+k)}{2\sqrt{\lambda_n-V(a)}\sqrt{\lambda_m-V(a)}}
\ge \frac{C k X_m^{2\l-1}}{X_m^{2\l-\frac13}} \ge C kX_m^{-\frac23},\quad\text{for}~x\in[a,b].
\]
By Lemma \ref{oscint}, ones obtain
\[
\Big|\int_a^{b}f(x)e^{\rmi(\zeta_m-\zeta_n+kx)}\Psi(x)dx\Big|
\le Ck^{-\frac{1}{2}}X_m^{\frac13}\bigg(X_m^\mu\Big(|\Psi(b)|+\int_a^{b}|\Psi'(x)|dx\Big)+
\int_a^{b}x^{\mu-1}|\Psi(x)|dx\bigg).
\]
By Corollary \ref{psiesti} we have
\(
|\Psi(b)|\le C X_m^{-\l+\frac13}
\) and
$
\int_a^{b}x^{\mu-1}|\Psi(x)|dx\le C X_m^{\mu-\l+\frac13}.
$
Besides, one has
$
\int_a^{b}J_1dx \le C \int_a^{b} x^{2\l-1}(\lambda_m-V(x))^{-\frac{5}{4}}(\lambda_m-V(x))^{-\frac{1}{4}}dx \le CX_m^{-\l+\frac13}$
and
$\int_a^{b}J_3dx \le C X_m^{-2\l+\frac13}$.
It follows that
$
\Big|\int_a^{b} \cal F(x) dx\Big| \le Ck^{-\frac{1}{2}}X_m^{\mu-\frac13}.
$
Next we estimate the integral on $[X_m^\frac23,a]$, where $\left|g(x)\right| \ge kX_m^{-\frac13}$. By Lemma \ref{oscint} one has
\[
\Big|\int_{X_m^\frac23}^af(x) e^{{\rm i}(\zeta_m-\zeta_n+kx)}\Psi(x)dx\Big|
\le Ck^{-1}X_m^{\frac13}\bigg(X_m^\mu\Big(|\Psi(a)| + \int_{X_m^\frac23}^a |\Psi'(x)|dx\Big)+\int_{X_m^\frac23}^ax^{\mu-1}|\Psi(x)|dx\bigg).
\]
Similarly, we have
$
\Big|\int_{X_m^\frac23}^a \cal F(x) dx\Big| \le C k^{-1}X_m^{\mu-\frac13}.
$
It follows that
\[
\Big|\int_{X_m^\frac23}^{X_m-X_m^{\frac{1}{3}}} \cal F(x) dx\Big| \le C(k^{-1}\vee 1)(X_mX_n)^{\frac\mu2-\frac1{6}}.
\]
From a straightforward computation, we obtain
\[
\Big|\int_{X_m^\frac23}^{X_m-X_m^\frac{1}{3}}f(x) e^{{\rm i}kx}\psi^{(m)}_{j_1}(x)\overline{\psi^{(n)}_{j_2}(x)}dx\Big|\le C(X_mX_n)^{\frac\mu2-\frac\l2-\frac12},
~\text{for}~j_1,j_2\in\{1,2\}~\text{and}~j_1+j_2\ge3.
\]
Hence, we have
\(\dss
\Big|\int_{X_m^\frac23}^{X_m-X_m^{\frac{1}{3}}}f(x)e^{{\rm i}kx}h_m(x)\overline{h_n(x)}dx\Big| \le C (k^{-1}\vee 1)(X_mX_n)^{\frac\mu2-\frac16}.
\)\qed
\begin{figure}[H]
\begin{minipage}{0.49\textwidth}
\centering
\epsfig{file=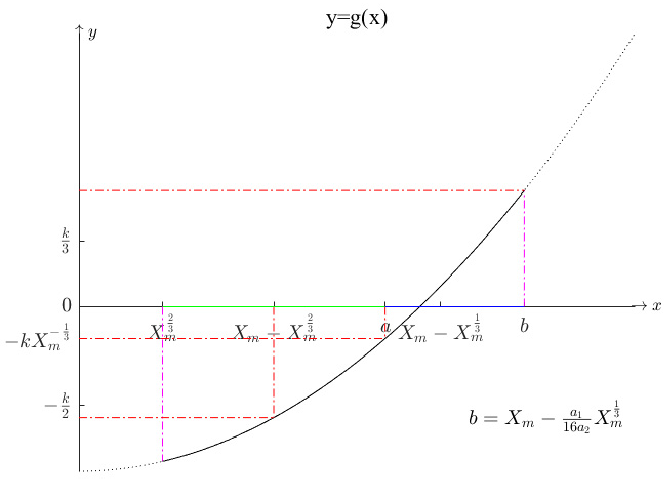,width=\textwidth}
\caption{Phase in Lemma \ref{xmcomplex2}}\label{case2}
\end{minipage}\hfill
\begin{minipage}{0.49\textwidth}
\centering
\epsfig{file=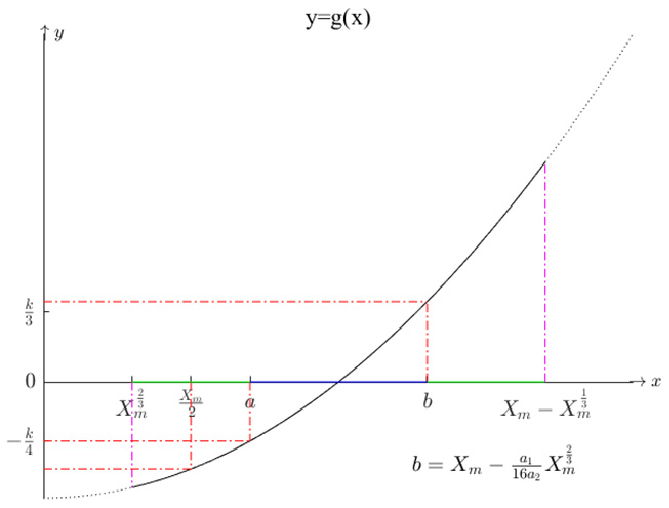,width=\textwidth}
\caption{Phase in Lemma \ref{xmcomplex3}}\label{case3}
\end{minipage}
\end{figure}
Proof of Lemma \ref{xmcomplex3}. Let $b=X_m-{\txs\frac{a_1}{16a_2}}X_m^\frac23$. By Lemma \ref{qQesti} we have
$
\lambda_m-V(b)\le\frac{a_1}{16}X_m^{2\l-\frac13}$,
$\frac{a_1}2X_m^{2\l}\leq \lambda_m-V({\scs\frac{X_m}2})\le\frac{a_2}2X_m^{2\l}$. Similar as before, we have
$g(b)\ge\frac k3$ and
$g({\scs\frac{X_m}{2}})\le-\frac k4$.
Let $g(a)=-\frac k4$, then $\frac{X_m}{2}<a<b$
as the figure \ref{case3} above. We first estimate the integral on $[a,b]$. Since
\[
g'(a)=\frac{V'(a)(g(a)+k)}{2\sqrt{\lambda_n-V(a)}\sqrt{\lambda_m-V(a)}}
\ge\frac{CkX_m^{2\l-1}}{X_m^{2\l}}\ge CkX_m^{-1},
\]
then $g'(x)\ge CkX_m^{-1}$, for $x\in(a,b)$. By Lemma \ref{oscint} we have
\[
\Big|\int_a^{b}f(x)e^{{\rm i}(\zeta_m-\zeta_n+kx)}\Psi(x)dx\Big|
\le Ck^{-\frac{1}{2}}X_m^{\frac12}\bigg(X_m^\mu\Big(|\Psi(b)| + \int_a^{b} |\Psi'(x)|dx\Big)
+\int_a^{b} x^{\mu-1}|\Psi(x)|dx\bigg).
\]
By corollary \ref{psiesti} we have
$
|\Psi(b)|\le CX_m^{-\l+\frac16}
$ and
$
\int_a^{b}x^{\mu-1}|\Psi(x)|dx\le CX_m^{\mu-\l+\frac16}.
$
Besides,
$
\int_a^{b}J_1dx  \le CX_m^{-\l+\frac16}$ and
$\int_a^{b}J_3dx \le CX_m^{-\l+\frac16}$.
It follows that
$
\Big|\int_a^{b} \cal F(x) dx\Big| \le Ck^{-\frac{1}{2}}X_m^{\mu-\frac13}.
$
Next we estimate the integral on $[b,X_m-X_m^{\frac13}]$. From $\left|g(x)\right| \ge \frac k3$ and  Lemma \ref{oscint},
we have
\begin{align*}
&\Big|\int_{b}^{X_m-X_m^{\frac13}}f(x)e^{{\rm i}(\zeta_m-\zeta_n+kx)}\Psi(x)dx\Big| \notag \\
&\le~Ck^{-1}\bigg(X_m^{\mu}\Big(\big|\Psi({\scs X_m-X_m^{\frac13}})\big| + \int_{b}^{X_m-X_m^{\frac13}} |\Psi'(x) |dx\Big)
+\int_{b}^{X_m-X_m^{\frac13}}x^{\mu-1}|\Psi(x)|dx\bigg)\\
&\le Ck^{-1}X_m^{\mu-\frac23}.
\end{align*}
Similarly,
$
\Big|\int_{a}^{X_m^{\frac23}}\cal F(x) dx\Big| \le Ck^{-1}X_m^{\mu-\frac56}.
$
It follows that
$
\Big|\int_{X_m^\frac23}^{X_m-X_m^{\frac{1}{3}}} \cal F(x) dx\Big| \le C(k^{-1}\vee 1)(X_mX_n)^{\frac\mu2-\frac{1}{6}}.
$
Since the other three integrals have better estimates, we finish the proof. \qed\\
Proof of Lemma \ref{xmcomplex4}. Let $a=X_m^{\frac{2\mu+2\l}{2\mu+2\l+1}},~b=(1-{\scs\frac{a_1}{16a_2}})X_m$, then by Lemma \ref{qQesti} we have
\(\dss
\lambda_m-V(b)\le a_2X_m^{2\l-1}(X_m-b)=\frac{a_1}{16}X_m^{2\l}.
\) Hence,
$
g(b)\ge\frac{\sqrt{a_1}kX_m^\l}{\sqrt{\lambda_n-\lambda_m+\frac{a_1X_m^{2\l}}{16}}+\sqrt{\frac{a_1X_m^{2\l}}{16}}}-k
\ge\frac k3
$
as the  figure \ref{case4} below. We first estimate the integral on $[b,X_m-X_m^\frac13]$. By Lemma \ref{oscint} one has
\begin{align*}
&\Big|\int_{b}^{X_m-X_m^\frac13}f(x)e^{{\rm i}(\zeta_m-\zeta_n+kx)}\Psi(x)dx\Big|  \\
\le&~Ck^{-1}\bigg(X_m^\mu\Big(\big|\Psi({\scs X_m-X_m^\frac13})\big| + \int_b^{X_m-X_m^\frac13} |\Psi'(x)|dx\Big)
+\int_b^{X_m-X_m^\frac13}x^{\mu-1}|\Psi(x)|dx\bigg).
\end{align*}
From similar computations, we obtain
$
\Big|\int_{b}^{X_m-X_m^\frac13} \cal F(x) dx\Big| \le Ck^{-1}X_m^{\mu-\frac23}.
$
Next we estimate the integral on $[a,b]$. For $x\in[a,b]$, we have
$
\lambda_m-V(x)\le\lambda_m-V(a)\le  a_2X_m^{2\l},
$ then
$
g'(x)\ge \frac{C X_m^{\frac{(2\mu+2\l)(2\l-1)}{2\mu+2\l+1}} \cdot kX_m^\l}{X_m^{3\l}}\ge CkX_m^{-\frac{2\mu+4\l}{2\mu+2\l+1}}.
$
By Lemma \ref{oscint}, one has
\[
\Big|\int_{a}^{b}f(x)e^{{\rm i}(\zeta_m-\zeta_n+kx)}\Psi(x)dx\Big|
\le Ck^{-\frac12}X_m^{\frac{\mu+2\l}{2\mu+2\l+1}}\bigg(X_m^\mu\Big(|\Psi(b)| + \int_{a}^{b}|\Psi'(x) |dx\Big)
+\int_{a}^{b}x^{\mu-1}|\Psi(x)|dx\bigg).
\]
By similar procedures, we have
$
\Big|\int_{a}^{b}\cal F(x) dx\Big| \le Ck^{-\frac12}X_m^{\mu-\frac{\mu+1}{2\mu+2\l+1}}.
$
From a straightforward computation, we obtain
$
\Big|\int_{X_m^\frac23}^{a} \cal F(x) dx\Big| \le CX_m^{\mu-\frac{\mu+1}{2\mu+2\l+1}}.
$
It follows that
\[
\Big|\int_{X_m^\frac23}^{X_m-X_m^{\frac{1}{3}}} \cal F(x) dx\Big| \le C(k^{-1}\vee 1)(X_mX_n)^{\frac\mu2-\frac{\mu+1}{2(2\mu+2\l+1)}}.
\]
Since the other three integrals have better estimates, we finish the proof. \qed
\begin{figure}[H]
\begin{minipage}{0.49\textwidth}
\centering
\epsfig{file=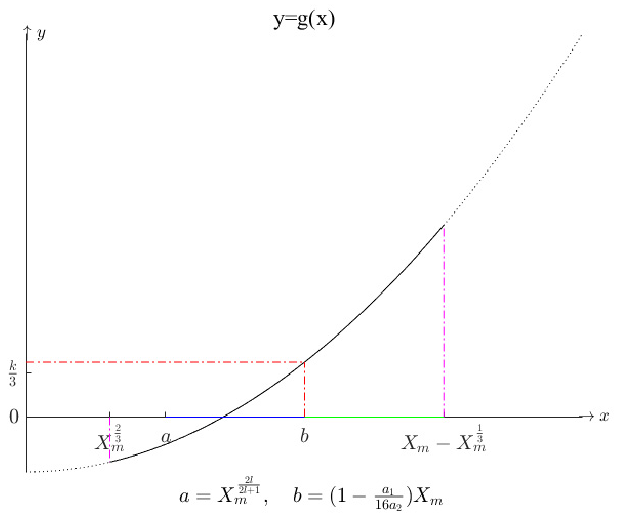,width=\textwidth}
\caption{Phase in Lemma \ref{xmcomplex4}}\label{case4}
\end{minipage}\hfill
\begin{minipage}{0.49\textwidth}
\centering
\epsfig{file=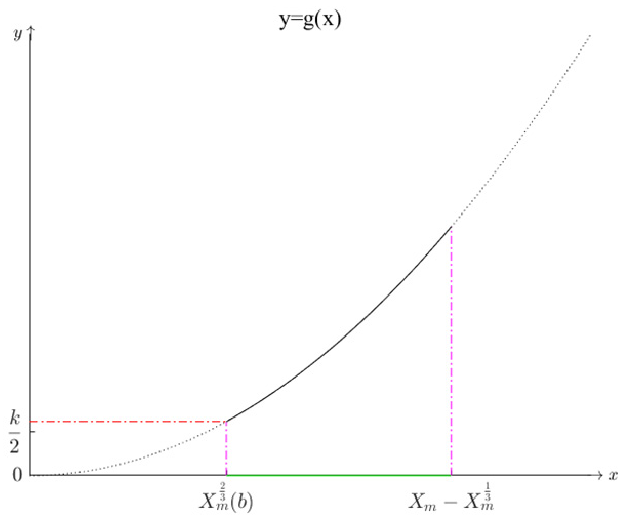,width=\textwidth}
\caption{Phase in Lemma \ref{xmcomplex5}}\label{case5}
\end{minipage}
\end{figure}
Proof of  Lemma \ref{xmcomplex5}. Since
$\sqrt{\lambda_n}+\sqrt{\lambda_m}\le2\sqrt{\lambda_n}\le2\sqrt{D_2}X_n^\l$, then for $x\in[X_m^\frac23,X_m-X_m^\frac13]$,
\[
g(x)=\frac{\lambda_n-\lambda_m}{\sqrt{\lambda_n-V(x)}+\sqrt{\lambda_m-V(x)}}-k
\ge\frac{3\sqrt{D_2}kX_n^\l}{\sqrt{\lambda_n}+\sqrt{\lambda_m}}-k
\ge\frac{3\sqrt{D_2}k X_n^\l}{2\sqrt{D_2}X_n^\l}-k=\frac k2,
\]
as the figure \ref{case5} above. By Lemma \ref{oscint}, one has
\begin{align*}
&\Big| \int_{X_m^\frac23}^{X_m-X_m^\frac{1}{3}} e^{{\rm i}(\zeta_m-\zeta_n+kx)}\Psi(x)dx\Big|\\
\le&~\frac{C}{k}\bigg(X_m^\mu\Big(|\Psi({\scs X_m-X_m^{\frac{1}{3}}}) | + \int_{X_m^\frac23}^{X_m-X_m^\frac{1}{3}} |\Psi'(x)|dx\Big)
+\int_{X_m^\frac23}^{X_m-X_m^\frac{1}{3}} |\Psi(x)|x^{\mu-1}dx\bigg).
\end{align*}
The remaining proof is similar  as Lemma \ref{xmcomplex1}.\qed

\begin{lemma}\label{potesim}
Assume $V(x)$ satisfies Assumption \ref{poteass} and $\theta\in(0,1)$, there exists a positive constant $\widetilde R\geq R_0$ such that $\theta^{C_1}V(x)\le V(\theta x)\le \theta V(x)$ for $\theta x\ge\widetilde R$.
\end{lemma}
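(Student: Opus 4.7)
\medskip

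\noindent\textbf{Proof plan.} The two inequalities are of different flavors and will be handled by separate mechanisms. For the upper bound I will prove that $x \mapsto V(x)/x$ is non-decreasing once $x$ is large enough: given this, for $\theta \in (0,1)$ with $\theta x \ge \widetilde R$ (so a fortiori $x \ge \widetilde R$) the inequality $V(\theta x)/(\theta x) \le V(x)/x$ rearranges directly to $V(\theta x) \le \theta V(x)$. For the lower bound I will integrate the differential inequality $(\ln V(x))' \le C_1/x$ supplied by Assumption~\ref{poteass}(ii) over the interval $[\theta x, x]$.

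The monotonicity of $V(x)/x$ reduces to showing $xV'(x) \ge V(x)$ on some right half-line $[R_1,\infty)$. The convexity hypothesis $V''(x) \ge 0$ for $|x|\ge R_0$ together with $V(x)\to\infty$ first forces $V'(x) > 0$ for all sufficiently large $x$: $V'$ is non-decreasing past $R_0$, and a non-decreasing $V'\le 0$ would keep $V$ bounded above, contradicting Assumption~\ref{poteass}(iii). Next, set $h(x) := V(x) - xV'(x)$; then $h'(x) = -xV''(x) \le 0$, so $h$ is non-increasing on $[R_0,\infty)$. If $h$ stayed strictly positive there, then $V(x)/x$ would be strictly decreasing, contradicting the lower bound $V(x)/x \ge D_1\, x^{2\l-1} \to \infty$ from (iii). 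Hence $h(R_1) \le 0$ at some $R_1 \ge R_0$, and by monotonicity $h \le 0$ on $[R_1,\infty)$, which is precisely $xV'(x) \ge V(x)$.

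For the lower bound, Assumption~\ref{poteass}(ii) with $j=1$ combined with the positivity $V'(x)\ge 0$ already established gives $xV'(x) \le C_1 V(x)$, equivalently $\frac{d}{dx}\ln V(x) \le C_1/x$ for $x \ge R_1$. Integrating from $\theta x$ to $x$ when $\theta x \ge R_1$ yields
\[
\ln\frac{V(x)}{V(\theta x)} \le C_1 \int_{\theta x}^{x}\frac{dt}{t} = -C_1 \ln\theta,
\]
so $V(\theta x) \ge \theta^{C_1} V(x)$. Taking $\widetilde R := R_1$ unifies the two bounds.

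The only non-routine step is the contradiction argument locking in $xV'(x) \ge V(x)$ on a right half-line; here the polynomial lower bound $V(x) \ge D_1 |x|^{2\l}$ with $\l > 1$ (which forces $V(x)/x \to \infty$) is the essential input, combining with the convexity (i) to rule out the scenario where $h$ remains positive. Everything else is short one-variable calculus.
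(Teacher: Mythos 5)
Your proof is correct and follows essentially the same route as the paper: both arguments reduce the lemma to the double inequality $\frac1x\le\frac{V'(x)}{V(x)}\le\frac{C_1}{x}$ on a half-line $[\widetilde R,\infty)$ and then integrate the logarithmic derivative over $[\theta x,x]$ (your monotonicity of $V(x)/x$ is exactly the integrated form of the left inequality). The only difference is how $xV'(x)\ge V(x)$ is obtained: the paper uses the tangent-line estimate $V(x)=V(R_0)+\int_{R_0}^xV'(t)\,dt\le V(R_0)+V'(x)(x-R_0)\le xV'(x)$ once $V'(x)\ge V(R_0)/R_0$ (possible since $V'\to\infty$), whereas you note $h:=V-xV'$ satisfies $h'=-xV''\le0$ and must eventually be $\le 0$, since otherwise $V(x)/x$ would be decreasing, contradicting $V(x)\ge D_1|x|^{2\l}$ --- a harmless variant of the same convexity argument.
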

\begin{proof}
By  Assumption \ref{poteass}, we have $\lim\limits_{x\to+\infty}V'(x)=+\infty$. Define
\begin{equation*}
\widetilde R:=\min\Big\{x\in[R_0,+\infty):V'(x)\ge\frac{V(R_0)}{R_0}\Big\},
\end{equation*}
then by \eqref{convex} in Assumption \ref{poteass}, one has
\[
V(x)=V(R_0)+\int_{R_0}^xV'(t)dt\le V(R_0)+V'(x)(x-R_0)\le V'(x)x,\quad\forall~x\ge \widetilde R.
\]
Together with \eqref{decay} in  Assumption \ref{poteass}, we have
$
\frac1x\le\frac{V'(x)}{V(x)}\le\frac{C_1}x$
for all $x\ge \widetilde R$.  Given $\theta\in(0,1)$, if $\theta x\ge \widetilde R$, then
$
\log\frac{V(\theta x)}{V(x)}=\int_x^{\theta x}d\log V(t)=\int_{x}^{\theta x}\frac{V'(t)}{V(t)}dt\ge\int_{x}^{\theta x}\frac{C_1}{t}dt=\log \theta^{C_1}.
$
Hence $V(\theta x)\ge \theta^{C_1}V(x)$. Similarly, we have $V(\theta x)\le\theta V(x)$.
\end{proof}

\begin{lemma}{\rm (\cite{T2})}\label{6.5}
For fixed $\lambda$, if $x>2X$, then
$\int_x^\infty|f(t)(\lambda-V(t))^\frac12|dt\le\frac{C}{x(V(x))^\frac12}$,
here $C$ is a constant independent of $x$ and $\lambda$.
\end{lemma}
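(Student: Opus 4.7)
The plan is to bound $|f(t)|(V(t)-\lambda)^{1/2}$ termwise on $[x,\infty)$ and integrate. Recalling from the proof of Lemma~\ref{eigenfunction} that
\[
f(t)=\frac{5}{36\zeta(t)^{2}}-\frac{V''(t)}{4(\lambda-V(t))^{2}}-\frac{5V'(t)^{2}}{16(\lambda-V(t))^{3}},
\]
and that $\zeta(t)$ is purely imaginary for $t>X$, one has
\[
|f(t)|\le\frac{5}{36|\zeta(t)|^{2}}+\frac{|V''(t)|}{4(V(t)-\lambda)^{2}}+\frac{5V'(t)^{2}}{16(V(t)-\lambda)^{3}}.
\]
My target is to show each piece, after multiplication by $(V(t)-\lambda)^{1/2}$, is dominated by $C\,t^{-2}V(t)^{-1/2}$. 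Since $V$ is increasing, the conclusion then follows from
\[
\int_{x}^{\infty}\frac{dt}{t^{2}V(t)^{1/2}}\le\frac{1}{V(x)^{1/2}}\int_{x}^{\infty}\frac{dt}{t^{2}}=\frac{1}{x\,V(x)^{1/2}}.
\]

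For $t\ge 2X$ I would collect three pointwise ingredients. Since $X\le t/2$, Lemma~\ref{potesim} applied with $\theta=1/2$ gives $\lambda=V(X)\le V(t/2)\le V(t)/2$, hence $V(t)-\lambda\ge V(t)/2$. Assumption~\ref{poteass} gives $|V'(t)|\le C_{1}V(t)/t$ and $|V''(t)|\le C_{1}^{2}V(t)/t^{2}$. Combining these with $(V(t)-\lambda)^{1/2}\le V(t)^{1/2}$, both the second and the third terms of $|f(t)|$ are controlled by $C\,t^{-2}V(t)^{-1/2}$, as wanted. The remaining first term reduces to proving the lower bound $|\zeta(t)|\ge c\,t\,V(t)^{1/2}$, since then
\[
\frac{(V(t)-\lambda)^{1/2}}{|\zeta(t)|^{2}}\le\frac{V(t)^{1/2}}{c^{2}t^{2}V(t)}=\frac{1}{c^{2}\,t^{2}V(t)^{1/2}}.
\]

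The main obstacle is precisely this lower bound on $|\zeta(t)|$: the local estimate $|\zeta(t)|\ge A_{1}X^{\l-1/2}(t-X)^{3/2}$ from Lemma~\ref{qQesti} loses powers of $t/X$ and is insufficient when $t\gg X$ (in particular it is not integrable in $t$ when $\l\ge 2$). I would handle this by splitting into two sub-cases. If $t\ge 4X$, the sub-interval $[t/2,t]$ sits inside $[2X,\infty)$, so the bound (a) above applies on it; together with Lemma~\ref{potesim} this gives $V(u)-\lambda\ge V(u)/2\ge 2^{-C_{1}-1}V(t)$ for $u\in[t/2,t]$, and hence
\[
|\zeta(t)|\ge\int_{t/2}^{t}(V(u)-\lambda)^{1/2}\,du\ge c\,t\,V(t)^{1/2}.
\]
If instead $2X\le t<4X$, then $X$ and $t$ are comparable, and Lemma~\ref{qQesti} directly yields $|\zeta(t)|\ge A_{1}X^{\l-1/2}(t/2)^{3/2}\sim X^{\l+1}\sim t\,V(t)^{1/2}$. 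With the desired lower bound on $|\zeta(t)|$ in hand in both sub-cases, summing the three contributions and invoking the displayed monotonicity inequality completes the argument, producing a constant $C$ depending only on $C_{1},D_{1},D_{2},\l$ and thus independent of $x$ and $\lambda$.
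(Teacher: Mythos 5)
Your proof is correct. Note that the paper offers no proof of this statement at all: it is quoted from Titchmarsh \cite{T2} and used as a black box in the proof of Lemma \ref{eigenfunction}, so what you have done is reconstruct the Titchmarsh-type estimate directly from the paper's hypotheses. Your termwise bounds are sound: on $[x,\infty)\subset(2X,\infty)$ monotonicity of $V$ and Lemma \ref{potesim} give $V(t)-\lambda\ge V(t)/2$, Assumption \ref{poteass}(ii) gives $|V'(t)|\le C_1V(t)/t$ and $|V''(t)|\le C_1^2V(t)/t^2$, which disposes of the two terms of $f$ containing derivatives of $V$; and you correctly isolate the only nontrivial point, namely the global lower bound $|\zeta(t)|\ge c\,t\,V(t)^{1/2}$, which the naive turning-point estimate does not give for $t\gg X$. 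Your two-case argument ($t\ge4X$ by integrating $(V(u)-\lambda)^{1/2}$ over $[t/2,t]$ with Lemma \ref{potesim}; $2X\le t<4X$ via the turning-point bound together with $V(t)\le D_2t^{2\l}$) closes this, and the final integration against $t^{-2}V(t)^{-1/2}$ yields exactly the claimed bound with a constant depending only on $C_1,D_1,D_2,\l$. One small remark: Lemma \ref{qQesti} is stated only for the eigenvalues $\lambda_n$ with $n\ge n_0$, so to keep $C$ independent of $\lambda$ you should either observe that the inequalities you borrow ($V(u)-\lambda\ge a_1X^{2\l-1}(u-X)$ for $u\ge X$ and hence the bound on $|\zeta(t)|$) hold for every $\lambda=V(X)$ with $X\ge R$ by the same one-line convexity argument used in Case 1 of its proof, or handle the range $2X\le t<4X$ by integrating that convexity inequality over $[X,t]$ directly; either way this is a cosmetic adjustment, not a gap.
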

\begin{lemma}{\rm (\cite{T2})}\label{2.2sub}
$\int_0^\infty |f(x)||\lambda-V(x)|^\frac12dx=O\bigg(\frac{1}{X\lambda^\frac12}\bigg), \quad \lambda\to\infty.$
\end{lemma}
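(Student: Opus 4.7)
The plan is to split the integral at the turning point $X$ (defined by $\lambda=V(X)$) into three regions --- a tail $[2X,\infty)$, a ``far'' region $[0,X/2]$, and a ``transition'' region $[X/2,2X]$ --- and bound each contribution by $O(1/(X\lambda^{1/2}))$ as $\lambda\to\infty$. Throughout, Assumption \ref{poteass}\,(ii)--(iii) together with Lemma \ref{potesim} deliver the scaling estimates $V^{(j)}(X)\sim \lambda/X^j$ for $j=0,1,2,3$ that drive the quantitative bounds.

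For the tail $x\ge 2X$, Lemma \ref{6.5} directly gives $\int_{2X}^\infty |f(t)||\lambda-V(t)|^{1/2}\,dt\le C/(X\lambda^{1/2})$. For the far region $[0,X/2]$, Lemma \ref{potesim} yields $V(x)\le V(X/2)\le \lambda/2$, so $\lambda-V(x)\sim \lambda$ on this region, and $|\zeta(x)|\sim \lambda^{1/2}(X-x)\sim \lambda^{1/2}X$. Using $|V^{(j)}(x)|\le C V(x)/x^j$ from Assumption \ref{poteass}\,(ii) together with $V(x)\le Cx^{2\l}$, each of $|V''|/(\lambda-V)^2$, $(V')^2/(\lambda-V)^3$ and $1/\zeta^2$ integrates against $(\lambda-V)^{1/2}\sim \lambda^{1/2}$ to give a contribution of order $1/(X\lambda^{1/2})$; the representative computation is $\int_{R_0}^{X/2} V(x)/(x^2\lambda^{3/2})\,dx\le C X^{2\l-1}/\lambda^{3/2}\le C/(X\lambda^{1/2})$, and the sub-$R_0$ piece is trivially absorbed since $V$ is bounded there.

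The main obstacle is the transition region $[X/2,2X]$, where each individual term in $f$ has a singularity of order $(X-x)^{-3}$ or $(X-x)^{-2}$ at the turning point. The essential point --- precisely the reason the auxiliary equation \eqref{xinfangcheng} was set up with the particular coefficient $5/(36\zeta^2)$ --- is that these leading singularities cancel in $f$. Setting $u=X-x$ and Taylor expanding around $x=X$ produces
\[
\zeta^2=\tfrac{4}{9}V'(X)\,u^3\Bigl(1-\tfrac{3V''(X)}{10V'(X)}u+O(u^2)\Bigr),\qquad \lambda-V=V'(X)\,u\Bigl(1-\tfrac{V''(X)}{2V'(X)}u+O(u^2)\Bigr),
\]
and a direct expansion of $\tfrac{5}{36\zeta^2}-\tfrac{V''}{4(\lambda-V)^2}-\tfrac{5(V')^2}{16(\lambda-V)^3}$ shows that the coefficients of $u^{-3}$ and $u^{-2}$ both vanish identically, leaving $f(x)=O(u^{-1}/(\lambda X))+O(1/(\lambda X^2))$ once the scalings $V^{(j)}(X)\sim\lambda/X^j$ are inserted. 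Multiplying by $|\lambda-V|^{1/2}\sim(\lambda/X)^{1/2}u^{1/2}$ and integrating $u$ over $[0,X]$ delivers precisely $O(1/(X\lambda^{1/2}))$ from each piece. The delicate step is verifying the two-order cancellation in $f$ cleanly; once that is in place the rest is routine scaling, and the three regional bounds combine to yield the lemma.
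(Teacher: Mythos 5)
The paper does not actually prove Lemma \ref{2.2sub}: it is quoted from Titchmarsh \cite{T2}, and the text around Lemma \ref{eigenfunction} simply invokes it. So there is no in-paper argument to compare against; what you have written is essentially a reconstruction of the classical Langer--Titchmarsh computation, and its skeleton is sound. Your three-region splitting is the right one, the tail $[2X,\infty)$ is correctly disposed of by Lemma \ref{6.5} evaluated at $x=2X$ (using $V(2X)\ge V(X)=\lambda$), and the outer-region scaling on $[0,X/2]$ is correct: there $\lambda-V(x)\sim\lambda$, $|\zeta(x)|\gtrsim\lambda^{1/2}X$, and your representative computation indeed gives $O(X^{2\ell-1}/\lambda^{3/2})=O(1/(X\lambda^{1/2}))$ since $X\sim\lambda^{1/(2\ell)}$. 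The cancellation you exhibit at the turning point is also genuinely correct: with your expansions of $\lambda-V$ and $\zeta^{2}$, the $u^{-3}$ coefficients are $\tfrac{5}{16V'}-\tfrac{5}{16V'}=0$ and the $u^{-2}$ coefficients are $\tfrac{3V''}{32(V')^{2}}-\tfrac{8V''}{32(V')^{2}}+\tfrac{5V''}{32(V')^{2}}=0$, and the surviving terms scale like $(\lambda X)^{-1}u^{-1}$ under $V^{(j)}(X)\sim\lambda/X^{j}$, which integrates against $(\lambda-V)^{1/2}\sim(\lambda/X)^{1/2}u^{1/2}$ to exactly $O(1/(X\lambda^{1/2}))$.

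The one genuine gap is the step you yourself flag as ``delicate'': the bound $|f(x)|\le C(\lambda X)^{-1}|X-x|^{-1}$ is asserted on all of $[X/2,2X]$, but what you prove is only a formal Taylor statement at $x=X$, valid as $u\to0$. For the lemma you need this estimate uniformly for $|X-x|$ up to size $X$ and on \emph{both} sides of the turning point (for $x>X$ one has $\lambda-V<0$ and $\zeta$ purely imaginary, so the expansions must be redone with the correct branches, even though the same algebraic cancellation occurs). Making this uniform is where all the analytic content of Assumption \ref{poteass} enters: one has to write $\lambda-V(x)=V'(X)(X-x)\bigl(1+r_{1}(x)\bigr)$ and $\zeta^{2}=\tfrac49V'(X)(X-x)^{3}\bigl(1+r_{2}(x)\bigr)$ with $|r_{i}(x)|\le C|X-x|/X$ \emph{and} with the corresponding control of the error in $V''(x)$ and $(V'(x))^{2}$, all derived from $|xV^{(j)}(x)|\le C_{1}|V^{(j-1)}(x)|$ and convexity (the same inputs behind Lemma \ref{qQesti}), and then track that the cancelled singular parts leave a remainder bounded by $C(\lambda X)^{-1}|X-x|^{-1}$ throughout the region, not just to leading order. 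Without this quantitative remainder analysis the transition-region bound --- which is the heart of the lemma --- is not established; with it, your argument closes and coincides with the proof in \cite{T2} that the paper cites.
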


The following lemma is from {\cite{LiangLuo2019}}.
\begin{lemma}\label{Bessel}
Bessel function of third kind $\belh(z)$ satisfies the following:
\begin{alignat*}{2}
\Bgs{\sbelhz{z}}&\le  1,&\quad& z\in(-\infty,-c_1),\\
\Bgs{\sbelhz{z}}&\le  \frac{20}{d_1} |z|^{\frac{1}{6}},&& z\in[-c_2,0),\\
\Bgs{\sbelhz{z}}&\le  \frac{Ce^{c_3}}{d_1} \max\{|z|^{\frac{1}{6}}, |z|^{\frac56}\},&& z\in (0,c_3]{\rm i},\\
\Bgs{ \sbelhz{z}}&\le  e^{-|z|},&& z\in(c_4,\infty){\rm i},
\end{alignat*}
where $c_1>0$, $c_2\in (0,\ 1]$, $c_3, c_4$ can be arbitrary positive numbers and $C$ is a positive constant.
\end{lemma}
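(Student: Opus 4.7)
The plan is to split the four bounds into two families: the two ``large-argument'' estimates on Regions 1 and 4, which I will obtain from the classical Hankel asymptotic
\[
H_{1/3}^{(1)}(z)=\sqrt{\frac{2}{\pi z}}\,e^{\rmi(z-\pi/6-\pi/4)}\bigl(1+O(|z|^{-1})\bigr),\qquad -\pi<\arg z<2\pi,
\]
and the two ``small-argument'' estimates on Regions 2 and 3, which I will obtain from the convergent series
\[
J_{\pm 1/3}(z)=\Bigl(\frac{z}{2}\Bigr)^{\pm 1/3}\sum_{k\ge 0}\frac{(-z^2/4)^k}{k!\,\Gamma(\pm 1/3+k+1)}
\]
combined with the connection identity $H_{1/3}^{(1)}(z)=(J_{-1/3}(z)-e^{-\rmi\pi/3}J_{1/3}(z))/(\rmi\sin(\pi/3))$.

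For the large-argument regions, multiplication of the asymptotic by $\sqrt{\pi z/2}$ cancels the prefactor $\sqrt{2/(\pi z)}$ and leaves a quantity whose modulus is $|e^{\rmi z}|$ times a factor $1+O(|z|^{-1})$. On the positive imaginary ray $z=\rmi y$ with $y>c_4$ one has $|e^{\rmi z}|=e^{-y}=e^{-|z|}$, and the target bound $e^{-|z|}$ follows once $c_4$ is taken large enough to absorb the remainder factor into the exponential. On the negative real ray $z=-r$ with $r>c_1$ one has $|e^{\rmi z}|=1$, so the bound $\le 1$ follows by enlarging $c_1$ until the asymptotic remainder is $\le 0$.

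For the small-argument regions I proceed term-by-term in the series above. The dominant singular term $(z/2)^{-1/3}/\Gamma(2/3)$ of $J_{-1/3}$ combines with the prefactor $\sqrt{\pi z/2}$ to yield a leading contribution of order $|z|^{1/2-1/3}=|z|^{1/6}$, while the leading term of $J_{1/3}$ contributes an order $|z|^{1/2+1/3}=|z|^{5/6}$ piece, and the remaining Taylor tail is controlled by $\sum_{k\ge 0}(|z|^2/4)^k/k!\le e^{|z|^2/4}$. For Region 2, with $|z|\le c_2\le 1$, the Taylor tail is less than $e^{1/4}<2$, the $|z|^{5/6}$ term is dominated by $|z|^{1/6}$, and the numerical constants (including $1/|\sin(\pi/3)|=2/\sqrt{3}$) collapse into the stated $20/d_1$. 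For Region 3, with $|z|$ possibly as large as $c_3$, the tail is bounded by $e^{c_3^2/4}$, which is absorbed into the stated $e^{c_3}$ prefactor (and universal constant $C$), while the $|z|^{5/6}$ term from $J_{1/3}$ is no longer dominated, producing the $\max\{|z|^{1/6},|z|^{5/6}\}$ factor.

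The only delicate step is Region 3, where both the singular $|z|^{1/6}$ and regular $|z|^{5/6}$ contributions must be tracked simultaneously and every multiplicative constant kept explicit as a function of the free parameter $c_3$ and of $d_1=|\sin(\pi/3)|$; the remaining three regions reduce to direct applications of the standard Hankel asymptotic or of the Taylor expansions for $J_{\pm 1/3}$. A fully detailed version of exactly this argument is carried out in \cite{LiangLuo2019}, from which the lemma is quoted.
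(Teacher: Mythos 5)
The paper offers no proof of this lemma at all --- it is quoted from \cite{LiangLuo2019} --- so your sketch must stand on its own, and as written it has two genuine gaps. The first concerns Regions 1 and 4, where the stated constants are exactly $1$: these cannot be extracted from $\sqrt{\pi z/2}\,H^{(1)}_{1/3}(z)=e^{\rmi(z-5\pi/12)}\bigl(1+O(|z|^{-1})\bigr)$, because enlarging $c_1$ or $c_4$ only makes the remainder small, not signed, so ``absorb the remainder'' / ``make it $\le 0$'' is not an argument. Worse, on the negative real ray the first correction term $\rmi a_1/z$ with $a_1=(4\nu^2-1)/8=-5/72$ is purely imaginary there, so to leading order it \emph{increases} the modulus beyond $1$, and deciding the sign requires the second-order term or some extra monotonicity input. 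The clean repair --- and the one consistent with how this paper itself manipulates $\psi_1^{(m)}$ in Section \ref{section3} --- is the Poisson-type representation $\sqrt{\pi z/2}\,H^{(1)}_{1/3}(z)=\frac{e^{\rmi(z-5\pi/12)}}{\Gamma(5/6)}\int_0^\infty e^{-t}t^{-1/6}\bigl(1+\frac{\rmi t}{2z}\bigr)^{-1/6}dt$: on both rays in question $|1+\rmi t/(2z)|\ge 1$ for every $t>0$, so the integral is at most $\Gamma(5/6)$ and the bounds $\le 1$ and $\le e^{-|z|}$ follow with constant exactly $1$, for every $c_1,c_4>0$. This also forces you to pin down the branch on the negative real ray (the paper takes $\arg\zeta_n=-\pi$ there): the bound is branch-sensitive, since continuing clockwise via $H^{(1)}_{1/3}(we^{-\rmi\pi})=H^{(1)}_{1/3}(w)+e^{-\rmi\pi/3}H^{(2)}_{1/3}(w)$ produces a modulus asymptotic to $2|\cos(w-\pi/4)|$, which is not $\le 1$; the determination compatible with the integral representation is the one needed.

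The second gap is in Region 3: your tail bound $\sum_{k\ge0}(|z|^2/4)^k/k!\le e^{|z|^2/4}$ only yields a prefactor $e^{c_3^2/4}$, which is not $\le Ce^{c_3}$ with a universal $C$ when $c_3$ is arbitrary (take $c_3>4$), so the lemma as stated is not proved this way. On the imaginary axis $J_{\pm1/3}(\rmi y)$ is a modified Bessel function whose true growth is $e^{y}$; to see this from the series you must exploit the full denominator, e.g. $k!\,\Gamma\bigl(k+1\pm\tfrac13\bigr)\ge c\,(2k)!\,4^{-k}$, so the tail is dominated by $C\cosh|z|\le Ce^{|z|}$, which then gives the stated $e^{c_3}$ dependence. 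With that replacement your series/connection-formula bookkeeping for Regions 2 and 3 is sound in outline; note only that $d_1$ is never defined in this paper, so identifying it with $|\sin(\pi/3)|$ is a guess (harmless, but it means the explicit constants $20/d_1$ and $Ce^{c_3}/d_1$ cannot actually be verified from the statement alone).
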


The last lemma is  from \cite{Stein}.
\begin{lemma}\label{oscint}
Suppose $\phi$ is real-valued and smooth in $(a,b)$, $\psi$ is complex-valued, and that $|\phi^{(k)}(x)|\ge1$ for all $x\in(a,b)$. Then
\[
\Big|\int_a^b e^{\rm i\lambda\phi(x)}\psi(x)dx\Big|\le c_k\lambda^{-1/k}\Big(|\psi(b)|+\int_a^b|\psi'(x)|dx\Big).
\]
holds when:\\
(i) $k\ge 2$, or (ii) $k=1$ and $\phi'(x)$ is monotonic.\\
The bound $c_k$ is independent of $\phi$, $\psi$ and $\lambda$.
\end{lemma}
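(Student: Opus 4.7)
\textbf{Proof proposal for Lemma \ref{oscint}.}
This is the classical van der Corput lemma, and my plan is to prove it by induction on $k$, with the base case $k=1$ handled by a single integration by parts and the induction step relying on a measure estimate for the level set where $|\phi^{(k-1)}|$ is small.

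For the base case $k=1$, I would integrate by parts with $dv = (i\lambda\phi'(x))^{-1}\cdot i\lambda\phi'(x)\,e^{i\lambda\phi(x)}dx = d(e^{i\lambda\phi(x)})/(i\lambda\phi'(x))$, obtaining boundary terms $\psi(x)e^{i\lambda\phi(x)}/(i\lambda\phi'(x))\big|_a^b$ together with $-(i\lambda)^{-1}\int_a^b e^{i\lambda\phi(x)}\bigl[\psi'(x)/\phi'(x)-\psi(x)\phi''(x)/(\phi'(x))^2\bigr]dx$. Since $|\phi'|\geq 1$, the $\psi'/\phi'$ contribution is bounded by $\lambda^{-1}\int_a^b|\psi'|dx$. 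The key observation for the other term is that monotonicity of $\phi'$ makes $1/\phi'$ monotonic as well, so $\int_a^b|\phi''/(\phi')^2|dx=\int_a^b|(1/\phi')'|dx=|1/\phi'(b)-1/\phi'(a)|\leq 2$. Combined with $\sup|\psi|\leq|\psi(b)|+\int_a^b|\psi'|dx$, and using $|\psi(a)|\leq|\psi(b)|+\int_a^b|\psi'|dx$ via the fundamental theorem of calculus, all terms assemble into the desired bound with an absolute constant $c_1$.

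For the induction step, assume the statement holds for $k-1$ (with $c_{k-1}$) and suppose $|\phi^{(k)}|\geq 1$ on $(a,b)$. Since $\phi^{(k)}$ has constant sign, $\phi^{(k-1)}$ is strictly monotonic and vanishes at most once, so for any $\delta>0$ the sublevel set $I_\delta:=\{x\in(a,b):|\phi^{(k-1)}(x)|<\delta\}$ is a single interval of length at most $2\delta$ (using $|\phi^{(k)}|\geq 1$), and its complement consists of at most two intervals $J_1,J_2$. On each $J_j$, I rescale $\phi$ by $\delta$: the function $\delta^{-1}\phi$ satisfies $|(\delta^{-1}\phi)^{(k-1)}|\geq 1$, and $\int_{J_j}e^{i\lambda\phi}\psi\,dx=\int_{J_j}e^{i(\lambda\delta)(\delta^{-1}\phi)}\psi\,dx$, so the inductive hypothesis gives a contribution bounded by $c_{k-1}(\lambda\delta)^{-1/(k-1)}\bigl(|\psi(b)|+\int_a^b|\psi'|dx\bigr)$. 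On $I_\delta$ I use the crude bound $2\delta\cdot\sup|\psi|\leq 2\delta\bigl(|\psi(b)|+\int_a^b|\psi'|dx\bigr)$. Adding the contributions and optimizing gives $\delta=\lambda^{-1/k}$, for which $(\lambda\delta)^{-1/(k-1)}=\lambda^{-1/k}$, yielding the claimed $\lambda^{-1/k}$ bound with some constant $c_k$ depending only on $k$.

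The only subtle point is book-keeping for the inductive step on the subintervals $J_j$: I need to make sure that the "variation" norm $|\psi(b)|+\int|\psi'|$ dominates the analogous norms $|\psi(\text{endpoint of }J_j)|+\int_{J_j}|\psi'|dx$ that the inductive hypothesis produces on each piece, which is immediate since $|\psi(\text{anything})|\leq|\psi(b)|+\int_a^b|\psi'|dx$ and $\int_{J_j}|\psi'|\leq\int_a^b|\psi'|$. This is the only real care needed; apart from that, the argument is entirely mechanical once the base case and the measure estimate for $I_\delta$ are in place. I do not anticipate any substantive obstacle, as this is a textbook result (Stein, \emph{Harmonic Analysis}, Ch.~VIII) reproduced here for the reader's convenience.
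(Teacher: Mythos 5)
The paper does not prove this lemma at all: it is quoted verbatim from Stein's book (\cite{Stein}), so there is no in-paper argument to compare against. Your proposal is a correct proof and is essentially the classical van der Corput argument, with one small structural difference from the textbook route: Stein first proves the amplitude-free estimate $\big|\int_a^b e^{\rmi\lambda\phi}dx\big|\le c_k\lambda^{-1/k}$ by induction (base case $k=1$ via integration by parts, inductive step via the same sublevel-set splitting you use), and only then obtains the weighted version by a single integration by parts against $F(x)=\int_a^x e^{\rmi\lambda\phi(t)}dt$; you instead carry the amplitude $\psi$ through the induction itself. Both work, and your bookkeeping is sound: the base case is handled correctly (monotonicity of $1/\phi'$ gives $\int_a^b|(1/\phi')'|\le 2$, and $|\psi(a)|\le|\psi(b)|+\int_a^b|\psi'|$ controls the left boundary term), the sublevel set $I_\delta$ has length at most $2\delta$ because $\phi^{(k-1)}$ is strictly monotone with slope of magnitude at least $1$, the rescaling $\lambda\phi=(\lambda\delta)(\delta^{-1}\phi)$ is legitimate, and when $k=2$ the monotonicity hypothesis needed to invoke the $k-1=1$ case on each $J_j$ is indeed available since $\phi''$ has constant sign (worth stating explicitly, but it is implicit in your "constant sign" remark). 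The price of your variant is only a slightly messier constant recursion ($c_k\approx 4c_{k-1}+2$ after optimizing $\delta=\lambda^{-1/k}$) and the need to restate the inductive claim with the variation norm $|\psi(b_j)|+\int_{J_j}|\psi'|$ on each subinterval, which you correctly dominate by $|\psi(b)|+\int_a^b|\psi'|$; what it buys is that you never need the amplitude-free statement as a separate lemma.
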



\begin{thebibliography}{2018}
 \bibitem{BG}
Bambusi, D., Graffi, S.: Time quasi-periodic unbounded perturbations of Schr\"odinger operators and KAM methods. Commun. Math. Phys. \textbf{219}, 465-480 (2001)

\bibitem{BamII}
Bambusi, D.:  Reducibility of 1-d Schr\"odinger equation with time quasiperiodic unbounded perturbations. II. Commun. Math. Phys. \textbf{353}, 353-378 (2017)

\bibitem{BamI}
Bambusi, D.:  Reducibility of 1-d Schr\"odinger equation with time quasiperiodic unbounded perturbations. I. Trans. Amer. Math. Soc. \textbf{370}, 1823-1865 (2018)

\bibitem{BamIII}
Bambusi, D.,  Montalto, R.: Reducibility of 1-d Schr\"odinger equation with unbounded time quasiperiodic perturbations. III. J. Math. Phys. \textbf{59}, 122702 (2018)

\bibitem{BGMR2019}
Bambusi, D., Gr\'ebert, B., Maspero, A., Robert, D.: Growth of Sobolev norms for abstract linear Schr\"odinger equations. To appear in Journal of the European Mathematical Society. Preprint available at arXiv:1706.09708v2 (2017)

\bibitem{BGMR18}
Bambusi, D., Gr\'{e}bert, B., Maspero, A., Robert, D.:  Reducibility of the quantum harmonic oscillator in $d$-dimensions with polynomial time-dependent perturbation. Anal. \& PDE \textbf{11}, 775-799 (2018)


\bibitem{BBHM17}
Baldi, P., Berti, M., Haus, E., Montalto, R: Time quasi-periodic gravity water waves in finite depth.
Invent. math. \textbf{214}, 739-911 (2018)

\bibitem{BBM14}
Baldi, P.,  Berti, M., Montalto, R.: KAM for quasi-linear and fully nonlinear forced perturbations of Airy equation. Math. Ann.  \textbf{359}, 471-536 (2014)


 \bibitem{Berti16}
Berti, M.: KAM for PDEs. Boll. Unione Mat. Ital. \textbf{9}, 115-142 (2016)

 \bibitem{Berti2019}
Berti, M.: KAM Theory for Partial Differential Equations. Anal. Theory Appl., \textbf{35}(3), 235-267 (2019)

\bibitem{BBP2}
Berti, M.,  Biasco, L.,  Procesi, M.: KAM for the reversible derivative wave equation. Arch. Rational Mech. Anal. \textbf{212},  905-955 (2014)

\bibitem{BM16}
 Berti, M., Montalto, R.: Quasi-periodic standing wave solutions for gravity-capillary water waves. To appear in Memoirs of the American Mathematical Society. Preprint available at arXiv:1602.02411v2 (2018)

\bibitem{Bou99}
Bourgain, J.:  Growth of Sobolev norms in linear Schr\"odinger equations with quasi-periodic potential. Commun. Math. Phys. \textbf{204}, 207-247 (1999)

\bibitem{Del14}
Delort, J.-M.: Growth of Sobolev norms for solutions of time dependent Schr\"odinger operators with harmonic oscillator potential. Commun. PDE \textbf{39}, 1-33 (2014)

\bibitem{Eli1992}
Eliasson, L. H.: Floquet solutions for the 1-dimensional quasi-periodic Schr\"odinger equation. Commun. Math. Phys. \textbf{146}, 447-482 (1992)

\bibitem{EGK}
Eliasson, H. L., Gr\'{e}bert, B., Kuksin, S. B.: KAM for the nonlinear beam equation.  Geom. Funct. Anal. \textbf{26}(6), 1588-1715(2016)

\bibitem{EK}
Eliasson, L.H.,  Kuksin, S.B.: KAM for the nonlinear Schr\"odinger equation. Ann. of Math. \textbf{172}, 371-435 (2010)

\bibitem{FGMP18}
Feola, R., Giuliani, F., Montalto, R.,  Procesi,M.: Reducibility of first order linear operators on tori via Moser's  theorem. J. Funct. Anal. \textbf{276}, 932-970 (2019)


\bibitem{FP15}
Feola, R., Procesi, M.: Quasi-periodic solutions for fully nonlinear forced reversible Schr\"odinger equations. J. Differ. Equ. \textbf{259}, 3389-3447 (2015)

\bibitem{GXY}
Geng, J.,   Xu, X.,    You, J.: An infinite dimensional KAM theorem and its application to
the two dimensional cubic Schr\"odinger equation. Adv. Math. \textbf{226}, 5361-5402 (2011)

\bibitem{GY1}
 Geng, J.,   You, J.: A KAM theorem  for one dimensional
Schr\"{o}dinger equation with periodic boundary conditions. J. Diff. Eqs. \textbf{209}, 1-56 (2005)

\bibitem{GY2}
 Geng, J.,   You, J.: A KAM theorem  for Hamiltonian partial
differential equations in higher dimensional spaces. Commun. Math.
Phys. \textbf{262}, 343-372(2006)

\bibitem{Giu17}
Giuliani,  F.: Quasi-periodic solutions for quasi-linear generalized KdV equations. J. Differ. Equ. \textbf{262}, 5052-5132 (2017)

\bibitem{GY00}
 Graffi, S., Yajima, K.: Absolute continuity of the Floquet spectrum for a nonlinearly forced harmonic oscillator. Commun. Math. Phys. \textbf{215}, 245-250 (2000)

 \bibitem{GP2016}
 Gr\'{e}bert, B.,  Paturel, E.: KAM for the Klein Gordon equation on $\mathbb S^d$.  Boll. Unione Mat. Ital. \textbf{9}(2), 237-288 (2016)

\bibitem{GP16}
 Gr\'{e}bert, B.,  Paturel, E.: On reducibility of quantum harmonic oscillator on $\R^d$ with quasiperiodic in time potential. To appear
in Annales de la facult\'e des sciences de Toulouse. Preprint available at arXiv:1603.07455v1 (2016)

\bibitem{GT11}
  Gr\'{e}bert, B., Thomann,  L.: KAM for the quantum harmonic oscillator. Commun. Math. Phys. \textbf{307}, 383-427 (2011)

\bibitem{HRa}
Helffer, B., Robert, D.: Asymptotique des niveaux d'\'energie pour des hamiltoniens a un degre de libert\'e. Duke Math. J. \textbf{49} 853-868 (1982)

\bibitem{IPT05}
Iooss, G., Plotnikov, P. I., Toland, J. F.: Standing waves on an infinitely deep perfect fluid under gravity. Arch. Ration. Mech. Anal. \textbf{177}, 367-478 (2005)

\bibitem{KLiang}
 Kappeler, T., Liang, Z.: A KAM theorem for the defocusing NLS equation.
J. Differ. Equ. \textbf{252}, 4068-4113 (2012)

\bibitem{KaPo}
Kappeler, T.,    P\"oschel, J.: {\it  KDV \& KAM}. Berlin: Springer-Verlag,  2003

 \bibitem{Kuk93}
Kuksin, S.B.: Nearly integrable infinite-dimensional Hamiltonian systems. {\it Lecture Notes in
Mathematics}, \textbf{1556}. Berlin: Springer-Verlag,  1993

\bibitem{Ku1}
 Kuksin, S.B.: {\it  Analysis of Hamiltonina PDEs}. Oxford: Oxford University Press,  2000

\bibitem{Ku2}
Kuksin, S.B.: A KAM theorem for equations of the Korteweg-de Vries type. Rev. Math. Math. Phys. \textbf{10}(3),  1-64 (1998)

\bibitem{KP}
 Kuksin, S.B.,    P\"oschel, J.: Invariant Cantor manifolds of quasi-periodic
oscillations for a nonlinear Schr\"odinger equation. Ann. of Math. \textbf{143}, 149-179 (1996)

\bibitem{LZ}
 Liang, Z.: Quasi-periodic solutions for 1D Schr\"odinger equations with the nonlinearity $|u|^{2p}u$. J. Diff. Eqs. \textbf{244},  2185-2225 (2008)

 \bibitem{LiangLuo2019}
Liang, Z., Luo, J.: Reducibility of 1-d quantum harmonic oscillator equation with  unbounded oscillation perturbations.  Preprint available at arXiv: 2003.12951v1 (2020)

\bibitem{LiangW19}
Liang, Z. , Wang, Z.: Reducibility of quantum harmonic oscillator on $\R^d$ with differential and quasi-periodic in time potential. J. Diff. Eqs. \textbf{267},  3355-3395 (2019)

\bibitem{LY1}
Liang, Z.,   You, J.: Quasi-periodic solutions for 1D Schr\"odinger equations
with higher order nonlinearity. SIAM J. Math. Anal. \textbf{36}, 1965-1990 (2005)

\bibitem{LZZ20}
Liang, Z., Zhao, Z., Zhou, Q.: 1-d quasi-periodic quantum harmonic oscillator with quadratic time-dependent perturbations: Reducibility and growth of Sobolev norms. Preprint available at arXiv: 2003.13034v1 (2020)

\bibitem{LiuYuan}
 Liu, J.,    Yuan, X.: A KAM Theorem for Hamiltonian Partial Differential Equations with Unbounded Perturbations.
Commun. Math. Phys. \textbf{307}(3), 629-673 (2011)

\bibitem{LY10}
  Liu, J., Yuan, X.: Spectrum for quantum duffing oscillator and small-divisor equation with large-variable coefficient. Commun. Pure Appl. Math. \textbf{63}, 1145-1172 (2010)

\bibitem{Mas2018}
Maspero, A.: Lower bounds on the growth of Sobolev norms in some linear time dependent Schr\"odinger equations. Math. Res. Lett. \textbf{26}, 1197-1215 (2019)

\bibitem{MR2017}
Maspero, A., Robert, D.: On time dependent Schr\"odinger equations: Global well-posedness and growth of Sobolev norms. J. Funct. Anal. \textbf{273}, 721-781 (2017)

\bibitem{Mon17b}
Montalto, R.:  A reducibility result for a class of linear wave equations on $\T^d$. Int. Math. Res. Not. \textbf{2019}, 1788-1862 (2019)

\bibitem{Mon18}
 Montalto, R.: On the growth of Sobolev norms for a class of linear Schr\"odinger equations on the torus with superlinear dispersion. Asymptotic Anal. \textbf{108}, 85-114 (2018)

 \bibitem{PT01}
Plotnikov, P. I., Toland, J. F.:  Nash-Moser theory for standing water waves. Arch. Ration. Mech. Anal. \textbf{159}, 1-83 (2001)

\bibitem{Pos}
 P\"{o}schel, J.: A KAM Theorem for some nonlinear partial
differential equations. Ann. Sc. Norm. sup. Pisa CI. sci. \textbf{23},  119-148 (1996)

 \bibitem{PX}
 Procesi, M., Xu, X.: Quasi-T\"oplitz functions in KAM Theorem. SIAM J. Math. Anal. \textbf{45}, 2148 - 2181(2013)

\bibitem{Stein}
 Stein, E. M.: Harmonic analysis: Real-variable methods, orthogonality and oscillatory integrals. Princeton University Press, Princeton (1993)

 \bibitem{T1}
 Titchmarsh, E.C.:  Eigenfunction expansions associated with second-order differential equations, Part 1, 2nd edition. Oxford University Press, Oxford (1962)

\bibitem{T2}
 Titchmarsh, E.C.: Eigenfunction expansions associated with second-order differential equations, Part 2. Oxford University Press, Oxford (1958)

\bibitem{Wang08}
Wang, W.-M.: Pure point spectrum of the Floquet Hamiltonian for the quantum harmonic oscillator under time quasi-periodic perturbations. Commun. Math. Phys. \textbf{277},
459-496 (2008)

\bibitem{WLiang17}
Wang, Z., Liang, Z.: Reducibility of 1D quantum harmonic oscillator perturbed by a quasiperiodic potential with logarithmic decay. Nonlinearity \textbf{30}, 1405-1448(2017)

 \bibitem{W90}
 Wayne,C.E.: Periodic and quasi - periodic solutions for nonlinear wave equations via KAM theory. Commun. Math. Phys. \textbf{127}, 479-528 (1990)

\bibitem{YajimaZhang}
 Yajima, K.,  Zhang, G.: Smoothing property for Schr\"odinger equations with potential superquadratic at infinity. Commun. Math. Phys. \textbf{221}, 573-590 (2001)

 \bibitem{ZGY}
 Zhang, J.,  Gao, M.,  Yuan,  X.: KAM tori for reversible partial differential equations. Nonlinearity \textbf{24},  1189-1228 (2011)

\end{thebibliography}
\end{document}